\documentclass[preprint,12pt]{elsarticle}

\usepackage{amsmath, amssymb, amsthm}
\usepackage[margin=2.5cm]{geometry}
\usepackage{setspace}
\usepackage{booktabs}
\usepackage{graphicx}
\graphicspath{{pdf/}}
\usepackage{subcaption}
\usepackage{placeins}
\usepackage{tikz}
\usetikzlibrary{positioning}
\usepackage{hyperref}
\hypersetup{colorlinks=true, linkcolor=blue, citecolor=blue}

\newtheorem{proposition}{Proposition}
\newtheorem{theorem}{Theorem}
\newtheorem{lemma}{Lemma}
\newtheorem{remark}{Remark}

\journal{Journal of Computational Physics}

\begin{document}

\begin{frontmatter}
\title{Wave-Particle Decomposition for Kinetic Equations I: \\ Theory and Numerics}
\author[iapcm]{Chang Liu\corref{cor1}}
\cortext[cor1] {Corresponding author.}
\ead{liuchang@iapcm.ac.cn}
\author[hkustmath,hkustmae,hkustsri]{Kun Xu}
\ead{makxu@ust.hk}

\address[iapcm]{Institute of Applied Physics and Computational Mathematics, Beijing, P.R. China}
\address[hkustmath]{Department of Mathematics, Hong Kong University of Science and Technology, Clear Water Bay, Kowloon, Hong Kong}
\address[hkustmae]{Department of Mechanical and Aerospace Engineering, Hong Kong University of Science and Technology, Clear Water Bay, Kowloon, Hong Kong}
\address[hkustsri]{Shenzhen Research Institute, Hong Kong University of Science and Technology, Shenzhen, China}

\begin{abstract}

This paper presents a wave-particle decomposition (WPD) for kinetic relaxation equations, formulated around a local evolution timescale and its associated kinetic horizon. By leveraging the characteristic integral solution, we decompose the distribution function into an analytically accumulated wave component and a purely kinetic particle component. The latter is defined by the collisionless transport that survives beyond a prescribed local domain of influence, termed the horizon. This continuous formulation yields a unified wave-particle system valid across the entire Knudsen spectrum, comprising a source-free total conservation law, a wave equation, and a particle equation. The wave operator admits a Chapman--Enskog expansion, whose moments yield Euler and Navier--Stokes fluxes with horizon-dependent coefficients, while the particle equation governs the remaining non-equilibrium kinetic transport.

At the algorithmic level, this system is discretized by a conservative macro-micro method. The total conservative variables are advanced by a finite-volume update using the sum of a Navier--Stokes gas-kinetic wave flux and a particle flux computed by either a deterministic discrete-ordinate $(S_N)$ method or a Monte Carlo representation. Unlike the global time-step splitting in the unified gas-kinetic wave-particle (UGKWP) method, the present partition is defined at the PDE level and governed by the local ratio of evolution timescale to relaxation time. The particle component is therefore a fractional kinetic population generated by the collisionless factor of the integral solution. Formal analysis establishes the asymptotic-preserving continuum limit, rarefied-regime consistency, and regime-adaptive scaling of active kinetic degrees of freedom. Numerical tests in one, two, and three dimensions validate the accuracy, multiscale capability, and efficiency of the framework.

\end{abstract}

\begin{keyword}
Kinetic equations \sep Wave-particle decomposition \sep Local kinetic horizon \sep Asymptotic-preserving schemes \sep Regime-adaptive kinetic representation \sep Gas-kinetic scheme
\end{keyword}
\end{frontmatter}

\section{Introduction}\label{sec:intro}

The Boltzmann equation provides a fundamental kinetic description of dilute gas dynamics, resolving molecular transport and collision processes at the scale of the mean free path and collision time~\cite{boltzmann1872,cercignani1988}. Its asymptotic analysis forms a central link between microscopic mechanics and macroscopic fluid dynamics: in the small-Knudsen-number limit, the Chapman--Enskog and Hilbert expansions formally recover the Euler and Navier--Stokes equations~\cite{chapman1970}, thereby connecting kinetic theory with the program underlying Hilbert's sixth problem. This theoretical connection, however, does not immediately translate into an efficient numerical method. A direct kinetic discretization must still resolve microscopic scales even when the solution is close to the Navier--Stokes limit, while a purely macroscopic solver loses validity when the local flow departs from equilibrium. Consequently, a practical multiscale method must retain the kinetic description where non-equilibrium effects are essential, while seamlessly reducing, at the discrete level, to the appropriate hydrodynamic model in the continuum regime.

The numerical treatment of multiscale kinetic equations has led to several major methodological paradigms. The direct simulation Monte Carlo (DSMC) method remains the standard particle approach for highly rarefied gas dynamics, but its statistical noise and stringent cell-size and time-step restrictions become computationally prohibitive near the continuum limit~\cite{bird1994}. Deterministic moment methods reduce velocity dependence through finite moment closures, providing compact macroscopic descriptions of non-equilibrium effects~\cite{grad1949}. To address stiffness, asymptotic-preserving (AP) implicit-explicit and penalized schemes treat the stiff relaxation term implicitly while preserving the limiting fluid behavior at the discrete level~\cite{pareschi2005,filbet2010}. Macro-micro decompositions further couple a macroscopic conservative system with a kinetic residual, offering a systematic route to AP discretizations and exact moment conservation~\cite{liu2004,gamba2019}.
Based on the direct modeling methodology for computational fluid dynamics, the unified gas-kinetic scheme (UGKS) constructs multiscale fluxes from the integral solution of the kinetic model, thereby coupling transport and collision over the mesh-scale time step~\cite{xu2014direct,xu2010,huang2012ugks}; its applications have since extended from continuum-rarefied gas dynamics to multi-component plasma transport and dilute gas-particle multiphase systems~\cite{liu2017plasma,liu2019multiphase}. The discrete unified gas-kinetic scheme (DUGKS) provides a related finite-volume discrete-velocity framework for flows across all Knudsen numbers~\cite{guo2013}. More recently, the unified gas-kinetic wave-particle (UGKWP) method was introduced as a wave-particle realization in which the collision probability determines the sampled particle population, while the remaining contribution is evolved analytically~\cite{liu2020ugkwp,zhu2019ugkwp}. Beyond gas dynamics, UGKWP-type wave-particle modeling has been successfully applied to multiscale photon transport and non-equilibrium turbulent-flow modeling~\cite{li2020ugkwp,liu2021,yang2026turbulence}, as well as the analysis of generalized fluid dynamics equations~\cite{guo2026}. For steady rarefied gas simulations, the general synthetic iterative scheme (GSIS) accelerates convergence by solving kinetic equations in tandem with synthetic macroscopic equations~\cite{su2020,luo2024}.
The present work synthesizes two key structural ideas from these developments (Fig.~\ref{fig:wpd_lineage}): integral-solution wave-particle modeling and the macro-micro coupling between a conservation law and a kinetic component. The novel contribution of this work is the definition of the wave-particle split at the governing equation level through a local kinetic horizon. Under this formulation, the particle component represents a physically consistent, collisionless fractional population rather than a signed kinetic residual, offering a robust and regime-adaptive framework for multiscale flow simulations.

An effective multiscale kinetic method must simultaneously preserve two essential properties. The first is the asymptotic-preserving (AP) property: as the Knudsen number approaches zero, the numerical scheme must recover the correct hydrodynamic limit on a mesh and time-step that do not resolve the microscopic mean free path or collision time. This requirement guarantees both accuracy and consistency across disparate flow regimes. The second is a regime-adaptive kinetic representation---a concept central to the UGKWP methodology---wherein the active kinetic degrees of freedom dynamically adjust to the local state of relaxation~\cite{liu2020ugkwp,zhu2019ugkwp,li2020ugkwp}. Guided by the collisionless factor inherent in the integral solution, only the non-equilibrated fraction of the distribution function is represented by particles, while the equilibrated contribution is resolved analytically by the wave component. Consequently, asymptotic preservation and regime-adaptive kinetic representation are not competing objectives but rather complementary requirements: the former guarantees the mathematical correctness of the limiting macroscopic solution, while the latter ensures that the computational representation and its associated cost scale dynamically with the local physical regime.

The present formulation places the wave-particle decomposition directly at the level of the governing kinetic equation. In the traditional UGKWP method, the analytical-versus-sampled partition is determined during the numerical update by the time-step collision probability, $\exp{(-\Delta t/\tau)}$. Consequently, on a non-uniform mesh, a single highly refined cell can dictate the global time step and artificially bias the particle fraction across the entire domain. In contrast, the proposed formulation defines this split using a local kinetic horizon $\mathcal{T}$, whereby the partition is governed by the local relaxation-to-evolution ratio $\mathcal{T}/\tau$. The resulting decomposition is entirely local with respect to the mesh-scale relaxation process. Furthermore, because the particle component is generated dynamically via the collisionless factor in the integral solution, it represents a physically consistent, fractional kinetic population rather than a signed mathematical residual obtained by subtracting an equilibrium state.

This construction yields both a rigorous continuous formulation and an efficient numerical realization. At the continuous level, the integral solution establishes a unified, closed wave-particle system of equations valid across the entire Knudsen spectrum, comprising a wave equation, a particle equation, and a source-free total conservation law. At the algorithmic level, this continuous system naturally translates into a conservative macro-micro numerical method. Within this framework, the total variables are updated in a finite-volume form, while the cell-local horizon ensures a regime-adaptive kinetic representation that remains completely decoupled from the smallest time step in the mesh.

\begin{figure}[!htbp]
\centering
\includegraphics[width=\linewidth]{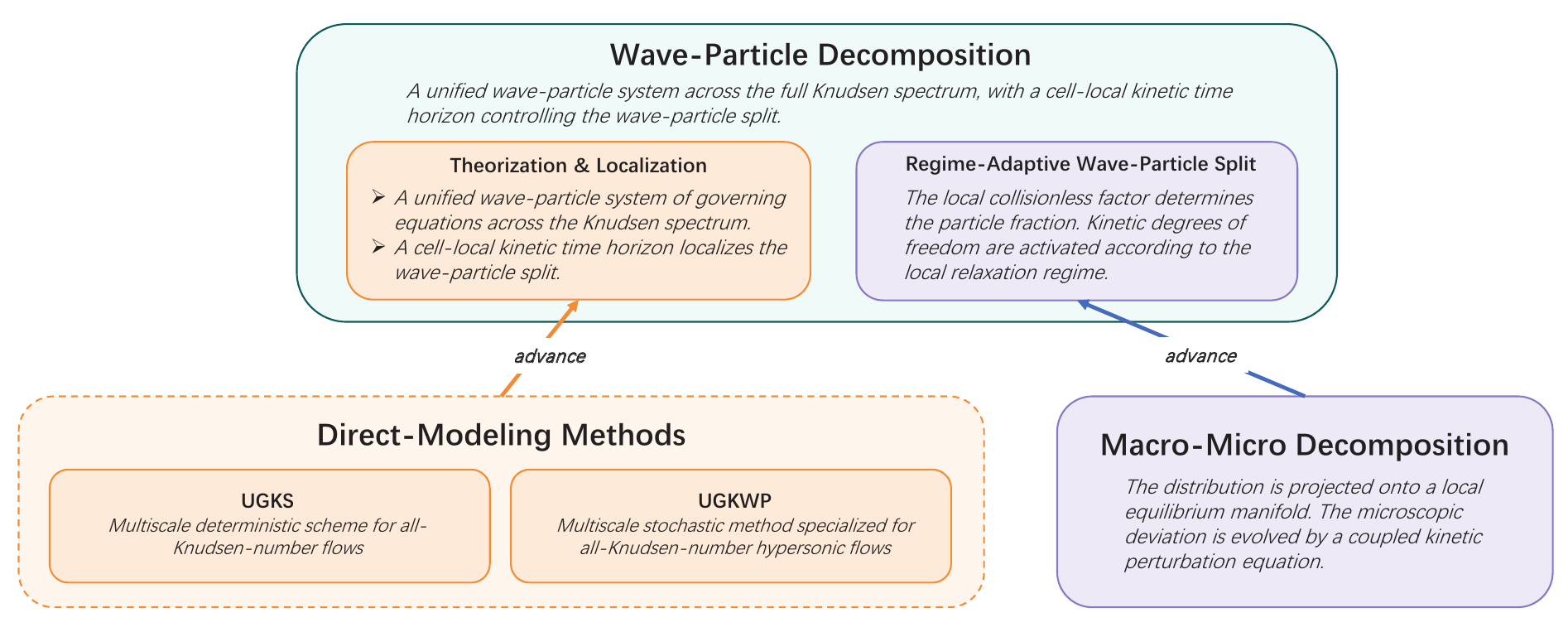}
\caption{Positioning of the present wave-particle decomposition relative to the unified gas-kinetic (UGKS, UGKWP) and macro-micro decomposition lines.}
\label{fig:wpd_lineage}
\end{figure}

The remainder of this paper is organized to reflect this theory-algorithm structure. Section~\ref{sec:wp_decomposition} develops the continuous wave-particle decomposition from the characteristic integral solution, defines the local kinetic horizon, and derives the total conservation law alongside the wave and particle equations. Section~\ref{sec:coupled_scheme} constructs the conservative macro-micro algorithm, in which the wave flux is evaluated via a gas-kinetic macroscopic solver, while the particle equation is resolved using either Monte Carlo sampling or a deterministic discrete-ordinate
($S_N$) discretization. Section~\ref{sec:analysis} provides a formal analysis of the asymptotic-preserving continuum limit, consistency in the rarefied regime, and the regime-adaptive kinetic representation. Section~\ref{sec:results} presents one-, two-, and three-dimensional numerical benchmarks to assess the accuracy, multiscale behavior, and computational efficiency of the proposed method. Finally, Section~\ref{sec:conclusion} summarizes the main findings and outlines future research directions.

\section{Wave-Particle Decomposition}\label{sec:wp_decomposition}

This section derives the wave-particle decomposition from the characteristic integral solution. A local kinetic horizon separates the relaxation history into an accumulated wave contribution and a collisionless particle contribution, yielding the wave equation, the particle equation, and the total conservation law. The final subsection assembles the conservative system and identifies the horizon-weighted Euler and Navier--Stokes structure of the wave flux.

\subsection{Integral Solution and Wave-Particle Operators}\label{sec:kinetic_integral_kernel}

We consider a kinetic model equation where the molecular velocity distribution function, \(f(\boldsymbol{x},\boldsymbol{\xi},t)\), relaxes toward a target distribution, \(\mathcal{G}[f]\), over a local relaxation time, \(\tau\):
\begin{equation}\label{eq:kinetic_relaxation}
    \partial_t f + \boldsymbol{\xi} \cdot \nabla f = \frac{\mathcal{G}-f}{\tau}.
\end{equation}
Here, \(\boldsymbol{x}\) denotes the spatial coordinate and \(\boldsymbol{\xi}\) represents the molecular velocity. For notational convenience in subsequent sections, we define the material derivative along a molecular trajectory as \(D_t = \partial_t + \boldsymbol{\xi} \cdot \nabla\). The target distribution can be specified as a Maxwellian (BGK), Shakhov, or ellipsoidal statistical (ES-BGK) target, provided it preserves the collision invariants:
\begin{equation}\label{eq:target_conservation_general}
    \left\langle \boldsymbol{\psi}\mathcal{G} \right\rangle = \left\langle \boldsymbol{\psi}f \right\rangle.
\end{equation}
By construction, the target distribution shares the same conservative moments as \(f\). Depending on the chosen relaxation model, it may also incorporate prescribed non-conserved moments---such as the heat flux in the Shakhov model---to accurately recover the desired macroscopic transport coefficients. The conservative variables are defined as
\begin{equation}
    \mathbf{U} = \left\langle \boldsymbol{\psi}f \right\rangle,\qquad
    \boldsymbol{\psi} = \left(1, \boldsymbol{\xi}, \frac{1}{2}|\boldsymbol{\xi}|^2\right)^T,
\end{equation}
where \(\mathbf{U} = (\rho, \rho\boldsymbol{u}, \rho E)^T\), with \(\rho\) representing the density, \(\boldsymbol{u}\) the macroscopic velocity, and \(\rho E\) the total energy density.

For the numerical examples presented in this work, we employ the Shakhov target~\cite{shakhov1968}:
\begin{equation}\label{eq:shakhov_target}
    \mathcal{G} = \mathcal{M} \left[ 1 + (1-\Pr)\frac{\boldsymbol{c} \cdot \boldsymbol{q}}{5pRT} \left(\frac{|\boldsymbol{c}|^2}{RT}-5\right) \right].
\end{equation}
Here, \(\boldsymbol{c} = \boldsymbol{\xi} - \boldsymbol{u}\) is the peculiar velocity, \(p = \rho R T\) is the pressure, \(T\) is the temperature, \(R\) is the specific gas constant, \(\Pr\) is the prescribed Prandtl number, and
\begin{equation}\label{eq:maxwellian}
    \mathcal{M} = \frac{\rho}{(2\pi RT)^{3/2}} \exp\left(-\frac{|\boldsymbol{\xi}-\boldsymbol{u}|^2}{2RT}\right)
\end{equation}
denotes the local Maxwellian distribution. The heat flux incorporated into the Shakhov correction is given by
\begin{equation}\label{eq:heat_flux_def}
    \boldsymbol{q} = \left\langle \frac{1}{2}|\boldsymbol{c}|^2\boldsymbol{c} f \right\rangle.
\end{equation}
This target inherently satisfies
\begin{equation}\label{eq:shakhov_conservation}
    \left\langle \boldsymbol{\psi}\mathcal{G} \right\rangle = \left\langle \boldsymbol{\psi}f \right\rangle = \mathbf{U},
\end{equation}
ensuring that the relaxation process strictly conserves mass, momentum, and total energy. Ultimately, the Shakhov correction adjusts the heat-flux moment to yield the correct Prandtl number while maintaining the conservative relaxation structure required for our formulation. The WPD framework derived below relies exclusively on the conservative relaxation form \eqref{eq:kinetic_relaxation} and the moment conservation property \eqref{eq:target_conservation_general}. The specific choice of the target distribution only influences the resulting transport coefficients and higher-order non-equilibrium moments.

Along the backward molecular trajectory
\begin{equation}
    \boldsymbol{x}_\sigma=\boldsymbol{x}-\boldsymbol{\xi}\sigma,\qquad t_\sigma=t-\sigma,
\end{equation}
we introduce the cumulative collision frequency
\begin{equation}\label{eq:Phi_def}
    \Phi(\sigma) = \int_0^{\sigma} \frac{1}{\tau_\nu} \mathrm{d}\nu.
\end{equation}
Here $\sigma$ is the backward flight time measured along the molecular path, and a subscript $\nu$ or $\sigma$ denotes evaluation at the retarded point $(\boldsymbol{x}-\boldsymbol{\xi}\nu,t-\nu)$ or $(\boldsymbol{x}-\boldsymbol{\xi}\sigma,t-\sigma)$. Hence $\partial_\sigma\Phi(\sigma)=1/\tau_\sigma$. The collisionless factor $e^{-\Phi(\sigma)}$ is the probability that a molecule travels over the interval $\sigma$ without a collision.

The corresponding integral solution can be written in terms of the relaxation kernel
\begin{equation}\label{eq:kernel}
    K(\sigma;\boldsymbol{x},\boldsymbol{\xi},t)
    =
    \frac{\mathcal{G}(\boldsymbol{x}-\boldsymbol{\xi}\sigma,\boldsymbol{\xi},t-\sigma)}
    {\tau(\boldsymbol{x}-\boldsymbol{\xi}\sigma,t-\sigma)}
    \exp\left[
    -\int_0^\sigma
    \frac{\mathrm{d}\nu}{\tau(\boldsymbol{x}-\boldsymbol{\xi}\nu,t-\nu)}
    \right].
\end{equation}
For any admissible kinetic history length $\mathcal{T}\ge0$, the solution at the current observation point satisfies
\begin{equation}\label{eq:integral_solution_T}
    f(\boldsymbol{x},\boldsymbol{\xi},t)
    =
    \int_0^\mathcal{T} K(\sigma)\,\mathrm{d}\sigma
    +
    e^{-\Phi(\mathcal{T})}f(\boldsymbol{x}-\boldsymbol{\xi} \mathcal{T},\boldsymbol{\xi},t-\mathcal{T}).
\end{equation}
Here $\mathcal{T}$ has the dimension of time and specifies how far the molecular history is represented analytically. This identity is the starting point for the wave-particle decomposition: the first term represents the relaxation contribution accumulated over the local kinetic horizon, while the second term represents the free-transport memory that survives beyond that horizon.

The integral solution defines a family of local operators parameterized by the admissible kinetic horizon $\mathcal{T}$. The analytical wave operator is
\begin{equation}\label{eq:wave_def}
    \begin{aligned}
    \mathcal{W}_\mathcal{T}(\boldsymbol{x},\boldsymbol{\xi},t)
    &=
    \int_0^{\mathcal{T}(\boldsymbol{x},t,\boldsymbol{\xi})}
    \frac{\mathcal{G}_\sigma}{\tau_\sigma}e^{-\Phi(\sigma)}
    \,\mathrm{d}\sigma .
    \end{aligned}
\end{equation}
Figure~\ref{fig:wave_operator} illustrates the structure of this local wave operator. The upper integration limit is the local kinetic horizon, the kernel contains the local relaxation target, and the exponential factor measures the collisionless attenuation along the backward molecular path.

\begin{figure}[!htbp]
\centering
\includegraphics[width=0.56\linewidth]{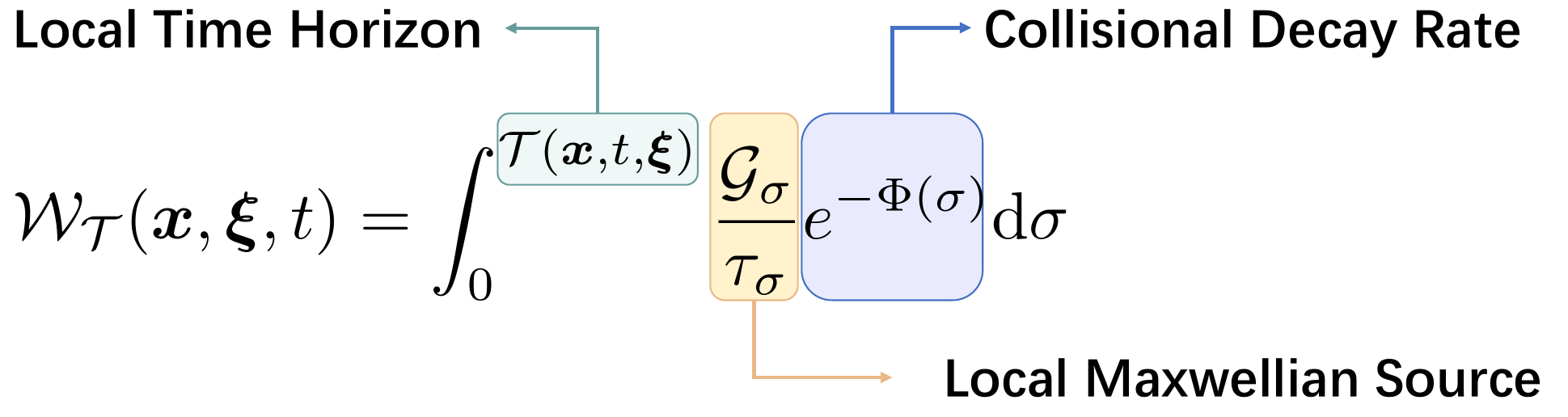}
\caption{Schematic of the local wave operator.}
\label{fig:wave_operator}
\end{figure}

The surviving particle free-transport operator is
\begin{equation}\label{eq:P_def}
    \mathcal{P}_\mathcal{T}(\boldsymbol{x},\boldsymbol{\xi},t)
    =
    e^{-\Phi(\mathcal{T})}f(\boldsymbol{x}-\boldsymbol{\xi} \mathcal{T},\boldsymbol{\xi},t-\mathcal{T}).
\end{equation}
Thus the distribution can be represented as the sum of a locally accumulated relaxation contribution and a surviving kinetic memory over the same molecular history interval. For compact notation the subscript $\mathcal{T}$ will be omitted when no ambiguity is possible. The derivative of the horizon along the molecular trajectory is
\begin{equation}\label{eq:DT_def}
    D_t \mathcal{T}=\partial_t \mathcal{T}+\boldsymbol{\xi}\cdot\nabla \mathcal{T} ,
\end{equation}
with the same convention if $\mathcal{T}$ also depends on $\boldsymbol{\xi}$ as a parameter.

\begin{lemma}[Exact evolution of the relaxation kernel]\label{lem:kernel}
The integral kernel $K(\sigma)$ satisfies the fundamental differential identity:
\begin{equation}\label{eq:kernel_identity}
    D_t K(\sigma) + \frac{\partial K(\sigma)}{\partial\sigma} = -\frac{K(\sigma)}{\tau_0},
\end{equation}
where $\tau_0=\tau(\boldsymbol{x},t)$. The identity follows only from the retarded characteristic structure of the relaxation kernel and does not require frozen coefficients.
\end{lemma}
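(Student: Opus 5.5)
The plan is to exploit the fact that $K(\sigma;\boldsymbol{x},\boldsymbol{\xi},t)$ depends on $(\boldsymbol{x},t,\sigma)$ only through retarded points $(\boldsymbol{x}-\boldsymbol{\xi}\nu,t-\nu)$, with $\boldsymbol{\xi}$ held fixed as a parameter throughout. I would write $K=g(\sigma)\,e^{-E(\sigma)}$, where $g(\sigma)=\mathcal{G}_\sigma/\tau_\sigma$ is the retarded source density and $E(\sigma)=\Phi(\sigma)=\int_0^\sigma r(\boldsymbol{x}-\boldsymbol{\xi}\nu,t-\nu)\,\mathrm{d}\nu$ with $r=1/\tau$. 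The operator $L:=D_t+\partial_\sigma$ is a first-order derivation, so the product rule gives $L K = e^{-E}\,Lg - K\,LE$. The proof therefore reduces to two computations: $Lg=0$ and $LE=1/\tau_0$.

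\emph{Step 1: annihilation of retarded quantities.} For any smooth $h(\boldsymbol{x},t)$, define $H(\sigma;\boldsymbol{x},t)=h(\boldsymbol{x}-\boldsymbol{\xi}\sigma,t-\sigma)$. The chain rule gives
\[
D_tH=(\partial_t h+\boldsymbol{\xi}\cdot\nabla h)\big|_\sigma,
\qquad
\partial_\sigma H=-(\partial_t h+\boldsymbol{\xi}\cdot\nabla h)\big|_\sigma ,
\]
hence $LH=0$. Applying this with $h=\mathcal{G}(\cdot,\boldsymbol{\xi},\cdot)/\tau$ yields $Lg=0$. This step uses no frozen-coefficient assumption, since $\mathcal{G}$ and $\tau$ may vary arbitrarily in space and time.

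\emph{Step 2: the cumulative collision frequency.} I differentiate under the integral sign. The $\partial_\sigma$ part contributes only the upper-limit term, $\partial_\sigma E=r_\sigma$. For the $D_t$ part, Step 1 applied to $r$ gives $(D_t r)\big|_\nu=-\partial_\nu\big[r(\boldsymbol{x}-\boldsymbol{\xi}\nu,t-\nu)\big]$. Integrating this over $\nu\in[0,\sigma]$ telescopes:
\[
D_tE=-(r_\sigma-r_0).
\]
Adding the two pieces gives $LE=r_0=1/\tau_0$. Consequently $Le^{-E}=-e^{-E}/\tau_0$, and combined with $Lg=0$ this gives $LK=-K/\tau_0$, which is \eqref{eq:kernel_identity}.

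The only delicate point is Step 2. There one must justify interchanging $D_t$ with the $\nu$-integral, which requires smoothness of $\tau$ and $\tau>0$ along the path. One must also see why the result is the \emph{local} value $1/\tau_0$ rather than the retarded value $1/\tau_\sigma$. The reason is the telescoping: the retarded contribution $r_\sigma$ from $\partial_\sigma E$ cancels exactly against the upper endpoint of $D_tE$, leaving only the lower endpoint $r_0$ at the observation point. This cancellation is the precise sense in which the identity follows from the characteristic structure alone.
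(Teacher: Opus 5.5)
Your proof is correct and follows the paper's argument: the same factorization $K=g\,e^{-\Phi}$ with $g=\mathcal{G}_\sigma/\tau_\sigma$, the same retarded-point identity $D_t g=-\partial_\sigma g$, and the same telescoping computation $D_t\Phi(\sigma)=1/\tau_0-1/\tau_\sigma$. Bundling these into the operator $D_t+\partial_\sigma$ and applying the product rule is only a notational streamlining of the paper's term-by-term cancellation.
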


\begin{proof}
Let $K(\sigma)=g(\sigma)e^{-\Phi(\sigma)}$ with $g(\sigma)=\mathcal{G}_\sigma/\tau_\sigma$. Since $g(\sigma)$ is evaluated at the retarded point, the derivative along the molecular trajectory satisfies $D_t g(\sigma)=-\partial_\sigma g(\sigma)$. Differentiation with respect to the backward flight time gives
\begin{equation}\label{eq:dsigma_K}
    \partial_\sigma K = \left[ \partial_\sigma g(\sigma) - g(\sigma)\partial_\sigma \Phi(\sigma) \right] e^{-\Phi}
    = \left[ \partial_\sigma g(\sigma) - \frac{g(\sigma)}{\tau_\sigma} \right] e^{-\Phi}.
\end{equation}
Moreover,
\begin{equation}
    D_t\Phi(\sigma) = \int_0^\sigma D_t \left( \frac{1}{\tau_\nu} \right) \mathrm{d}\nu = \int_0^\sigma -\partial_\nu \left( \frac{1}{\tau_\nu} \right) \mathrm{d}\nu = \frac{1}{\tau_0} - \frac{1}{\tau_\sigma}.
\end{equation}
Therefore
\begin{equation}\label{eq:Dt_K}
    D_t K = \left[ -\partial_\sigma g(\sigma) - g(\sigma)\left(\frac{1}{\tau_0} - \frac{1}{\tau_\sigma}\right) \right] e^{-\Phi}.
\end{equation}
Adding Eq.~\eqref{eq:dsigma_K} and Eq.~\eqref{eq:Dt_K} cancels the $\partial_\sigma g$ terms and the retarded relaxation terms $g/\tau_\sigma$, yielding Eq.~\eqref{eq:kernel_identity}.
\end{proof}

\subsection{The Wave Equation and Its Moment Balance}\label{sec:wave}

The wave component obeys an exact evolution equation. Its conservative moments form a macroscopic balance law that is not closed by itself: the right-hand side carries the relaxation transfer from the total target state together with the local-horizon exchange with the particle component. The hydrodynamic content of the wave flux---its Chapman--Enskog reduction to the Euler and Navier--Stokes levels with horizon-dependent coefficients---is developed in Section~\ref{sec:hydro_structure}, after the coupled conservative system has been assembled, so that the expansion acts on the source-free total conservation law rather than on the sourced wave balance.

\begin{proposition}[Wave evolution equation]\label{prop:wave_evol}
For any admissible horizon $\mathcal{T}$, the wave component satisfies
\begin{equation}\label{eq:W_evol}
    D_t\mathcal{W}+\frac{\mathcal{W}}{\tau_0}
    =
    \frac{\mathcal{G}_0}{\tau_0}-\mathcal B_\mathcal{T} ,
\end{equation}
where $\mathcal{G}_0=\mathcal{G}(\boldsymbol{x},\boldsymbol{\xi},t)$ and
\begin{equation}\label{eq:BL_def}
    \mathcal B_\mathcal{T}=K(\mathcal{T})\left(1-D_t \mathcal{T}\right)
    =
    \frac{\mathcal{G}_\mathcal{T}}{\tau_\mathcal{T}}e^{-\Phi(\mathcal{T})}\left(1-D_t \mathcal{T}\right)
\end{equation}
is the kinetic contribution crossing the boundary of the local relaxation horizon in flight-time space.
\end{proposition}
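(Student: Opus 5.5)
The plan is to differentiate the definition \eqref{eq:wave_def}, $\mathcal{W}=\int_0^{\mathcal{T}}K(\sigma)\,\mathrm{d}\sigma$, along the molecular trajectory with the Leibniz rule, and then use Lemma~\ref{lem:kernel} to turn the integrand's derivative into a boundary term. Here the upper limit $\mathcal{T}=\mathcal{T}(\boldsymbol{x},t,\boldsymbol{\xi})$ varies with the observation point, while $\boldsymbol{\xi}$ is held fixed, as it is under $D_t$. The Leibniz rule then gives
\begin{equation*}
    D_t\mathcal{W}=\int_0^{\mathcal{T}} D_tK(\sigma)\,\mathrm{d}\sigma + K(\mathcal{T})\,D_t\mathcal{T},
\end{equation*}
where $D_tK(\sigma)$ is taken at fixed flight time $\sigma$. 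This is the same convention used in Lemma~\ref{lem:kernel}.

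Next I substitute the kernel identity \eqref{eq:kernel_identity}, written as $D_tK=-\partial_\sigma K - K/\tau_0$. The factor $1/\tau_0$ depends only on the observation point and not on $\sigma$, so it comes out of the integral and yields exactly $-\mathcal{W}/\tau_0$. The remaining piece is a total $\sigma$-derivative, $-\int_0^{\mathcal{T}}\partial_\sigma K\,\mathrm{d}\sigma = K(0)-K(\mathcal{T})$.

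To finish, I evaluate the kernel at the observation point. From \eqref{eq:kernel} together with $\Phi(0)=0$, we have $K(0)=\mathcal{G}_0/\tau_0$. Collecting terms gives
\begin{equation*}
    D_t\mathcal{W}+\frac{\mathcal{W}}{\tau_0}=\frac{\mathcal{G}_0}{\tau_0}-K(\mathcal{T})+K(\mathcal{T})D_t\mathcal{T}=\frac{\mathcal{G}_0}{\tau_0}-K(\mathcal{T})\bigl(1-D_t\mathcal{T}\bigr).
\end{equation*}
This is \eqref{eq:W_evol} with $\mathcal B_\mathcal{T}$ as defined in \eqref{eq:BL_def}. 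The second expression for $\mathcal B_\mathcal{T}$ follows by writing out $K(\mathcal{T})$ from \eqref{eq:kernel}.

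The main point needing care is the Leibniz step, where a variable upper limit is combined with an integrand that itself depends on $(\boldsymbol{x},t)$ through retarded arguments. I would justify it by writing $\mathcal{W}(\boldsymbol{x}+\boldsymbol{\xi}h,t+h)$ and differentiating at $h=0$. Three points must be checked there:
\begin{itemize}
    \item the boundary contribution is $K(\mathcal{T})$ multiplied by the trajectory derivative of $\mathcal{T}$, which is exactly \eqref{eq:DT_def};
    \item $\boldsymbol{\xi}$ is fixed along the trajectory, so any $\boldsymbol{\xi}$-dependence of $\mathcal{T}$ contributes nothing extra;
    \item the interior contribution is the fixed-$\sigma$ derivative to which Lemma~\ref{lem:kernel} applies.
\end{itemize}
Smoothness of $\mathcal{G}$, $\tau>0$ and $\mathcal{T}$ suffices for this.
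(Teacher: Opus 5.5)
Your proposal is correct and follows the paper's proof essentially line by line. You apply the Leibniz rule to $\mathcal{W}=\int_0^{\mathcal{T}}K(\sigma)\,\mathrm{d}\sigma$, substitute the kernel identity of Lemma~\ref{lem:kernel}, and evaluate the boundary term $K(0)=\mathcal{G}_0/\tau_0$ using $\Phi(0)=0$, with your justification of the Leibniz step via $\mathcal{W}(\boldsymbol{x}+\boldsymbol{\xi}h,t+h)$ being a sound elaboration of what the paper takes for granted.
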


\begin{proof}
Applying the Leibniz rule to Eq.~\eqref{eq:wave_def} gives
\begin{equation}
    D_t\mathcal{W}
    =
    \int_0^\mathcal{T} D_t K(\sigma)\,\mathrm{d}\sigma
    +K(\mathcal{T})D_t \mathcal{T} .
\end{equation}
Using the kernel identity in Lemma~\ref{lem:kernel},
\begin{equation}
    D_t\mathcal{W}
    =
    \int_0^\mathcal{T} \left(-\partial_\sigma K-\frac{K}{\tau_0}\right)\mathrm{d}\sigma
    +K(\mathcal{T})D_t \mathcal{T}
\end{equation}
and $\Phi(0)=0$, $K(0)=\mathcal{G}_0/\tau_0$, therefore
\begin{equation}
    D_t\mathcal{W}
    =
    \frac{\mathcal{G}_0}{\tau_0}
    -K(\mathcal{T})
    -\frac{\mathcal{W}}{\tau_0}
    +K(\mathcal{T})D_t \mathcal{T} ,
\end{equation}
which is Eq.~\eqref{eq:W_evol}.
\end{proof}

Taking conservative moments of Eq.~\eqref{eq:W_evol} gives the macroscopic balance law carried by the wave component. Let
\begin{equation}
    \mathbf U_{\mathcal{W}}=\left\langle \boldsymbol{\psi}\mathcal{W} \right\rangle,
    \qquad
    \mathbf F_{\mathcal{W}}=\left\langle \boldsymbol{\xi}\boldsymbol{\psi}\mathcal{W} \right\rangle.
\end{equation}
Since $\left\langle \boldsymbol{\psi}\mathcal{G}_0 \right\rangle=\mathbf U$, the wave moment equation is
\begin{equation}\label{eq:W_moment_equation}
    \partial_t\mathbf U_{\mathcal{W}}+\nabla\cdot\mathbf F_{\mathcal{W}}
    =
    \frac{\mathbf U-\mathbf U_{\mathcal{W}}}{\tau_0}
    -\left\langle \boldsymbol{\psi}\mathcal B_\mathcal{T} \right\rangle.
\end{equation}
In component form,
\begin{equation}\label{eq:W_moment_component}
\left\{
\begin{aligned}
&\partial_t\rho_{\mathcal{W}}+\nabla\cdot\left\langle \boldsymbol{\xi}\mathcal{W} \right\rangle
    =
    \frac{\rho-\rho_{\mathcal{W}}}{\tau_0}
    -\left\langle \mathcal B_\mathcal{T} \right\rangle,\\
&\partial_t(\rho\boldsymbol{u})_{\mathcal{W}}+\nabla\cdot\left\langle \boldsymbol{\xi}\otimes\boldsymbol{\xi}\,\mathcal{W} \right\rangle
    =
    \frac{\rho\boldsymbol{u}-(\rho\boldsymbol{u})_{\mathcal{W}}}{\tau_0}
    -\left\langle \boldsymbol{\xi}\mathcal B_\mathcal{T} \right\rangle,\\
&\partial_t(\rho E)_{\mathcal{W}}
    +\nabla\cdot\left\langle \frac{1}{2}|\boldsymbol{\xi}|^2\boldsymbol{\xi}\,\mathcal{W} \right\rangle
    =
    \frac{\rho E-(\rho E)_{\mathcal{W}}}{\tau_0}
    -\left\langle \frac{1}{2}|\boldsymbol{\xi}|^2\mathcal B_\mathcal{T} \right\rangle.
\end{aligned}
\right.
\end{equation}
The wave balance represents the relaxed fraction of the total kinetic state. Its right-hand side contains relaxation from the total target state and the local-horizon exchange with the particle component. Since $\mathbf U_{\mathcal{W}}=\mathcal A_0\mathbf U+\mathcal O(\tau_0D_t\mathcal{G}_0)$ in the hydrodynamic expansion (Section~\ref{sec:hydro_structure}), closure of the conservative dynamics is obtained by coupling this wave balance with the particle moment balance and the source-free total conservation law derived below.

\subsection{The Particle Equation and Its Kinetic Closure}\label{sec:particle_operator}

The particle component represents the part of the characteristic history not included in the local relaxation integral. Its flux is a kinetic moment of $\mathcal{P}$ and is not closed by a hydrodynamic constitutive relation. Since $f=\mathcal{W}+\mathcal{P}$ at the level of the integral solution, the particle equation follows by subtracting the wave equation from the original kinetic equation. More precisely, subtracting Eq.~\eqref{eq:W_evol} from
\begin{equation}
    D_t f+\frac{f}{\tau_0}=\frac{\mathcal{G}_0}{\tau_0}
\end{equation}
gives the particle equation with the local-horizon source written explicitly,
\begin{equation}\label{eq:P_evol}
    D_t\mathcal{P}+\frac{\mathcal{P}}{\tau_0}
    =
    \frac{\mathcal{G}_\mathcal{T}}{\tau_\mathcal{T}}\,e^{-\Phi(\mathcal{T})}\left(1-D_t \mathcal{T}\right).
\end{equation}
For compact notation in the moment equations below, this local-horizon source is denoted by
\begin{equation}\label{eq:source_term}
    \mathcal B_\mathcal{T}
    =
    \frac{\mathcal{G}_\mathcal{T}}{\tau_\mathcal{T}}\,e^{-\Phi(\mathcal{T})}\left(1-D_t \mathcal{T}\right).
\end{equation}
This term is the boundary contribution at the local horizon in flight-time space. The factor $\mathcal{G}_\mathcal{T}/\tau_\mathcal{T}$ is the retarded relaxation production evaluated at the horizon, $e^{-\Phi(\mathcal{T})}$ is the collisionless probability from the horizon to the observation point, and $1-D_t \mathcal{T}$ is the signed relative speed between the molecular flight coordinate and the moving horizon.

To interpret the moving-boundary factor, introduce the forward molecular flight-time coordinate $s$ measured from the present point along a characteristic. The analytical wave part occupies the local interval $0\le s\le \mathcal{T}(\boldsymbol{x}+\boldsymbol{\xi} s,t+s,\boldsymbol{\xi})$, while the particle component lies beyond this moving kinetic horizon. Along the same characteristic, the signed relative speed between the molecular flight coordinate and the moving horizon is
\begin{equation}\label{eq:horizon_relative_speed}
    \frac{\mathrm{d}}{\mathrm{d} s}\left[
    s-\mathcal{T}(\boldsymbol{x}+\boldsymbol{\xi} s,t+s,\boldsymbol{\xi})
    \right]
    =
    1-D_t \mathcal{T} .
\end{equation}
Thus $\mathcal B_\mathcal{T}$ is the boundary flux through the wave-particle interface in flight-time space. Positive values of $1-D_t \mathcal{T}$ correspond to injection from the relaxation layer into the particle component, whereas negative values correspond to transfer in the opposite direction. In the particle implementation below, only the positive injection part is sampled, and the remaining exchange is retained in the wave reconstruction.

Taking moments of Eq.~\eqref{eq:P_evol} gives the particle macroscopic balance law. With
\begin{equation}
    \mathbf U_{\mathcal{P}}=\left\langle \boldsymbol{\psi}\mathcal{P} \right\rangle,
    \qquad
    \mathbf F_{\mathcal{P}}=\left\langle \boldsymbol{\xi}\boldsymbol{\psi}\mathcal{P} \right\rangle,
\end{equation}
one obtains
\begin{equation}\label{eq:P_moment_equation}
    \partial_t\mathbf U_{\mathcal{P}}+\nabla\cdot\mathbf F_{\mathcal{P}}
    =
    -\frac{\mathbf U_{\mathcal{P}}}{\tau_0}
    +\left\langle \boldsymbol{\psi}\mathcal B_\mathcal{T} \right\rangle.
\end{equation}
Equivalently,
\begin{equation}\label{eq:P_moment_component}
\left\{
\begin{aligned}
&\partial_t\rho_{\mathcal{P}}+\nabla\cdot\left\langle \boldsymbol{\xi}\mathcal{P} \right\rangle
    =
    -\frac{\rho_{\mathcal{P}}}{\tau_0}
    +\left\langle \mathcal B_\mathcal{T} \right\rangle,\\
&\partial_t(\rho\boldsymbol{u})_{\mathcal{P}}+\nabla\cdot\left\langle \boldsymbol{\xi}\otimes\boldsymbol{\xi}\,\mathcal{P} \right\rangle
    =
    -\frac{(\rho\boldsymbol{u})_{\mathcal{P}}}{\tau_0}
    +\left\langle \boldsymbol{\xi}\mathcal B_\mathcal{T} \right\rangle,\\
&\partial_t(\rho E)_{\mathcal{P}}
    +\nabla\cdot\left\langle \frac{1}{2}|\boldsymbol{\xi}|^2\boldsymbol{\xi}\,\mathcal{P} \right\rangle
    =
    -\frac{(\rho E)_{\mathcal{P}}}{\tau_0}
    +\left\langle \frac{1}{2}|\boldsymbol{\xi}|^2\mathcal B_\mathcal{T} \right\rangle.
\end{aligned}
\right.
\end{equation}
The particle fluxes in Eq.~\eqref{eq:P_moment_component} are kinetic moments of $\mathcal{P}$. They are not expressible, in general, by Euler or Navier--Stokes constitutive laws, and are therefore evaluated by a deterministic discrete-ordinate discretization or by a Monte Carlo particle representation in Section~\ref{sec:coupled_scheme}.

\subsection{The Coupled Conservative System and Its Hydrodynamic Structure}\label{sec:wpd}\label{sec:hydro_structure}

The preceding construction gives the wave-particle decomposition
\begin{equation}\label{eq:wp_decomposition}
    f=\mathcal{W}+\mathcal{P},
\end{equation}
where $\mathcal{W}$ is the local relaxation history in Eq.~\eqref{eq:wave_def} and $\mathcal{P}$ is the collisionless contribution in Eq.~\eqref{eq:P_def}. At the kinetic level, the two components satisfy the coupled microscopic wave-particle system
\begin{equation}\label{eq:wp_conservative_system}
\left\{
\begin{aligned}
&D_t\mathcal{W}+\frac{\mathcal{W}}{\tau_0}
    =
    \frac{\mathcal{G}_0}{\tau_0}
    -\frac{\mathcal{G}_\mathcal{T}}{\tau_\mathcal{T}}e^{-\Phi(\mathcal{T})}\left(1-D_t \mathcal{T}\right),
    \\
&D_t\mathcal{P}+\frac{\mathcal{P}}{\tau_0}
    =
    \frac{\mathcal{G}_\mathcal{T}}{\tau_\mathcal{T}}e^{-\Phi(\mathcal{T})}\left(1-D_t \mathcal{T}\right).
\end{aligned}
\right.
\end{equation}
These two equations contain equal and opposite local-horizon exchange terms, together with relaxation transfer between $\mathcal{P}$ and the target state. These terms are internal to the decomposition. They cancel in the conservative moment system and hence do not appear as external sources in the total equation.

Taking moments of the original kinetic equation~\eqref{eq:kinetic_relaxation} gives the total conservative macroscopic equations. Since the target distribution satisfies Eq.~\eqref{eq:target_conservation_general},
\begin{equation}\label{eq:macro_conservation}
\left\{
\begin{aligned}
&\partial_t\rho+\nabla\cdot\left(\mathbf F_{\mathcal{W}}^{\rho}+\mathbf F_{\mathcal{P}}^{\rho}\right)=0,\\
&\partial_t(\rho\boldsymbol{u})+\nabla\cdot\left(\mathbf F_{\mathcal{W}}^{\rho\boldsymbol{u}}+\mathbf F_{\mathcal{P}}^{\rho\boldsymbol{u}}\right)=0,\\
&\partial_t(\rho E)+\nabla\cdot\left(\mathbf F_{\mathcal{W}}^{\rho E}+\mathbf F_{\mathcal{P}}^{\rho E}\right)=0,
\end{aligned}
\right.
\end{equation}
with
\begin{equation}\label{eq:total_flux_split}
    \mathbf U=\left\langle \boldsymbol{\psi}(\mathcal{W}+\mathcal{P}) \right\rangle,
    \qquad
    \mathbf F
    =
    \left\langle \boldsymbol{\xi}\boldsymbol{\psi} f \right\rangle
    =
    \mathbf F_{\mathcal{W}}+\mathbf F_{\mathcal{P}},
    \qquad
    \mathbf F_{\mathcal{W}}=\left\langle \boldsymbol{\xi}\boldsymbol{\psi}\mathcal{W} \right\rangle,
    \qquad
    \mathbf F_{\mathcal{P}}=\left\langle \boldsymbol{\xi}\boldsymbol{\psi}\mathcal{P} \right\rangle.
\end{equation}
Here $\mathbf F^\rho$, $\mathbf F^{\rho\boldsymbol{u}}$, and $\mathbf F^{\rho E}$ denote the mass, momentum, and energy flux components of the corresponding macroscopic flux vector.

The same conservation law follows by adding the two component moment balances. The horizon-exchange terms $\mp\left\langle \boldsymbol{\psi}\mathcal B_\mathcal{T} \right\rangle$ cancel exactly, and the identity $\mathbf U=\mathbf U_{\mathcal{W}}+\mathbf U_{\mathcal{P}}$ eliminates the relaxation transfer:
\begin{equation}\label{eq:wp_moment_exchange}
    \frac{\mathbf U-\mathbf U_{\mathcal{W}}}{\tau_0}-\frac{\mathbf U_{\mathcal{P}}}{\tau_0}=0,
\end{equation}
recovering Eq.~\eqref{eq:macro_conservation}. The local-horizon source and the relaxation transfer are therefore internal exchanges between $\mathcal{W}$ and $\mathcal{P}$: they appear in the individual component balances but cancel in the total conservation law. This cancellation is the structural basis for the numerical update in Section~\ref{sec:coupled_scheme}, where the total conservative variables are advanced by the sum of the wave and particle fluxes.

The hydrodynamic reduction is applied to the wave flux in the source-free total conservation law~\eqref{eq:macro_conservation}, not to the sourced wave balance alone. The particle flux is retained as a kinetic closure for the remaining transport.

The asymptotic structure of the wave operator is obtained by expanding the relaxation history over a smooth local kinetic horizon. Let
\begin{equation}
    \eta=\frac{\mathcal{T}}{\tau_0}
\end{equation}
be the local horizon-to-relaxation ratio. Under the usual Chapman--Enskog ordering, the target distribution along the backward molecular path is expanded as
\begin{equation}
    \mathcal{G}_\sigma
    =
    \mathcal{G}_0-\sigma D_t\mathcal{G}_0
    +\frac{\sigma^2}{2}D_t^2\mathcal{G}_0
    +\mathcal O(\sigma^3),
\end{equation}
and the leading local relaxation factor is $e^{-\sigma/\tau_0}$. Whereas the wave evolution equation in Proposition~\ref{prop:wave_evol} is exact and requires no frozen coefficients, the asymptotic coefficients below are obtained under a local frozen-relaxation approximation, $\tau_\sigma\approx\tau_0$ and $\Phi(\sigma)\approx\sigma/\tau_0$. Thus Eq.~\eqref{eq:wave_expansion} is obtained directly by substituting the Taylor expansion of the retarded target into the wave integral~\eqref{eq:wave_def} and collecting the zeroth and first moments of the exponential relaxation kernel. This is the same local integral-solution expansion that underlies the gas-kinetic scheme for Euler and Navier--Stokes fluxes~\cite{xu2001gks}, here applied to the finite-horizon wave operator. Integrating term by term gives
\begin{equation}\label{eq:wave_expansion}
    \mathcal{W}
    =
    \mathcal A_0\mathcal{G}_0
    -\mathcal A_1\tau_0D_t\mathcal{G}_0
    +\mathcal O(\tau_0^2D_t^2\mathcal{G}_0),
\end{equation}
with
\begin{equation}\label{eq:A_coefficients}
    \mathcal A_0=1-e^{-\eta},\qquad
    \mathcal A_1=1-(1+\eta)e^{-\eta}.
\end{equation}
These coefficients depend only on the ratio $\eta=\mathcal{T}/\tau_0$ and carry the exponential decay factor $e^{-\eta}$ inherited from the integral solution: the analytical wave part grows toward the full hydrodynamic flux when the horizon is large compared with the relaxation time, and it decays toward zero when the dynamics approach free molecular transport. The expansion is applied here only to the wave component, which is the already relaxed part of the distribution by construction; the surviving kinetic memory is not expanded but is carried exactly by the particle flux $\mathbf F_{\mathcal{P}}$. Since $f=\mathcal{W}+\mathcal{P}$ is an exact algebraic identity, the expanded wave flux and the particle flux are complementary, and no contribution is double counted.

Taking moments of Eq.~\eqref{eq:wave_expansion} gives the macroscopic wave flux
\begin{equation}\label{eq:wave_flux_decomposition}
    \mathbf F_{\mathcal{W}}
    =
    \mathcal A_0\mathbf F_{\mathrm E}
    +\mathcal A_1\mathbf F_{\mathrm{NS}}
    +\mathcal R_{\mathcal{W}}^{(2)} ,
\end{equation}
where
\begin{equation}
    \mathbf F_{\mathrm E}=\left\langle \boldsymbol{\xi}\boldsymbol{\psi}\mathcal{M} \right\rangle,
\end{equation}
and $\mathbf F_{\mathrm{NS}}$ denotes the first-order kinetic correction generated by $-\tau_0D_t\mathcal{G}_0$. The remainder $\mathcal R_{\mathcal{W}}^{(2)}$ contains the second- and higher-order terms in the wave expansion. The first-order heat-flux coefficient is determined by the chosen relaxation target and gives the prescribed Prandtl number in the computations below. Because the relaxation target carries the total conservative moments, $\left\langle \boldsymbol{\psi}\mathcal{G}_0 \right\rangle=\mathbf U$, the fluxes $\mathbf F_{\mathrm E}$ and $\mathbf F_{\mathrm{NS}}$ are the Euler and Navier--Stokes constitutive fluxes of the total macroscopic state $\mathbf U$.

Substituting Eq.~\eqref{eq:wave_flux_decomposition} into the total conservation law~\eqref{eq:macro_conservation} gives the macroscopic equation associated with the coupled system,
\begin{equation}\label{eq:macro_wave_particle_expansion}
    \partial_t\mathbf U
    +\nabla\cdot\Big[
    \underbrace{\mathcal A_0\mathbf F_{\mathrm E}
    +\mathcal A_1\mathbf F_{\mathrm{NS}}
    +\mathcal R_{\mathcal{W}}^{(2)}}_{\text{wave Euler/NS flux}}
    +\underbrace{\mathbf F_{\mathcal{P}}}_{\text{particle flux}}
    \Big]=0 .
\end{equation}
The equation is source-free: the wave contributes the Euler and Navier--Stokes fluxes weighted by the horizon-dependent decay coefficients $\mathcal A_0$ and $\mathcal A_1$, while the particle flux carries the complementary kinetic flux. The particle flux is not purely non-equilibrium: in a uniform equilibrium state $f=\mathcal{M}$ the decomposition gives $\mathcal{W}=\mathcal A_0\mathcal{M}$ and $\mathcal{P}=(1-\mathcal A_0)\mathcal{M}$, so $\mathbf F_{\mathcal{P}}$ retains the fraction $1-\mathcal A_0$ of the local equilibrium flux together with the entire surviving non-equilibrium memory. This is the structural origin of the regime-adaptive behaviour. In an under-resolved cell, $\eta\to0$ gives $\mathcal A_0\sim\eta$ and $\mathcal A_1\sim\eta^2/2$, so the hydrodynamic wave fluxes are suppressed and the transport is carried by $\mathbf F_{\mathcal{P}}$; in a well-resolved cell, $\eta\to\infty$ gives $\mathcal A_0,\mathcal A_1\to1$ while the local-horizon source $\mathcal B_{\mathcal T}\propto e^{-\Phi(\mathcal{T})}\to0$ and $\mathbf F_{\mathcal{P}}\to0$.

Taking conservative moments of Eq.~\eqref{eq:wave_flux_decomposition} component by component, and reducing the time derivatives through the Euler-level relations, the wave-flux components are
\begin{equation}\label{eq:wave_flux_components}
\left\{
\begin{aligned}
\mathbf F_{\mathcal{W}}^{\rho}
    &=\mathcal A_0\,\rho\boldsymbol{u},\\
\mathbf F_{\mathcal{W}}^{\rho\boldsymbol{u}}
    &=\mathcal A_0\left(\rho\boldsymbol{u}\otimes\boldsymbol{u}+p\mathbf I\right)
    -\mathcal A_1\boldsymbol{\sigma}_{\mathrm{NS}}
    +\mathcal R_{\mathcal{W}}^{\rho\boldsymbol{u},(2)},\\
\mathbf F_{\mathcal{W}}^{\rho E}
    &=\mathcal A_0\left(\rho E+p\right)\boldsymbol{u}
    -\mathcal A_1\left(\boldsymbol{\sigma}_{\mathrm{NS}}\cdot\boldsymbol{u}-\boldsymbol{q}_{\mathrm{NS}}\right)
    +\mathcal R_{\mathcal{W}}^{\rho E,(2)},
\end{aligned}
\right.
\end{equation}
where $\mathbf I$ is the identity tensor, $E=e+\frac{1}{2}|\boldsymbol{u}|^2$ is the total specific energy, the viscous stress and heat flux are
\begin{equation}\label{eq:ns_constitutive}
    \boldsymbol{\sigma}_{\mathrm{NS}}
    =
    \mu\left[\nabla\boldsymbol{u}+(\nabla\boldsymbol{u})^T-\frac{2}{3}(\nabla\cdot\boldsymbol{u})\mathbf I\right],
    \qquad
    \boldsymbol{q}_{\mathrm{NS}}=-\kappa\nabla T,
\end{equation}
with $\mu=p\tau_0$ and $\kappa=5Rp\tau_0/(2\Pr)$ for a monatomic gas (the Shakhov model enters through the value of $\Pr$). The terms $\mathcal R_{\mathcal{W}}^{\rho\boldsymbol{u},(2)}$ and $\mathcal R_{\mathcal{W}}^{\rho E,(2)}$ denote second- and higher-order remainders of the wave expansion and are not used in the numerical flux below. The mass flux carries no first-order correction, consistent with the absence of diffusive mass transport for a single species.

Substituting Eq.~\eqref{eq:wave_flux_components} into the component conservation law~\eqref{eq:macro_conservation} gives the coupled macroscopic system at each truncation order of the wave flux, each written as an $\mathcal A_i$-weighted hydrodynamic wave flux plus the particle flux $\mathbf F_{\mathcal{P}}$. Retaining the Euler-level wave flux ($\mathcal A_0$) gives
\begin{equation}\label{eq:euler_system}
\left\{
\begin{aligned}
&\partial_t\rho
    +\nabla\cdot\left(\mathcal A_0\,\rho\boldsymbol{u}+\mathbf F_{\mathcal{P}}^{\rho}\right)=0,\\
&\partial_t(\rho\boldsymbol{u})
    +\nabla\cdot\left[\mathcal A_0\left(\rho\boldsymbol{u}\otimes\boldsymbol{u}+p\mathbf I\right)
    +\mathbf F_{\mathcal{P}}^{\rho\boldsymbol{u}}\right]=0,\\
&\partial_t(\rho E)
    +\nabla\cdot\left[\mathcal A_0\left(\rho E+p\right)\boldsymbol{u}
    +\mathbf F_{\mathcal{P}}^{\rho E}\right]=0;
\end{aligned}
\right.
\end{equation}
At this order the wave flux is purely advective: it transports the $\mathcal A_0$-weighted inviscid Euler flux, and all dissipative and high-order transport is left to the particle flux $\mathbf F_{\mathcal{P}}$. Including the first-order wave flux adds the viscous stress and heat conduction through the coefficient $\mathcal A_1$, giving the Navier--Stokes--level system
\begin{equation}\label{eq:ns_system}
\left\{
\begin{aligned}
&\partial_t\rho
    +\nabla\cdot\left(\mathcal A_0\,\rho\boldsymbol{u}+\mathbf F_{\mathcal{P}}^{\rho}\right)=0,\\
&\partial_t(\rho\boldsymbol{u})
    +\nabla\cdot\left[\mathcal A_0\left(\rho\boldsymbol{u}\otimes\boldsymbol{u}+p\mathbf I\right)
    -\mathcal A_1\boldsymbol{\sigma}_{\mathrm{NS}}
    +\mathbf F_{\mathcal{P}}^{\rho\boldsymbol{u}}\right]=0,\\
&\partial_t(\rho E)
    +\nabla\cdot\left[\mathcal A_0\left(\rho E+p\right)\boldsymbol{u}
    -\mathcal A_1\left(\boldsymbol{\sigma}_{\mathrm{NS}}\cdot\boldsymbol{u}-\boldsymbol{q}_{\mathrm{NS}}\right)
    +\mathbf F_{\mathcal{P}}^{\rho E}\right]=0;
\end{aligned}
\right.
\end{equation}
Here the wave part carries Newtonian viscous stress and Fourier heat conduction with horizon-scaled coefficients $\mathcal A_1\mu$ and $\mathcal A_1\kappa$, so the full transport coefficients are recovered only as $\mathcal A_1\to1$ at a well-resolved horizon; otherwise the remaining viscous and conductive transport is supplied by the particle flux. Thus the wave part contributes the $\mathcal A_i$-weighted Euler and Navier--Stokes fluxes, while the particle part $\mathbf F_{\mathcal{P}}$ carries the complementary kinetic flux. In the continuum limit $\eta\to\infty$, the coefficients satisfy $\mathcal A_0,\mathcal A_1\to1$ and the particle flux vanishes, $\mathbf F_{\mathcal{P}}\to0$, so the system reduces to the classical source-free Navier--Stokes equations.

Equivalently, one may introduce the local relaxation-to-horizon ratio $\varepsilon=\tau_0/\mathcal{T}=1/\eta$, a Knudsen-type small parameter measuring how far the local horizon resolves the collision scale. From Eq.~\eqref{eq:A_coefficients}, the continuum limit $\varepsilon\to0$ gives $\mathcal A_0,\mathcal A_1\to1$ with exponentially small corrections, so the coupled system reduces to the source-free Navier--Stokes equations at the retained order. In the rarefied limit $\varepsilon\to\infty$, the same coefficients vanish as $\mathcal A_0\sim1/\varepsilon$ and $\mathcal A_1\sim1/(2\varepsilon^2)$, switching off the hydrodynamic wave flux so that the transport is carried by the particle component.

The preceding derivation completes the equation-level wave-particle decomposition. The microscopic wave-particle equations describe the exchange between the relaxed wave and collisionless particle components, while the macroscopic conservative system provides the source-free closure for their total moments. The next section constructs a conservative numerical realization of this coupled system.

\section{Numerical Algorithm}\label{sec:coupled_scheme}

This section gives the algorithmic realization of the continuous wave-particle system derived in Section~\ref{sec:wp_decomposition}. The discretization starts from the conservative total equation, so that the macroscopic variables are updated by the sum of a wave flux and a particle flux. Following the wave-flux decomposition~\eqref{eq:wave_flux_decomposition}, the wave flux is evaluated by a gas-kinetic solver at the Navier--Stokes level, i.e. through the horizon-weighted Euler and Navier--Stokes contributions $\mathcal A_0\mathbf F_{\mathrm E}+\mathcal A_1\mathbf F_{\mathrm{NS}}$. The particle flux is evaluated either by a deterministic discrete-ordinate method or by Monte Carlo particles. The two parts are coupled through the relaxation target and the local-horizon source term.

\subsection{Conservative Update and Wave Flux}\label{sec:algorithm_wave_flux}

In the continuous decomposition, the horizon $\mathcal{T}$ is an admissible flight time. In the numerical scheme it is chosen from the cell-scale evolution time rather than from the global time step. A convenient choice used in the present computations is
\begin{equation}\label{eq:algorithm_local_horizon}
   \mathcal{T}_i^n=\mathrm{CFL}_l\,\Delta t_{i,\mathrm{loc}}^n,
    \qquad
    \Delta t_{i,\mathrm{loc}}^n
    =
    \left[
    \frac{1}{|\Omega_i|}
    \sum_{j\in N(i)}
    |\Gamma_{ij}|\left(
    |\boldsymbol{u}_i^n\cdot\boldsymbol{n}_{ij}|+a_i^n
    \right)
    \right]^{-1},
\end{equation}
where $a_i^n$ is the acoustic speed and $\mathrm{CFL}_l$ is the local-horizon CFL number. The actual time step for the conservative update is still restricted by the usual global CFL condition, for example $\Delta t^n=\mathrm{CFL}\min_i\Delta t_{i,\mathrm{loc}}^n$. The numerical tests reported below use $\mathrm{CFL}_l=1$, so the wave-particle split is controlled by the cell-scale evolution time.

Let $\Omega_i$ be a control volume, $|\Omega_i|$ its volume, $\Gamma_{ij}$ the interface shared with a neighboring cell $j$, $|\Gamma_{ij}|$ its area, and $\boldsymbol{n}_{ij}$ the outward unit normal from cell $i$ to cell $j$. The finite-volume update for the total conservative variables is
\begin{equation}\label{eq:fv_update_total}
    \mathbf U_i^{n+1}
    =
    \mathbf U_i^n
    -
    \frac{\Delta t}{|\Omega_i|}
    \sum_{j\in N(i)}
    \left(
    \mathbf F_{\mathcal{W},ij}
    +
    \mathbf F_{\mathcal{P},ij}
    \right)|\Gamma_{ij}|,
\end{equation}
where $\mathbf F_{\mathcal{W},ij}$ and $\mathbf F_{\mathcal{P},ij}$ are the time-averaged normal fluxes of the wave and particle components across $\Gamma_{ij}$.

The wave flux is computed at the Navier--Stokes level. Following the gas-kinetic scheme for viscous compressible flow~\cite{xu2001gks}, the interface flux is obtained from a local kinetic evolution rather than by adding a separate central viscous discretization. At the interface $\Gamma_{ij}$, a local coordinate is introduced with $x_n$ along the normal direction $\boldsymbol{n}_{ij}$. The primitive variables are reconstructed from the two neighboring cells to obtain left and right states $\mathbf W_{ij}^{L}$ and $\mathbf W_{ij}^{R}$ and their slopes. These states define the upwind initial distribution near the interface, while a compatible interface target state $\mathcal{G}_{ij}^{0}$ is constructed from the conservative compatibility condition.

The wave flux is built directly from the local kinetic integral solution, not by rescaling a completed Navier--Stokes flux. At the interface $\Gamma_{ij}$ the spatial and temporal derivatives of the target are written, following the gas-kinetic construction in Ref.~\cite{xu2001gks}, as
\begin{equation}\label{eq:gks_A_definition}
    \boldsymbol{\xi}\cdot\nabla\mathcal{G}_{ij}^{0}=a_{ij}\mathcal{G}_{ij}^{0},
    \qquad
    \partial_t\mathcal{G}_{ij}^{0}=a_{t,ij}\mathcal{G}_{ij}^{0},
\end{equation}
where $a_{ij}$ represents the directional spatial derivative along the molecular velocity. The derivative coefficients are obtained locally from the reconstructed macroscopic slopes. For each spatial direction $x_m$, $m=1,\ldots,d$, with $d$ the physical dimension,
\begin{equation}\label{eq:gks_derivative_basis}
    \partial_{x_m}\mathcal{G}_{ij}^{0}
    =
    a_{m,ij}\mathcal{G}_{ij}^{0},
    \qquad
    a_{m,ij}=\boldsymbol{\lambda}_{m,ij}\cdot\boldsymbol{\psi},
    \qquad
    a_{t,ij}=\boldsymbol{\lambda}_{t,ij}\cdot\boldsymbol{\psi}.
\end{equation}
The corresponding coefficient vectors are determined by moment matching and by the compatibility condition of the local kinetic evolution. With $\mathbf U_{ij}^{0}=\left\langle\boldsymbol{\psi}\mathcal{G}_{ij}^{0}\right\rangle$ and
\[
    \mathbf M_{ij}
    =
    \left\langle
    \boldsymbol{\psi}\boldsymbol{\psi}^{T}\mathcal{G}_{ij}^{0}
    \right\rangle ,
\]
the spatial derivatives satisfy
\begin{equation}\label{eq:gks_spatial_derivative_coefficients}
    \mathbf M_{ij}\boldsymbol{\lambda}_{m,ij}
    =
    \partial_{x_m}\mathbf U_{ij}^{0},
    \qquad m=1,\ldots,d,
\end{equation}
where $\partial_{x_m}\mathbf U_{ij}^{0}$ is obtained from the reconstructed conservative slopes. The time derivative coefficient is then fixed by
$\left\langle\boldsymbol{\psi}(\partial_t\mathcal{G}_{ij}^{0}+\boldsymbol{\xi}\cdot\nabla\mathcal{G}_{ij}^{0})\right\rangle=0$, namely
\begin{equation}\label{eq:gks_derivative_coefficients}
    \mathbf M_{ij}\boldsymbol{\lambda}_{t,ij}
    =
    -\left\langle
    \boldsymbol{\psi}
    \left(\sum_{m=1}^{d}\xi_m a_{m,ij}\right)
    \mathcal{G}_{ij}^{0}
    \right\rangle .
\end{equation}
Finally,
$a_{ij}=\sum_{m=1}^{d}\xi_m a_{m,ij}$ in Eq.~\eqref{eq:gks_A_definition}; in a one-dimensional normal reduction this becomes $a_{ij}=(\boldsymbol{\xi}\cdot\boldsymbol{n}_{ij})\alpha_{ij}$, where $\alpha_{ij}$ is the logarithmic derivative in the interface-normal direction. Thus the spatial derivative is supplied by reconstruction, while the time derivative is an algebraic consequence of conservation compatibility. The wave component at the interface is the relaxation integral truncated at the local horizon $\mathcal{T}_{ij}$. Carrying the integral only to $\sigma=\mathcal{T}_{ij}$ ties the horizon weights of the wave operator directly into the flux: with $\int_0^{\mathcal{T}_{ij}}\tfrac{1}{\tau_{ij}}e^{-\sigma/\tau_{ij}}\mathrm{d}\sigma=\mathcal A_{0,ij}$ and $\int_0^{\mathcal{T}_{ij}}\tfrac{1}{\tau_{ij}}e^{-\sigma/\tau_{ij}}\sigma\,\mathrm{d}\sigma=\tau_{ij}\mathcal A_{1,ij}$, together with the local expansion $\mathcal{G}_{ij}^{0}(t-\sigma)=\mathcal{G}_{ij}^{0}[1+a_{t,ij}t-\sigma(a_{ij}+a_{t,ij})+\mathcal O(t^2+\sigma^2)]$, the horizon integral yields the wave interface distribution in closed form,
\begin{equation}\label{eq:wave_interface_distribution}
    f_{\mathcal{W},ij}(t,\boldsymbol{\xi})
    =
    \mathcal{G}_{ij}^{0}\left[
    \mathcal A_{0,ij}
    -\mathcal A_{1,ij}\tau_{ij}a_{ij}
    +
    \left(\mathcal A_{0,ij}t-\mathcal A_{1,ij}\tau_{ij}\right)a_{t,ij}
    \right],
\end{equation}
where the horizon coefficients $\mathcal A_{0,ij}$ and $\mathcal A_{1,ij}$ are defined in Eq.~\eqref{eq:algorithm_A01} below. The decay coefficients are thus generated by the same integral solution that builds the flux, not applied to an assembled flux afterwards. The surviving free-transport memory $e^{-t/\tau_{ij}}f_{\mathrm{up},ij}$, with $f_{\mathrm{up},ij}$ the upwind reconstructed distribution selected by the sign of $\boldsymbol{\xi}\cdot\boldsymbol{n}_{ij}$, is excluded from $f_{\mathcal{W},ij}$ and is carried by the particle flux $\mathbf F_{\mathcal{P},ij}$ in Eq.~\eqref{eq:particle_flux_closure}.

The wave flux is the time average of the normal moment of Eq.~\eqref{eq:wave_interface_distribution} over $[0,\Delta t]$,
\begin{equation}\label{eq:wave_flux_algorithm}
    \mathbf F_{\mathcal{W},ij}
    =
    \frac{1}{\Delta t}\int_0^{\Delta t}
    \left\langle (\boldsymbol{\xi}\cdot\boldsymbol{n}_{ij})\,\boldsymbol{\psi}\,f_{\mathcal{W},ij}(t,\boldsymbol{\xi}) \right\rangle\,\mathrm{d} t
    =
    \left\langle (\boldsymbol{\xi}\cdot\boldsymbol{n}_{ij})\,\boldsymbol{\psi}\,
    \mathcal{G}_{ij}^{0}\left(
    b_{0,ij}+b_{a,ij}\,a_{ij}+b_{t,ij}\,a_{t,ij}
    \right) \right\rangle .
\end{equation}
The step time-integration coefficients are the first two time moments over the step,
\begin{equation}\label{eq:gks_time_coeffs}
    \theta_0=\frac{1}{\Delta t}\int_0^{\Delta t}\mathrm{d} t=1,
    \qquad
    \theta_1=\frac{1}{\Delta t}\int_0^{\Delta t}t\,\mathrm{d} t=\frac{\Delta t}{2},
\end{equation}
and combining them with the horizon coefficients $\mathcal A_{0,ij},\mathcal A_{1,ij}$ produces the single wave coefficient set
\begin{equation}\label{eq:wave_combined_coeffs}
\left\{
\begin{aligned}
    b_{0,ij}&=\theta_0\,\mathcal A_{0,ij}=\mathcal A_{0,ij},\\
    b_{a,ij}&=-\theta_0\,\mathcal A_{1,ij}\tau_{ij}=-\mathcal A_{1,ij}\tau_{ij},\\
    b_{t,ij}&=\theta_1\,\mathcal A_{0,ij}-\theta_0\,\mathcal A_{1,ij}\tau_{ij}
    =\tfrac12\Delta t\,\mathcal A_{0,ij}-\mathcal A_{1,ij}\tau_{ij}.
\end{aligned}
\right.
\end{equation}
The single coefficient set $(b_{0,ij},b_{a,ij},b_{t,ij})$ combines the gas-kinetic time-integration constants with the horizon decay coefficients, and its structure is transparent. The term $b_{0,ij}=\mathcal A_{0,ij}$ is the horizon-weighted Euler flux. The $\tfrac12\Delta t\,\mathcal A_{0,ij}$ part of $b_{t,ij}$ is the first-order temporal evolution of the equilibrium, so the Euler part carries its own time derivative; the $-\mathcal A_{1,ij}\tau_{ij}$ parts of $b_{a,ij}$ and $b_{t,ij}$ are the Chapman--Enskog viscous stress and heat conduction acting on the spatial and temporal target gradients. A single gas-kinetic moment evaluation thus returns the wave flux at the Navier--Stokes level with the horizon weighting built in, recovering the full Navier--Stokes gas-kinetic flux as $\mathcal A_{0,ij},\mathcal A_{1,ij}\to1$.

The local coefficients are evaluated from
\begin{equation}\label{eq:algorithm_A01}
    \eta_{ij}=\frac{\mathcal{T}_{ij}}{\tau_{ij}},\qquad
    \mathcal A_{0,ij}=1-e^{-\eta_{ij}},\qquad
    \mathcal A_{1,ij}=1-(1+\eta_{ij})e^{-\eta_{ij}} .
\end{equation}
Thus, when $\mathcal{T}_{ij}/\tau_{ij}$ is large, the wave flux approaches the full Navier--Stokes GKS flux; when $\mathcal{T}_{ij}/\tau_{ij}$ is small, the wave flux is reduced and the remaining transport is carried by $\mathbf F_{\mathcal{P},ij}$.

For the target used in the computations, the viscosity and heat conductivity entering the wave flux are
\begin{equation}\label{eq:algorithm_transport_coeff}
    \mu=p\tau,\qquad
    \kappa=\frac{5}{2}\frac{Rp\tau}{\Pr}.
\end{equation}
Equivalently, if a viscosity law is prescribed, for example
\begin{equation}\label{eq:viscosity_law}
    \mu=\mu_{\mathrm{ref}}\left(\frac{T}{T_{\mathrm{ref}}}\right)^\omega ,
\end{equation}
then the relaxation time is set by $\tau=\mu/p$ and the heat conductivity follows from the prescribed Prandtl number. In implementation, the factor $\mathcal A_{1,ij}$ enters through the combined coefficients $b_{a,ij}$ and $b_{t,ij}$ in Eq.~\eqref{eq:wave_combined_coeffs}; equivalently, it may be absorbed once into the transport coefficients used to construct the viscous flux. The corresponding effective wave transport coefficients are
\begin{equation}\label{eq:effective_wave_transport}
    \mu_{\mathcal{W},ij}=\mathcal A_{1,ij}\mu_{ij},\qquad
    \kappa_{\mathcal{W},ij}=\mathcal A_{1,ij}\kappa_{ij}.
\end{equation}

\begin{remark}[Single counting of viscosity]
The factor $\mathcal A_{1,ij}$ is an activation weight for the wave's Navier--Stokes part, not an additional collision time, so it must be applied once---either to $\mathbf F_{\mathrm{vis},ij}^{\mathrm{GKS}}$ in Eq.~\eqref{eq:wave_flux_algorithm} or to the transport coefficients in Eq.~\eqref{eq:effective_wave_transport}, but not both, which would give $\mathcal A_{1,ij}^2\mu_{ij}$. The wave thus carries a viscosity $\mathcal A_{1,ij}\mu_{ij}$, the remainder being supplied by the particle flux; as $\eta_{ij}\to\infty$, $\mathcal A_{1,ij}\to1$ and the full Navier--Stokes GKS viscosity is recovered (Theorem~\ref{thm:ap}).
\end{remark}

The macroscopic variables entering the GKS flux are reconstructed by a second-order van Leer limiter. For a scalar cell variable $\phi_i$ in one coordinate direction, the limited slope is
\begin{equation}\label{eq:vanleer_limiter}
    \delta\phi_i
    =
    \operatorname{VL}(\phi_i-\phi_{i-1},\phi_{i+1}-\phi_i),
    \qquad
    \operatorname{VL}(a,b)
    =
    \frac{\operatorname{sign}(a)+\operatorname{sign}(b)}{2}
    \frac{2|a||b|}{|a|+|b|+\epsilon_{\mathrm{lim}}},
\end{equation}
where $\epsilon_{\mathrm{lim}}$ is a small positive number used only to avoid division by zero. The left and right interface states are then obtained from the reconstructed cell profiles and are used to construct the interface target distribution, its Maxwellian part, and the spatial derivatives required by the GKS Navier--Stokes flux.

\subsection{Discretization of the Particle Equation}\label{sec:algorithm_particle}

The particle equation is solved in a conservative transport form. The particle flux provides the missing kinetic contribution in Eq.~\eqref{eq:fv_update_total},
\begin{equation}\label{eq:particle_flux_closure}
    \mathbf F_{\mathcal{P},ij}
    =
    \left\langle (\boldsymbol{\xi}\cdot\boldsymbol{n}_{ij})\boldsymbol{\psi}\mathcal{P} \right\rangle_{ij},
\end{equation}
and closes the total macroscopic finite-volume update together with the wave flux. The relaxation sink $-\mathcal{P}/\tau_0$ and the local-horizon source $\mathcal B_\mathcal{T}$ are internal exchanges between $\mathcal{W}$ and $\mathcal{P}$; they do not appear as external sources in the total conservative update.

For non-negative particle weights, only the positive injection part is sampled. At a velocity node $\boldsymbol{\xi}_k$, the particle source used in the present computations is
\begin{equation}\label{eq:positive_particle_source}
    \mathcal B_{\mathcal{T},i,k}^{+}
    =
    \frac{\mathcal{G}_{\mathcal{T},i,k}}{\tau_i}
    \exp\!\left[-\Phi_{i,k}(\mathcal{T}_i)\right]
    \alpha_{i,k},
    \qquad
    \alpha_{i,k}
    =
    \max\left(0,1-D_t \mathcal{T}_i(\boldsymbol{\xi}_k)\right),
\end{equation}
where $\mathcal{G}_{\mathcal{T},i,k}$ is evaluated from the reconstructed local state at the horizon, $\tau_i$ is the cell-local frozen relaxation time used in the source integration, and $\Phi_{i,k}(\mathcal{T}_i)$ is the cumulative collision frequency over that horizon. The part of the signed exchange that is not sampled by particles is carried by the residual wave reconstruction. Equivalently, the source state for particle sampling is
\begin{equation}\label{eq:particle_source_state}
    \mathcal{P}_{\mathrm{src},i,k}^{+}
    =
    \tau_i\mathcal B_{\mathcal{T},i,k}^{+}
    =
    \mathcal{G}_{\mathcal{T},i,k}
    \exp\!\left[-\Phi_{i,k}(\mathcal{T}_i)\right]
    \alpha_{i,k}.
\end{equation}
Under the same local frozen-time approximation, the factor $1/\tau_i$ in $\mathcal B_{\mathcal{T}}^{+}$ is canceled by the exponential source integration. Both the deterministic $S_N$ solver and the Monte Carlo solver used in this paper employ Eq.~\eqref{eq:particle_source_state}.

In the deterministic implementation, the velocity space is discretized by quadrature points and weights $\{(\boldsymbol{\xi}_k,\omega_k)\}_{k=1}^{N_v}$. Denote the cell average of the particle distribution by $\mathcal{P}_{i,k}$. The normal particle flux across $\Gamma_{ij}$ is evaluated by an upwind finite-volume formula,
\begin{equation}\label{eq:sn_particle_flux}
    \mathbf F_{\mathcal{P},ij}^{S_N}
    =
    \sum_{k=1}^{N_v}
    \omega_k
    (\boldsymbol{\xi}_k\cdot\boldsymbol{n}_{ij})\boldsymbol{\psi}_k
    \mathcal{P}_{ij,k}^{\mathrm{up}},
\end{equation}
where $\boldsymbol{\psi}_k=\left(1,\boldsymbol{\xi}_k,\frac{1}{2}|\boldsymbol{\xi}_k|^2\right)^T$. The upwind value is
\begin{equation}
    \mathcal{P}_{ij,k}^{\mathrm{up}}
    =
    \begin{cases}
    \mathcal{P}_{ij,k}^{L}, & \boldsymbol{\xi}_k\cdot\boldsymbol{n}_{ij}\ge0,\\
    \mathcal{P}_{ij,k}^{R}, & \boldsymbol{\xi}_k\cdot\boldsymbol{n}_{ij}<0,
    \end{cases}
\end{equation}
with left and right interface values reconstructed by the same van Leer limiter used for the macroscopic wave variables. In one-step notation, the transport update is
\begin{equation}\label{eq:sn_transport_update}
    \mathcal{P}_{i,k}^{*,n+1}
    =
    \mathcal{P}_{i,k}^{n}
    -
    \frac{\Delta t}{|\Omega_i|}
    \sum_{j\in N(i)}
    (\boldsymbol{\xi}_k\cdot\boldsymbol{n}_{ij})
    \mathcal{P}_{ij,k}^{\mathrm{up}} |\Gamma_{ij}|.
\end{equation}
The relaxation and particle source are then applied to the particle component by the exponential update of
$\partial_t\mathcal{P}+\mathcal{P}/\tau_i=\mathcal B_\mathcal{T}^+$ with frozen coefficients over one step. Let
$r_i=\exp(-\Delta t/\tau_i)$. Then
\begin{equation}\label{eq:sn_relax_source_update}
    \mathcal{P}_{i,k}^{n+1}
    =
    r_i\mathcal{P}_{i,k}^{*,n+1}
    +
    (1-r_i)\mathcal{P}_{\mathrm{src},i,k}^{+}.
\end{equation}
This is an exponential convex combination of the transported particle state and the source state for particle sampling.
The particle moments used in the macroscopic closure are obtained by quadrature,
\begin{equation}\label{eq:sn_particle_moments}
    \mathbf U_{\mathcal{P},i}^{n+1}
    =
    \sum_{k=1}^{N_v}\omega_k\boldsymbol{\psi}_k\mathcal{P}_{i,k}^{n+1}.
\end{equation}

In the stochastic implementation, $\mathcal{P}$ is represented by computational particles $\{(\boldsymbol{x}_p,\boldsymbol{\xi}_p,m_p)\}$, where $m_p$ is the particle weight. During one time step, particles move along molecular characteristics,
\begin{equation}\label{eq:mc_streaming}
    \boldsymbol{x}_p^{\,n+1}=\boldsymbol{x}_p^n+\boldsymbol{\xi}_p\Delta t .
\end{equation}
Face crossings during this motion are accumulated as the Monte Carlo estimate of $\mathbf F_{\mathcal{P},ij}$. The relaxation sink in Eq.~\eqref{eq:P_evol} is applied by survival sampling:
\begin{equation}\label{eq:mc_survival}
    P_{\mathrm{surv},i}=r_i=\exp(-\Delta t/\tau_i),
    \qquad
    P_{\mathrm{abs},i}=1-r_i .
\end{equation}
A particle in cell $i$ is retained with probability $P_{\mathrm{surv},i}$; otherwise its mass, momentum, and energy are no longer represented by $\mathcal{P}$ and are returned to the wave component by the residual moment reconstruction. New particles are sampled from the source state $\mathcal{P}_{\mathrm{src}}^{+}$ in Eq.~\eqref{eq:particle_source_state}. In practice, the expected source moments in a cell are
\begin{equation}\label{eq:mc_source_moment}
    \Delta \mathbf U_{\mathcal{P},i}^{+}
    =
    P_{\mathrm{abs},i}
    \left\langle \boldsymbol{\psi}\mathcal{P}_{\mathrm{src}}^{+} \right\rangle_i,
\end{equation}
and the sampled particle ensemble is corrected, when necessary, so that its low-order conservative moments match the prescribed source moments. This keeps the Monte Carlo update consistent with the same macroscopic exchange used by the deterministic formulation.

The factor $1-D_t \mathcal{T}$ can also be treated by a moving-boundary, or particle-chasing, construction. Consider a molecule emitted from the wave field and let $s$ be its forward flight time. The exact crossing time into the particle region is the first root, if it exists, of
\begin{equation}\label{eq:chasing_root}
    H(s)=s-\mathcal{T}(\boldsymbol{x}+\boldsymbol{\xi} s,t+s,\boldsymbol{\xi})=0,
    \qquad 0<s\le\Delta t .
\end{equation}
The corresponding survival factor is
\begin{equation}\label{eq:chasing_survival}
    P_{\mathrm{chase}}
    =
    \exp\!\left[
    -\int_0^{s_*}
    \frac{\mathrm{d}\nu}{\tau(\boldsymbol{x}+\boldsymbol{\xi}\nu,t+\nu)}
    \right],
\end{equation}
where $s_*$ is the first crossing time. If $\mathcal{T}$ and $\tau$ are frozen locally, this reduces to the approximate expression
\begin{equation}\label{eq:frozen_chasing_probability}
    s_*\approx\frac{\mathcal{T}}{1-D_t \mathcal{T}},
    \qquad
    P_{\mathrm{chase}}
    \approx
    \exp\!\left[-\frac{\mathcal{T}}{\tau(1-D_t \mathcal{T})}\right],
    \qquad 1-D_t \mathcal{T}>0 .
\end{equation}
This construction follows the continuous moving-horizon geometry more directly, but it requires a local first-crossing calculation and becomes less straightforward near rapidly varying horizons or boundaries. The numerical results in this paper therefore use the source state in Eq.~\eqref{eq:particle_source_state} for both $S_N$ and Monte Carlo (MC) particles.

\subsection{Macro-Micro Coupling Procedure}\label{sec:algorithm_coupling}

The wave and particle solvers are coupled through the conservative variables and the relaxation target. At the beginning of a time step, the cell-averaged total conservative variables are decomposed as
\begin{equation}\label{eq:algorithm_macro_micro_split}
    \mathbf U_i^n=\mathbf U_{\mathcal{W},i}^n+\mathbf U_{\mathcal{P},i}^n,
    \qquad
    \mathbf U_{\mathcal{P},i}^n=\left\langle \boldsymbol{\psi}\mathcal{P}_i^n \right\rangle.
\end{equation}
The total state $\mathbf U_i^n$ determines the primitive variables, the relaxation time $\tau_i^n$, the target distribution $\mathcal{G}_i^n$, and the local horizon $\mathcal{T}_i^n$. These quantities provide the wave flux, the particle source, and the absorption probability used by the two subsolvers. Thus the macroscopic field supplies the thermodynamic closure for the microscopic equation, while the microscopic field supplies the non-equilibrium flux and moments needed by the macroscopic conservation law.

The finite-volume update is performed in conservative form. First, the wave flux $\mathbf F_{\mathcal{W},ij}$ is computed from the GKS Navier--Stokes flux in Eq.~\eqref{eq:wave_flux_algorithm}. Second, the particle transport solver, either $S_N$ or MC, evaluates $\mathbf F_{\mathcal{P},ij}$ and advances the particle component according to Eqs.~\eqref{eq:sn_transport_update}--\eqref{eq:sn_relax_source_update} or Eqs.~\eqref{eq:mc_streaming}--\eqref{eq:mc_source_moment}. The total conservative variables are updated only through the sum of the wave and particle fluxes in Eq.~\eqref{eq:fv_update_total}. After the particle moments have been reconstructed from the updated particle distribution or particle ensemble, the wave conservative variables are obtained by the conservative residual
\begin{equation}\label{eq:wave_residual_reconstruction}
    \mathbf U_{\mathcal{W},i}^{n+1}
    =
    \mathbf U_i^{n+1}
    -
    \mathbf U_{\mathcal{P},i}^{n+1}.
\end{equation}
This step is the macro-micro closure of the algorithm: the particle part contributes its transported non-equilibrium moments, and the wave part carries the remaining conservative moments. As long as the total finite-volume flux is conservative and the particle moment reconstruction is consistent with the sampled or quadrature distribution, Eq.~\eqref{eq:wave_residual_reconstruction} preserves the cell-wise identity $\mathbf U=\mathbf U_{\mathcal{W}}+\mathbf U_{\mathcal{P}}$.

The local horizon source closes the exchange between the two components. The source moments sampled into $\mathcal{P}$ are removed from the wave component through Eq.~\eqref{eq:wave_residual_reconstruction}, while particles absorbed by the factor $e^{-\Delta t/\tau}$ are no longer included in $\mathbf U_{\mathcal{P}}^{n+1}$ and are therefore carried by the reconstructed wave state. In the deterministic $S_N$ implementation this exchange is evaluated by velocity quadrature; in the Monte Carlo implementation it is evaluated by particle moments with a low-order conservative correction when source particles are generated. Therefore the two microscopic discretizations share the same macroscopic conservation update and differ only in how $\mathcal{P}$ and $\mathbf F_{\mathcal{P}}$ are represented. The complete WPD algorithm is summarized in Fig.~\ref{fig:wpd_algorithm_flowchart}.

\begin{remark}[Temporal accuracy]
The discretization is second order in space but first order in time, owing to the forward-Euler update and the transport--relaxation--source splitting. This temporal error does not affect the reported results, since the smooth accuracy test uses $\Delta t=\mathcal O(\Delta x^2)$ and the remaining cases are either advanced to a steady state or compared against a reference solution computed with identical time stepping. A second-order extension, by a two-stage flux update with Strang splitting of the relaxation and source, leaves the wave-particle decomposition unchanged and is left to future work.
\end{remark}
\begin{figure}[!htbp]
\centering
\begin{tikzpicture}[
    node distance=7mm,
    block/.style={rectangle,draw,rounded corners,align=center,text width=0.78\linewidth,minimum height=8mm},
    arrow/.style={->,thick}
]
\node[block] (start) {Known at time $t^n$: $\mathbf U_i^n$, $\mathbf U_{\mathcal{P},i}^n$, $\mathbf U_{\mathcal{W},i}^n=\mathbf U_i^n-\mathbf U_{\mathcal{P},i}^n$};
\node[block,below=of start] (macro) {Recover primitive variables; compute $\tau_i^n$, target $\mathcal{G}_i^n$, local horizon $\mathcal T_i^n$, and coefficients $\mathcal A_0$, $\mathcal A_1$};
\node[block,below=of macro] (flux) {Compute wave flux $\mathbf F_{\mathcal{W},ij}$ and advance particle representation by $S_N$ or MC to obtain $\mathbf F_{\mathcal{P},ij}$ and $\mathbf U_{\mathcal{P}}^{n+1}$};
\node[block,below=of flux] (total) {Conservative update of total variables\\$\mathbf U_i^{n+1}=\mathbf U_i^n-\frac{\Delta t}{|\Omega_i|}\sum_j(\mathbf F_{\mathcal{W},ij}+\mathbf F_{\mathcal{P},ij})|\Gamma_{ij}|$};
\node[block,below=of total] (exchange) {Internal wave-particle exchange\\particle absorption $e^{-\Delta t/\tau}$ and particle source sampling from $\mathcal B_{\mathcal T}^+$ have been included in $\mathbf U_{\mathcal{P}}^{n+1}$};
\node[block,below=of exchange] (recombine) {Moment closure and wave reconstruction\\$\mathbf U_{\mathcal{W},i}^{n+1}=\mathbf U_i^{n+1}-\mathbf U_{\mathcal{P},i}^{n+1}$};
\node[block,below=of recombine] (next) {Advance to $t^{n+1}$};

\draw[arrow] (start) -- (macro);
\draw[arrow] (macro) -- (flux);
\draw[arrow] (flux) -- (total);
\draw[arrow] (total) -- (exchange);
\draw[arrow] (exchange) -- (recombine);
\draw[arrow] (recombine) -- (next);
\end{tikzpicture}
\caption{Macro-micro coupling procedure of the WPD algorithm. The macroscopic conservative state determines the relaxation target, relaxation time, and local horizon; the wave and particle solvers provide complementary fluxes; the updated particle moments close the total conservative update by residual reconstruction of the wave moments.}
\label{fig:wpd_algorithm_flowchart}
\end{figure}

\section{Analysis}\label{sec:analysis}

This section records the multiscale consistency properties of the discretization. The results are not intended as a full nonlinear convergence theory for the complete kinetic scheme. The analysis is restricted to three structural properties: recovery of a Navier--Stokes finite-volume discretization in the continuum limit, consistency with kinetic transport in rarefied regimes, and the scaling of the active kinetic representation. No pointwise positivity theorem for the full distribution is asserted; in particular, the Shakhov correction used in the computations is not pointwise non-negative in general.

\subsection{Asymptotic-Preserving Continuum Limit}

Let
\begin{equation}\label{eq:chi_def}
    \chi_i=\frac{\mathcal{T}_i}{\tau_i},
    \qquad
    \beta_i=e^{-\chi_i}
\end{equation}
be the cellwise local horizon-to-relaxation ratio and the corresponding particle collisionless factor. At a cell interface the wave-flux coefficients use
\begin{equation}\label{eq:eta_interface_analysis}
    \eta_{ij}=\frac{\mathcal{T}_{ij}}{\tau_{ij}},
\end{equation}
as in Eq.~\eqref{eq:algorithm_A01}. The continuum limit on a fixed mesh corresponds to $\tau_i\to0$ with $\mathcal{T}_i$ determined by the local mesh-scale time in Eq.~\eqref{eq:algorithm_local_horizon}; hence $\chi_i\to\infty$ and $\eta_{ij}\to\infty$.

\begin{theorem}[Asymptotic-preserving continuum limit]\label{thm:ap}
Assume that the macroscopic variables remain smooth, that the interface GKS fluxes remain bounded, and that the time step is restricted by a mesh-scale CFL condition independent of the relaxation time. Let
\[
    \chi_i^\ast=\min\left(\chi_i,\min_{j\in N(i)}\eta_{ij}\right).
\]
In the limit $\chi_i^\ast\to\infty$, the WPD finite-volume update reduces to a Navier--Stokes GKS discretization with an exponentially small particle correction:
\begin{equation}\label{eq:ap_limit_update}
    \mathbf U_i^{n+1}
    =
    \mathbf U_i^n
    -
    \frac{\Delta t}{|\Omega_i|}
    \sum_{j\in N(i)}
    \left(
    \mathbf F_{\mathrm E,ij}^{\mathrm{GKS}}
    +
    \mathbf F_{\mathrm{vis},ij}^{\mathrm{GKS}}
    \right)|\Gamma_{ij}|
    +
    \mathcal O\!\left((1+\chi_i^\ast)e^{-\chi_i^\ast}\right) .
\end{equation}
\end{theorem}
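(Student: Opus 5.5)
The plan is to split the WPD update \eqref{eq:fv_update_total} into its wave and particle parts and bound each against the Navier--Stokes GKS flux separately. The wave part deviates from the full GKS flux only through the horizon coefficients. The particle part is exponentially small because its initial memory, its source, and its sampling are all damped by collisionless factors.

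\textbf{Step 1 (wave flux).} By \eqref{eq:wave_flux_algorithm}--\eqref{eq:wave_combined_coeffs}, the wave flux is the GKS moment with coefficients $(\mathcal A_{0,ij},-\mathcal A_{1,ij}\tau_{ij},\tfrac12\Delta t\mathcal A_{0,ij}-\mathcal A_{1,ij}\tau_{ij})$. The full Navier--Stokes GKS flux $\mathbf F^{\mathrm{GKS}}_{\mathrm E,ij}+\mathbf F^{\mathrm{GKS}}_{\mathrm{vis},ij}$ is the same expression with $\mathcal A_0=\mathcal A_1=1$. Subtracting gives
\[
\mathbf F_{\mathcal W,ij}-\left(\mathbf F^{\mathrm{GKS}}_{\mathrm E,ij}+\mathbf F^{\mathrm{GKS}}_{\mathrm{vis},ij}\right)=-(1-\mathcal A_{0,ij})\,\mathbf E_{ij}-(1-\mathcal A_{1,ij})\,\mathbf V_{ij},
\]
where $\mathbf E_{ij}$ and $\mathbf V_{ij}$ are the Euler (including its $\tfrac12\Delta t\,a_t$ part) and viscous GKS moments. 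These are bounded by the hypothesis that the interface GKS fluxes remain bounded. From \eqref{eq:algorithm_A01}, $1-\mathcal A_{0,ij}=e^{-\eta_{ij}}$ and $1-\mathcal A_{1,ij}=(1+\eta_{ij})e^{-\eta_{ij}}$. Both are $\mathcal O((1+\chi_i^\ast)e^{-\chi_i^\ast})$ because $\eta_{ij}\ge\chi_i^\ast$ and $x\mapsto(1+x)e^{-x}$ is decreasing. One point needs care: $\mathbf V_{ij}$ carries a factor $\tau_{ij}$, so its boundedness has to be argued on the mesh scale rather than assumed uniform in $\tau$. Since $\tau_{ij}=\mathcal T_{ij}/\eta_{ij}\le\mathcal T_{ij}$, this only helps.

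\textbf{Step 2 (particle flux).} I would bound $\mathbf U_{\mathcal P}$ inductively in $n$ and then bound $\mathbf F_{\mathcal P}$ by the particle state on the upwind stencil. In the $S_N$ version, \eqref{eq:sn_relax_source_update} gives
\[
\mathcal P^{n+1}=r_i\mathcal P^{*,n+1}+(1-r_i)\mathcal P^+_{\mathrm{src}}.
\]
The source state \eqref{eq:particle_source_state} carries $e^{-\Phi(\mathcal T_i)}\approx\beta_i=e^{-\chi_i}$, with $\alpha\le$ a bounded factor since $D_t\mathcal T$ is bounded for smooth fields. The damping factor is $r_i=e^{-\Delta t/\tau_i}$, and $\Delta t/\tau_i\gtrsim\chi_i$ because $\mathcal T_i$ and $\Delta t$ are comparable multiples of the local mesh time (mesh-scale CFL with $\mathrm{CFL}_l=1$). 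Under the CFL condition the transport step is a bounded (convex) map. So if $\mathcal P^n=\mathcal O(\beta)$, then $\mathcal P^{n+1}\le r\,\mathcal O(\beta)+\mathcal O(\beta)=\mathcal O(\beta)$. Hence $\mathbf F_{\mathcal P,ij}=\mathcal O(e^{-\chi^\ast})$ on the stencil, which is absorbed into the stated remainder. For Monte Carlo the same bound holds in expectation via \eqref{eq:mc_survival}--\eqref{eq:mc_source_moment}. For the initial data I would assume the initial particle population is sampled from the same source form, so that it is already $\mathcal O(\beta)$. Otherwise the initial memory decays like $r_i^n$, which is exponentially small after the first step.

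\textbf{Step 3.} Insert both estimates into \eqref{eq:fv_update_total}, using $\Delta t|\Gamma_{ij}|/|\Omega_i|=\mathcal O(1)$ under the CFL condition, to obtain \eqref{eq:ap_limit_update}.

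The main obstacle is Step 2. The induction must not accumulate over $n$ into a factor that grows with $\Delta t^{-1}$, and the ratio $\Delta t/\tau_i$ must be tied to $\chi_i$ uniformly in $\tau$. Both rest on the contraction factor $r_i<1$ and on the equivalence between $\mathcal T_i$ and $\Delta t$ in the mesh-scale CFL, which I would make explicit as a constant $c$ with $\Delta t/\tau_i\ge c\chi_i$.
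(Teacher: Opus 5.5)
Your proposal follows the paper's proof. The wave flux differs from the full Navier--Stokes GKS flux only through $1-\mathcal A_{0,ij}=e^{-\eta_{ij}}$ and $1-\mathcal A_{1,ij}=(1+\eta_{ij})e^{-\eta_{ij}}$, and the particle contribution is exponentially small through the collisionless factor $\beta_i=e^{-\chi_i}$ in the source and the relaxation factor $e^{-\Delta t/\tau_i}$ in the update; your induction on $\mathcal P$ (convex weight $1-r_i$ on the source, bounded upwind transport, initial population of source form) only makes explicit what the paper asserts as ``for bounded particle moments \ldots\ its transported contribution is exponentially small.''
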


\begin{proof}
From Eq.~\eqref{eq:algorithm_A01},
\begin{equation}
    \mathcal A_{0,ij}=1-e^{-\eta_{ij}}\to1,
    \qquad
    \mathcal A_{1,ij}=1-(1+\eta_{ij})e^{-\eta_{ij}}\to1
    \qquad
    \text{as }\eta_{ij}\to\infty .
\end{equation}
More precisely,
\[
    1-\mathcal A_{0,ij}=e^{-\eta_{ij}},
    \qquad
    1-\mathcal A_{1,ij}=(1+\eta_{ij})e^{-\eta_{ij}} .
\]
Therefore the wave flux in Eq.~\eqref{eq:wave_flux_algorithm} approaches the full Navier--Stokes GKS flux,
\begin{equation}
    \mathbf F_{\mathcal{W},ij}
    =
    \mathbf F_{\mathrm E,ij}^{\mathrm{GKS}}
    +
    \mathbf F_{\mathrm{vis},ij}^{\mathrm{GKS}}
    +
    \mathcal O\!\left((1+\eta_{ij})e^{-\eta_{ij}}\right) .
\end{equation}
The surviving particle memory is controlled by the factor $\beta_i=e^{-\chi_i}$ in the local-horizon decomposition and by the relaxation factor $e^{-\Delta t/\tau_i}$ during the particle update. For bounded particle moments and smooth reconstructed macroscopic states, its transported contribution is exponentially small in the continuum limit. Substituting these estimates into the conservative update \eqref{eq:fv_update_total} gives Eq.~\eqref{eq:ap_limit_update}. Since the time step is controlled by the acoustic CFL condition rather than by $\tau_i$, the scheme is asymptotic-preserving in the sense that it becomes a consistent Navier--Stokes finite-volume discretization without resolving the collision time.
\end{proof}

The theorem concerns the limiting discrete equation. It states that, on a fixed mesh, the kinetic update reduces to the Navier--Stokes GKS discretization; it does not imply that the finite-mesh solution is the exact solution of the continuous Navier--Stokes equations.

\subsection{Rarefied-Limit Consistency}

The opposite limit corresponds to a local relaxation time much larger than the horizon, $\chi_i=\mathcal{T}_i/\tau_i\to0$. In this regime, the wave operator should not impose an artificial hydrodynamic closure.

\begin{proposition}[Kinetic consistency in rarefied regimes]\label{prop:rarefied_consistency}
As $\eta_{ij}\to0$ at an interface and $\chi_i\to0$ in the adjacent cells, the wave flux vanishes to leading order and the total transport is carried by the particle solver:
\begin{equation}\label{eq:rarefied_limit_flux}
    \mathcal A_{0,ij}=\eta_{ij}+\mathcal O(\eta_{ij}^2),
    \qquad
    \mathcal A_{1,ij}=\frac{1}{2}\eta_{ij}^2+\mathcal O(\eta_{ij}^3),
    \qquad
    \mathbf F_{ij}
    =
    \mathbf F_{\mathcal{P},ij}
    +
    \mathcal O(\eta_{ij}) .
\end{equation}
\end{proposition}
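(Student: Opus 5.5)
The argument has two parts: an elementary Taylor expansion of the horizon coefficients, followed by substitution into the total flux split $\mathbf F_{ij}=\mathbf F_{\mathcal W,ij}+\mathbf F_{\mathcal P,ij}$ of Eq.~\eqref{eq:fv_update_total}.

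\textbf{Step 1: expansion of the coefficients.} Start from Eq.~\eqref{eq:algorithm_A01} and expand $e^{-\eta}=1-\eta+\eta^2/2-\eta^3/6+\mathcal O(\eta^4)$. This gives
\[
\mathcal A_{0,ij}=1-e^{-\eta_{ij}}=\eta_{ij}-\tfrac12\eta_{ij}^2+\mathcal O(\eta_{ij}^3).
\]
For the second coefficient,
\[
(1+\eta)e^{-\eta}=1-\tfrac12\eta^2+\tfrac13\eta^3+\mathcal O(\eta^4),
\]
so the linear terms cancel and $\mathcal A_{1,ij}=\tfrac12\eta_{ij}^2-\tfrac13\eta_{ij}^3+\mathcal O(\eta_{ij}^4)$. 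These are the first two relations in Eq.~\eqref{eq:rarefied_limit_flux}. Both coefficients are also nonnegative and bounded by $1$ for all $\eta\ge0$, which gives the uniform bounds $\mathcal A_{0,ij}\le\eta_{ij}$ and $\mathcal A_{1,ij}\le\eta_{ij}^2/2$.

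\textbf{Step 2: size of the wave flux.} Use Eqs.~\eqref{eq:wave_flux_algorithm}--\eqref{eq:wave_combined_coeffs}. The flux is a fixed linear combination of three velocity moments of $\mathcal G^0_{ij}$, weighted by $1$, $a_{ij}$ and $a_{t,ij}$, with coefficients $b_{0,ij}$, $b_{a,ij}$ and $b_{t,ij}$. The coefficients satisfy:
\begin{itemize}
\item $b_{0,ij}=\mathcal A_{0,ij}=\mathcal O(\eta_{ij})$;
\item $b_{a,ij}=-\tau_{ij}\mathcal A_{1,ij}=-\tfrac12\eta_{ij}\mathcal T_{ij}+\mathcal O(\eta_{ij}^2\mathcal T_{ij})$, using $\tau_{ij}\eta_{ij}^2=\eta_{ij}\mathcal T_{ij}$;
\item $b_{t,ij}=\tfrac12\Delta t\,\mathcal A_{0,ij}-\tau_{ij}\mathcal A_{1,ij}=\mathcal O(\eta_{ij}(\Delta t+\mathcal T_{ij}))$.
\end{itemize}
Since $\mathcal T_{ij}$ and $\Delta t$ are mesh-scale quantities fixed by the CFL conditions, every coefficient is $\mathcal O(\eta_{ij})$ uniformly in $\tau_{ij}$. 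Assume, as in Theorem~\ref{thm:ap}, bounded reconstructed states and slopes, so that the moments $\langle(\boldsymbol\xi\cdot\boldsymbol n)\boldsymbol\psi\mathcal G^0_{ij}\{1,a_{ij},a_{t,ij}\}\rangle$ are bounded. Then $\mathbf F_{\mathcal W,ij}=\mathcal O(\eta_{ij})$, and substituting into the total flux gives the third relation.

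\textbf{Step 3 (main obstacle): the particle side.} One must check that the remainder is not hidden in $\mathbf F_{\mathcal P,ij}$. The particle flux must be a genuine free-transport flux rather than something itself scaled by $\eta$. As $\chi_i\to0$, the source state in Eq.~\eqref{eq:particle_source_state} carries the factor $e^{-\Phi(\mathcal T_i)}\approx e^{-\chi_i}\to1$. The absorption probability $1-e^{-\Delta t/\tau_i}=\mathcal O(\chi_i)$, because $\Delta t\le\mathcal T_i$ by the CFL choice. Consequently the update \eqref{eq:sn_relax_source_update}, and its Monte Carlo analogue \eqref{eq:mc_survival}, reduce to pure upwind or streaming transport up to $\mathcal O(\chi_i)$.

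I would state this as a remark rather than absorb it into the $\mathcal O(\eta_{ij})$. The flux identity only requires that $\mathbf F_{\mathcal P,ij}$ be the computed particle flux. The consistency claim, namely that the particle solver reproduces collisionless transport, follows from this $\mathcal O(\chi_i)$ estimate for the relaxation and source step.
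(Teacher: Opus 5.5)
Your proposal is correct and follows the same skeleton as the paper's proof. Both arguments Taylor-expand $e^{-\eta_{ij}}$, conclude that the wave flux is $\mathcal O(\eta_{ij})$, and remark that the particle collisionless factor stays close to one.

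The substantive difference is in your Step 2. The paper argues that the Euler part of the wave flux is first order in $\eta_{ij}$ and the viscous part second order, ``provided the corresponding GKS fluxes remain bounded.'' However, the Navier--Stokes GKS viscous flux carries the factor $\tau_{ij}$, since $\mu=p\tau$. In the fixed-mesh rarefied limit $\tau_{ij}\to\infty$, that boundedness hypothesis therefore fails for the viscous part. You instead work directly with $b_{a,ij}=-\tau_{ij}\mathcal A_{1,ij}=-\tfrac12\eta_{ij}\mathcal T_{ij}+\dots$ and $b_{t,ij}$. This shows that the viscous contribution is in fact first order in $\eta_{ij}$, the same order as the Euler part. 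Your argument needs only bounded reconstructed states and slopes, so that $\mathcal T_{ij}a_{ij}$ and $\mathcal T_{ij}a_{t,ij}$ stay bounded. That is a weaker and more natural assumption, and it still yields the stated $\mathcal O(\eta_{ij})$ remainder. Your route therefore gives a more reliable version of the paper's argument.

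Your Step 3 is a more explicit version of the paper's $\beta_i=1+\mathcal O(\chi_i)$ remark. You are also right that the displayed flux identity requires only $\mathbf F_{\mathcal W,ij}=\mathcal O(\eta_{ij})$; the particle-side estimate supports only the interpretive claim that the particle solver reproduces collisionless transport. Note that $\Delta t\le\mathcal T_i$ presumes $\mathrm{CFL}\le\mathrm{CFL}_l$, which holds in the reported computations. Without it, the absorption probability is still $\mathcal O(\chi_i)$, with constant $\mathrm{CFL}/\mathrm{CFL}_l$.

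One minor slip: the uniform bounds $\mathcal A_{0,ij}\le\eta_{ij}$ and $\mathcal A_{1,ij}\le\tfrac12\eta_{ij}^2$ do not follow from the coefficients being bounded by $1$. They follow from $e^{-\eta}\ge1-\eta$ and from $\frac{d}{d\eta}\bigl[(1+\eta)e^{-\eta}-1+\tfrac12\eta^2\bigr]=\eta(1-e^{-\eta})\ge0$. In any case, they are not needed for the asymptotic statement.
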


\begin{proof}
The expansions follow directly from the Taylor expansion of $e^{-\eta_{ij}}$:
\begin{equation}
    1-e^{-\eta_{ij}}
    =
    \eta_{ij}+\mathcal O(\eta_{ij}^2),
    \qquad
    1-(1+\eta_{ij})e^{-\eta_{ij}}
    =
    \frac{1}{2}\eta_{ij}^2+\mathcal O(\eta_{ij}^3).
\end{equation}
Thus the Euler part of the wave flux is first order in $\eta_{ij}$, while the viscous part is second order in $\eta_{ij}$, provided the corresponding GKS fluxes remain bounded. The particle collisionless factor satisfies $\beta_i=e^{-\chi_i}=1+\mathcal O(\chi_i)$, so the kinetic particle representation remains active. The total flux therefore approaches the $S_N$ or MC kinetic transport flux used for $\mathcal{P}$.
\end{proof}

This proposition gives the complementary rarefied-regime consistency of the same formulation: when local relaxation is weak over the mesh-scale horizon, the scheme does not force the solution into a fluid closure, and the kinetic solver remains the dominant transport mechanism.

\subsection{Regime-Adaptive Kinetic Representation}

The local-horizon decomposition also provides a quantitative measure of the active kinetic degrees of freedom. The particle representation is controlled by the same collisionless factor that appears in the integral solution,
\begin{equation}\label{eq:beta_particle_fraction}
    \beta_i=e^{-\mathcal{T}_i/\tau_i}.
\end{equation}
This factor is close to one in rarefied cells and exponentially small in continuum cells.

\begin{theorem}[Scaling of active kinetic degrees of freedom]\label{thm:ra}
For fixed local conservative variables and fixed cell volume, the particle collisionless factor $\beta_i=e^{-\mathcal{T}_i/\tau_i}$ is a strictly decreasing function of the local horizon-to-relaxation ratio:
\begin{equation}\label{eq:beta_monotonicity}
    \frac{\partial \beta_i}{\partial \chi_i}
    =
    -e^{-\chi_i}<0,
    \qquad
    \chi_i=\frac{\mathcal{T}_i}{\tau_i}.
\end{equation}
In a Monte Carlo realization with reference particle weight $m_p$, the equilibrium scaling of the expected number of transported particles is
\begin{equation}\label{eq:expected_particle_number}
    \mathbb E[N_i]
    \simeq
    \frac{\rho_i|\Omega_i|}{m_p}
    e^{-\mathcal{T}_i/\tau_i}.
\end{equation}
\end{theorem}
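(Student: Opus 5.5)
The statement has two parts, and I would prove them in order: first the monotonicity of $\beta_i$, then the particle-number scaling.

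The monotonicity is elementary. With the local conservative variables and cell volume held fixed, $\tau_i$ is determined by the primitive state through $\tau=\mu/p$. The ratio $\chi_i=\mathcal{T}_i/\tau_i$ is then the only free argument of $\beta_i$, so $\beta_i=e^{-\chi_i}$ is a function of the single variable $\chi_i$. Differentiating gives Eq.~\eqref{eq:beta_monotonicity}, and strict negativity follows from positivity of the exponential. I would add one remark: since $\tau_i>0$, $\beta_i$ is also strictly decreasing in $\mathcal{T}_i$ at fixed $\tau_i$, and strictly increasing in $\tau_i$ at fixed $\mathcal{T}_i$. This is the physical content of the statement: refined horizons or continuum cells carry fewer particles.

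For the particle-number scaling, I would start from the equilibrium computation already recorded after Eq.~\eqref{eq:macro_wave_particle_expansion}. In a uniform equilibrium state $f=\mathcal{M}$ with frozen $\tau$ and locally constant $\mathcal{T}$, we have $D_t\mathcal{T}=0$ and $\Phi(\mathcal{T})=\mathcal{T}/\tau$. Definition~\eqref{eq:P_def} then gives $\mathcal{P}=e^{-\mathcal{T}_i/\tau_i}\mathcal{M}$ exactly, so $\rho_{\mathcal{P},i}=\langle\mathcal{P}\rangle=\beta_i\rho_i$. The mass represented by particles in cell $i$ is therefore $\beta_i\rho_i|\Omega_i|$.

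The same value should be cross-checked against the discrete particle update. At a steady state of Eq.~\eqref{eq:sn_relax_source_update} with no net transport imbalance, the fixed point is $\mathcal{P}_i=\mathcal{P}^{+}_{\mathrm{src},i}$. In equilibrium, with $\alpha_{i,k}=1$ and $\mathcal{G}_{\mathcal{T}}=\mathcal{M}$, Eq.~\eqref{eq:particle_source_state} gives $\mathcal{P}^{+}_{\mathrm{src},i}=\beta_i\mathcal{M}_i$, whose density is again $\beta_i\rho_i$. For the Monte Carlo version, absorption with probability $1-r_i$ per step is balanced by sampling the expected moments in Eq.~\eqref{eq:mc_source_moment}. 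Writing the mass balance $\rho_{\mathcal{P}}=r_i\rho_{\mathcal{P}}+(1-r_i)\beta_i\rho_i$ yields the same fixed point.

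Dividing the particle-carried mass by the reference weight $m_p$ gives the expected count $\mathbb{E}[N_i]=\beta_i\rho_i|\Omega_i|/m_p$, which is Eq.~\eqref{eq:expected_particle_number}.

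The main obstacle is justifying why this is only "$\simeq$". Away from equilibrium, $\mathcal{P}$ also carries transported non-equilibrium memory from neighbouring cells, and $\alpha_{i,k}$ may differ from one. Particle weights may also fluctuate around $m_p$ after the conservative moment correction. I would therefore state the result as an equilibrium scaling, with corrections of relative order $|D_t\mathcal{T}|$, the deviation $f-\mathcal{M}$, and transport imbalance across cell faces. I would not attempt an exact identity.
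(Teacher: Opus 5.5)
Your proposal is correct and follows the paper's argument: you get the monotonicity by differentiating $\beta_i=e^{-\chi_i}$ directly, and the particle count by multiplying the cell mass $\rho_i|\Omega_i|$ by the surviving fraction $\beta_i$ and dividing by $m_p$. The paper only asserts that $\beta_i$ is the surviving fraction, so your derivation of it from $\mathcal{P}=e^{-\mathcal{T}_i/\tau_i}\mathcal{M}$ in uniform equilibrium, and from the fixed point $\rho_{\mathcal{P}}=r_i\rho_{\mathcal{P}}+(1-r_i)\beta_i\rho_i$ of the discrete absorption--source update, adds justification within the same route.
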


\begin{proof}
Differentiating the collisionless factor gives the stated monotonicity. Equation~\eqref{eq:expected_particle_number} follows by multiplying the cell mass $\rho_i|\Omega_i|$ by the surviving particle fraction and dividing by the prescribed particle weight. Hence the expected transported particle population decreases exponentially with the local horizon-to-relaxation ratio.
\end{proof}

For the deterministic $S_N$ implementation, Eq.~\eqref{eq:beta_particle_fraction} should be interpreted as the amplitude of the active kinetic component rather than as an automatic wall-clock reduction if a dense velocity grid is swept everywhere. Computational savings in deterministic implementations require exploiting this decay through sparse, thresholded, or adaptive velocity-space treatment. For the Monte Carlo implementation, the same factor directly controls the expected number of transported particles.

\begin{proposition}[Local-horizon reduction on nonuniform meshes]\label{prop:local_horizon_reduction}
Let $\Delta t_{\min}=\min_i\Delta t_{i,\mathrm{loc}}$ be the global time step scale imposed by the smallest cell. With the same local-horizon CFL number $\mathrm{CFL}_l$ as in Eq.~\eqref{eq:algorithm_local_horizon}, a UGKWP-type global time-step split would use
\begin{equation}
    \beta_i^{g}=\exp\left(-\frac{\mathrm{CFL}_l\Delta t_{\min}}{\tau_i}\right),
\end{equation}
whereas the local-horizon split uses
\begin{equation}
    \beta_i^{l}=\exp\left(-\frac{\mathrm{CFL}_l\Delta t_{i,\mathrm{loc}}}{\tau_i}\right).
\end{equation}
Since $\Delta t_{i,\mathrm{loc}}\ge\Delta t_{\min}$ for every cell,
\begin{equation}\label{eq:local_vs_global_beta}
    \beta_i^{l}
    \le
    \beta_i^{g},
    \qquad
    \frac{\beta_i^{l}}{\beta_i^{g}}
    =
    \exp\left[
    -\frac{\mathrm{CFL}_l(\Delta t_{i,\mathrm{loc}}-\Delta t_{\min})}{\tau_i}
    \right]
    \le1 .
\end{equation}
\end{proposition}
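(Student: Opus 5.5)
The argument is elementary and rests on three facts: the definition of $\Delta t_{\min}$, the positivity of the relaxation time, and the monotonicity of the exponential, which was already used in Theorem~\ref{thm:ra}.

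First, by definition $\Delta t_{\min}=\min_k\Delta t_{k,\mathrm{loc}}$ is a minimum over all cells, so $\Delta t_{i,\mathrm{loc}}\ge\Delta t_{\min}$ for every $i$. The local step in Eq.~\eqref{eq:algorithm_local_horizon} is well defined and positive, since the bracketed sum is a positive combination of face areas times nonnegative speeds plus the positive acoustic speed $a_i^n$.

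Second, with $\mathrm{CFL}_l>0$ and $\tau_i>0$, the two arguments compare as
\[
\chi_i^{l}=\frac{\mathrm{CFL}_l\Delta t_{i,\mathrm{loc}}}{\tau_i}\;\ge\;\chi_i^{g}=\frac{\mathrm{CFL}_l\Delta t_{\min}}{\tau_i}.
\]
Theorem~\ref{thm:ra} shows that $\chi\mapsto e^{-\chi}$ is strictly decreasing. Hence $\beta_i^{l}\le\beta_i^{g}$, with equality exactly in the cell(s) attaining the minimum.

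Third, the ratio identity follows from the exponential law $e^{-a}/e^{-b}=e^{-(a-b)}$. It gives
\[
\frac{\beta_i^{l}}{\beta_i^{g}}=\exp\!\left[-\frac{\mathrm{CFL}_l(\Delta t_{i,\mathrm{loc}}-\Delta t_{\min})}{\tau_i}\right].
\]
The exponent is nonpositive by the first step, so the ratio is at most one.

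No step poses a genuine obstacle. The one point to state explicitly is that both splits use the same $\tau_i$ and the same $\mathrm{CFL}_l$, so that the comparison isolates the horizon length. In the paper's algorithm, $\tau_i$ is computed from the same total state $\mathbf U_i^n$ in both cases. Finally, one can remark via Eq.~\eqref{eq:expected_particle_number} that the expected Monte Carlo particle count is reduced by this same factor.
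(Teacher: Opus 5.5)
Your proposal is correct and follows the paper's own argument, which the statement carries inline: $\Delta t_{i,\mathrm{loc}}\ge\Delta t_{\min}$ by definition of the minimum, combined with $\mathrm{CFL}_l,\tau_i>0$, the monotonicity of $e^{-\chi}$, and the exponential law for the ratio. Your explicit positivity assumptions and the remark on equality in the minimizing cells are harmless additions that the paper leaves implicit.
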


This inequality quantifies the effect of the local horizon on nonuniform meshes. A small cell may determine the global time step, but it does not prescribe the same particle fraction in cells with larger local evolution times. The particle representation is controlled by local relaxation over the local mesh-scale horizon.

\section{Numerical Results}\label{sec:results}

The numerical results are organized by physical configuration. Unless otherwise stated, the wave-particle split in all WPD computations uses the cell-local horizon \(\mathcal{T}_i=\mathrm{CFL}_l\Delta t_{i,\mathrm{loc}}\) with \(\mathrm{CFL}_l=1\). The one-dimensional tests examine the kinetic wave-particle decomposition in canonical non-equilibrium problems. The two-dimensional cavity and cylinder cases then assess the method in wall-bounded, continuum, and external rarefied flows. The final three-dimensional cases illustrate the direct extension of the same formulation to realistic multidimensional geometries.

\subsection{One-Dimensional Benchmarks}

\subsubsection{Smooth-wave accuracy test}

The first one-dimensional test verifies spatial accuracy in a smooth periodic problem. A small-amplitude sinusoidal perturbation is imposed on an otherwise uniform equilibrium state,
\[
\rho(x,0)=1+10^{-2}\sin(2\pi x),\qquad u(x,0)=0,\qquad T(x,0)=1,
\]
on \(x\in[0,1]\). The pure deterministic \(S_N\) limit is used, with velocity interval \([-4.5,4.5]\), \(128\) uniform velocity points, \(\Delta t=0.5\Delta x^2\), and final time \(t=0.05\). The three Knudsen numbers, \(\mathrm{Kn}=10^{-1}\), \(10^{-2}\), and \(10^{-4}\), cover rarefied, transition, and near-continuum regimes. Errors are measured against the finest available solution on the same velocity grid; Table~\ref{tab:sine_accuracy_linf} reports \(L_\infty\) errors and observed orders for \(N_x=20,50,100,150,200\).

\begin{table}[!htbp]
\centering
\caption{\(L_\infty\) errors and observed orders for the smooth-wave accuracy test.}
\label{tab:sine_accuracy_linf}
\resizebox{\textwidth}{!}{%
\begin{tabular}{cccccccc}
\toprule
\(\mathrm{Kn}\) & \(N_x\) & \(E_\rho\) & order & \(E_u\) & order & \(E_T\) & order \\
\midrule
\(10^{-1}\) & 20  & \(9.3750{\times}10^{-6}\) & --   & \(1.1548{\times}10^{-5}\) & --   & \(5.9603{\times}10^{-6}\) & -- \\
            & 50  & \(9.9811{\times}10^{-7}\) & 2.44 & \(2.0147{\times}10^{-6}\) & 1.91 & \(5.7042{\times}10^{-7}\) & 2.56 \\
            & 100 & \(1.8778{\times}10^{-7}\) & 2.41 & \(4.9354{\times}10^{-7}\) & 2.03 & \(9.1700{\times}10^{-8}\) & 2.64 \\
            & 150 & \(7.3846{\times}10^{-8}\) & 2.30 & \(2.1577{\times}10^{-7}\) & 2.04 & \(2.8857{\times}10^{-8}\) & 2.85 \\
            & 200 & \(2.6337{\times}10^{-8}\) & 3.58 & \(1.2131{\times}10^{-7}\) & 2.00 & \(1.0405{\times}10^{-8}\) & 3.55 \\
\midrule
\(10^{-2}\) & 20  & \(9.4192{\times}10^{-6}\) & --   & \(1.1801{\times}10^{-5}\) & --   & \(6.2584{\times}10^{-6}\) & -- \\
            & 50  & \(1.0018{\times}10^{-6}\) & 2.45 & \(2.0915{\times}10^{-6}\) & 1.89 & \(5.7382{\times}10^{-7}\) & 2.61 \\
            & 100 & \(1.8853{\times}10^{-7}\) & 2.41 & \(5.0977{\times}10^{-7}\) & 2.04 & \(8.9523{\times}10^{-8}\) & 2.68 \\
            & 150 & \(7.4153{\times}10^{-8}\) & 2.30 & \(2.2140{\times}10^{-7}\) & 2.06 & \(2.7967{\times}10^{-8}\) & 2.87 \\
            & 200 & \(2.6502{\times}10^{-8}\) & 3.58 & \(1.2422{\times}10^{-7}\) & 2.01 & \(1.0329{\times}10^{-8}\) & 3.46 \\
\midrule
\(10^{-4}\) & 20  & \(1.2357{\times}10^{-5}\) & --   & \(1.2097{\times}10^{-5}\) & --   & \(1.6175{\times}10^{-5}\) & -- \\
            & 50  & \(2.2697{\times}10^{-6}\) & 1.85 & \(2.3031{\times}10^{-6}\) & 1.81 & \(4.9780{\times}10^{-6}\) & 1.29 \\
            & 100 & \(5.9613{\times}10^{-7}\) & 1.93 & \(5.7624{\times}10^{-7}\) & 2.00 & \(1.4796{\times}10^{-6}\) & 1.75 \\
            & 150 & \(2.6011{\times}10^{-7}\) & 2.05 & \(2.5611{\times}10^{-7}\) & 2.00 & \(6.7617{\times}10^{-7}\) & 1.93 \\
            & 200 & \(1.4051{\times}10^{-7}\) & 2.14 & \(1.4310{\times}10^{-7}\) & 2.02 & \(3.7328{\times}10^{-7}\) & 2.07 \\
\bottomrule
\end{tabular}%
}
\end{table}

Figure~\ref{fig:sine_accuracy_linf} shows the corresponding \(L_\infty\) convergence curves. The velocity error follows the second-order reference slope over all three Knudsen numbers. The density and temperature errors also decrease at approximately second order in the smooth range, with mild pre-asymptotic variation caused by the finite velocity quadrature and the use of a numerical reference solution. This test indicates that the deterministic transport discretization used in the WPD-\(S_N\) solver retains the expected second-order spatial accuracy for smooth solutions.

\begin{figure}[!htbp]
    \centering
    \begin{subfigure}{0.32\textwidth}
        \centering
        \includegraphics[width=\linewidth]{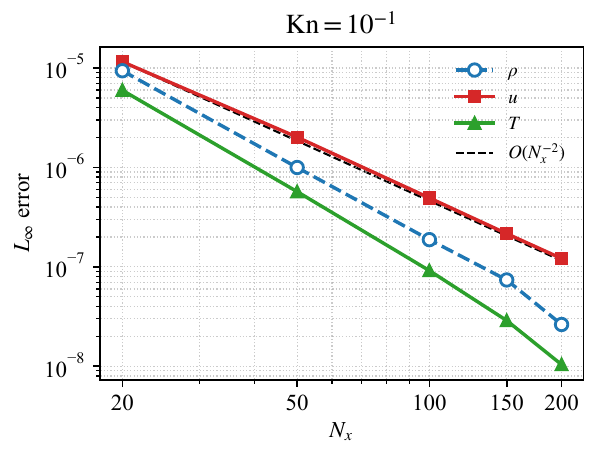}
        \caption{\(\mathrm{Kn}=10^{-1}\)}
    \end{subfigure}
    \begin{subfigure}{0.32\textwidth}
        \centering
        \includegraphics[width=\linewidth]{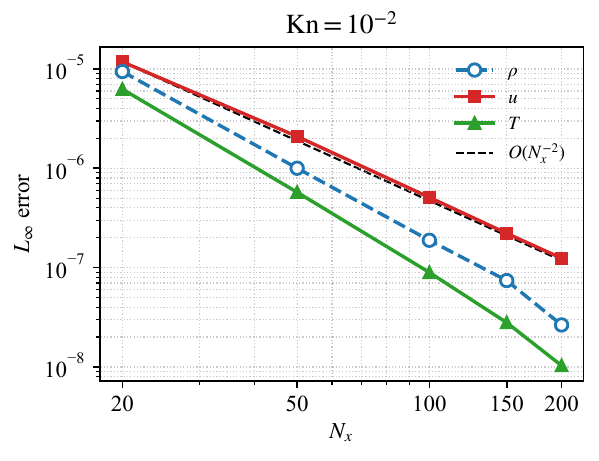}
        \caption{\(\mathrm{Kn}=10^{-2}\)}
    \end{subfigure}
    \begin{subfigure}{0.32\textwidth}
        \centering
        \includegraphics[width=\linewidth]{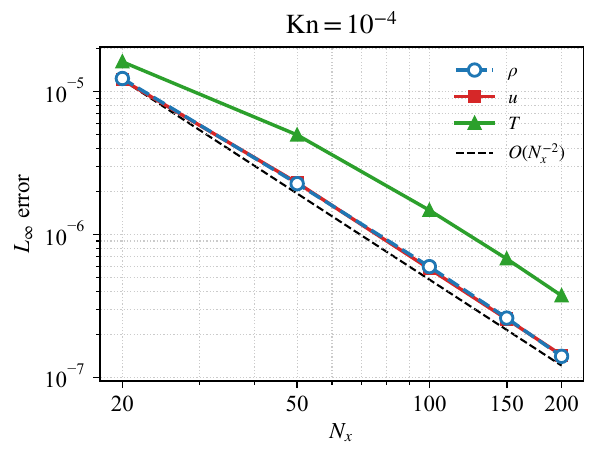}
        \caption{\(\mathrm{Kn}=10^{-4}\)}
    \end{subfigure}
    \caption{\(L_\infty\) convergence curves for the smooth-wave accuracy test.}
    \label{fig:sine_accuracy_linf}
\end{figure}

\subsubsection{Sod shock tube}

The Sod shock tube tests the deterministic WPD-\(S_N\) formulation on an unsteady kinetic Riemann problem, with a UGKS solution as the reference. The initial discontinuity is placed at the center of the computational domain,
\[
(\rho,u,p)_L=(1,0,1),\qquad
(\rho,u,p)_R=(0.125,0,0.1).
\]
The computational domain is \(x\in[0,2]\), with the initial interface at \(x=1\). A uniform mesh with \(N_x=800\) cells is used. The velocity space is discretized by \(N_\xi=513\) uniform points over \([-8,8]\), and the final time is \(t=0.2\). The same physical mesh, velocity mesh, gas model, and time step are used for UGKS and WPD-\(S_N\); in all WPD-\(S_N\) results reported here the local wave-particle horizon is set by \(\mathrm{CFL}_l=1\).

Figures~\ref{fig:sod_kn1}, \ref{fig:sod_kn1em2}, and \ref{fig:sod_kn1em4} compare density, velocity, and temperature at representative Knudsen numbers. In the rarefied regime, the solution contains broad kinetic transition layers and the WPD-\(S_N\) result follows the UGKS reference without introducing additional numerical diffusion. As the Knudsen number decreases, the profiles sharpen toward the continuum limit, and the agreement remains consistent across all three macroscopic variables.

\begin{table}[!htbp]
\centering
\caption{Discrete \(L_2\) errors of WPD-\(S_N\) relative to UGKS for the Sod shock tube.}
\label{tab:sod_l2_error}
\begin{tabular}{cccc}
\toprule
\(\mathrm{Kn}\) & \(\rho\) & \(u\) & \(T\) \\
\midrule
\(1\) & \(1.3316\times10^{-5}\) & \(5.4762\times10^{-6}\) & \(4.4671\times10^{-6}\) \\
\(10^{-2}\) & \(1.3262\times10^{-4}\) & \(1.1496\times10^{-4}\) & \(1.4756\times10^{-4}\) \\
\(10^{-4}\) & \(3.9922\times10^{-4}\) & \(9.7397\times10^{-4}\) & \(1.9609\times10^{-3}\) \\
\bottomrule
\end{tabular}
\end{table}

\begin{figure}[!htbp]
    \centering
    \begin{subfigure}{0.32\textwidth}
        \centering
        \includegraphics[width=\linewidth]{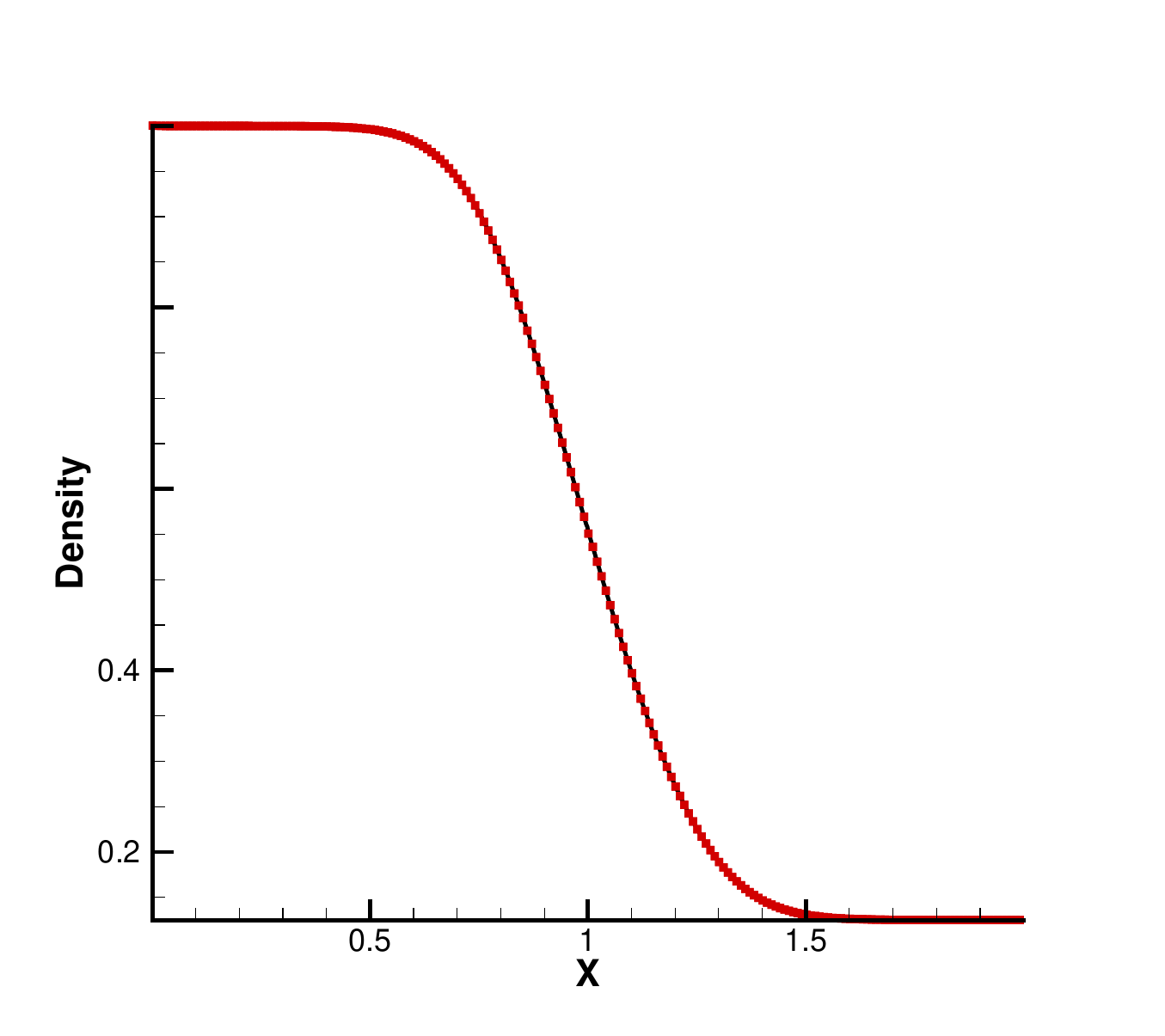}
        \caption{Density}
    \end{subfigure}
    \begin{subfigure}{0.32\textwidth}
        \centering
        \includegraphics[width=\linewidth]{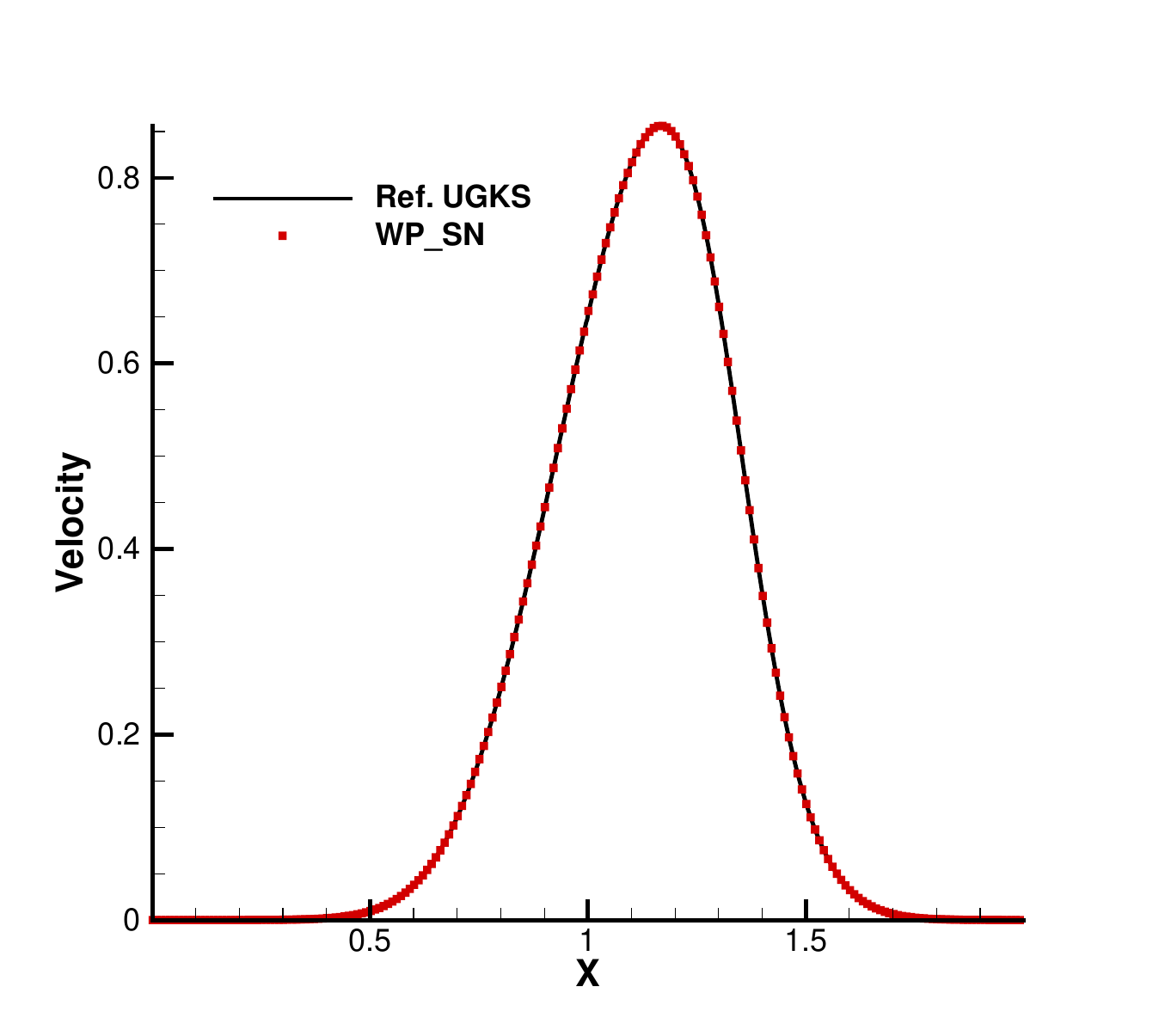}
        \caption{Velocity}
    \end{subfigure}
    \begin{subfigure}{0.32\textwidth}
        \centering
        \includegraphics[width=\linewidth]{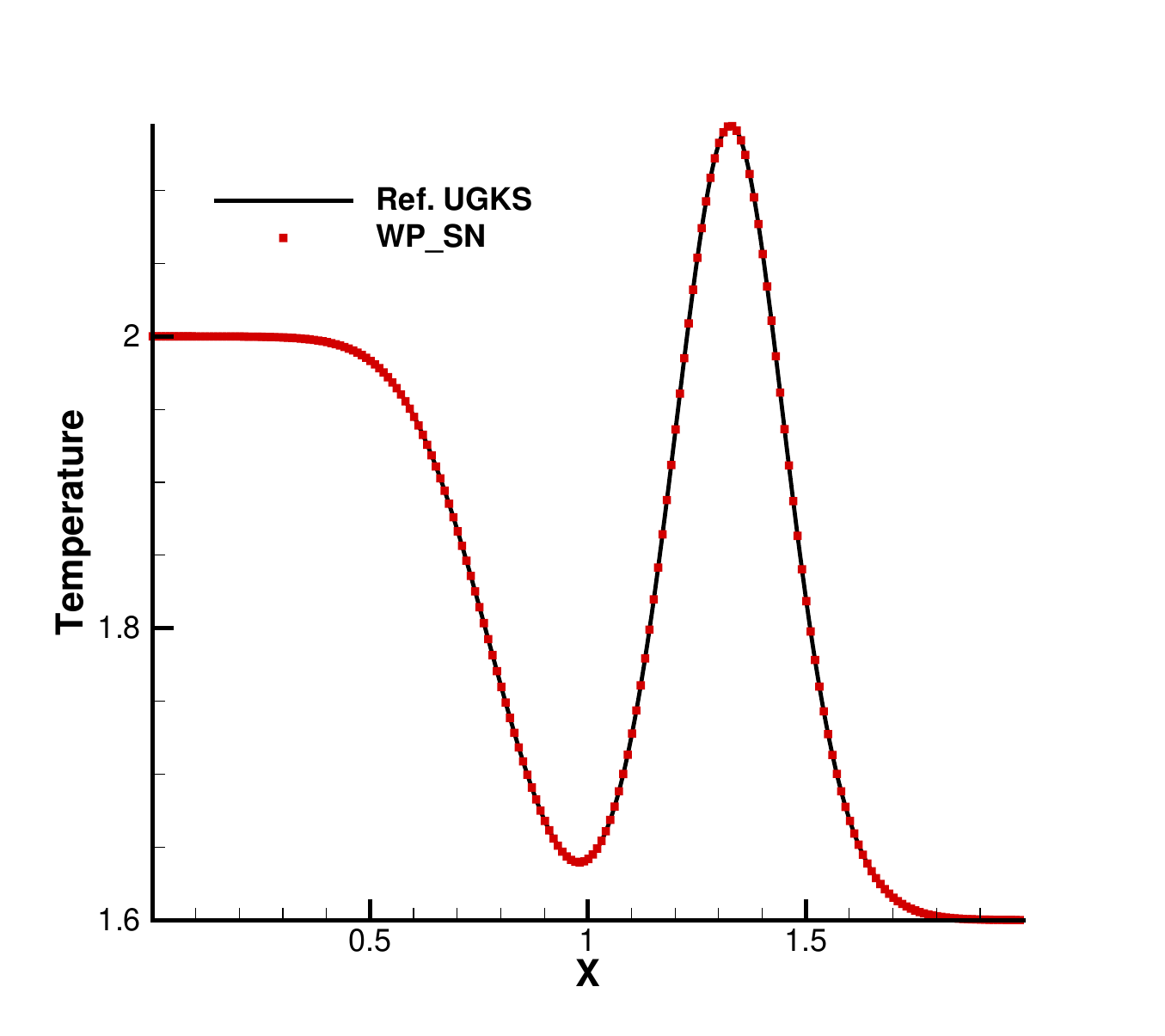}
        \caption{Temperature}
    \end{subfigure}
    \caption{Sod shock tube at $\mathrm{Kn}=1$.}
    \label{fig:sod_kn1}
\end{figure}

\begin{figure}[!htbp]
    \centering
    \begin{subfigure}{0.32\textwidth}
        \centering
        \includegraphics[width=\linewidth]{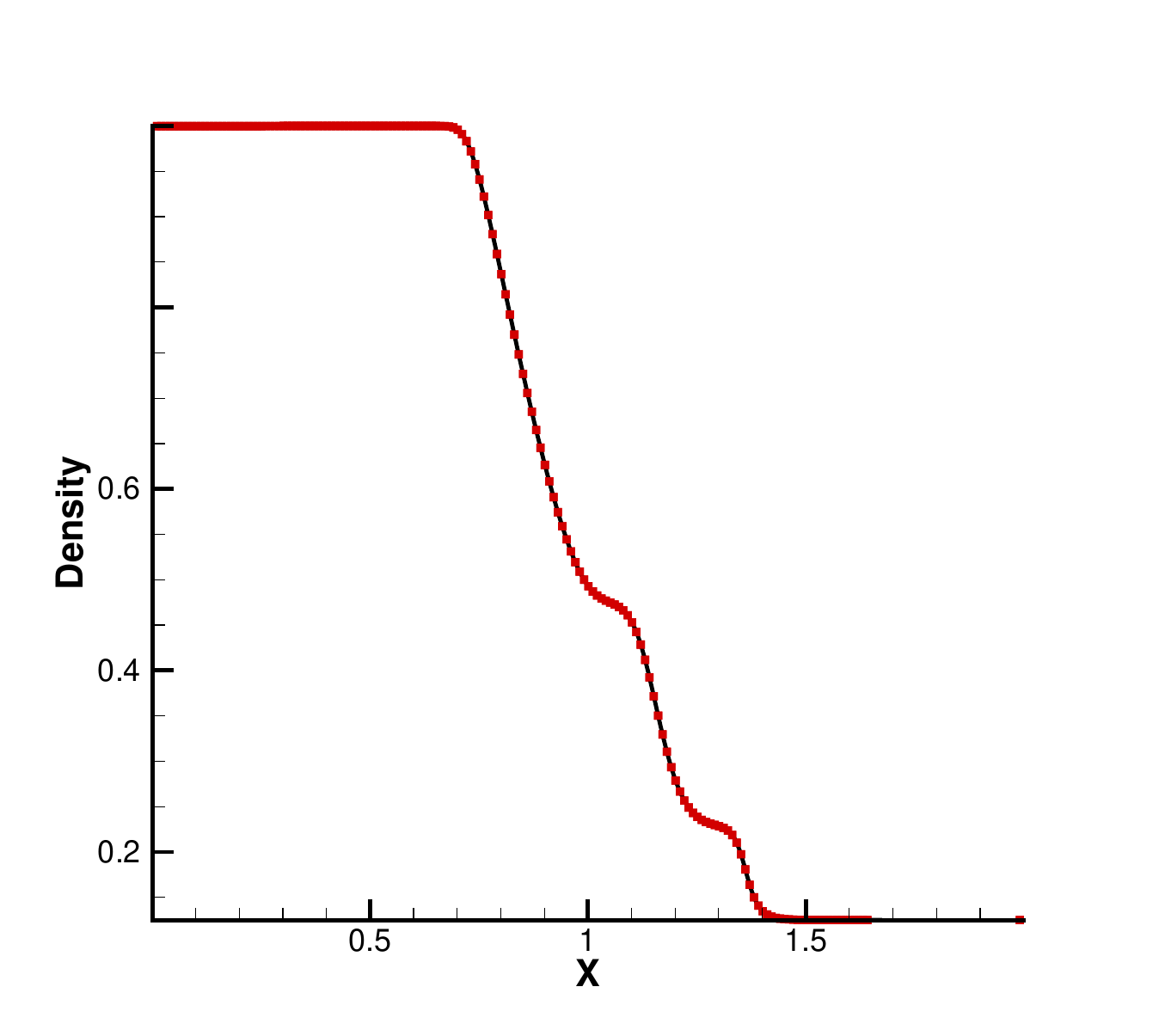}
        \caption{Density}
    \end{subfigure}
    \begin{subfigure}{0.32\textwidth}
        \centering
        \includegraphics[width=\linewidth]{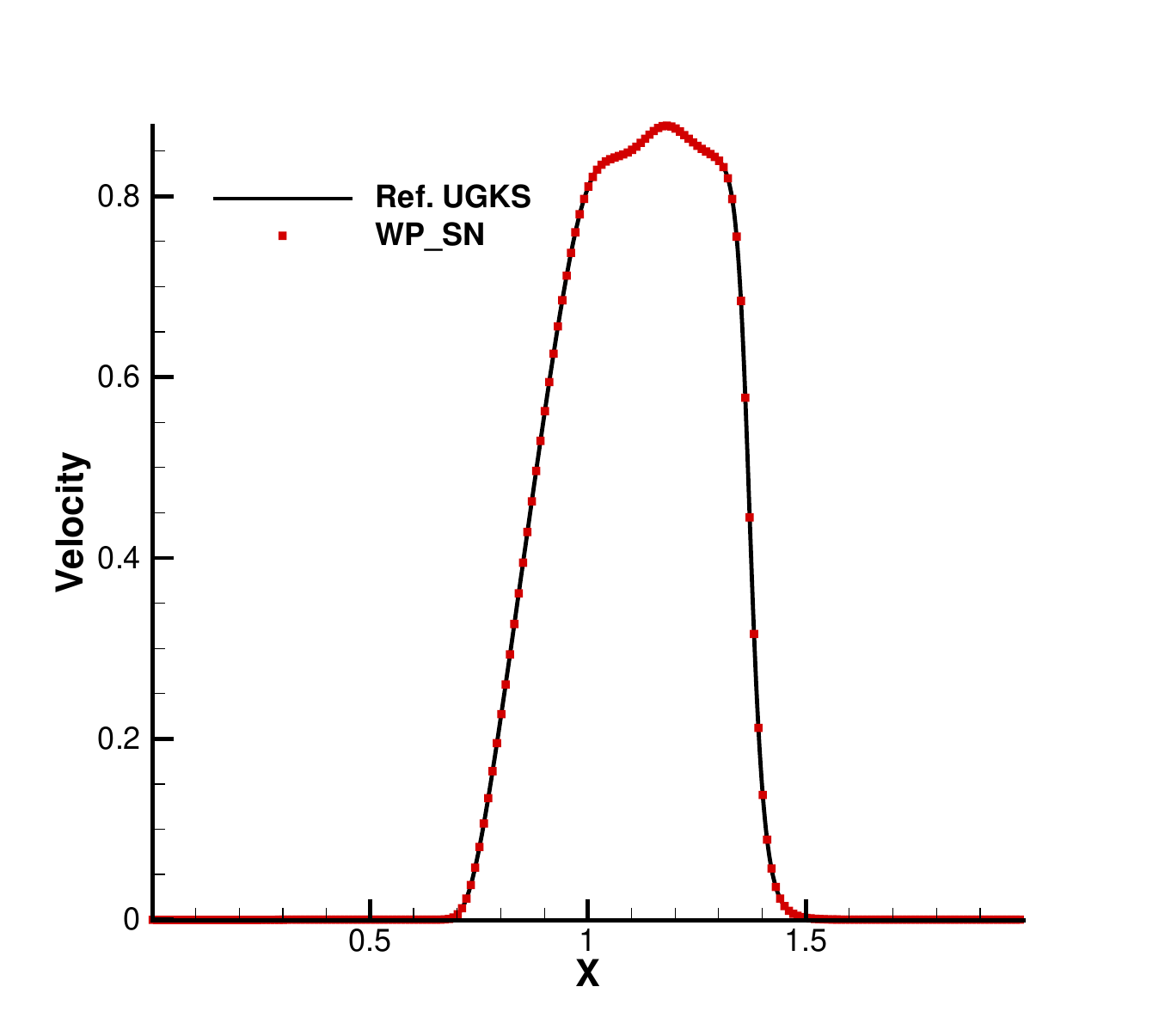}
        \caption{Velocity}
    \end{subfigure}
    \begin{subfigure}{0.32\textwidth}
        \centering
        \includegraphics[width=\linewidth]{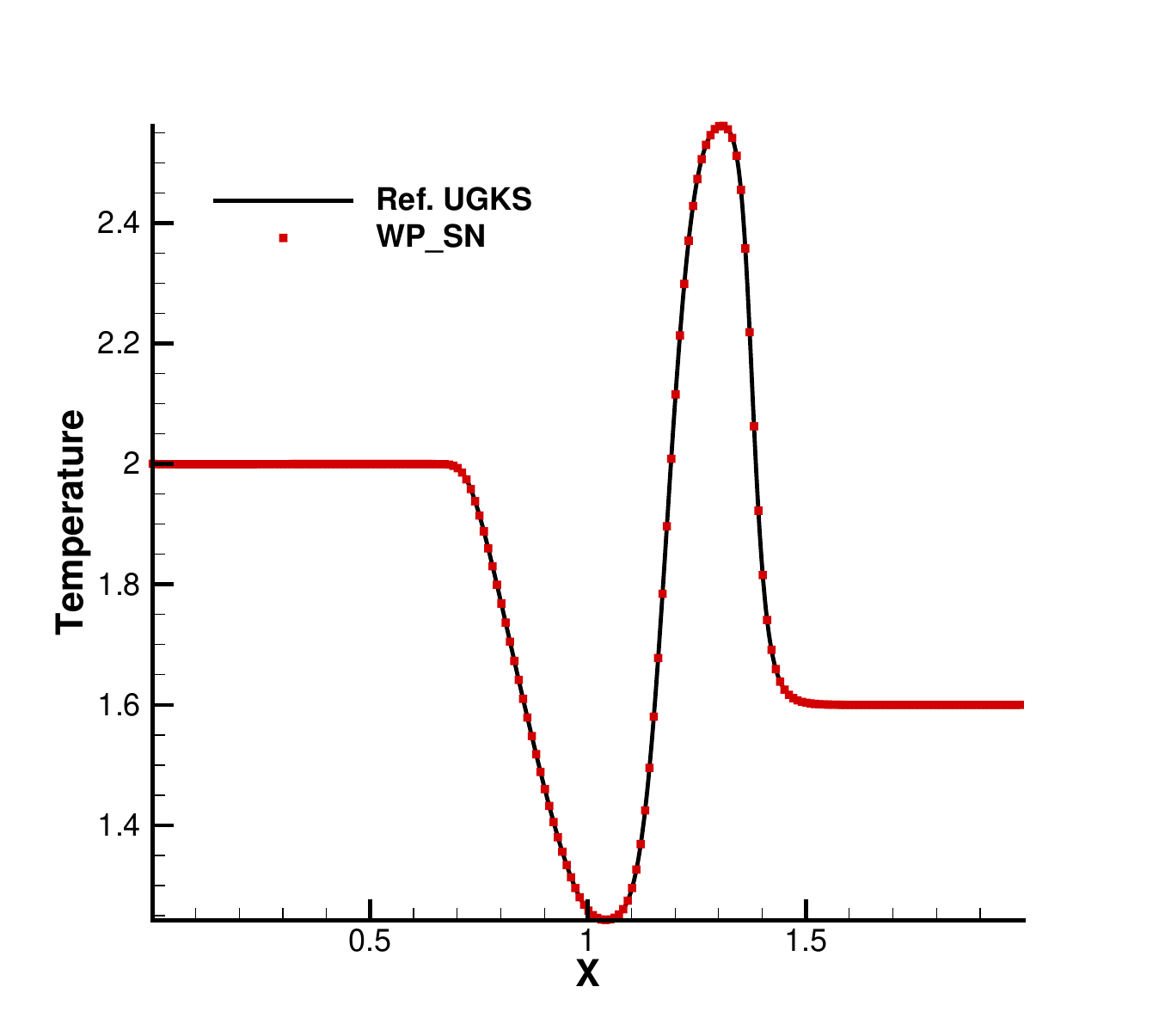}
        \caption{Temperature}
    \end{subfigure}
    \caption{Sod shock tube at $\mathrm{Kn}=10^{-2}$.}
    \label{fig:sod_kn1em2}
\end{figure}

\begin{figure}[!htbp]
    \centering
    \begin{subfigure}{0.32\textwidth}
        \centering
        \includegraphics[width=\linewidth]{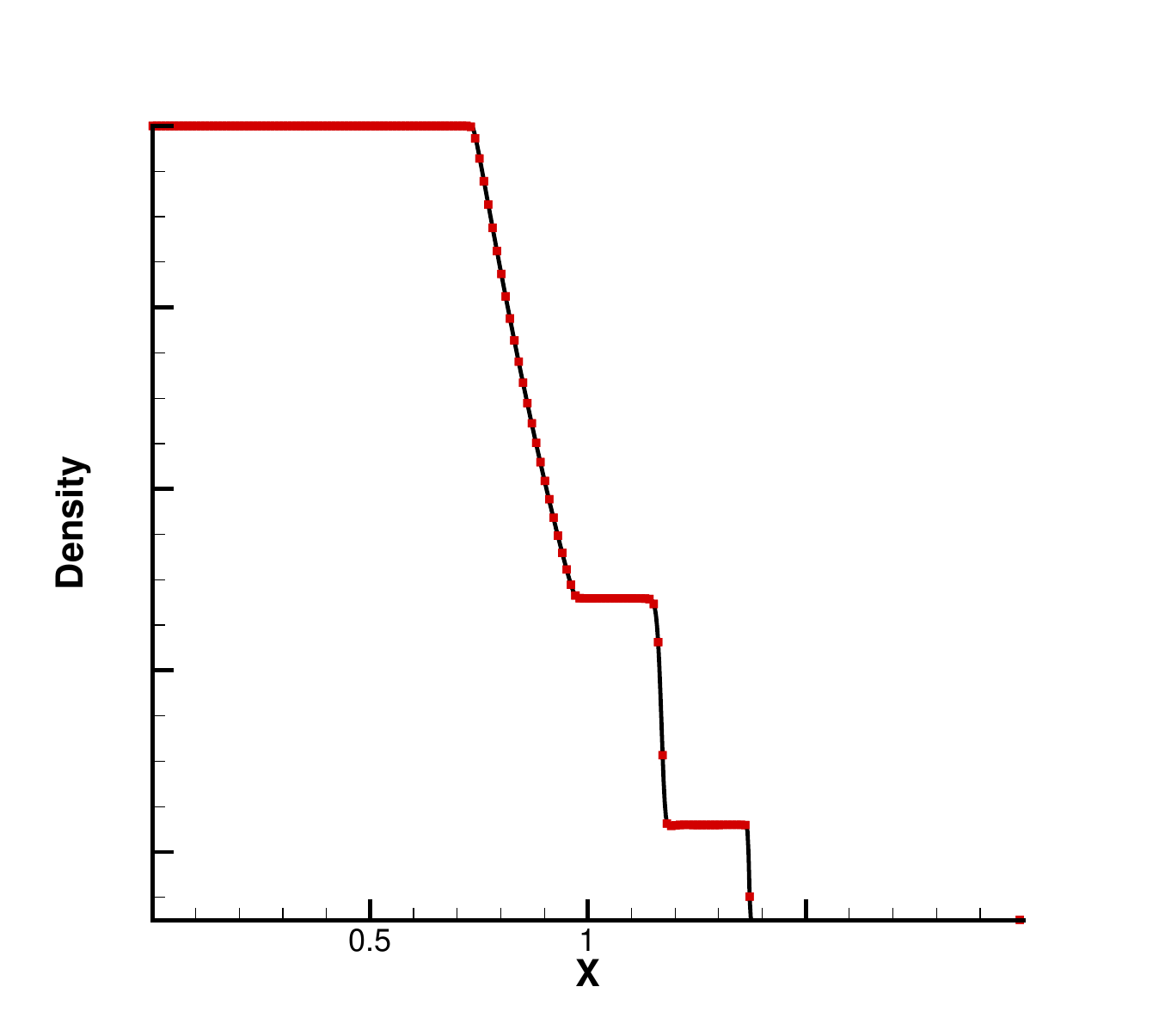}
        \caption{Density}
    \end{subfigure}
    \begin{subfigure}{0.32\textwidth}
        \centering
        \includegraphics[width=\linewidth]{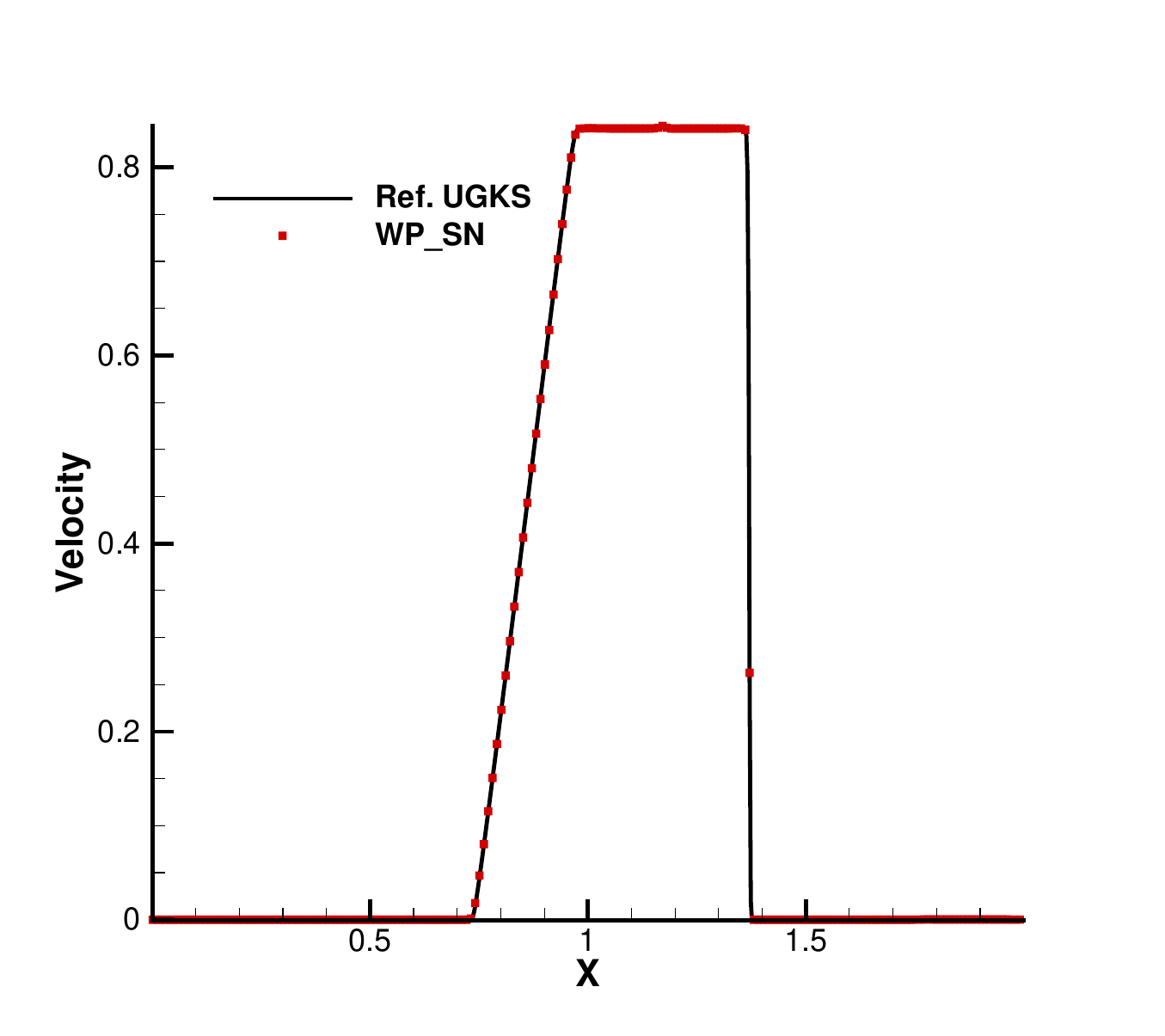}
        \caption{Velocity}
    \end{subfigure}
    \begin{subfigure}{0.32\textwidth}
        \centering
        \includegraphics[width=\linewidth]{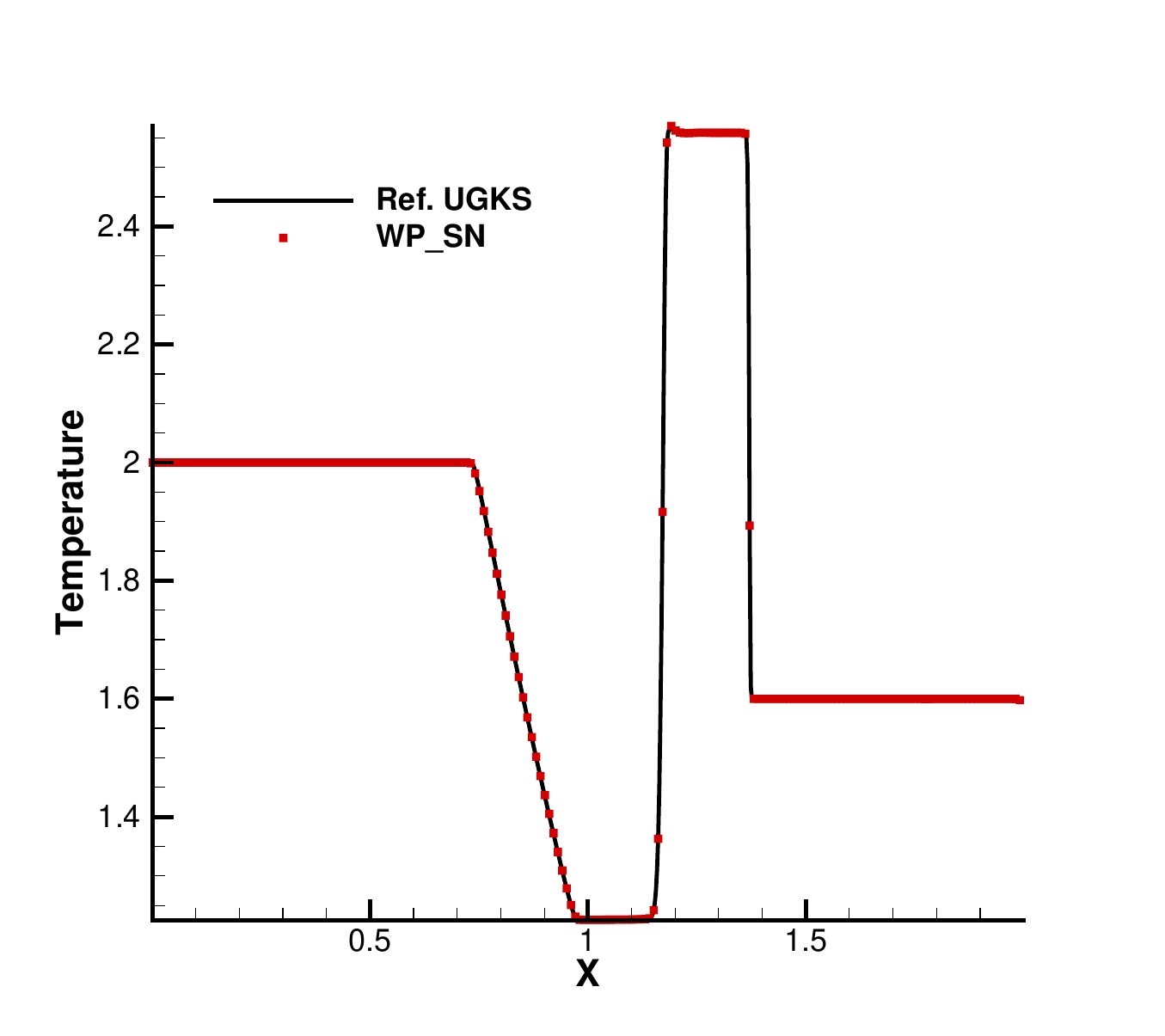}
        \caption{Temperature}
    \end{subfigure}
    \caption{Sod shock tube at $\mathrm{Kn}=10^{-4}$.}
    \label{fig:sod_kn1em4}
\end{figure}

\subsubsection{Couette flow}

The second one-dimensional benchmark is the planar Couette flow between two parallel diffuse-reflection walls. The lower wall is stationary and the upper wall moves tangentially, while the initial density and temperature are uniform. This test is sensitive to wall-induced rarefaction and velocity slip, and therefore provides a direct check of the kinetic wall treatment in the deterministic WPD-\(S_N\) formulation.

The physical domain is discretized by \(N_y=128\) cells in the wall-normal direction. A sequence of Knudsen numbers from \(\mathrm{Kn}=1\) down to \(10^{-4}\), including intermediate rarefaction levels, is considered. For rarefied and transition regimes, the reference solution is computed by the full \(S_N\) solver; for near-continuum cases, a continuum GKS reference is used. The WPD-\(S_N\) calculation uses the same mesh, boundary conditions, and gas model, with the local wave-particle horizon again specified by \(\mathrm{CFL}_l=1\).

Figure~\ref{fig:couette} compares the velocity and density profiles. The WPD-\(S_N\) solution reproduces the reference velocity slip and the weak density variation across the channel. The agreement over the full range of Knudsen numbers indicates that the wave-particle decomposition preserves the near-wall kinetic contribution while recovering the continuum behavior as \(\mathrm{Kn}\) decreases.

\begin{figure}[!htbp]
    \centering
    \begin{subfigure}{0.49\textwidth}
        \centering
        \includegraphics[width=\linewidth]{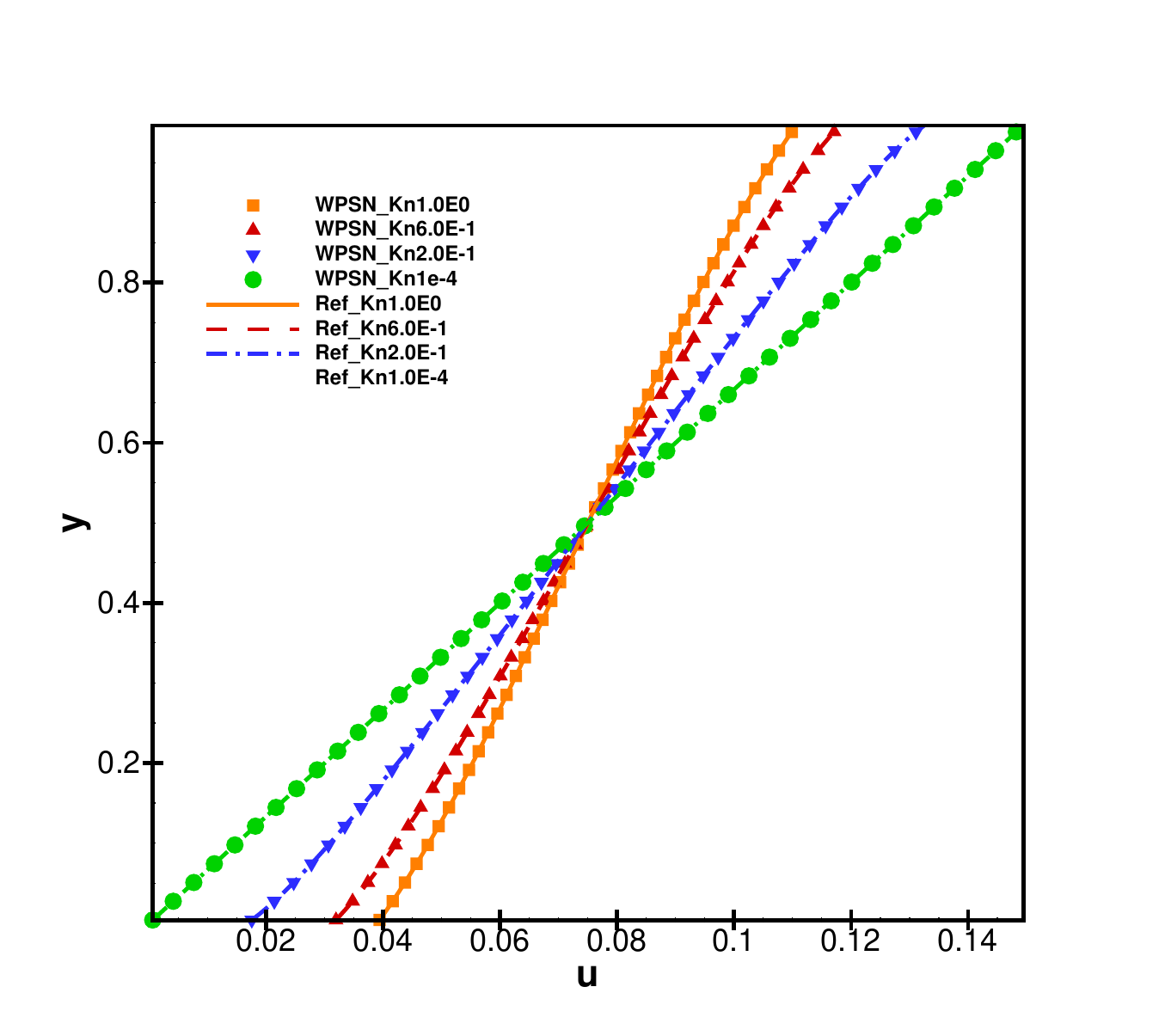}
        \caption{Velocity}
    \end{subfigure}
    \begin{subfigure}{0.49\textwidth}
        \centering
        \includegraphics[width=\linewidth]{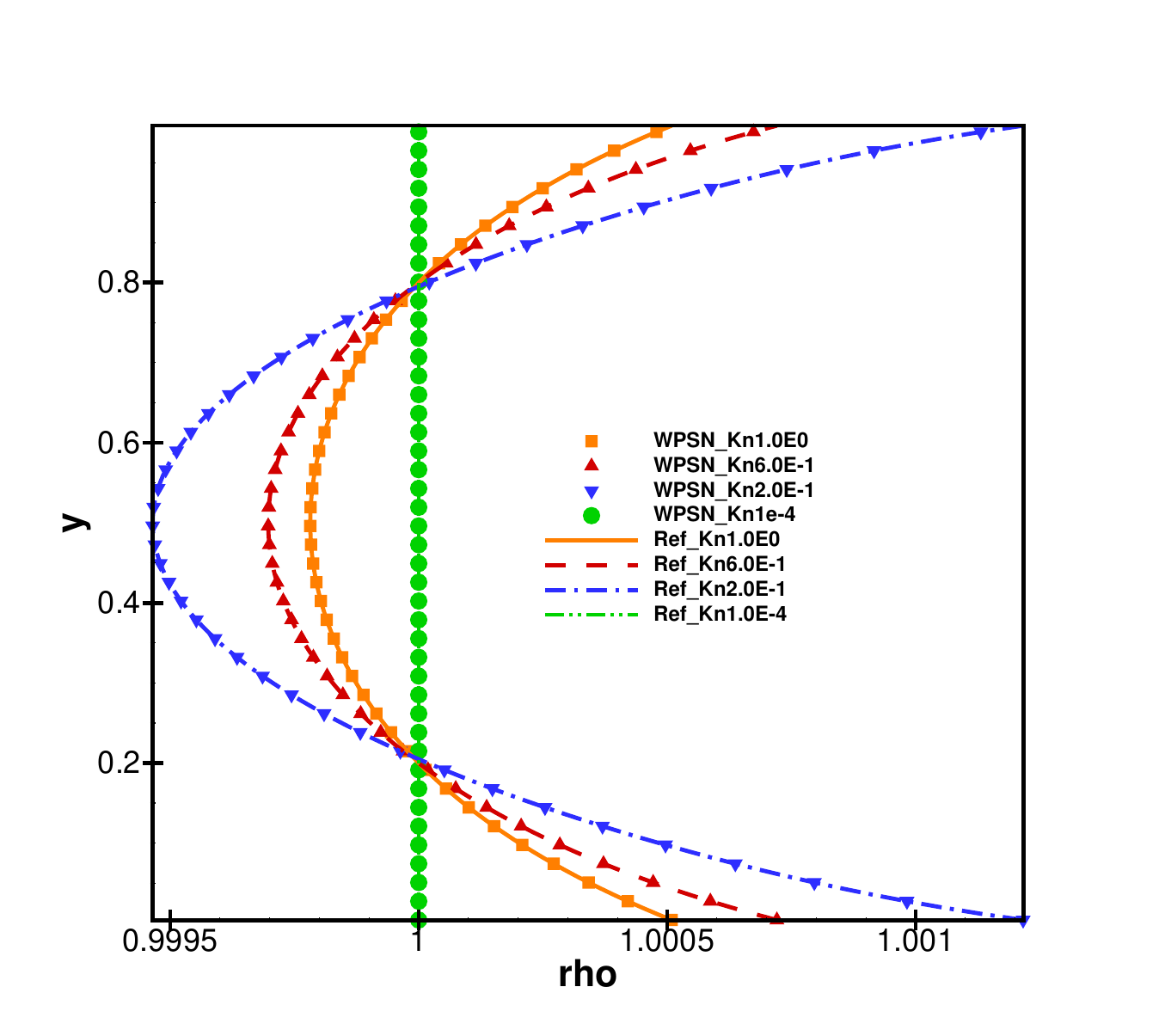}
        \caption{Density}
    \end{subfigure}
    \caption{Couette-flow profiles obtained by WPD-\(S_N\) and the reference solutions.}
    \label{fig:couette}
\end{figure}
\FloatBarrier

\subsubsection{Normal shock structure}

The third one-dimensional benchmark is the steady normal shock structure. This problem provides a more stringent kinetic test than the Sod tube because the solution contains a stationary non-equilibrium layer whose thickness and high-order moments are controlled by the collision model. We compute monatomic normal shocks at \(\mathrm{Kn}=1\) for upstream Mach numbers \(M=3\), \(8\), and \(10\). The computational domain length is \(40\), discretized by \(N_x=1000\) cells. The velocity space uses \(257\) uniform points; the velocity interval is \([-10,10]\) for \(M=3\), and \([-15,20]\) for \(M=8\) and \(10\). The calculations are advanced to \(t=200\) with \(\mathrm{CFL}=0.9\), and the shock profiles are shifted by the midpoint of the density transition for comparison.

The UGKS solution is used as the kinetic reference. The WPD-\(S_N\) calculation uses the same grid, velocity space, boundary states, and gas model, with \(\mathrm{CFL}_l=1\). Figures~\ref{fig:shock_ma8} and \ref{fig:shock_ma10} compare the normalized density and stress profiles for representative strong shocks. The density transition, shock thickness, and stress peak are captured by WPD-\(S_N\) with good agreement relative to UGKS. Figure~\ref{fig:shock_distribution} further compares the velocity distribution functions through the shock layer, showing that the deterministic particle component retains the non-equilibrium shape of the distribution rather than merely reproducing the macroscopic moments.

\begin{figure}[!htbp]
    \centering
    \begin{subfigure}{0.49\textwidth}
        \centering
        \includegraphics[width=\linewidth]{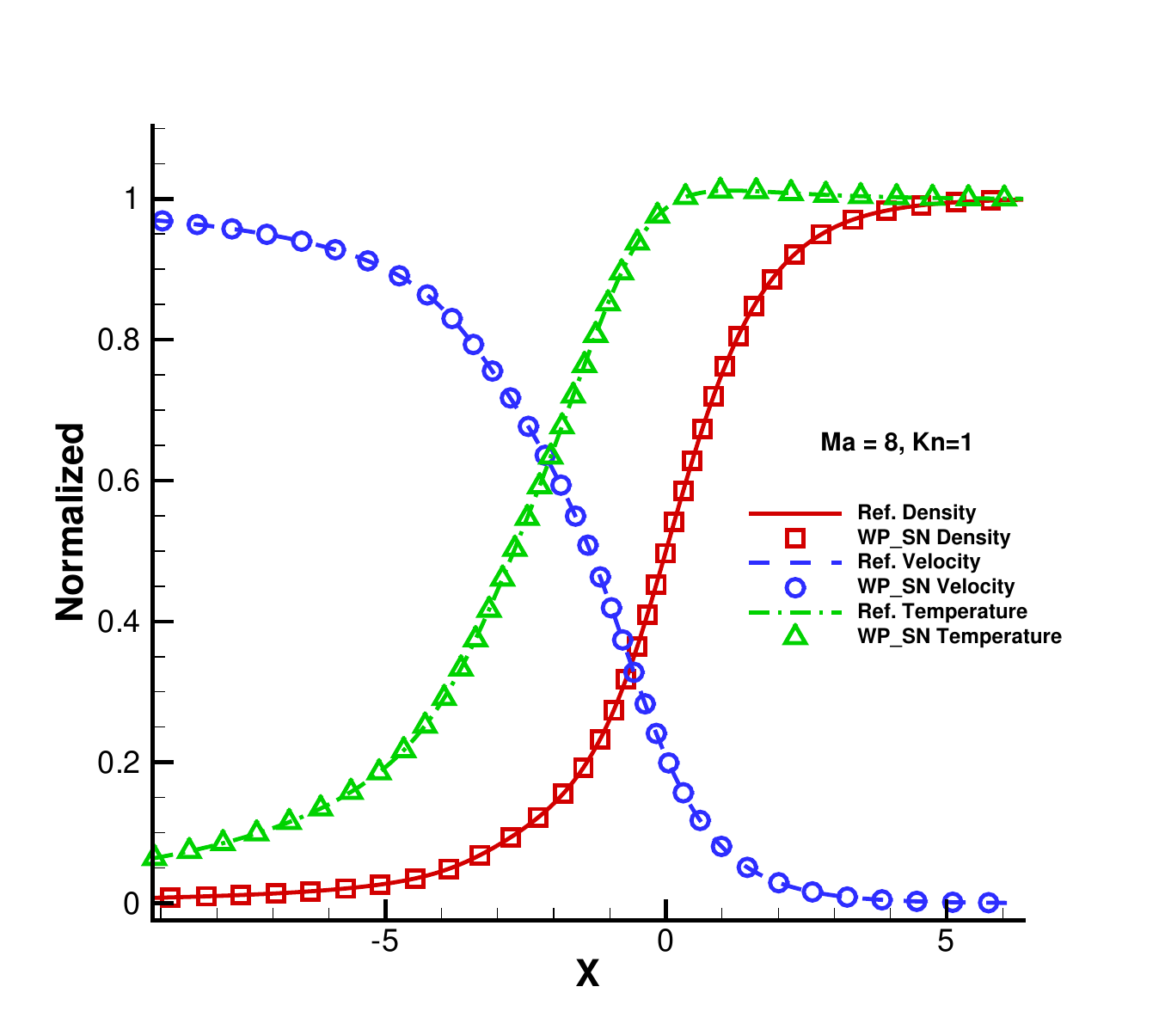}
        \caption{$M=8$, density}
    \end{subfigure}
    \begin{subfigure}{0.49\textwidth}
        \centering
        \includegraphics[width=\linewidth]{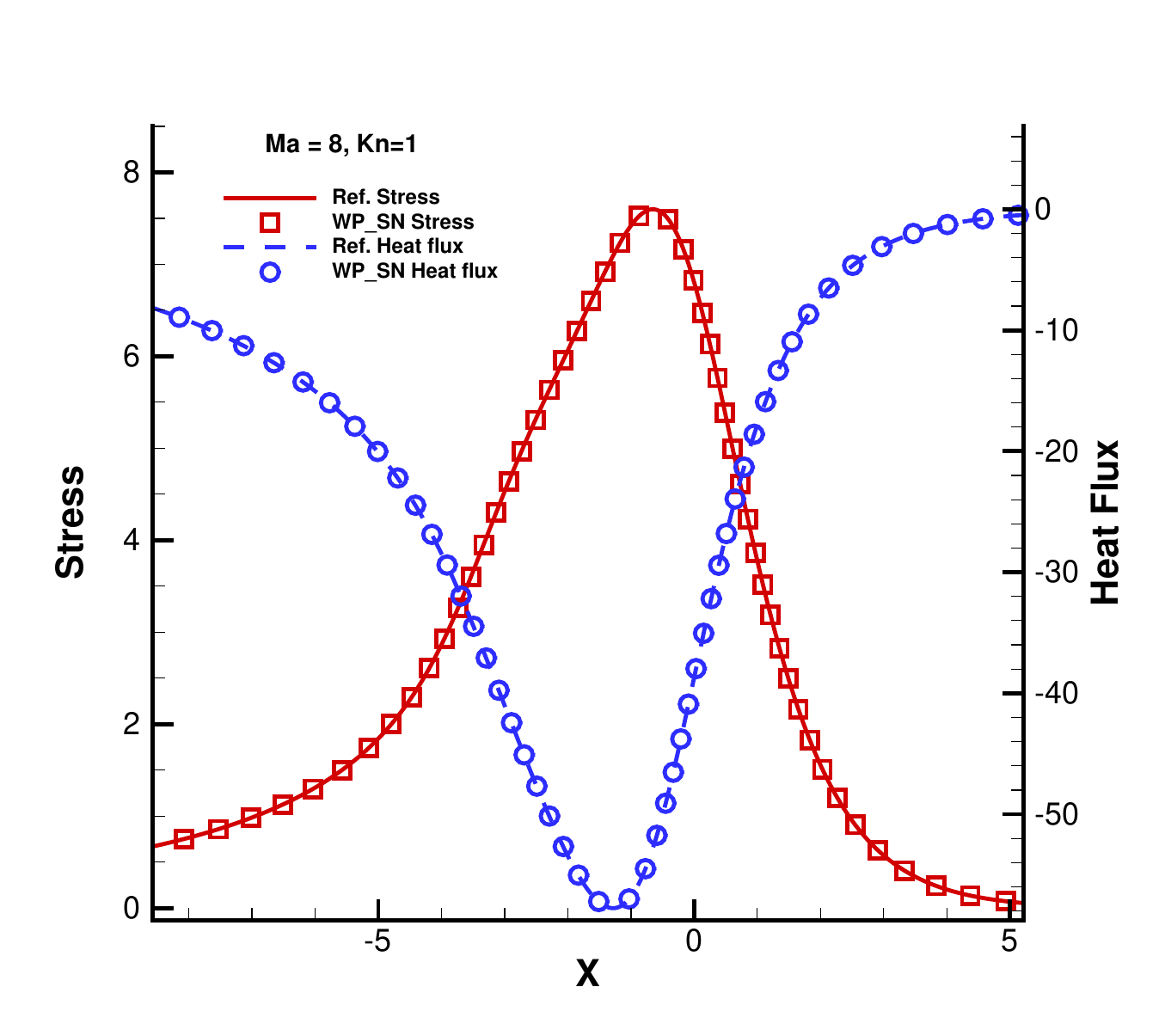}
        \caption{$M=8$, stress}
    \end{subfigure}
    \caption{Normal shock structure at \(M=8\) and \(\mathrm{Kn}=1\).}
    \label{fig:shock_ma8}
\end{figure}

\begin{figure}[!htbp]
    \centering
    \begin{subfigure}{0.49\textwidth}
        \centering
        \includegraphics[width=\linewidth]{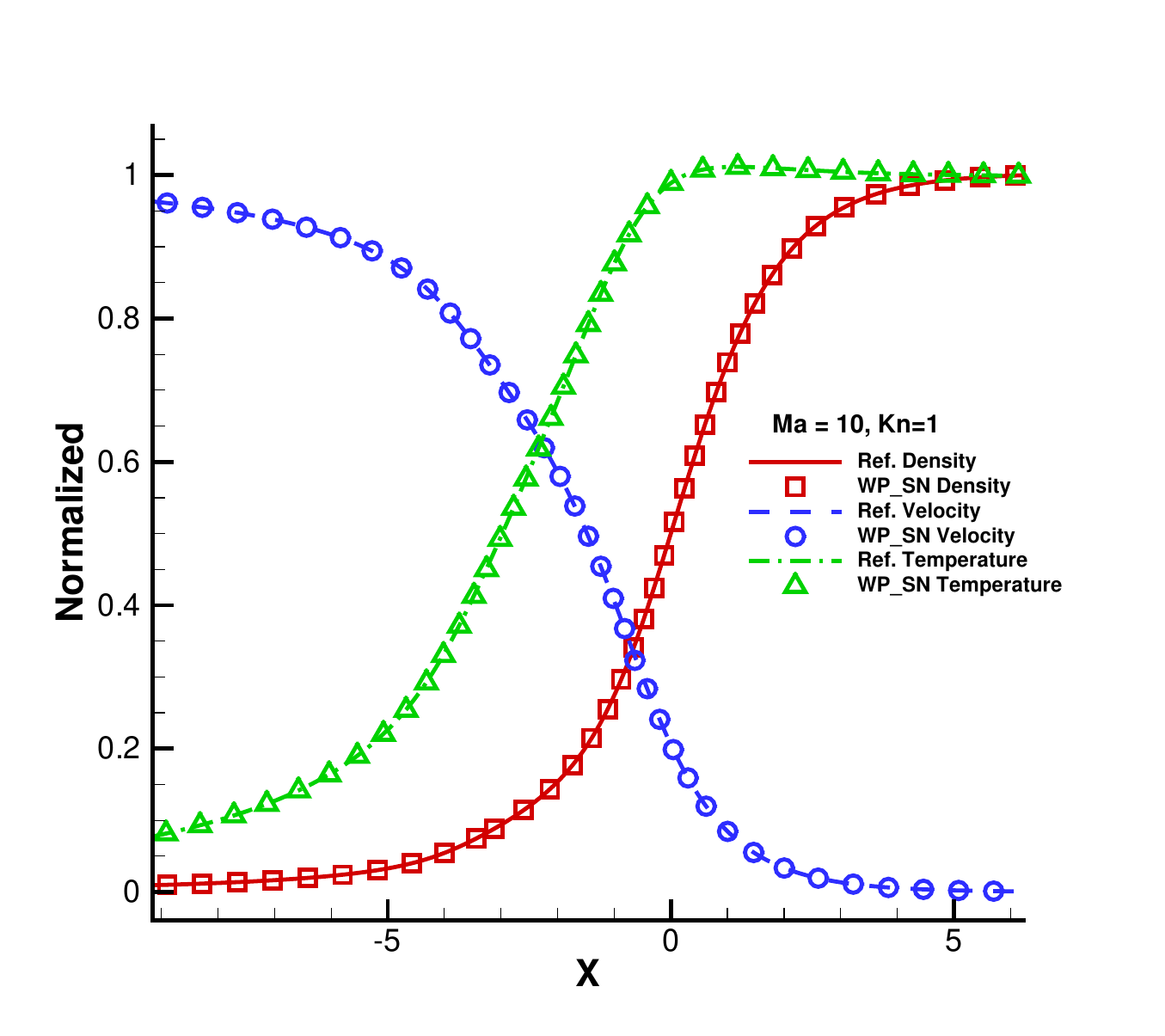}
        \caption{$M=10$, density}
    \end{subfigure}
    \begin{subfigure}{0.49\textwidth}
        \centering
        \includegraphics[width=\linewidth]{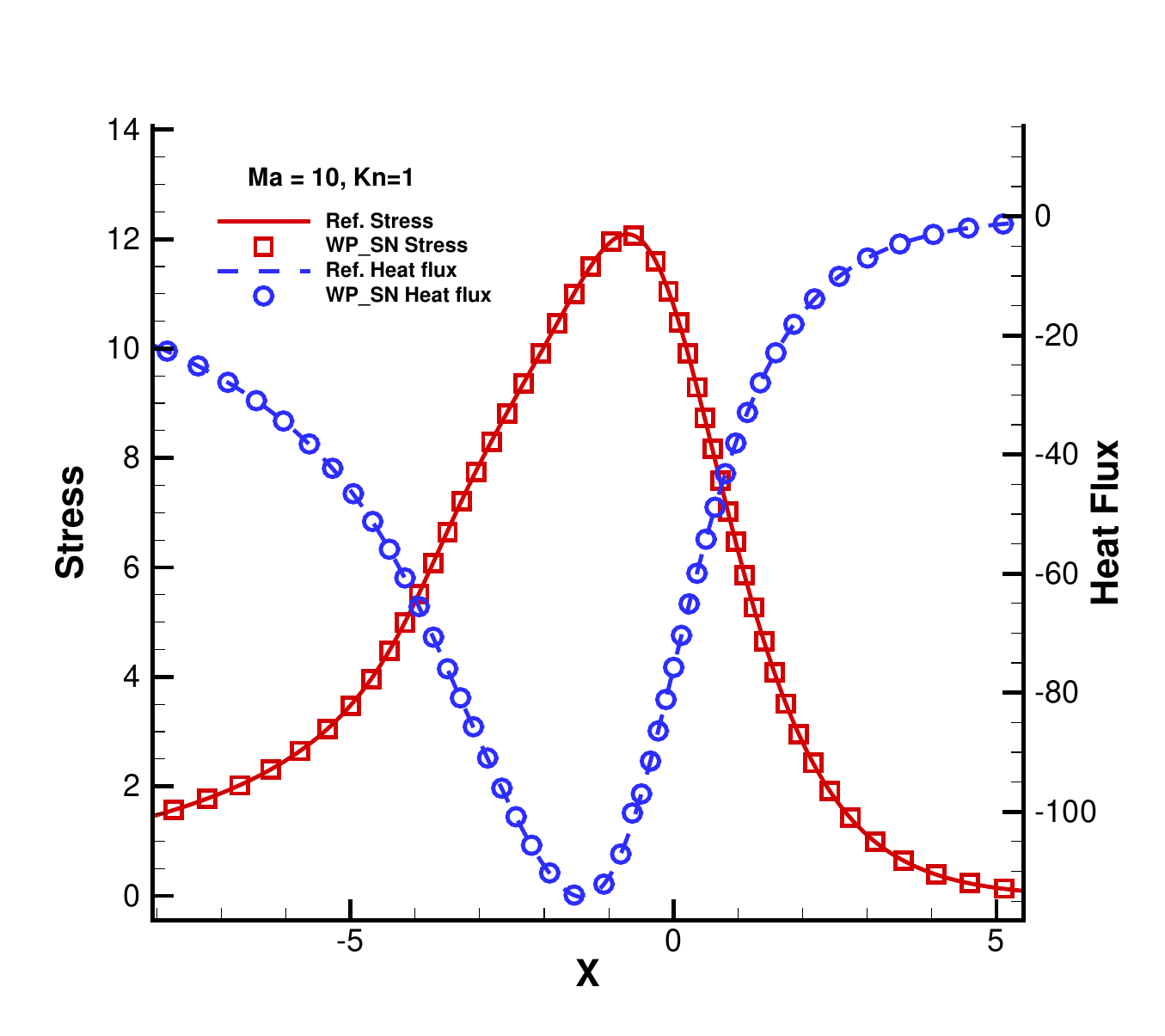}
        \caption{$M=10$, stress}
    \end{subfigure}
    \caption{Normal shock structure at \(M=10\) and \(\mathrm{Kn}=1\).}
    \label{fig:shock_ma10}
\end{figure}

\begin{figure}[!htbp]
    \centering
    \begin{subfigure}{0.49\textwidth}
        \centering
        \includegraphics[width=\linewidth]{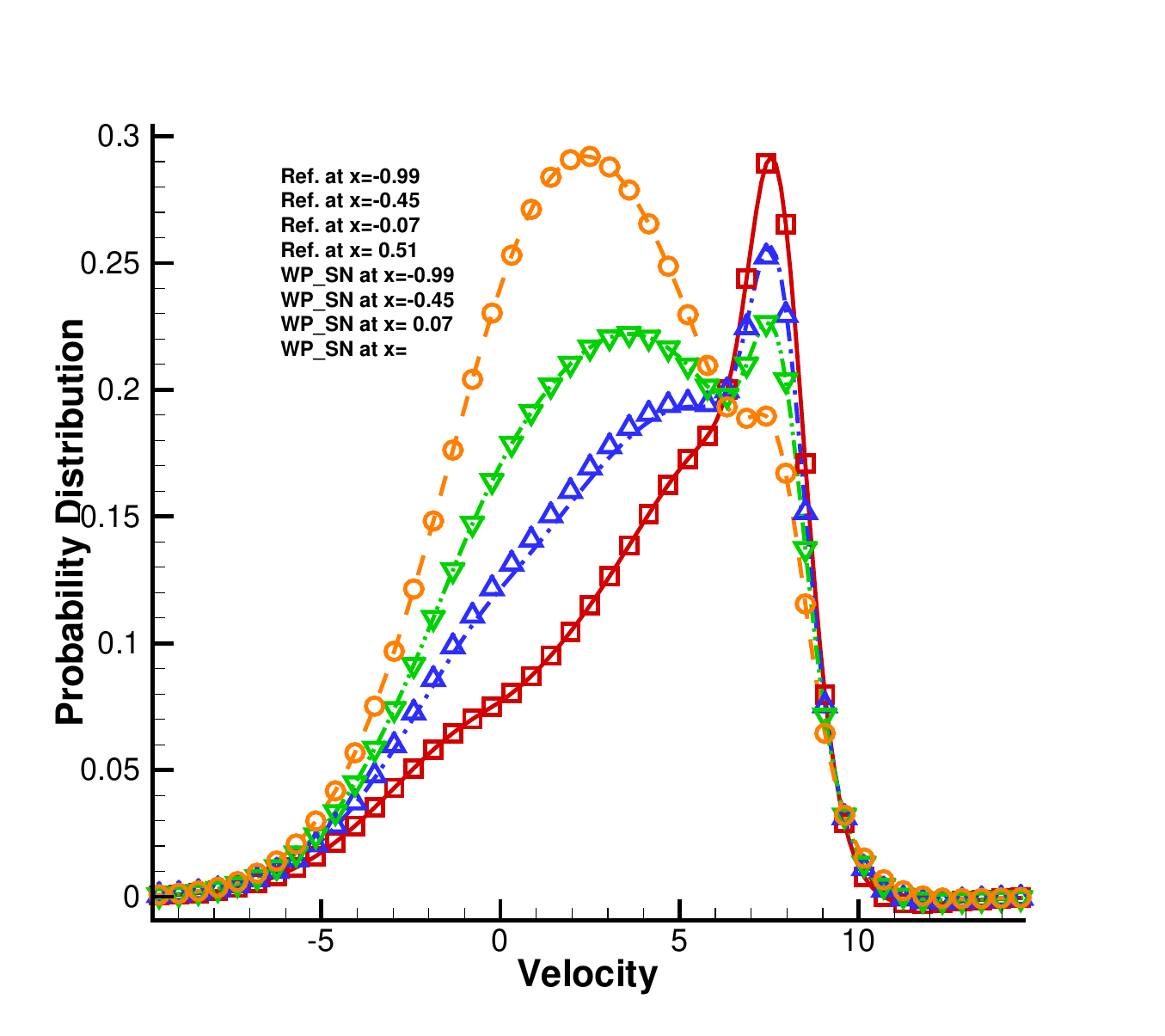}
        \caption{\(M=8\)}
    \end{subfigure}
    \begin{subfigure}{0.49\textwidth}
        \centering
        \includegraphics[width=\linewidth]{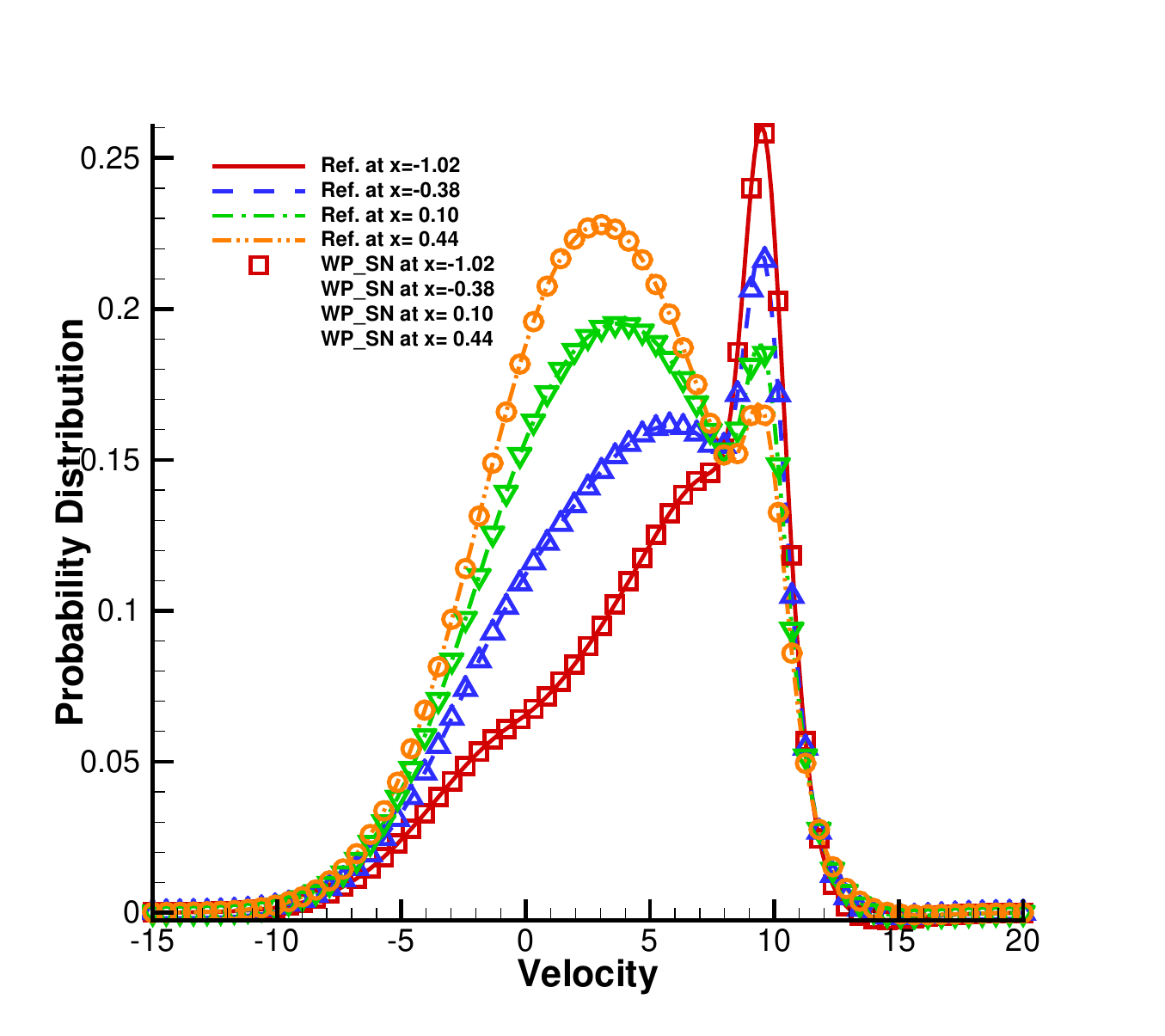}
        \caption{\(M=10\)}
    \end{subfigure}
    \caption{Velocity distribution functions across the normal shock layer.}
    \label{fig:shock_distribution}
\end{figure}

\subsection{Two-Dimensional Lid-Driven Cavity}

The lid-driven cavity extends the assessment to a two-dimensional wall-bounded flow. Two regimes are considered. The first is a rarefied cavity, where density and temperature fields are visibly affected by kinetic wall interaction. The second is the continuum incompressible cavity, where the WPD formulation recovers the standard viscous vortex structure and agrees with the benchmark data of Ghia et al.

For the WPD splitting, the local particle fraction is measured by
\[
\beta_P=\exp(-\mathcal{T}/\tau),\qquad \beta_W=1-\beta_P,
\]
where \(\mathcal{T}\) is the wave-particle horizon and \(\tau\) is the local relaxation time. With \(\mathrm{CFL}_l=1\), the reference horizon on a uniform square mesh is
\[
\mathcal{T} = \frac{\mathrm{CFL}_l\,\Delta x\Delta y}{\Delta y\,|\xi|_{\max}+\Delta x\,|\eta|_{\max}},
\]
with \(|\xi|_{\max}=|\eta|_{\max}=3\). This is Eq.~\eqref{eq:algorithm_local_horizon} specialized to a uniform square cell, using the maximal molecular speed for the reference estimate. Table~\ref{tab:cavity_wp_fraction} lists the resulting particle and wave fractions. The rarefied cases are particle dominated, whereas the \(Re=1000\) and \(Re=3200\) continuum cases are wave dominated and use mesh spacing and time step selected by the macroscopic CFL condition.

\begin{table}[!htbp]
\centering
\caption{Reference-state WPD fractions for the lid-driven cavity cases with \(\mathrm{CFL}_l=1\). Here \(\beta_P=\exp(-\mathcal{T}/\tau_0)\) is the particle fraction and \(\beta_W=1-\beta_P\) is the wave fraction.}
\label{tab:cavity_wp_fraction}
\begin{tabular}{ccccc}
\toprule
Case & Mesh & \(\tau_0\) & \(\beta_P\) & \(\beta_W\) \\
\midrule
\(\mathrm{Kn}=10\) & \(120\times120\) & \(1.1078\times10^{1}\) & \(9.9987\times10^{-1}\) & \(1.2537\times10^{-4}\) \\
\(\mathrm{Kn}=1\) & \(120\times120\) & \(1.1078\) & \(9.9875\times10^{-1}\) & \(1.2530\times10^{-3}\) \\
\(\mathrm{Kn}=0.075\) & \(120\times120\) & \(8.3084\times10^{-2}\) & \(9.8342\times10^{-1}\) & \(1.6578\times10^{-2}\) \\
\(Re=1000\) & \(200\times200\) & \(3.0000\times10^{-4}\) & \(6.2177\times10^{-2}\) & \(9.3782\times10^{-1}\) \\
\(Re=3200\) & \(600\times600\) & \(9.3750\times10^{-5}\) & \(5.1666\times10^{-2}\) & \(9.4833\times10^{-1}\) \\
\bottomrule
\end{tabular}
\end{table}

The discrete total mass is computed from the cell-centered density field as
\[
M_h=\sum_{i,j}\rho_{ij}\Delta x\Delta y .
\]
Since the initial density is uniform and the cavity area is unity, the reference mass is \(M_0=1\). Table~\ref{tab:cavity_mass_conservation} shows that the WPD-\(S_N\) calculations preserve the total mass to round-off accuracy.

\begin{table}[!htbp]
\centering
\caption{Discrete mass conservation in the WPD-\(S_N\) lid-driven cavity calculations.}
\label{tab:cavity_mass_conservation}
\begin{tabular}{ccc}
\toprule
Case & \(M_h\) & \((M_h-M_0)/M_0\) \\
\midrule
\(\mathrm{Kn}=10\) & \(1.0000000000\) & \(5.551\times10^{-15}\) \\
\(\mathrm{Kn}=1\) & \(1.0000000000\) & \(-3.331\times10^{-16}\) \\
\(\mathrm{Kn}=0.075\) & \(1.0000000000\) & \(2.354\times10^{-14}\) \\
\(Re=1000\) & \(1.0000000000\) & \(-6.661\times10^{-16}\) \\
\(Re=3200\) & \(1.0000000000\) & \(6.661\times10^{-16}\) \\
\bottomrule
\end{tabular}
\end{table}

\subsubsection{Rarefied cavity}

For the rarefied cavity calculations, the square cavity is discretized by a \(120\times120\) physical mesh. The discrete velocity space uses \(120\times120\) velocity points. Diffuse wall reflection is imposed on all solid boundaries, and the top lid moves tangentially. The WPD-\(S_N\) solution is computed with \(\mathrm{CFL}_l=1\). The representative cases shown here are \(\mathrm{Kn}=1\) and \(\mathrm{Kn}=0.075\), corresponding to a highly rarefied regime and a transition-regime case.

Figures~\ref{fig:cavity_kn1} and \ref{fig:cavity_kn0075} show density and temperature contours. At \(\mathrm{Kn}=1\), the wall-induced kinetic effect produces broad non-uniform density and temperature fields. As the Knudsen number is reduced to \(0.075\), the flow becomes more localized near the moving lid and the cavity core becomes closer to the continuum behavior. The centerline velocity profiles in Fig.~\ref{fig:cavity_rarefied_profiles} provide a quantitative comparison of the velocity structure and show that the deterministic particle component captures the rarefaction-induced slip and non-equilibrium profile variation.

\begin{figure}[!htbp]
    \centering
    \begin{subfigure}{0.48\textwidth}
        \centering
        \includegraphics[width=\linewidth]{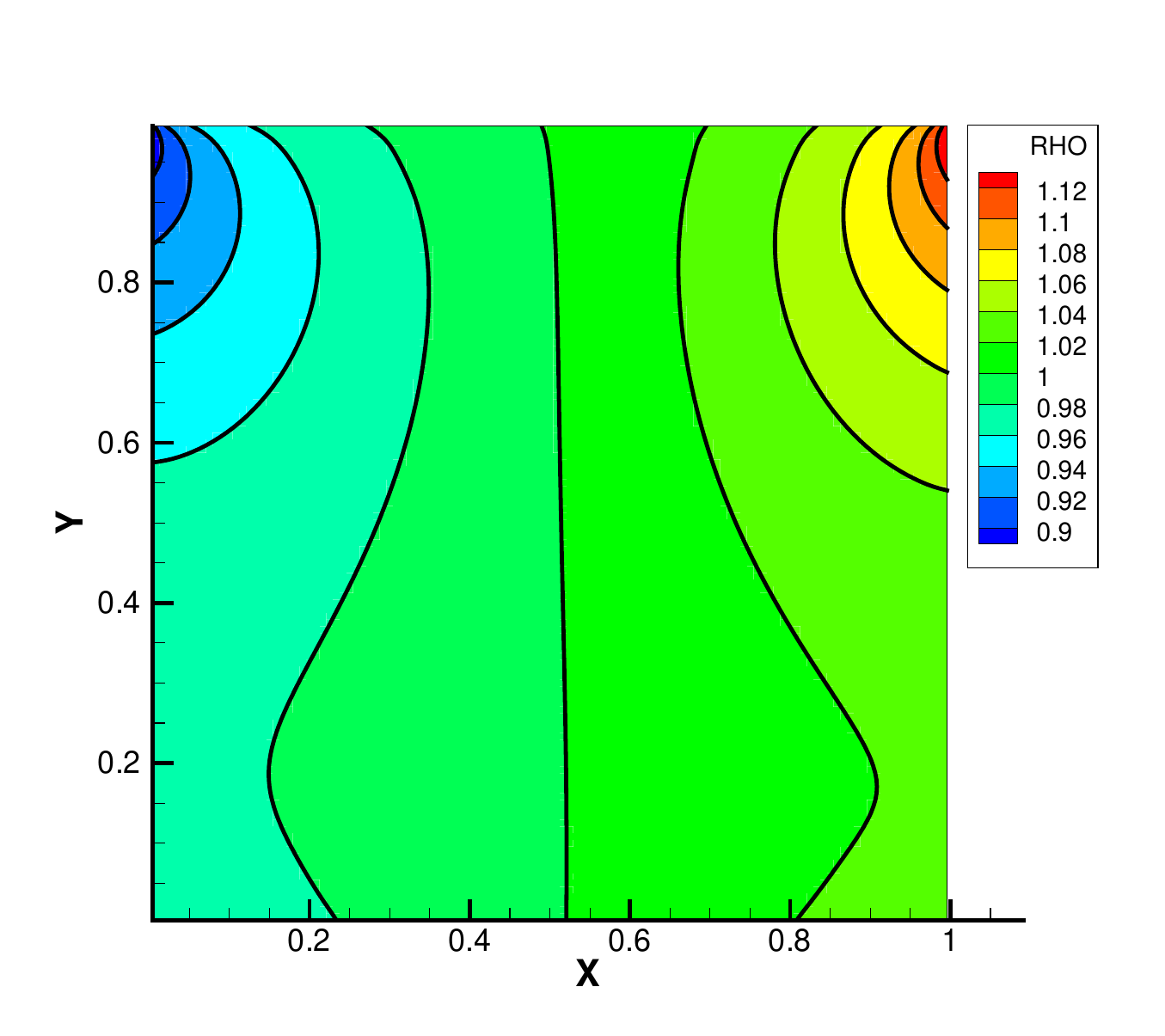}
        \caption{Density}
    \end{subfigure}
    \begin{subfigure}{0.48\textwidth}
        \centering
        \includegraphics[width=\linewidth]{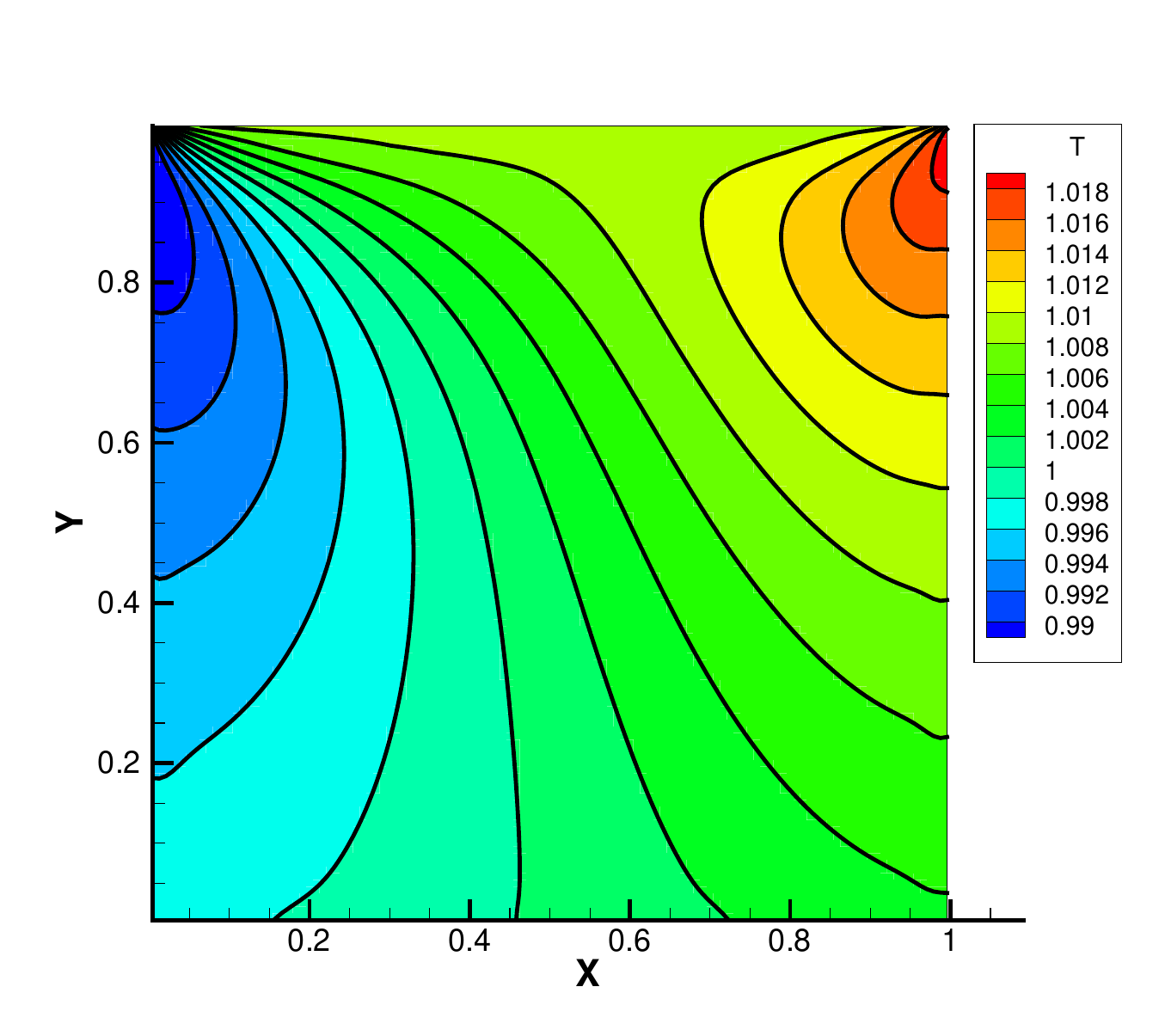}
        \caption{Temperature}
    \end{subfigure}
    \caption{Rarefied lid-driven cavity flow at $\mathrm{Kn}=1$.}
    \label{fig:cavity_kn1}
\end{figure}
\FloatBarrier

\begin{figure}[!htbp]
    \centering
    \begin{subfigure}{0.48\textwidth}
        \centering
        \includegraphics[width=\linewidth]{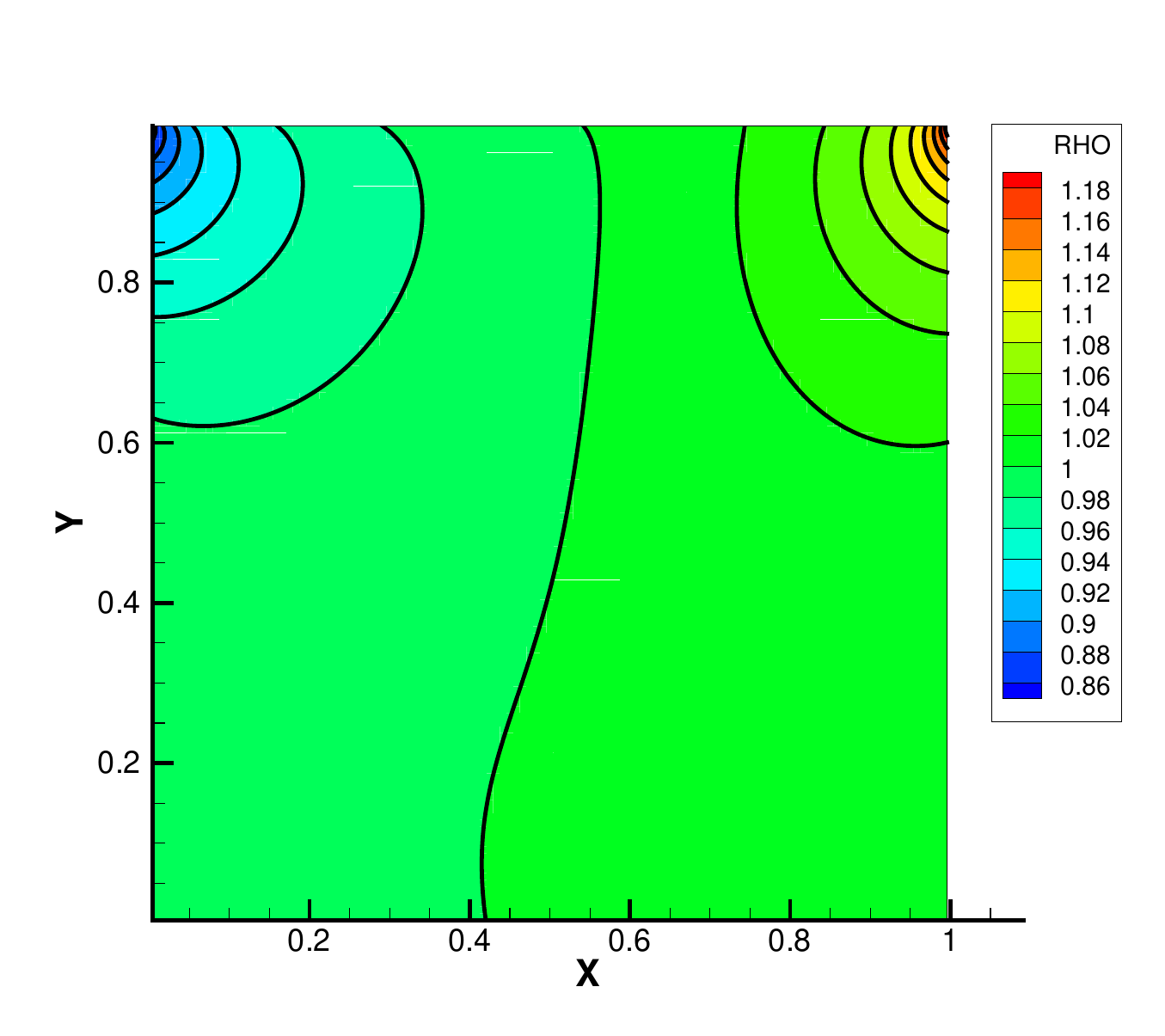}
        \caption{Density}
    \end{subfigure}
    \begin{subfigure}{0.48\textwidth}
        \centering
        \includegraphics[width=\linewidth]{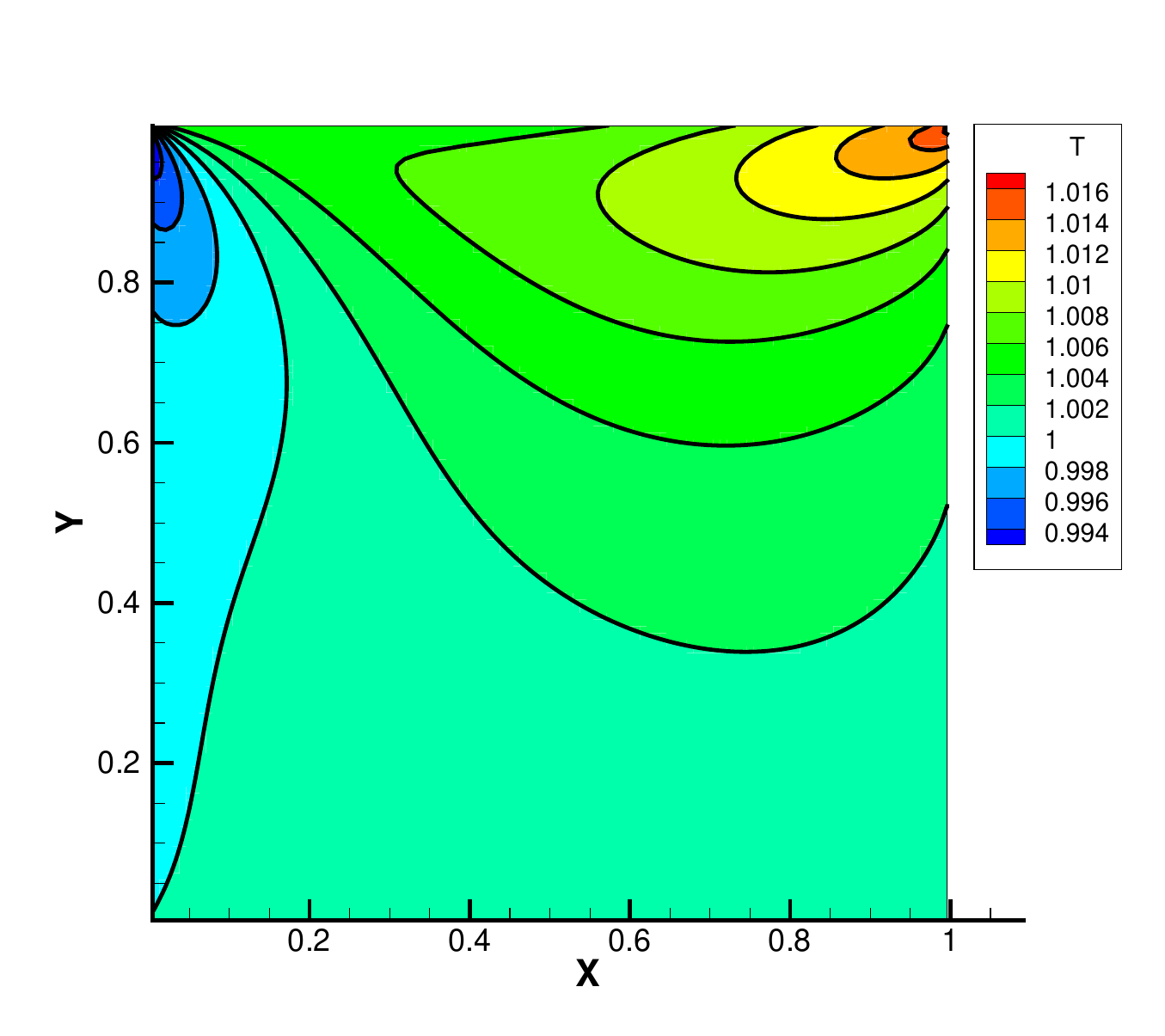}
        \caption{Temperature}
    \end{subfigure}
    \caption{Rarefied lid-driven cavity flow at $\mathrm{Kn}=0.075$.}
    \label{fig:cavity_kn0075}
\end{figure}

\begin{figure}[!htbp]
    \centering
    \begin{subfigure}{0.49\textwidth}
        \centering
        \includegraphics[width=\linewidth]{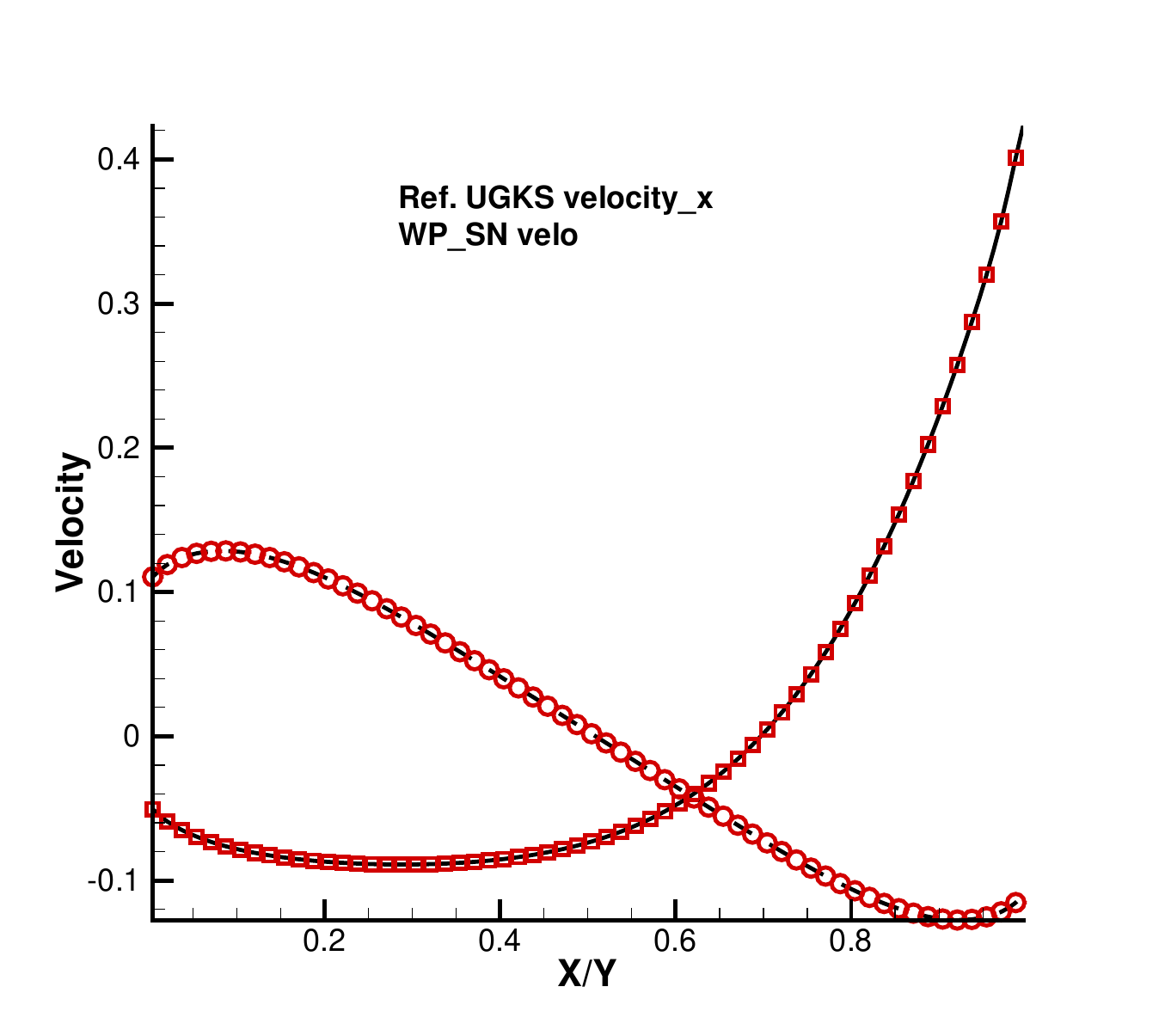}
        \caption{$\mathrm{Kn}=1$}
    \end{subfigure}
    \begin{subfigure}{0.49\textwidth}
        \centering
        \includegraphics[width=\linewidth]{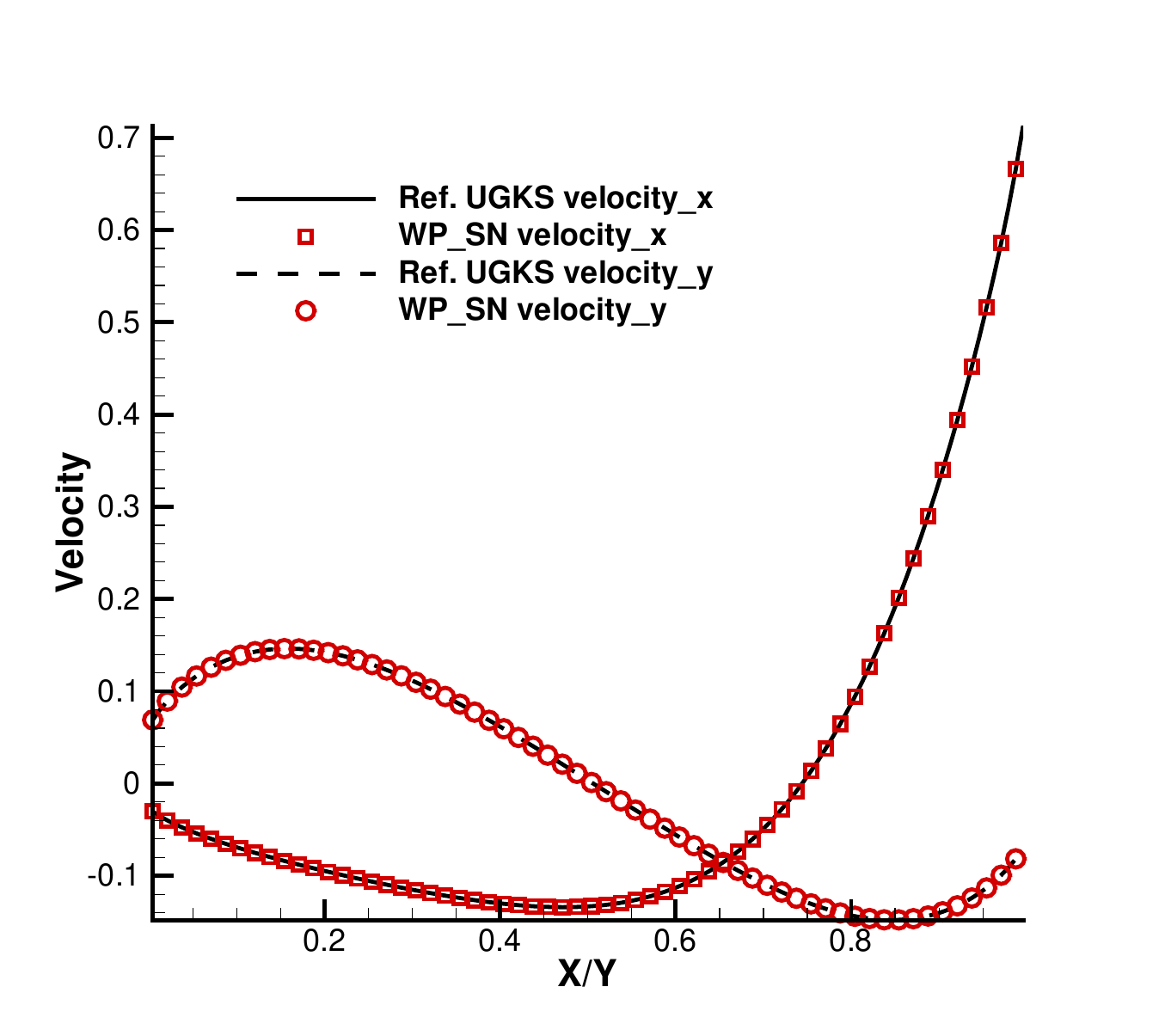}
        \caption{$\mathrm{Kn}=0.075$}
    \end{subfigure}
    \caption{Centerline velocity profiles for rarefied cavity flows.}
    \label{fig:cavity_rarefied_profiles}
\end{figure}

\subsubsection{Continuum cavity}

The continuum limit is assessed by the classical lid-driven cavity at \(Re=1000\) and \(Re=3200\). In these cases the wave component dominates and the WPD scheme reduces to the continuum GKS behavior. The \(Re=1000\) case is computed on a \(200\times200\) mesh, while the \(Re=3200\) case uses a finer \(600\times600\) mesh. The velocity profiles are compared with the benchmark data of Ghia et al. along the vertical and horizontal centerlines.

Figures~\ref{fig:cavity_re1000} and \ref{fig:cavity_re3200} show the streamlines overlaid on the velocity-magnitude contours and the corresponding centerline velocity comparisons. The primary vortex and the secondary corner structures are recovered, and the centerline profiles agree well with the benchmark data. This agreement supports the Navier--Stokes limiting behavior of the WPD formulation when the kinetic particle contribution becomes negligible.

\begin{figure}[!htbp]
    \centering
    \begin{subfigure}{0.47\textwidth}
        \centering
        \includegraphics[width=\linewidth]{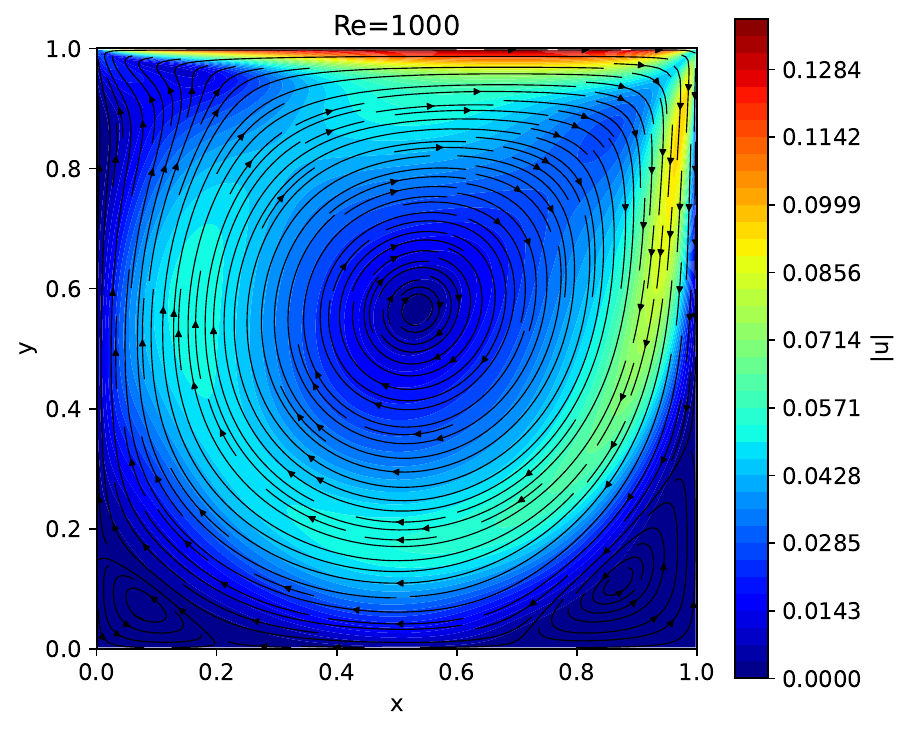}
        \caption{Streamlines and velocity magnitude}
    \end{subfigure}
    \begin{subfigure}{0.50\textwidth}
        \centering
        \includegraphics[width=\linewidth]{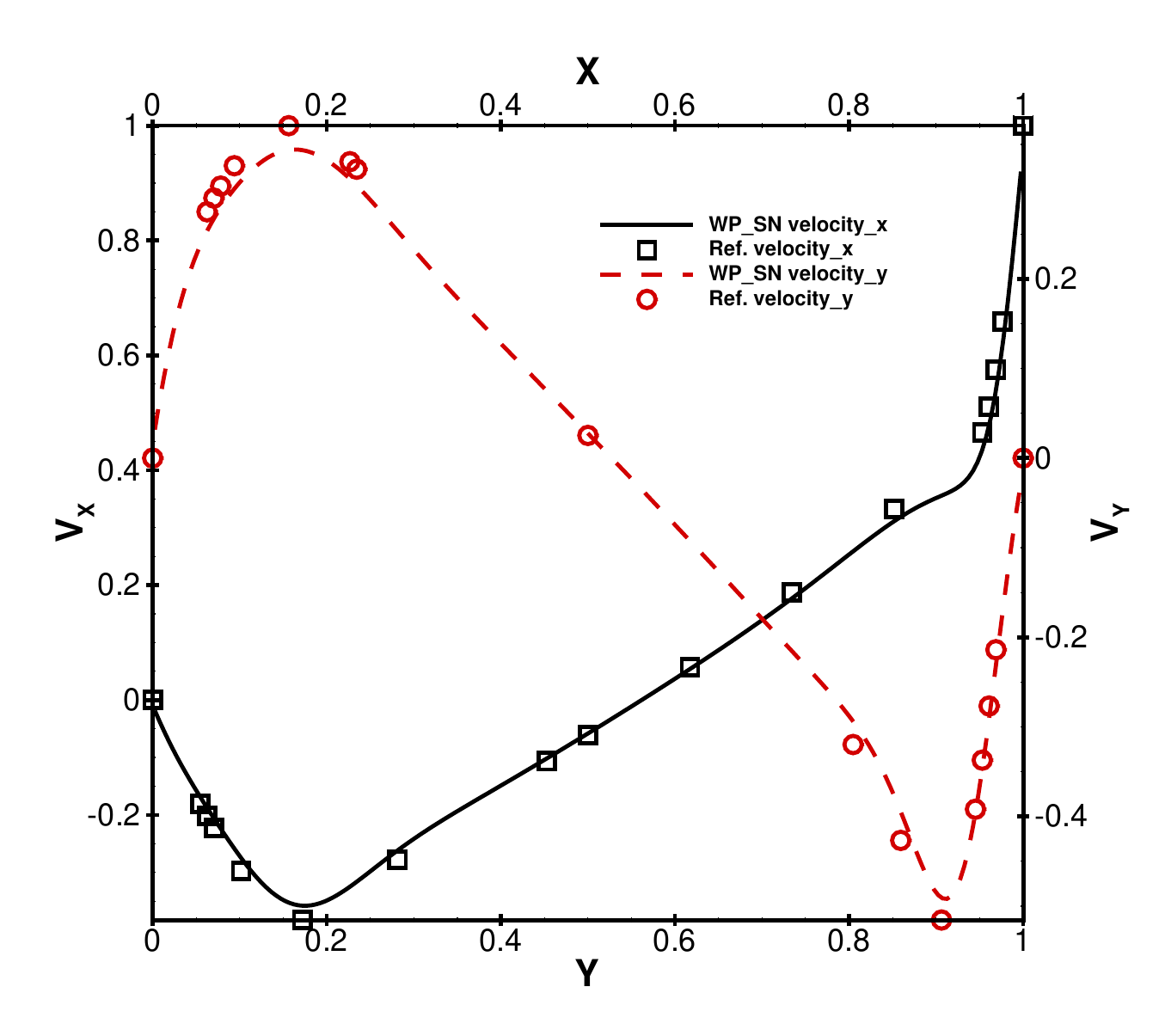}
        \caption{Centerline velocities}
    \end{subfigure}
    \caption{Continuum cavity flow at $Re=1000$ compared with the Ghia et al. benchmark.}
    \label{fig:cavity_re1000}
\end{figure}

\begin{figure}[!htbp]
    \centering
    \begin{subfigure}{0.47\textwidth}
        \centering
        \includegraphics[width=\linewidth]{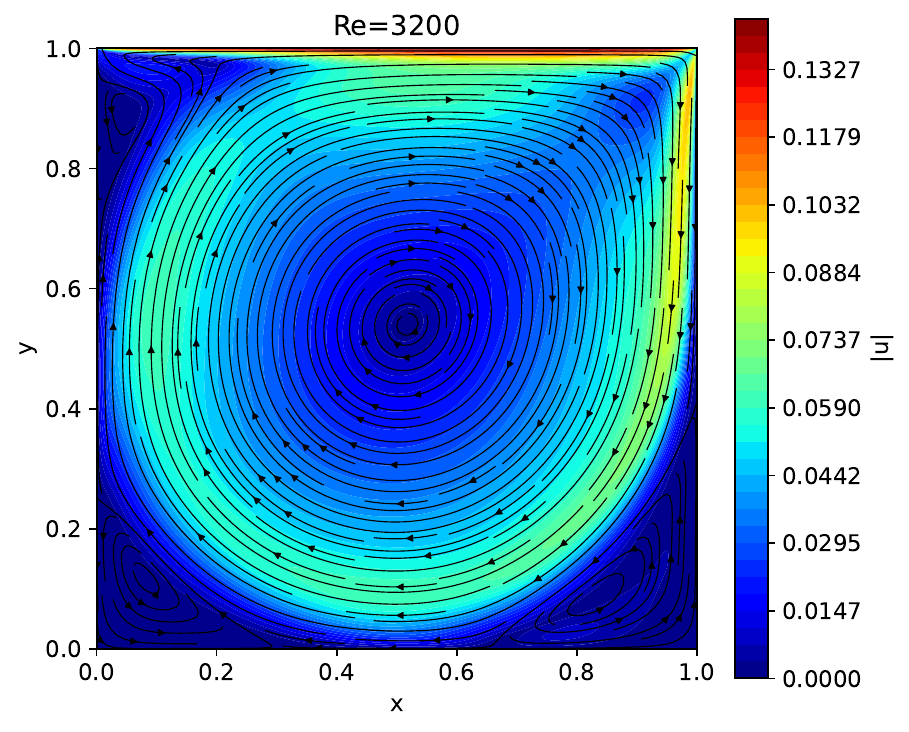}
        \caption{Streamlines and velocity magnitude}
    \end{subfigure}
    \begin{subfigure}{0.50\textwidth}
        \centering
        \includegraphics[width=\linewidth]{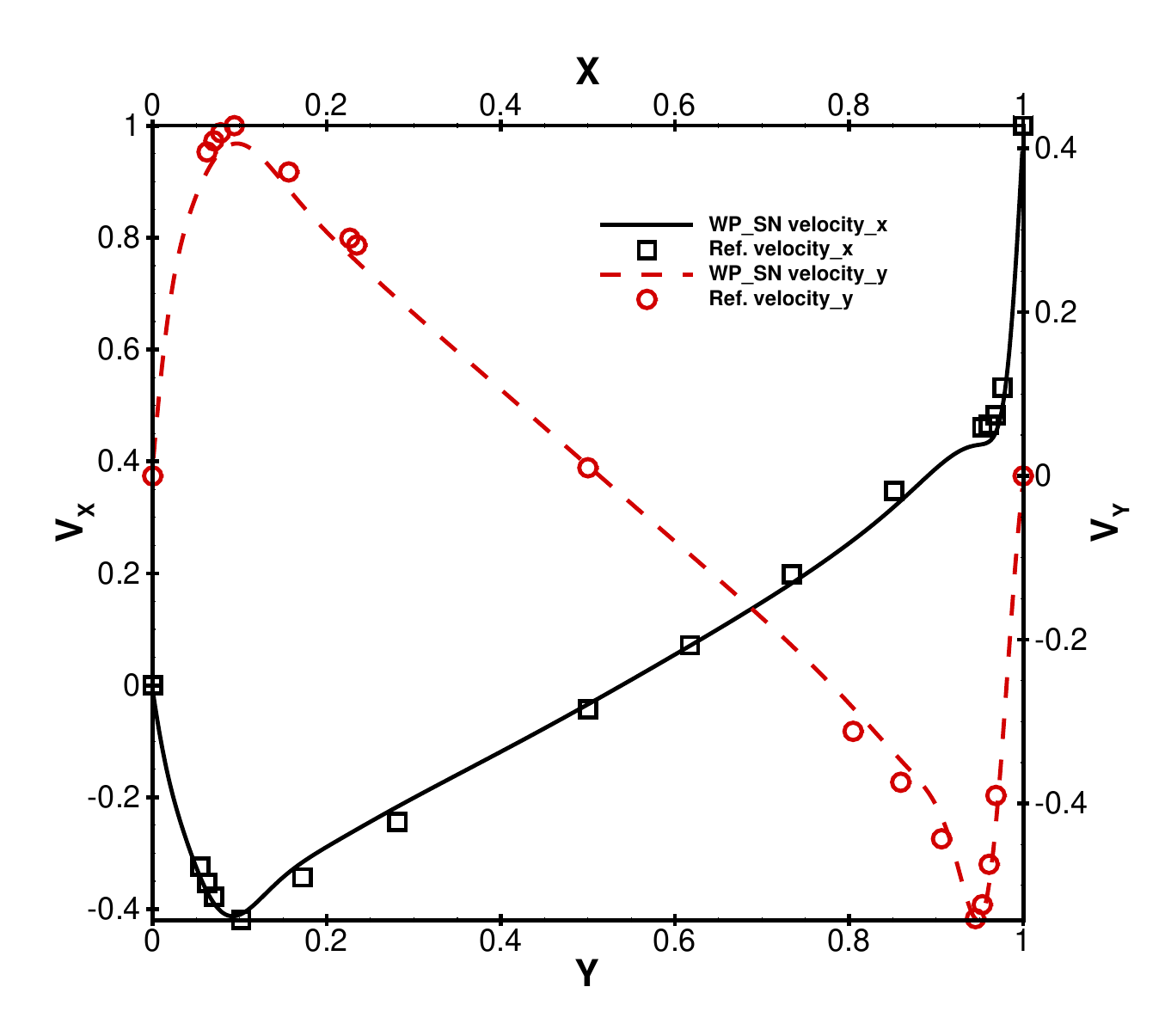}
        \caption{Centerline velocities}
    \end{subfigure}
    \caption{Continuum cavity flow at $Re=3200$ compared with the Ghia et al. benchmark.}
    \label{fig:cavity_re3200}
\end{figure}
\FloatBarrier

\subsection{Hypersonic Flow Around a Cylinder}

The hypersonic cylinder case examines the WPD framework in a non-equilibrium external flow with a curved diffuse-reflection wall. The free-stream Mach number is \(M_\infty=8\). The cylinder radius is \(R_c=1\), and the far-field radius is \(15R_c\). A body-fitted O-grid is used with \(100\) cells in the circumferential direction and \(64\) cells in the radial direction; the radial mesh is smoothly stretched with a factor of \(2.5\). The time step is determined by \(\mathrm{CFL}=0.2\), and all cases are advanced for \(25000\) steps. For the deterministic calculations, the velocity space uses \(64\times64\) points with \(u\in[-10,20]\) and \(v\in[-15,15]\). For the stochastic calculations, \(100\) reference particles per cell are used, and the reported MC fields are averaged after the transient stage.

The results compare WPD-\(S_N\) and WPD-MC for representative Knudsen numbers \(\mathrm{Kn}=1\), \(10^{-2}\), and \(10^{-3}\). The local wave-particle horizon is specified by \(\mathrm{CFL}_l=1\). The field contours assess the bow shock and wake structure, while the stagnation-line and wall-surface profiles provide more quantitative comparisons near the strongest compression and wall-interaction regions.

To quantify the consistency of the decomposition, Table~\ref{tab:cylinder_consistency_error} reports relative \(L_2\) differences on the final two-dimensional fields. The deterministic comparison uses a reference UGKS solution~\cite{huang2012ugks} for WPD-\(S_N\), while the stochastic comparison uses the averaged pure MC solution as the reference for WPD-MC. For a flow variable \(q\), the reported error is
\[
E_2(q)=
\left(
\frac{\sum_i (q_i^{\mathrm{WPD}}-q_i^{\mathrm{ref}})^2 V_i}
{\sum_i (q_i^{\mathrm{ref}})^2 V_i}
\right)^{1/2}.
\]
The deterministic \(S_N\) and WPD-\(S_N\) calculations were advanced to a steady state: the residual norm of the conservative variables decreases from \(O(1)\) to \(O(10^{-6})\) over the \(25000\) steps. The MC comparisons should instead be interpreted statistically, since the stochastic solver is assessed through time-averaged particle moments rather than residual decay.

\begin{table}[!htbp]
\centering
\caption{Relative \(L_2\) consistency errors for the cylinder calculations. WPD-\(S_N\) is compared with the reference UGKS solution, and WPD-MC is compared with the averaged pure MC reference on the same mesh.}
\label{tab:cylinder_consistency_error}
\small
\begin{tabular}{llccccc}
\toprule
\(\mathrm{Kn}\) & Comparison & \(\rho\) & \(u\) & \(v\) & \(p\) & \(T\) \\
\midrule
\(1\) & WPD-\(S_N\)/UGKS & \(1.22\times10^{-3}\) & \(1.18\times10^{-3}\) & \(5.95\times10^{-3}\) & \(2.95\times10^{-3}\) & \(4.89\times10^{-3}\) \\
\(1\) & WPD-MC/pure MC & \(1.19\times10^{-3}\) & \(8.69\times10^{-5}\) & \(5.85\times10^{-4}\) & \(1.72\times10^{-4}\) & \(2.21\times10^{-4}\) \\
\(10^{-2}\) & WPD-\(S_N\)/UGKS & \(1.09\times10^{-2}\) & \(5.14\times10^{-3}\) & \(1.52\times10^{-2}\) & \(1.17\times10^{-2}\) & \(2.16\times10^{-2}\) \\
\(10^{-2}\) & WPD-MC/pure MC & \(1.47\times10^{-2}\) & \(6.14\times10^{-3}\) & \(2.83\times10^{-2}\) & \(2.10\times10^{-2}\) & \(2.06\times10^{-2}\) \\
\(10^{-3}\) & WPD-\(S_N\)/UGKS & \(2.69\times10^{-3}\) & \(9.91\times10^{-4}\) & \(4.20\times10^{-3}\) & \(3.00\times10^{-3}\) & \(3.92\times10^{-3}\) \\
\(10^{-3}\) & WPD-MC/pure MC & \(1.30\times10^{-2}\) & \(9.53\times10^{-3}\) & \(3.77\times10^{-2}\) & \(2.77\times10^{-2}\) & \(4.13\times10^{-2}\) \\
\bottomrule
\end{tabular}
\end{table}

\subsubsection{Cylinder flow at \texorpdfstring{\(\mathrm{Kn}=1\)}{Kn=1}}

Figures~\ref{fig:cylinder_sn_kn1_pt}--\ref{fig:cylinder_mc_R_den_p_kn1} show the pressure, temperature, velocity, stagnation-line, and wall-surface results at \(\mathrm{Kn}=1\). The deterministic line and surface-pressure profiles compare WPD-\(S_N\) directly with the reference UGKS solution. Both WPD-\(S_N\) and WPD-MC capture the detached bow shock and the broad rarefied wake. The deterministic and stochastic implementations produce the same main flow structure, while WPD-MC naturally contains sampling noise in the low-density wake and other particle-dominated regions. Figure~\ref{fig:cylinder_pnumber_a_kn1} reports the equivalent particle-number field, identifying where the kinetic component is required.

\begin{figure}[!htbp]
    \centering
    \begin{subfigure}{0.49\textwidth}
        \centering
        \includegraphics[width=\linewidth]{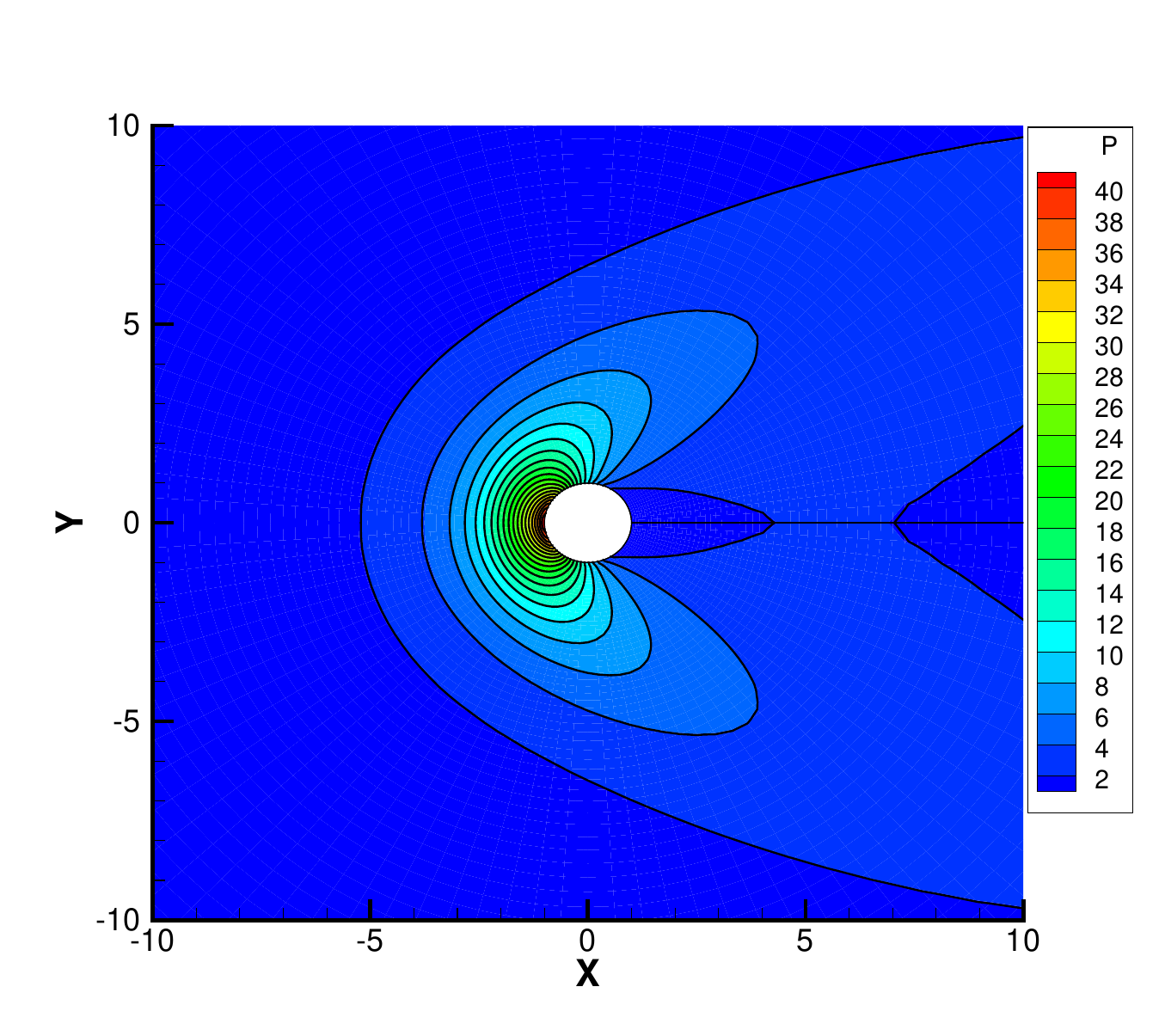}
        \caption{WPD-\(S_N\), $p$}
    \end{subfigure}
    \begin{subfigure}{0.49\textwidth}
        \centering
        \includegraphics[width=\linewidth]{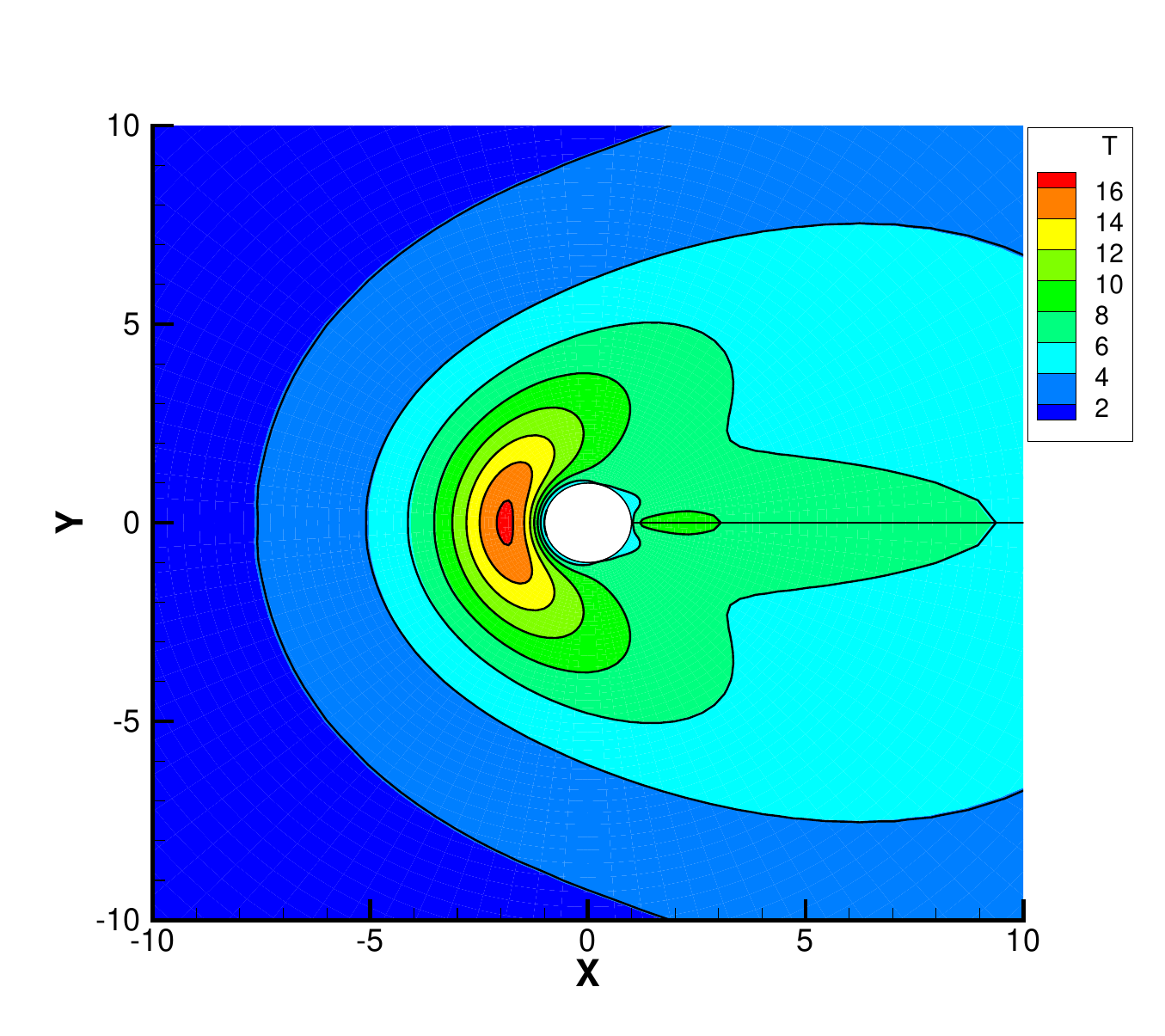}
        \caption{WPD-\(S_N\), $T$}
    \end{subfigure}
    \caption{Cylinder pressure and temperature contours at $\mathrm{Kn}=1$.}
    \label{fig:cylinder_sn_kn1_pt}
\end{figure}

\begin{figure}[!htbp]
    \centering
    \begin{subfigure}{0.49\textwidth}
        \centering
        \includegraphics[width=\linewidth]{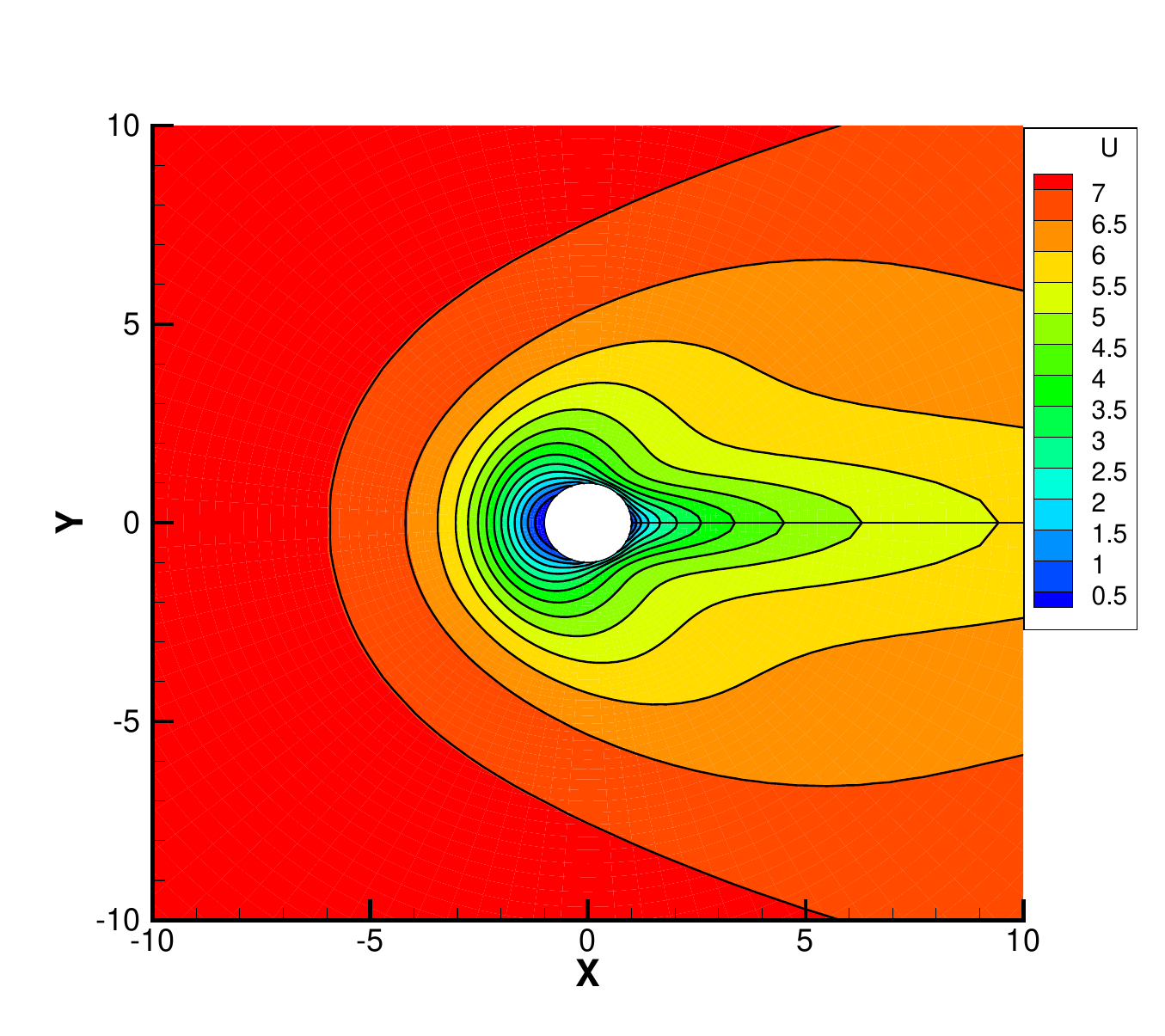}
        \caption{WPD-\(S_N\), $u$}
    \end{subfigure}
    \begin{subfigure}{0.49\textwidth}
        \centering
        \includegraphics[width=\linewidth]{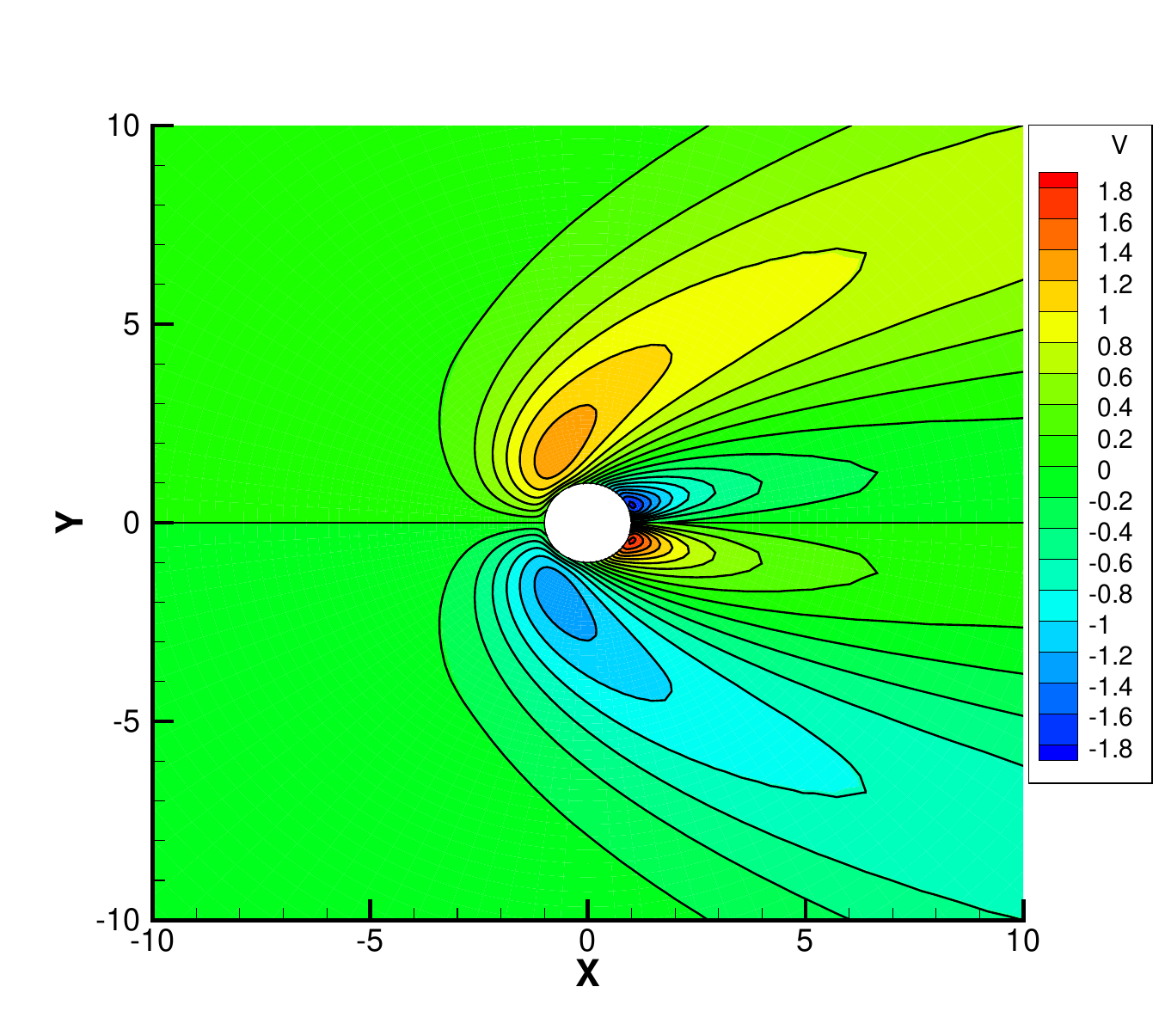}
        \caption{WPD-\(S_N\), $v$}
    \end{subfigure}
    \caption{Cylinder velocity contours at $\mathrm{Kn}=1$.}
    \label{fig:cylinder_sn_kn1_uv}
\end{figure}

\begin{figure}[!htbp]
    \centering
    \begin{subfigure}{0.49\textwidth}
        \centering
        \includegraphics[width=\linewidth]{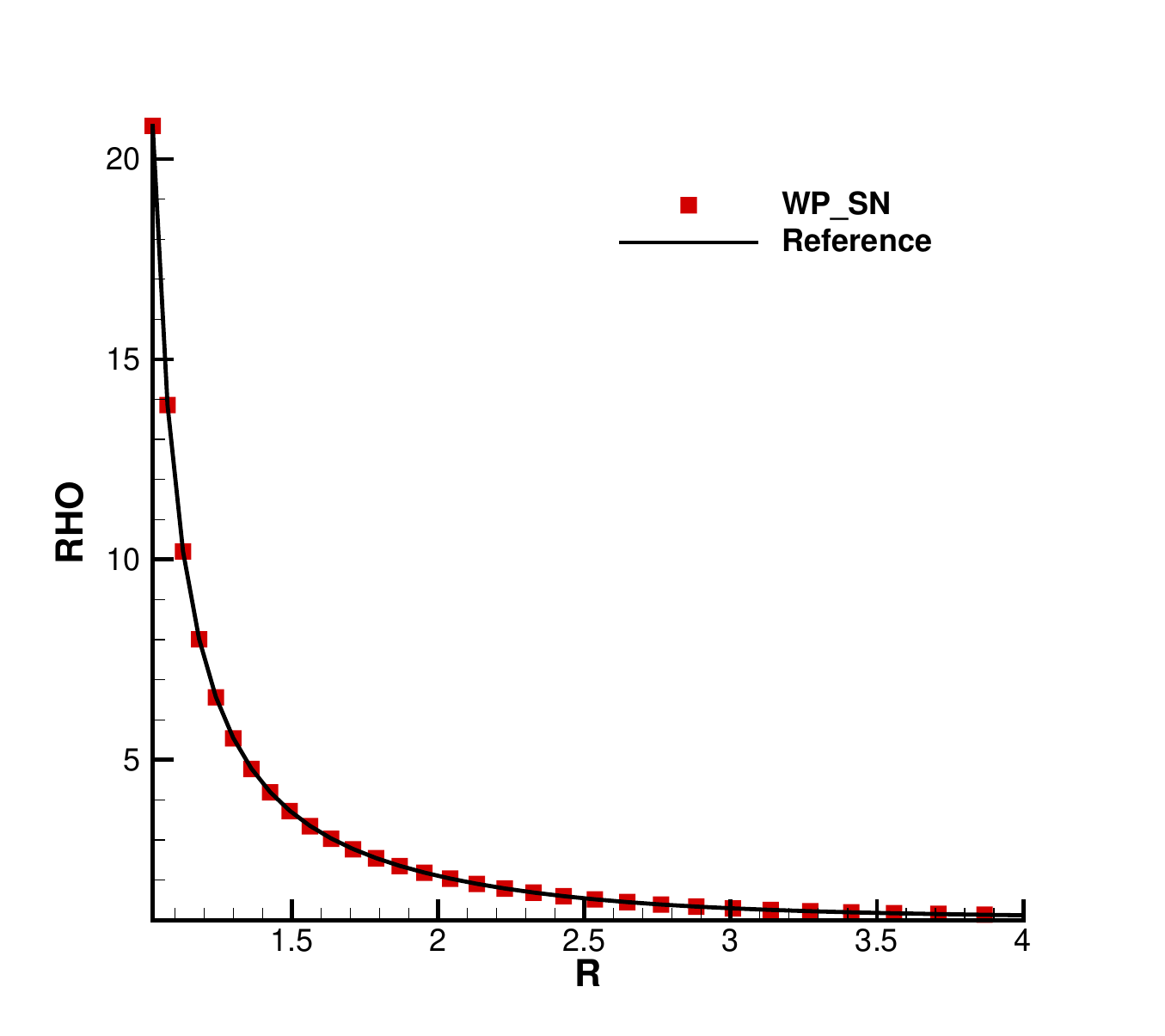}
        \caption{WPD-\(S_N\), $\rho$}
    \end{subfigure}
    \begin{subfigure}{0.49\textwidth}
        \centering
        \includegraphics[width=\linewidth]{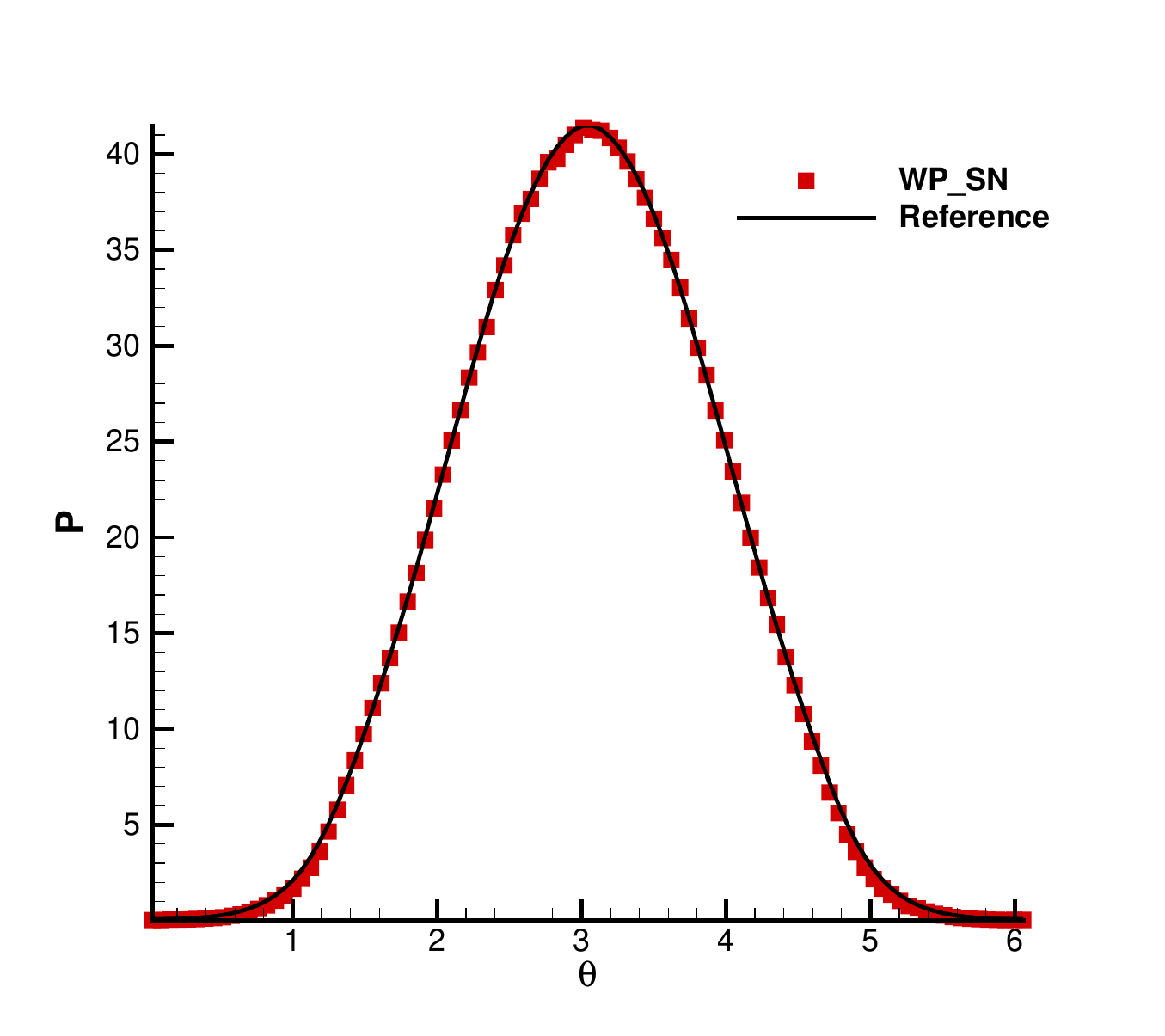}
        \caption{WPD-\(S_N\), $p$}
    \end{subfigure}
    \caption{Cylinder stagnation-line density and wall pressure at $\mathrm{Kn}=1$, comparing WPD-\(S_N\) with the reference UGKS solution.}
    \label{fig:cylinder_sn_R_den_p_kn1}
\end{figure}

\begin{figure}[!htbp]
    \centering
    \begin{subfigure}{0.49\textwidth}
        \centering
        \includegraphics[width=\linewidth]{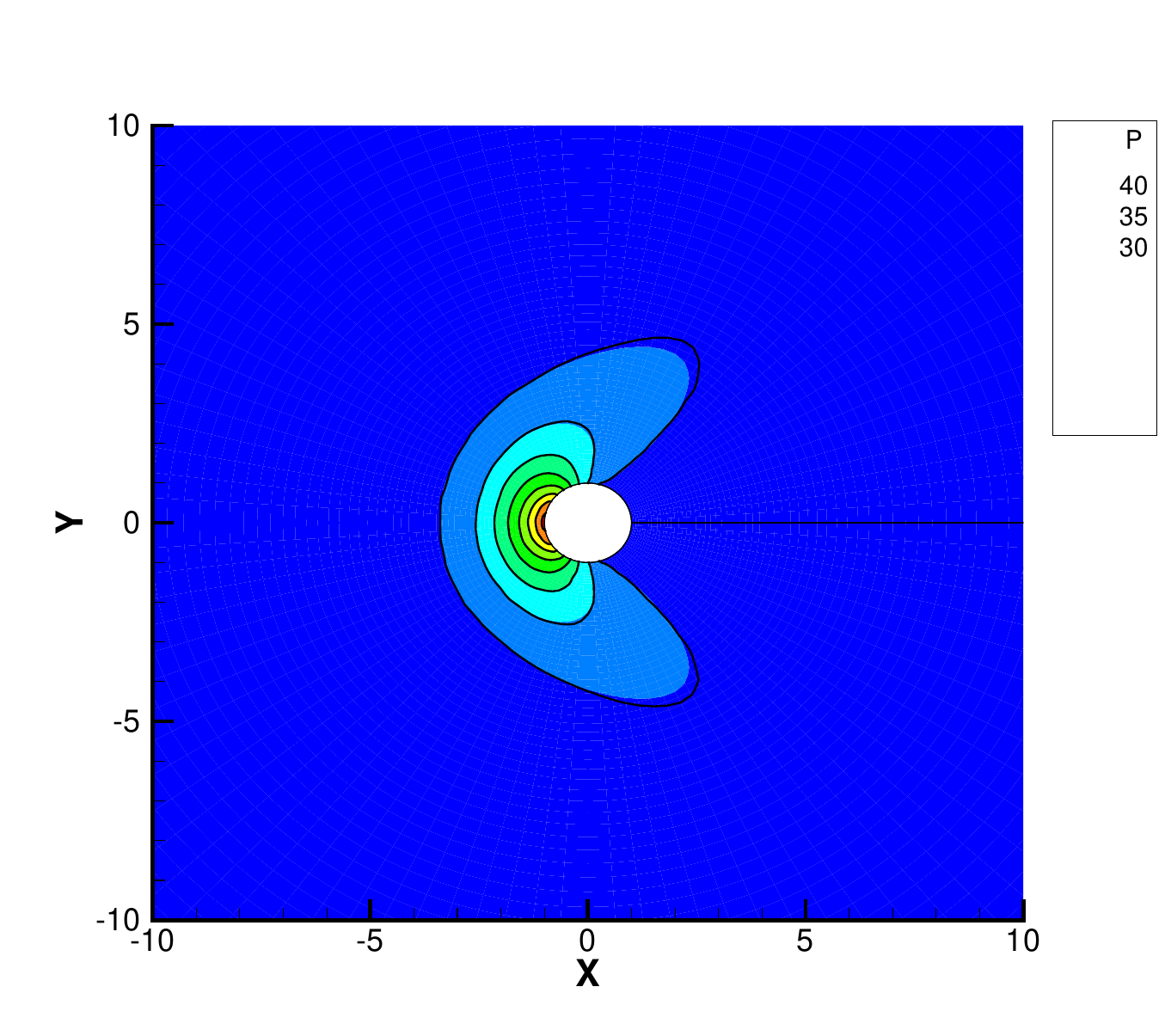}
        \caption{WPD-MC, $p$}
    \end{subfigure}
    \begin{subfigure}{0.49\textwidth}
        \centering
        \includegraphics[width=\linewidth]{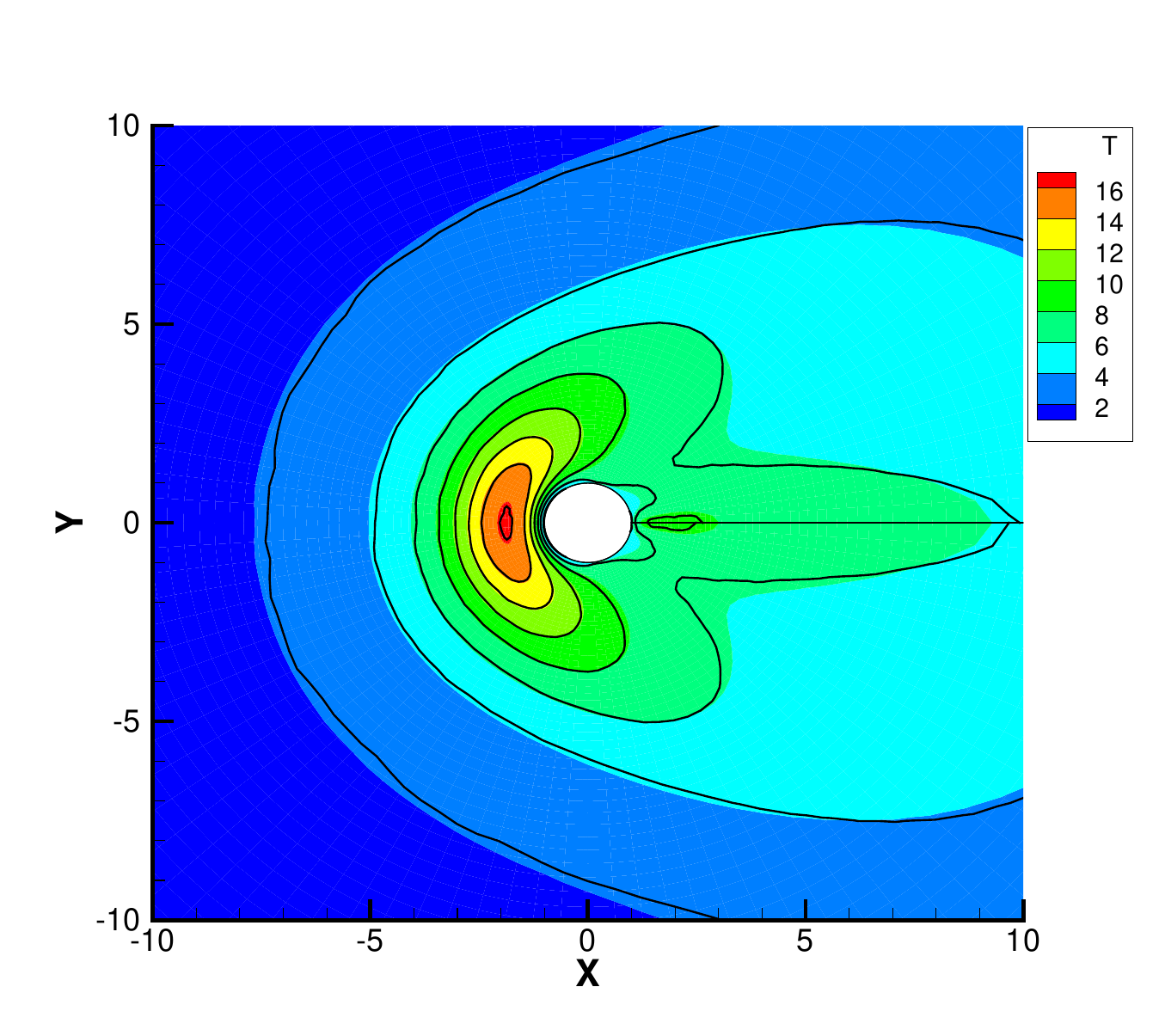}
        \caption{WPD-MC, $T$}
    \end{subfigure}
    \caption{Cylinder pressure and temperature contours at $\mathrm{Kn}=1$.}
    \label{fig:cylinder_mc_kn1_pt}
\end{figure}
\FloatBarrier

\begin{figure}[!htbp]
    \centering
    \begin{subfigure}{0.49\textwidth}
        \centering
        \includegraphics[width=\linewidth]{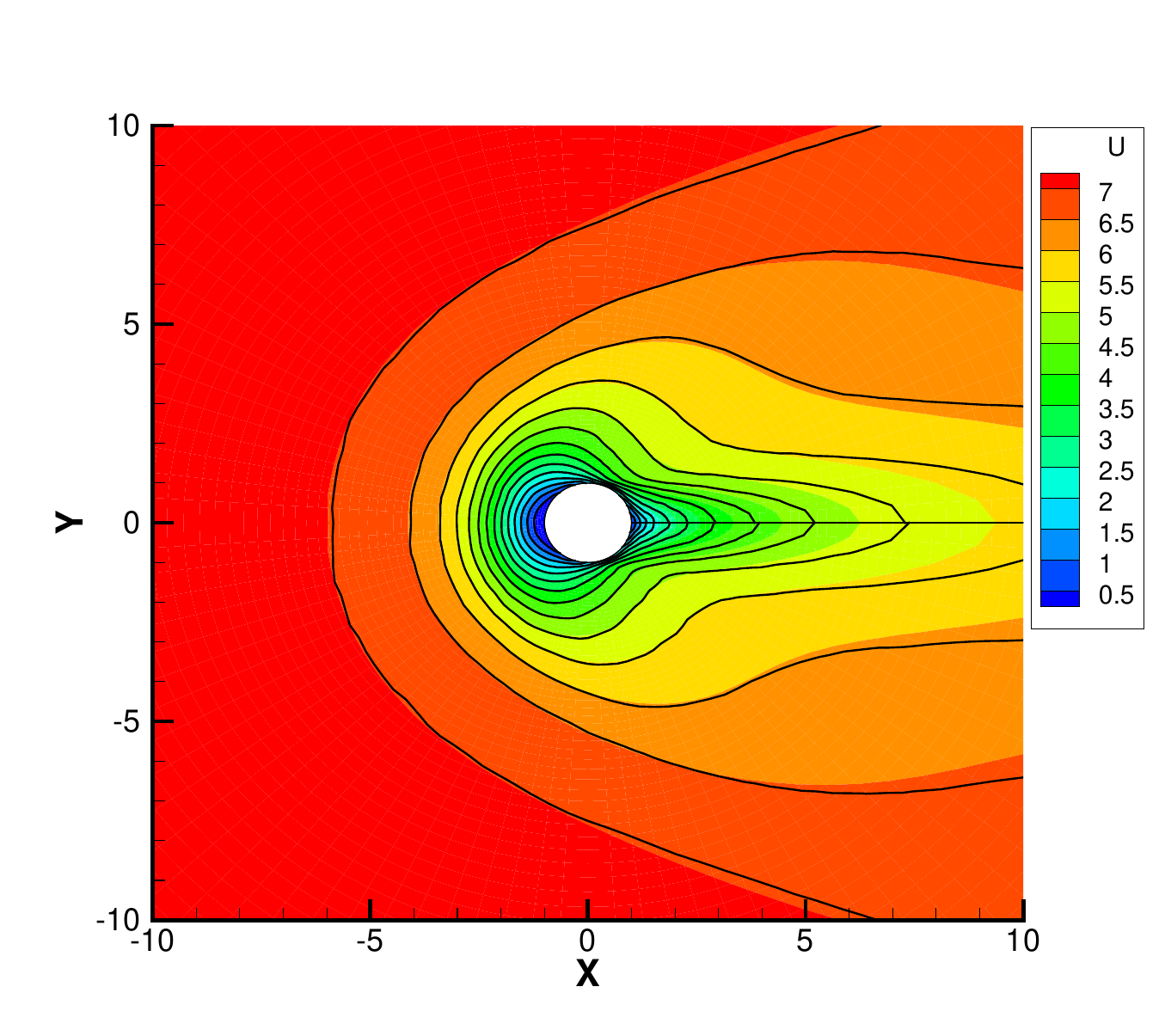}
        \caption{WPD-MC, $u$}
    \end{subfigure}
    \begin{subfigure}{0.49\textwidth}
        \centering
        \includegraphics[width=\linewidth]{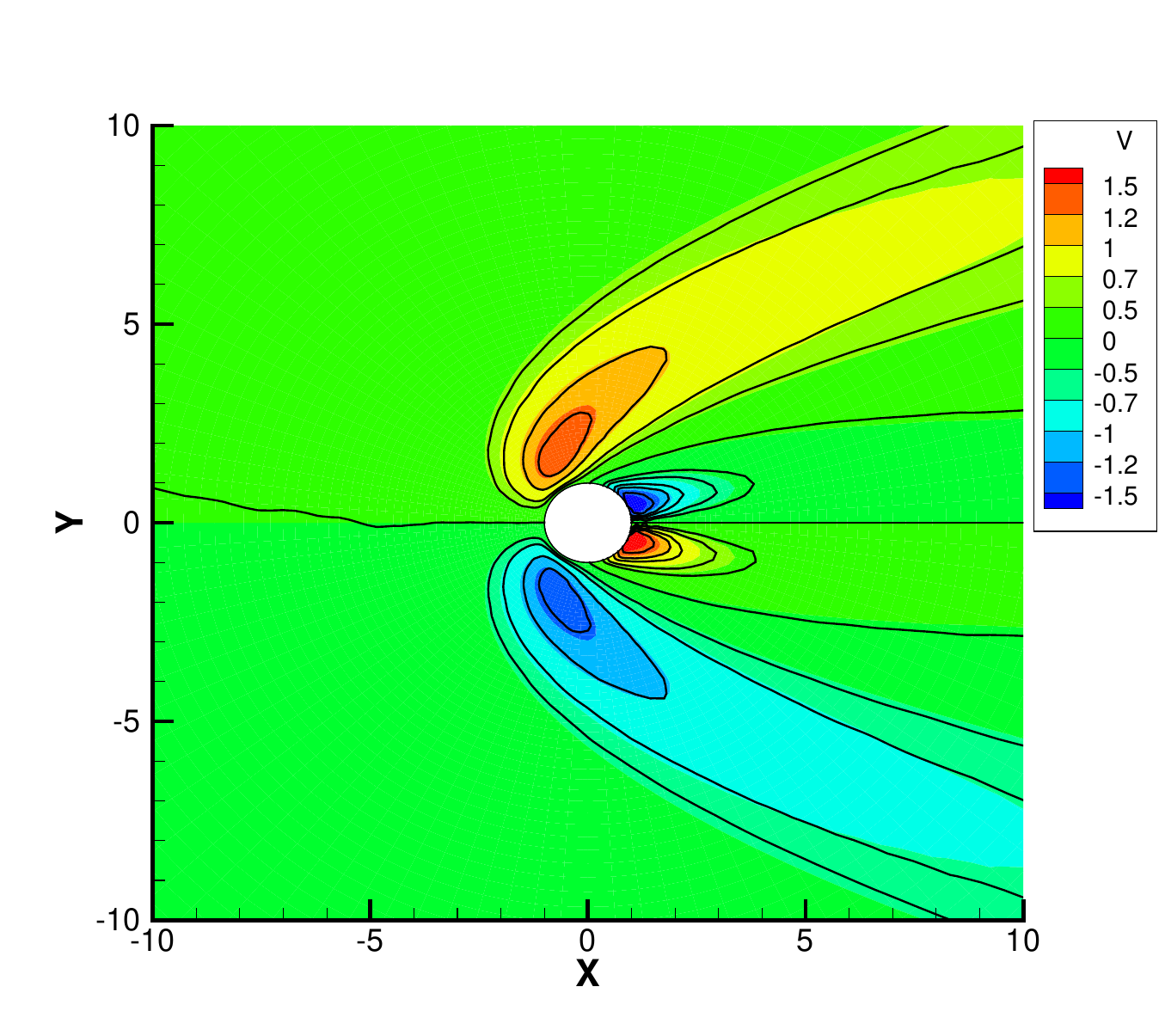}
        \caption{WPD-MC, $v$}
    \end{subfigure}
    \caption{Cylinder velocity contours at $\mathrm{Kn}=1$.}
    \label{fig:cylinder_mc_kn1_uv}
\end{figure}

\begin{figure}[!htbp]
    \centering
    \begin{subfigure}{0.49\textwidth}
        \centering
        \includegraphics[width=\linewidth]{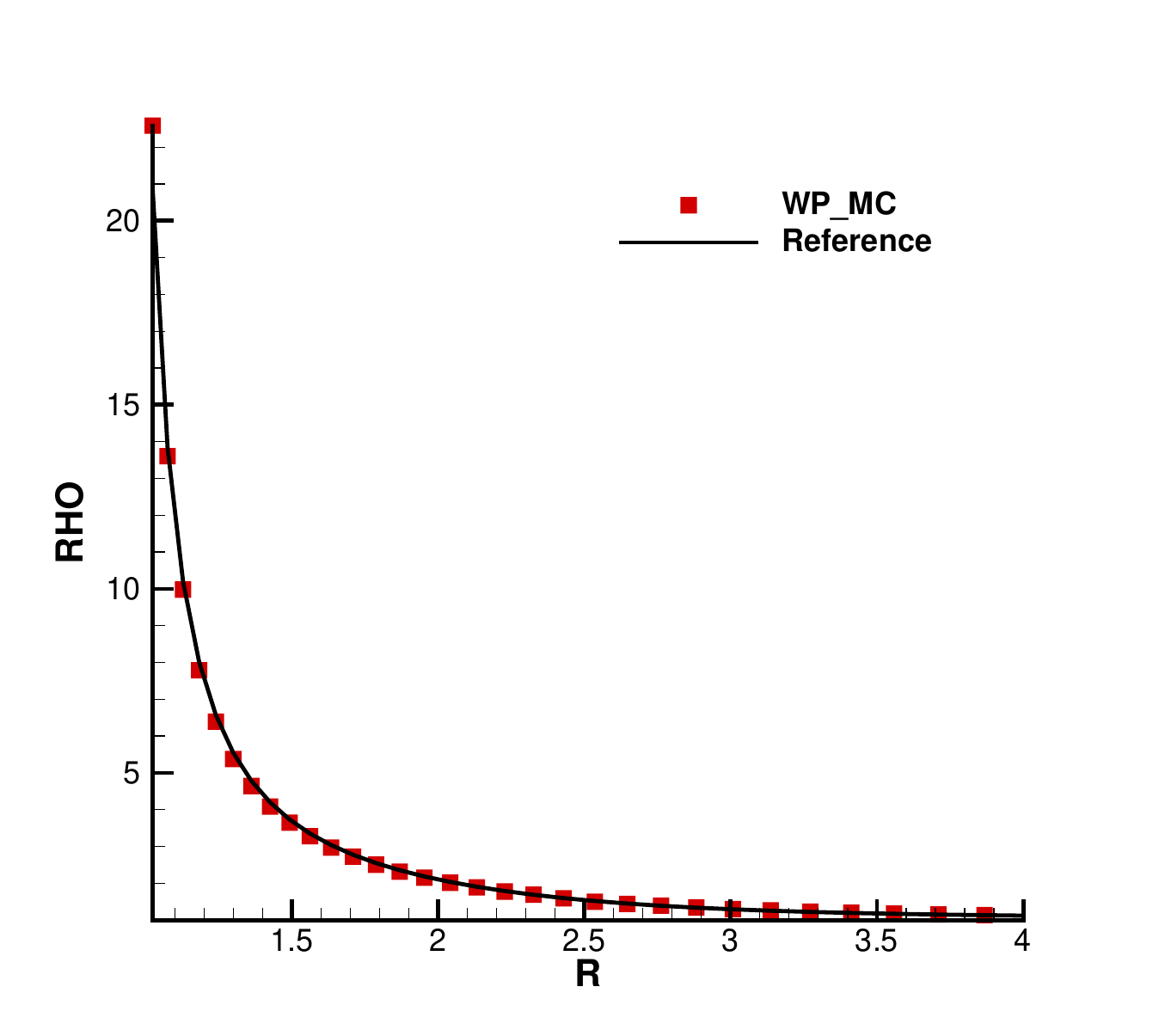}
        \caption{WPD-MC (local), $\rho$}
    \end{subfigure}
    \begin{subfigure}{0.49\textwidth}
        \centering
        \includegraphics[width=\linewidth]{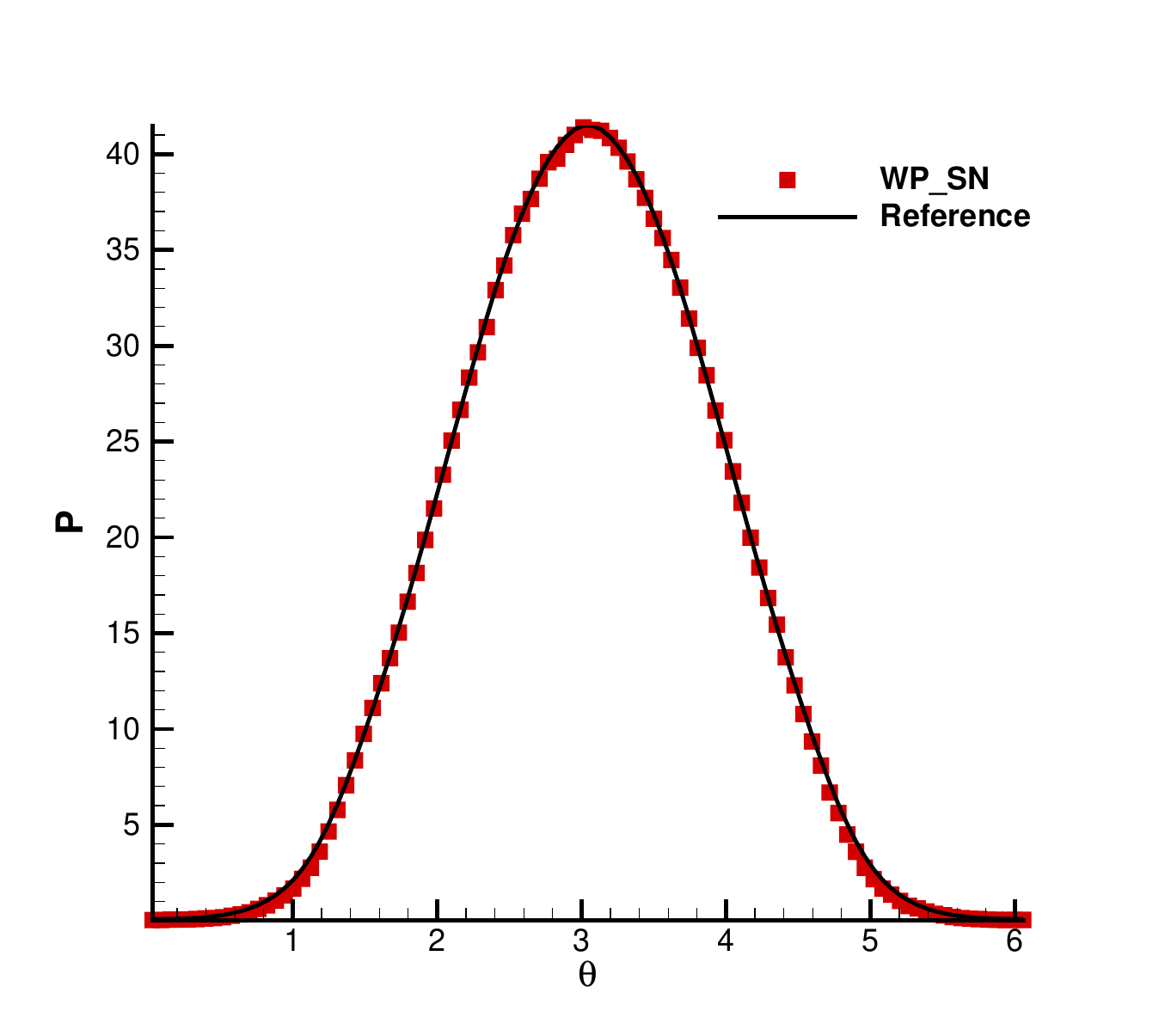}
        \caption{WPD-MC (local), $p$}
    \end{subfigure}
    \caption{Cylinder stagnation-line density and pressure at $\mathrm{Kn}=1$.}
    \label{fig:cylinder_mc_R_den_p_kn1}
\end{figure}

\begin{figure}[!htbp]
    \centering
    \begin{subfigure}{0.49\textwidth}
        \centering
        \includegraphics[width=\linewidth]{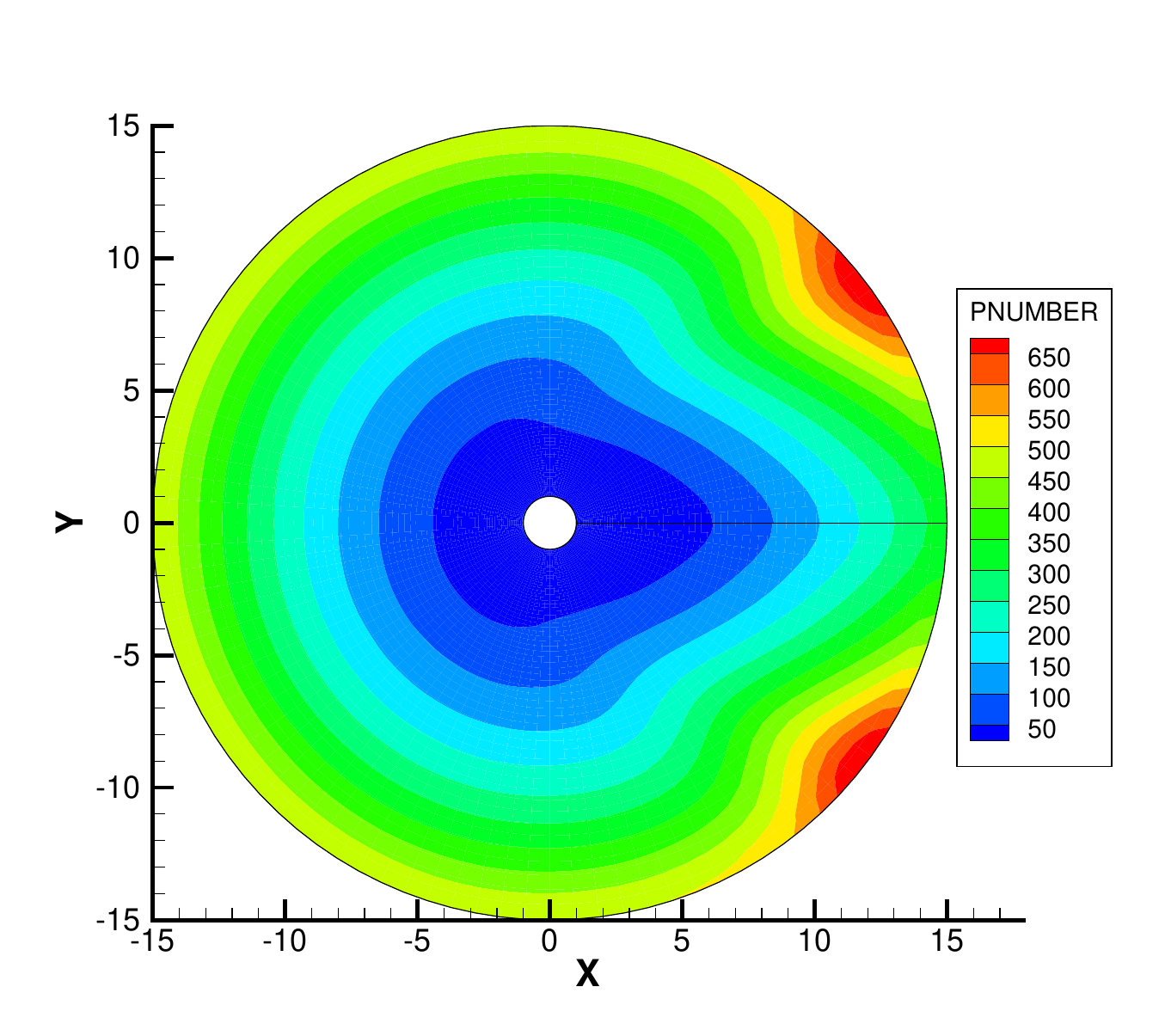}
        \caption{UGKWP, $\mathrm{Kn}=1$}
    \end{subfigure}
    \begin{subfigure}{0.49\textwidth}
        \centering
        \includegraphics[width=\linewidth]{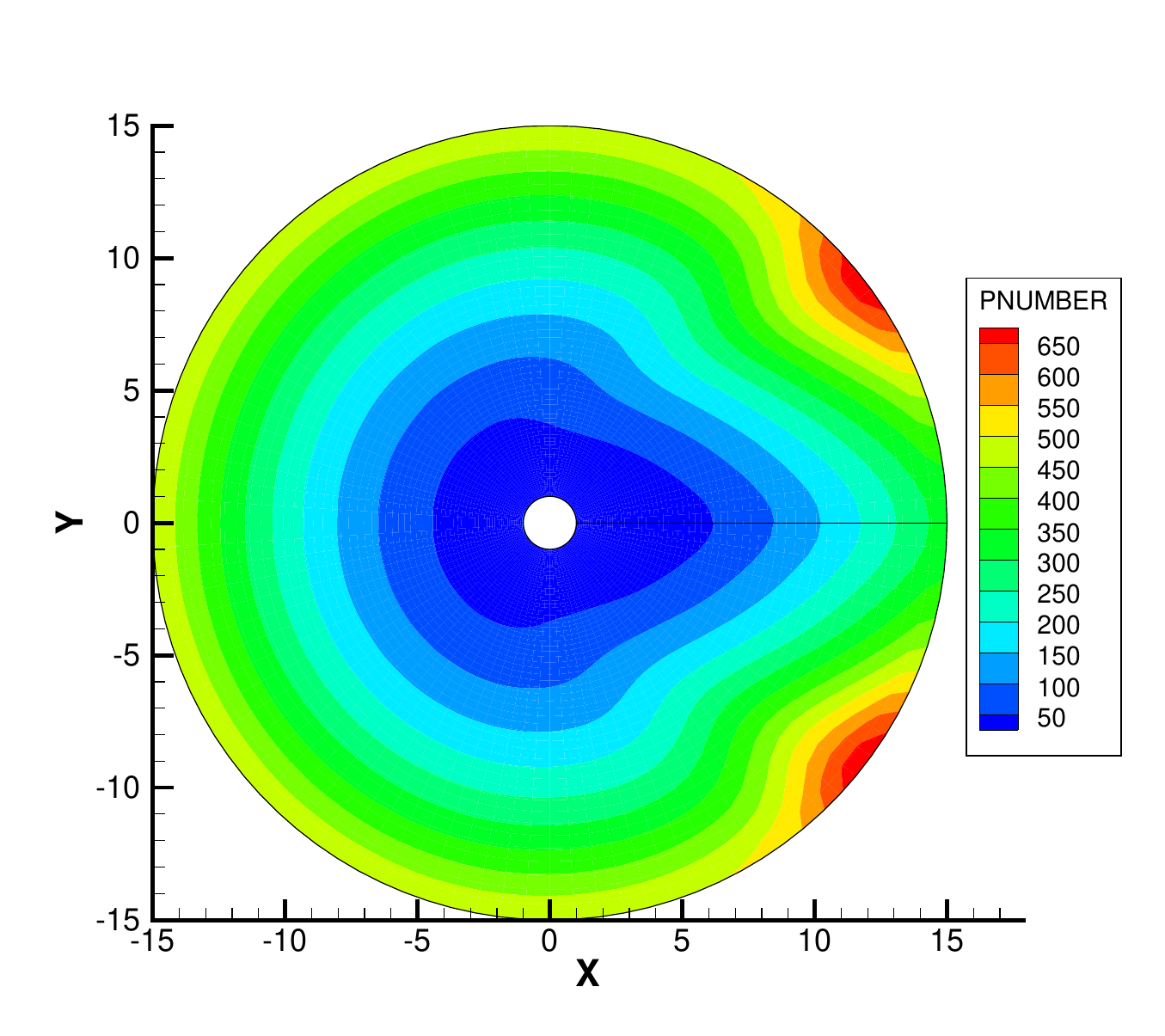}
        \caption{WPD-MC (local), $\mathrm{Kn}=1$}
    \end{subfigure}
    \caption{Equivalent particle-number fields at $\mathrm{Kn}=1$.}
    \label{fig:cylinder_pnumber_a_kn1}
\end{figure}

\subsubsection{Cylinder flow at \texorpdfstring{\(\mathrm{Kn}=10^{-2}\)}{Kn=1e-2}}

At \(\mathrm{Kn}=10^{-2}\), the flow is closer to the continuum regime but still contains a non-equilibrium shock layer and wall interaction. Figures~\ref{fig:cylinder_sn_kn1em2_pt}--\ref{fig:cylinder_mc_R_den_p_kn1em2} show that WPD-\(S_N\) and WPD-MC remain consistent in the shock stand-off distance, pressure rise, and surface pressure distribution. In particular, the deterministic wall-pressure and stagnation-line distributions are compared with the reference UGKS solution. Compared with \(\mathrm{Kn}=1\), the kinetic region is more localized, and the equivalent particle-number field in Fig.~\ref{fig:cylinder_pnumber_a_kn_1m2} correspondingly decreases over most of the domain.

\begin{figure}[!htbp]
    \centering
    \begin{subfigure}{0.49\textwidth}
        \centering
        \includegraphics[width=\linewidth]{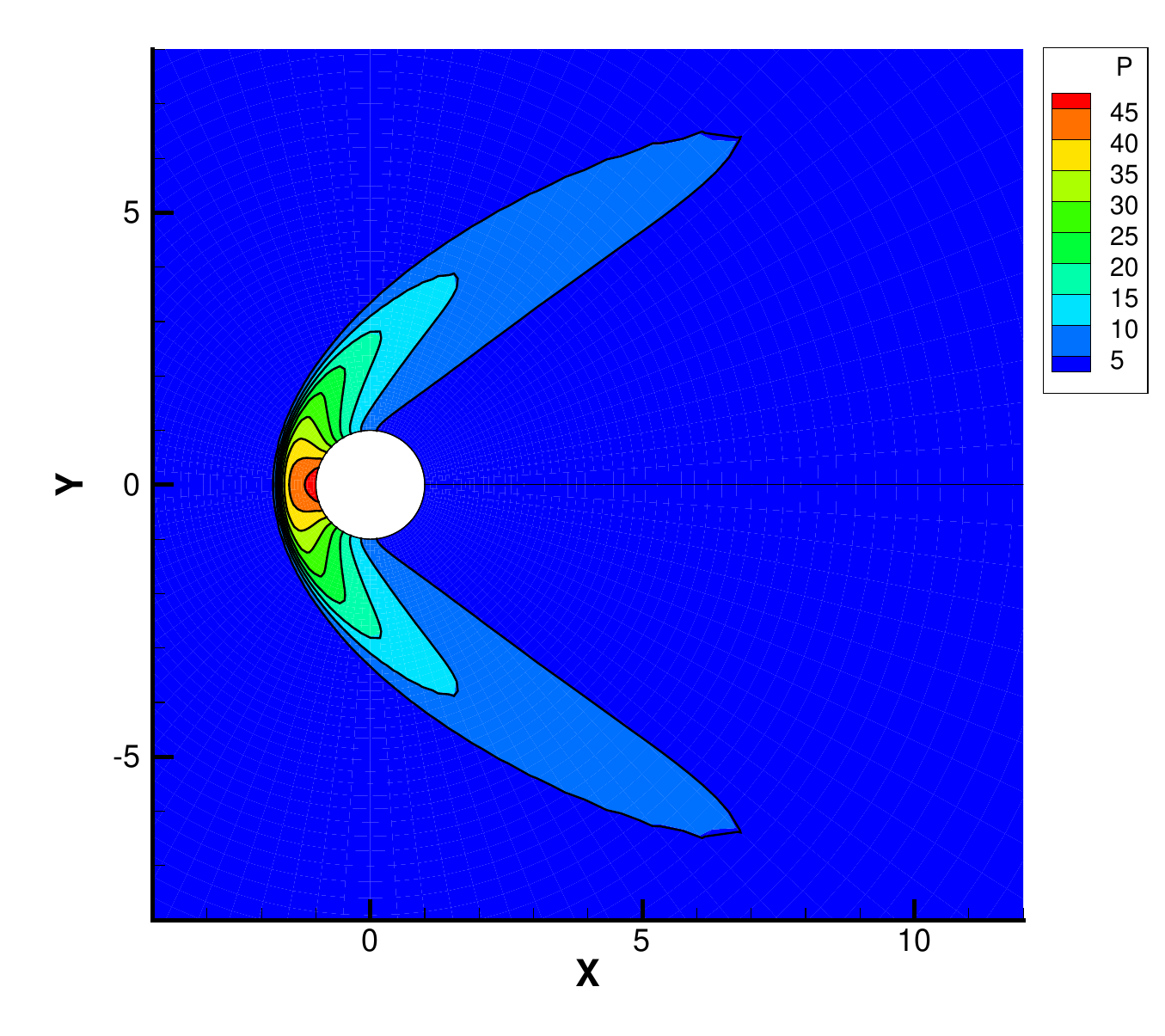}
        \caption{WPD-\(S_N\), $p$}
    \end{subfigure}
    \begin{subfigure}{0.49\textwidth}
        \centering
        \includegraphics[width=\linewidth]{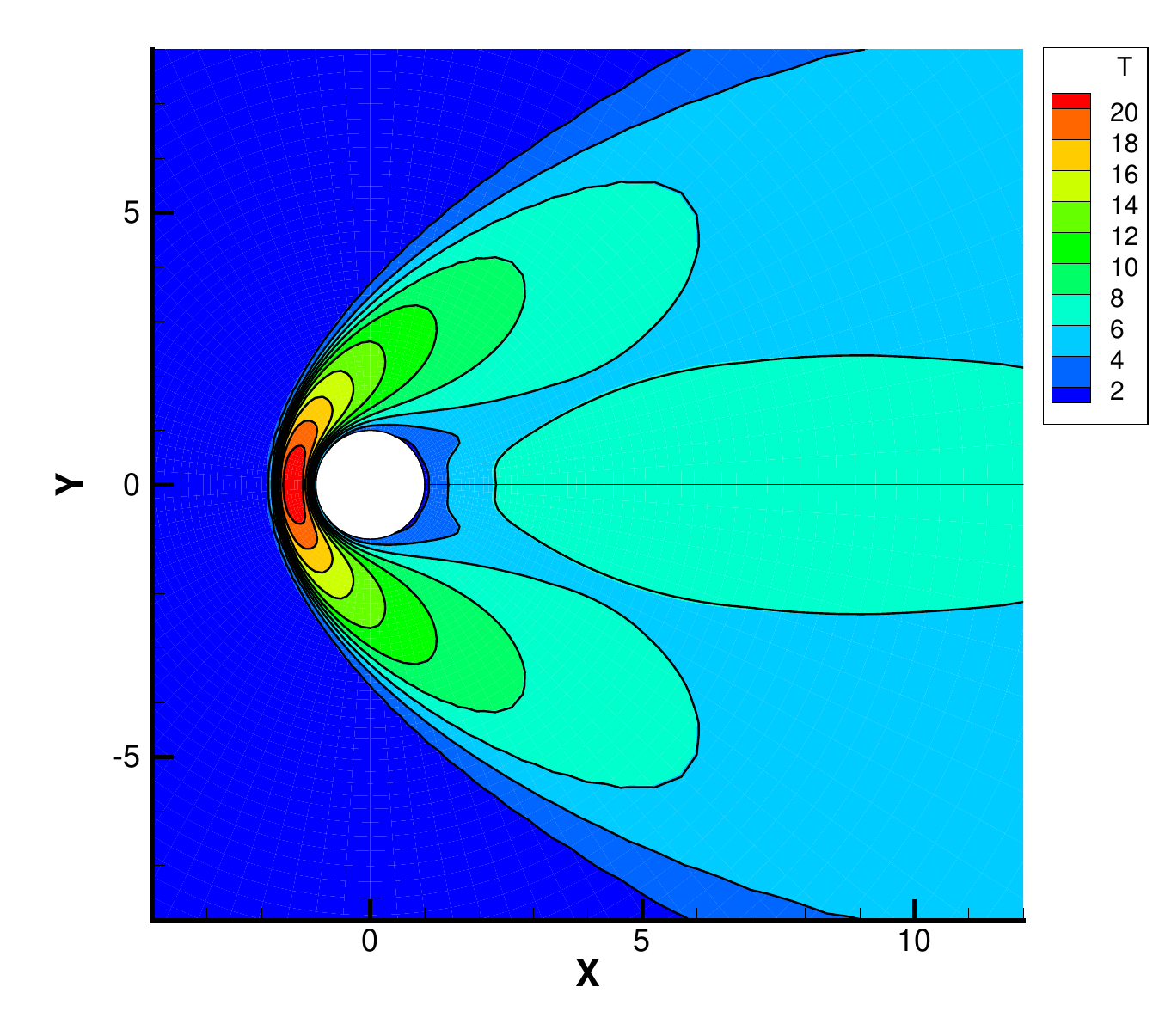}
        \caption{WPD-\(S_N\), $T$}
    \end{subfigure}
    \caption{Cylinder pressure and temperature contours at $\mathrm{Kn}=10^{-2}$.}
    \label{fig:cylinder_sn_kn1em2_pt}
\end{figure}
\FloatBarrier

\begin{figure}[!htbp]
    \centering
    \begin{subfigure}{0.49\textwidth}
        \centering
        \includegraphics[width=\linewidth]{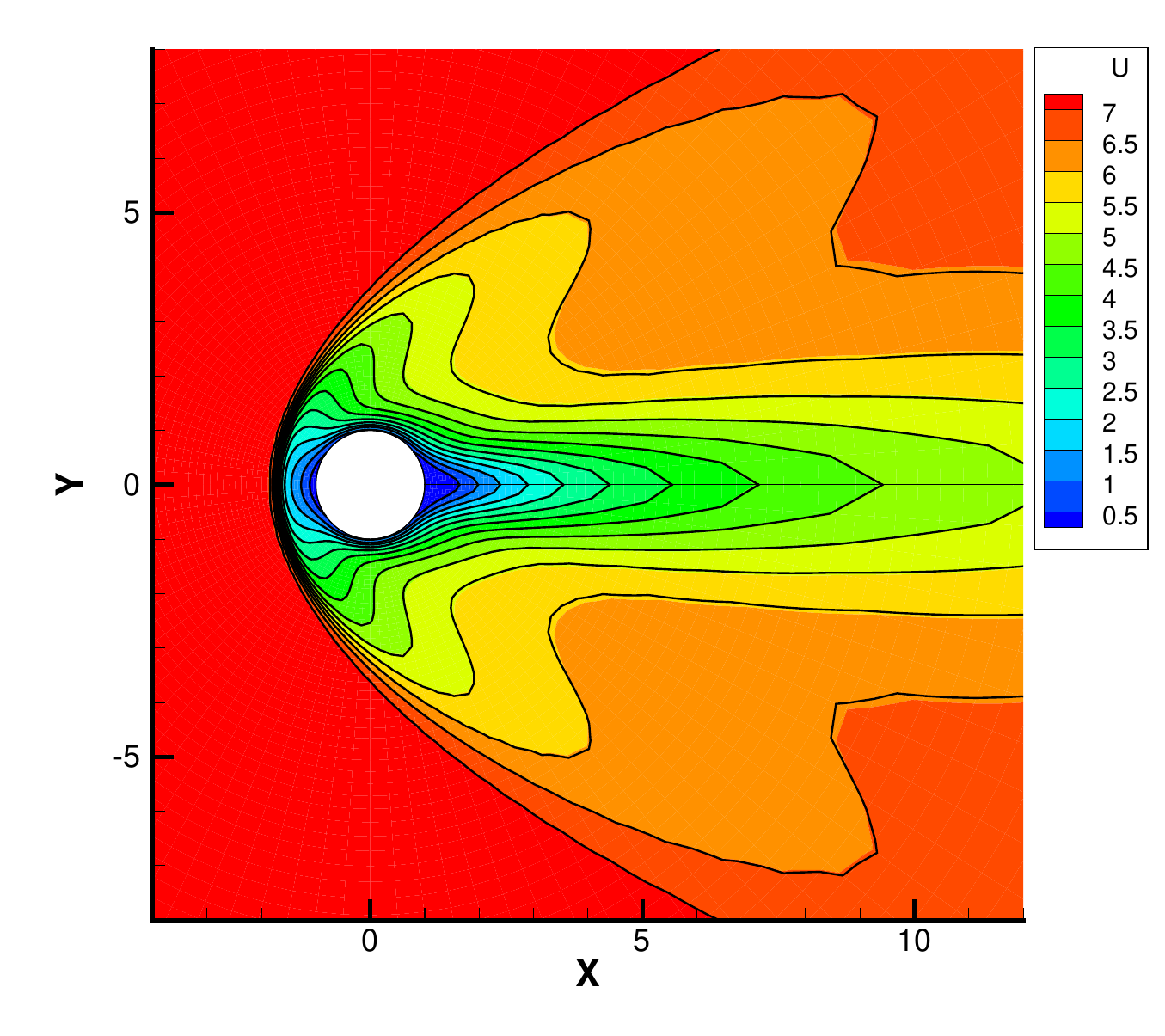}
        \caption{WPD-\(S_N\), $u$}
    \end{subfigure}
    \begin{subfigure}{0.49\textwidth}
        \centering
        \includegraphics[width=\linewidth]{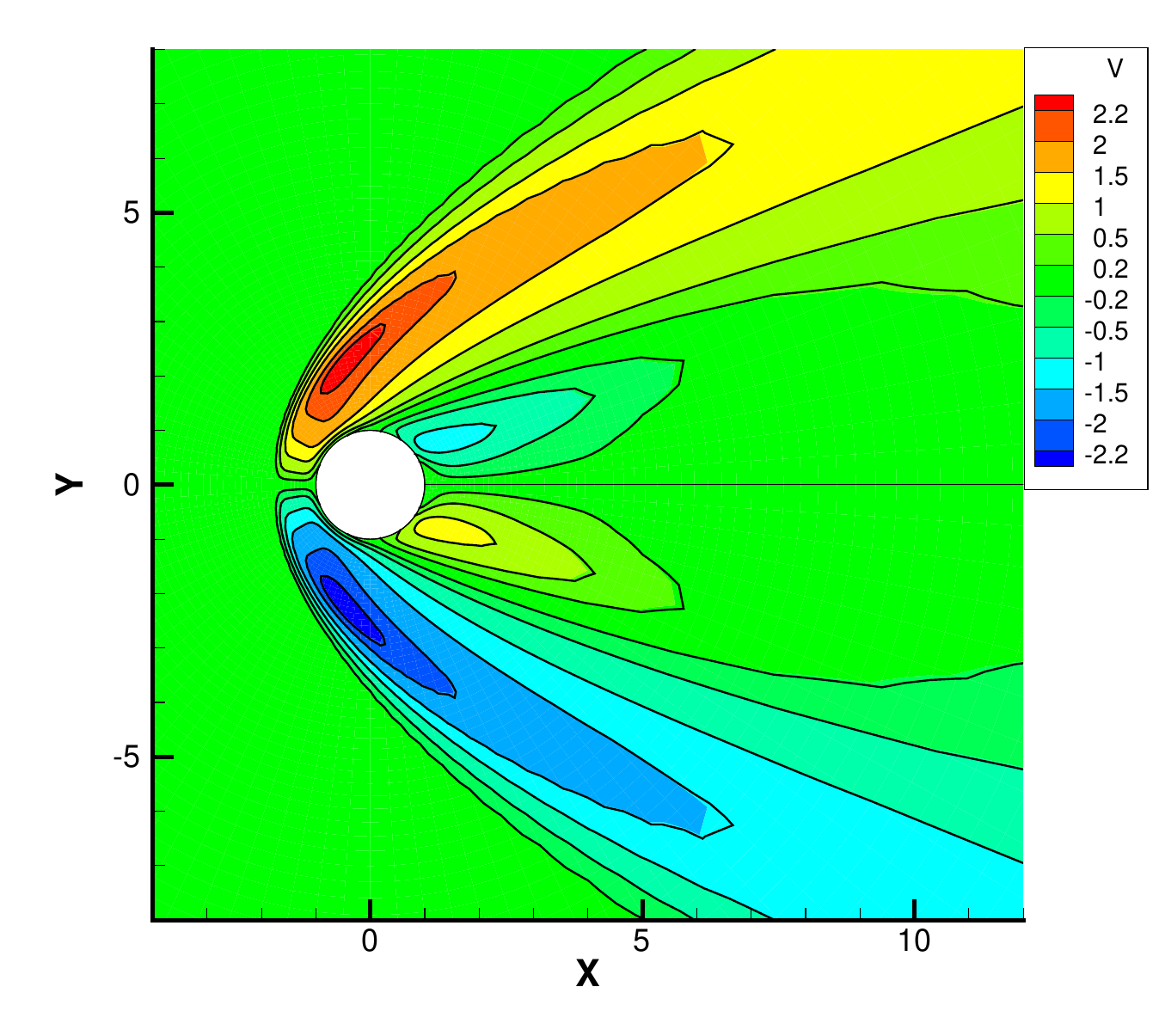}
        \caption{WPD-\(S_N\), $v$}
    \end{subfigure}
    \caption{Cylinder velocity contours at $\mathrm{Kn}=10^{-2}$.}
    \label{fig:cylinder_sn_kn1em2_uv}
\end{figure}

\begin{figure}[!htbp]
    \centering
    \begin{subfigure}{0.49\textwidth}
        \centering
        \includegraphics[width=\linewidth]{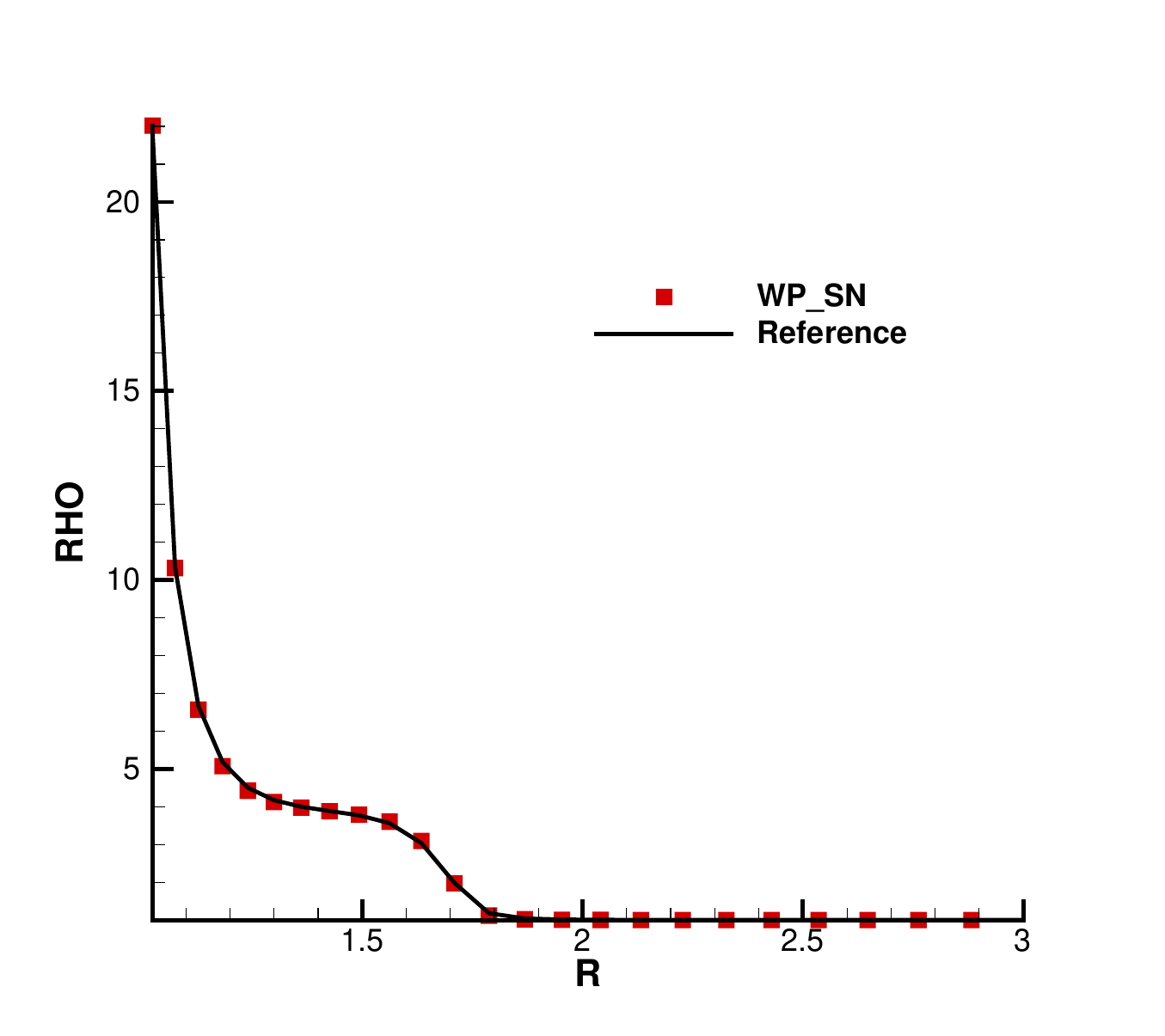}
        \caption{WPD-\(S_N\), $\rho$}
    \end{subfigure}
    \begin{subfigure}{0.49\textwidth}
        \centering
        \includegraphics[width=\linewidth]{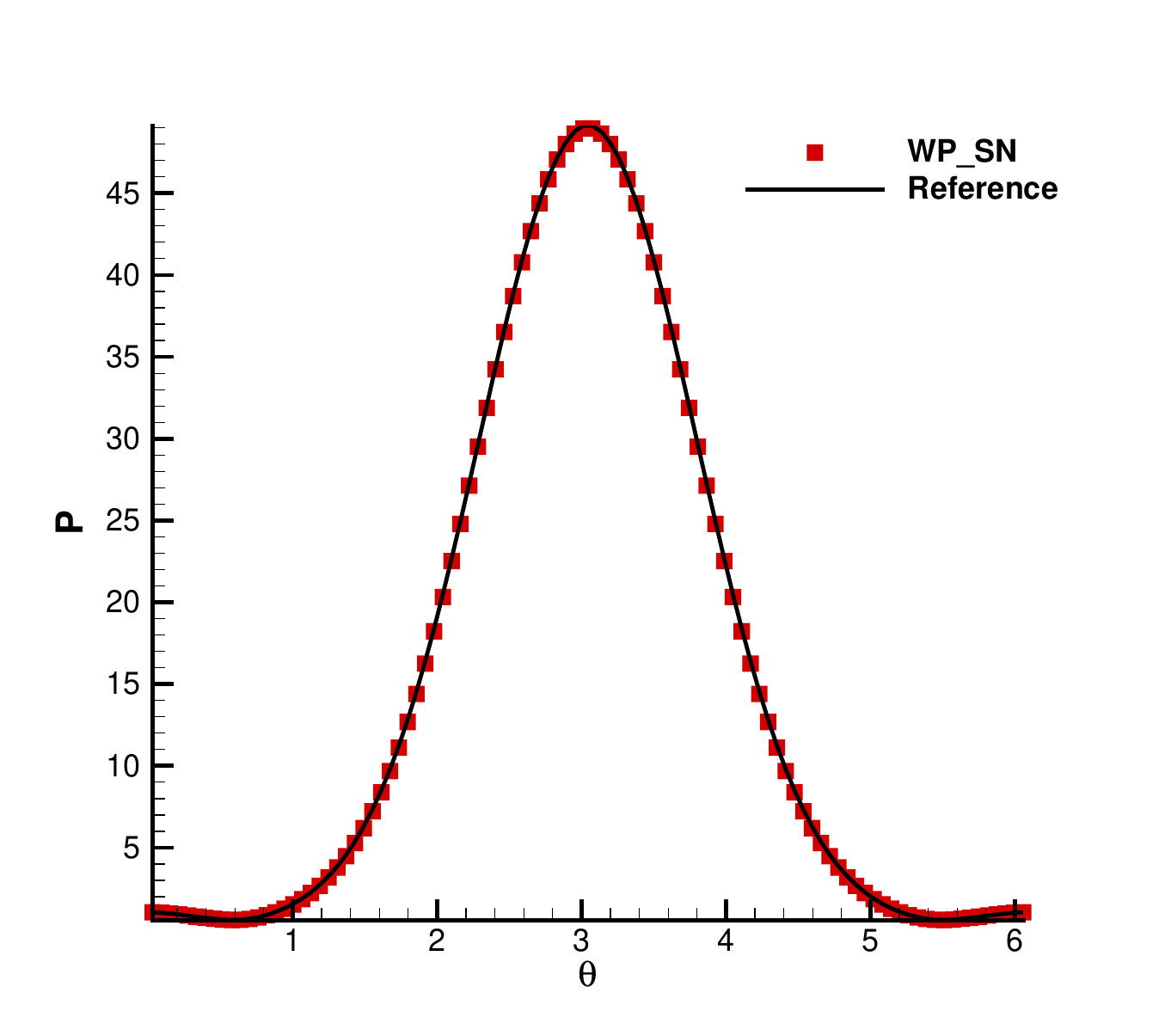}
        \caption{WPD-\(S_N\), $p$}
    \end{subfigure}
    \caption{Cylinder stagnation-line density and wall pressure at $\mathrm{Kn}=10^{-2}$, comparing WPD-\(S_N\) with the reference UGKS solution.}
    \label{fig:cylinder_sn_R_den_p_kn1em2}
\end{figure}

\begin{figure}[!htbp]
    \centering
    \begin{subfigure}{0.49\textwidth}
        \centering
        \includegraphics[width=\linewidth]{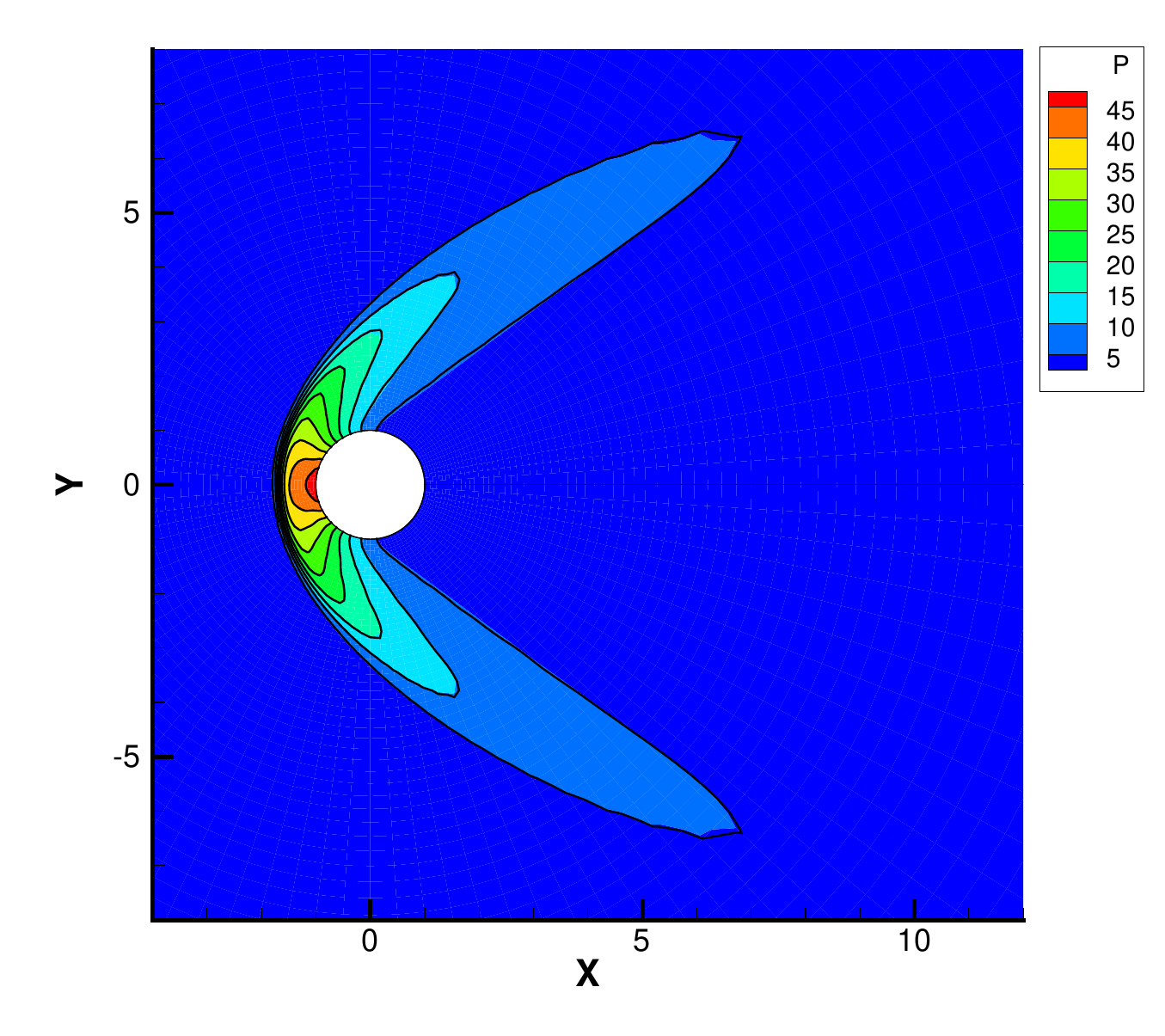}
        \caption{WPD-MC, $p$}
    \end{subfigure}
    \begin{subfigure}{0.49\textwidth}
        \centering
        \includegraphics[width=\linewidth]{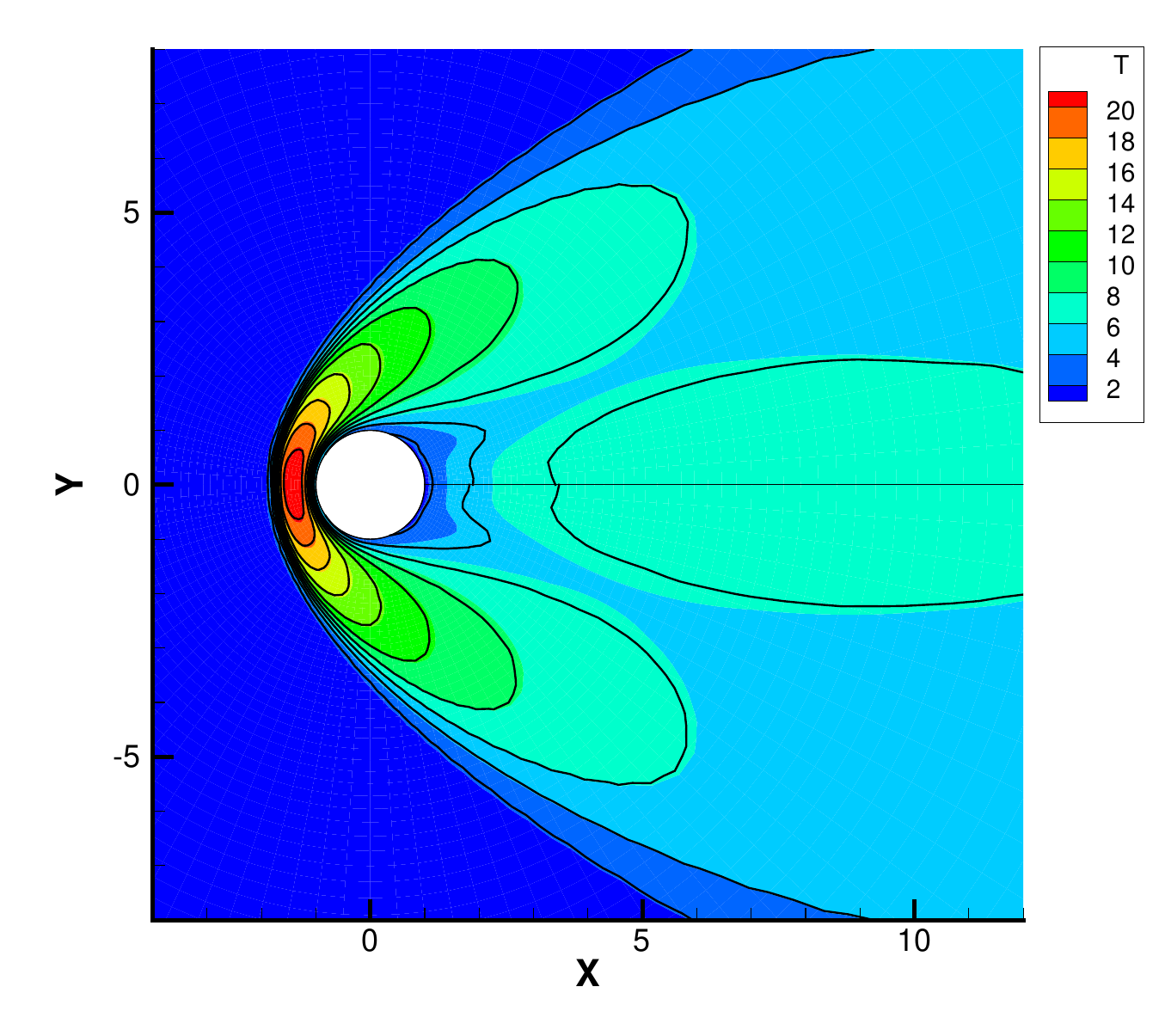}
        \caption{WPD-MC, $T$}
    \end{subfigure}
    \caption{Cylinder pressure and temperature contours at $\mathrm{Kn}=10^{-2}$.}
    \label{fig:cylinder_mc_kn1em2_pt}
\end{figure}

\begin{figure}[!htbp]
    \centering
    \begin{subfigure}{0.49\textwidth}
        \centering
        \includegraphics[width=\linewidth]{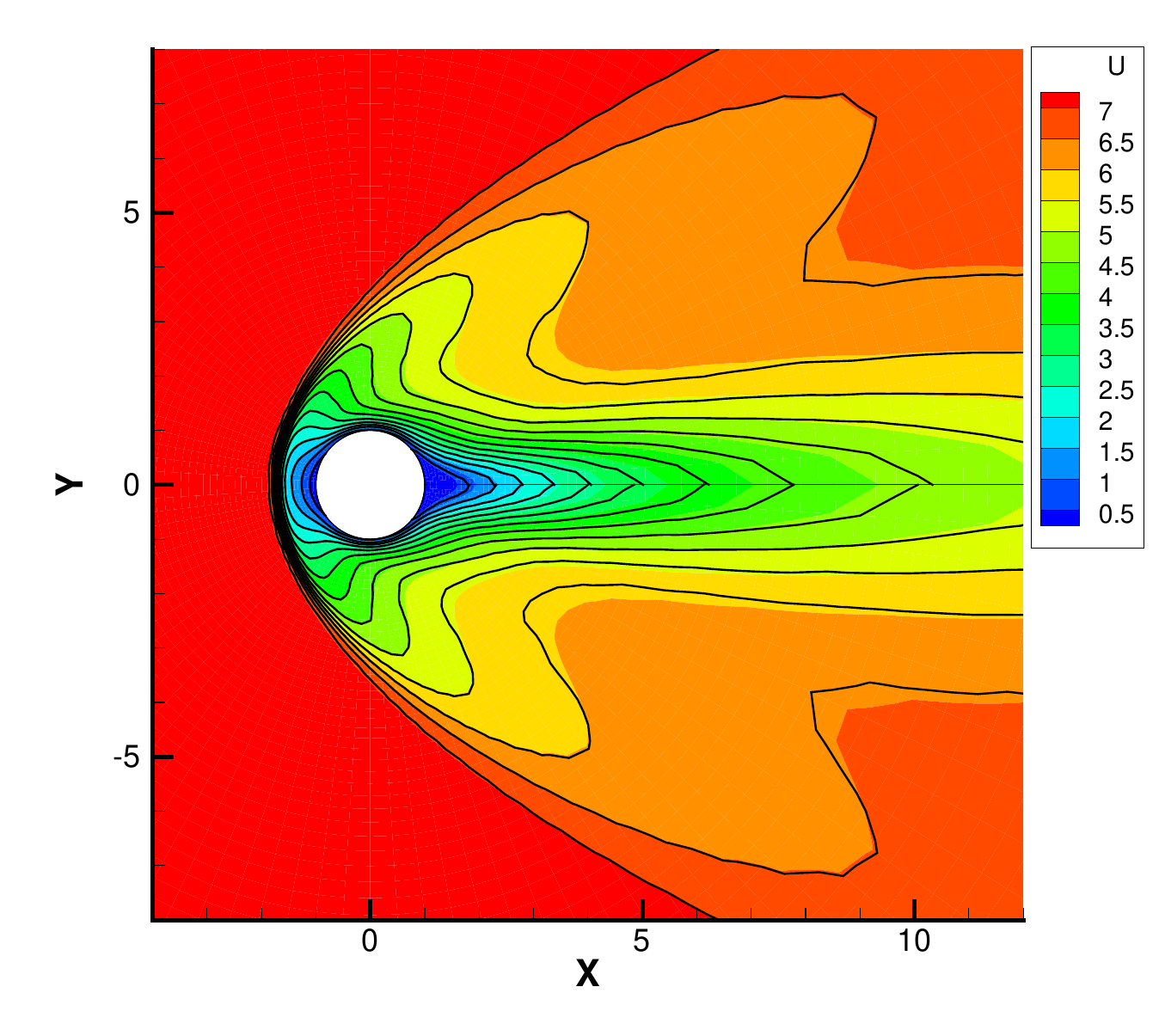}
        \caption{WPD-MC, $u$}
    \end{subfigure}
    \begin{subfigure}{0.49\textwidth}
        \centering
        \includegraphics[width=\linewidth]{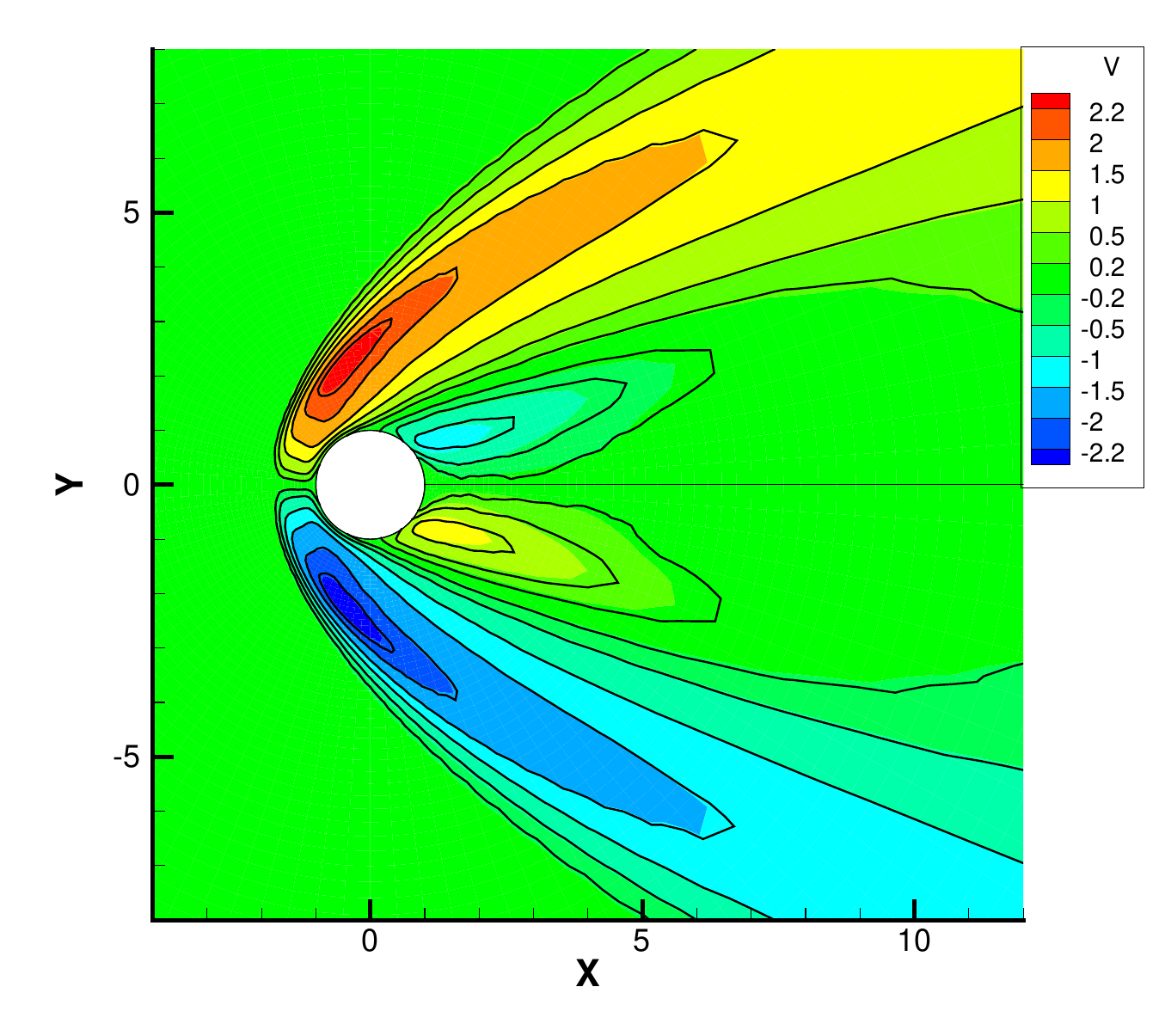}
        \caption{WPD-MC, $v$}
    \end{subfigure}
    \caption{Cylinder velocity contours at $\mathrm{Kn}=10^{-2}$.}
    \label{fig:cylinder_mc_kn1em2_uv}
\end{figure}
\FloatBarrier

\begin{figure}[!htbp]
    \centering
    \begin{subfigure}{0.49\textwidth}
        \centering
        \includegraphics[width=\linewidth]{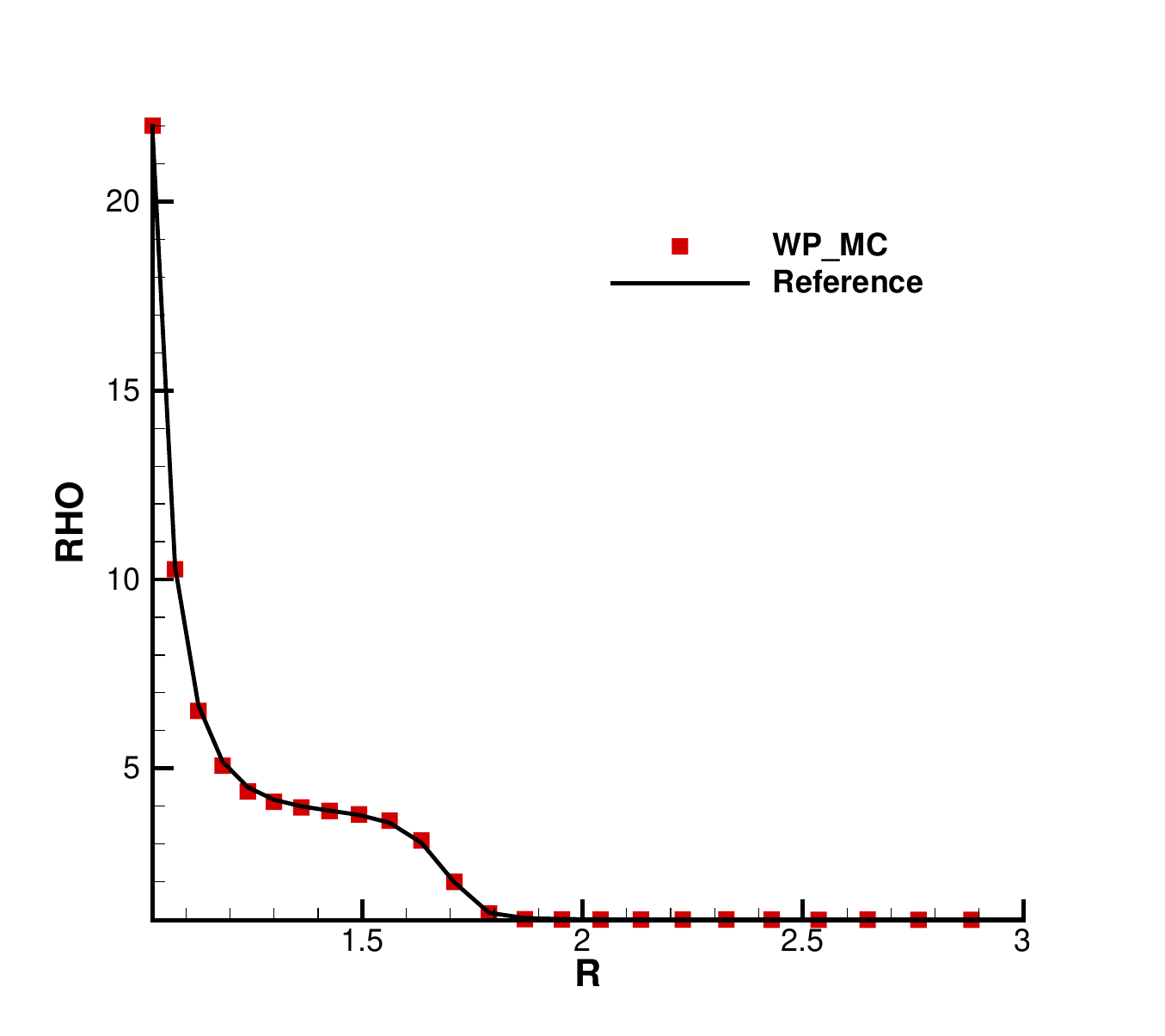}
        \caption{WPD-MC (local), $\rho$}
    \end{subfigure}
    \begin{subfigure}{0.49\textwidth}
        \centering
        \includegraphics[width=\linewidth]{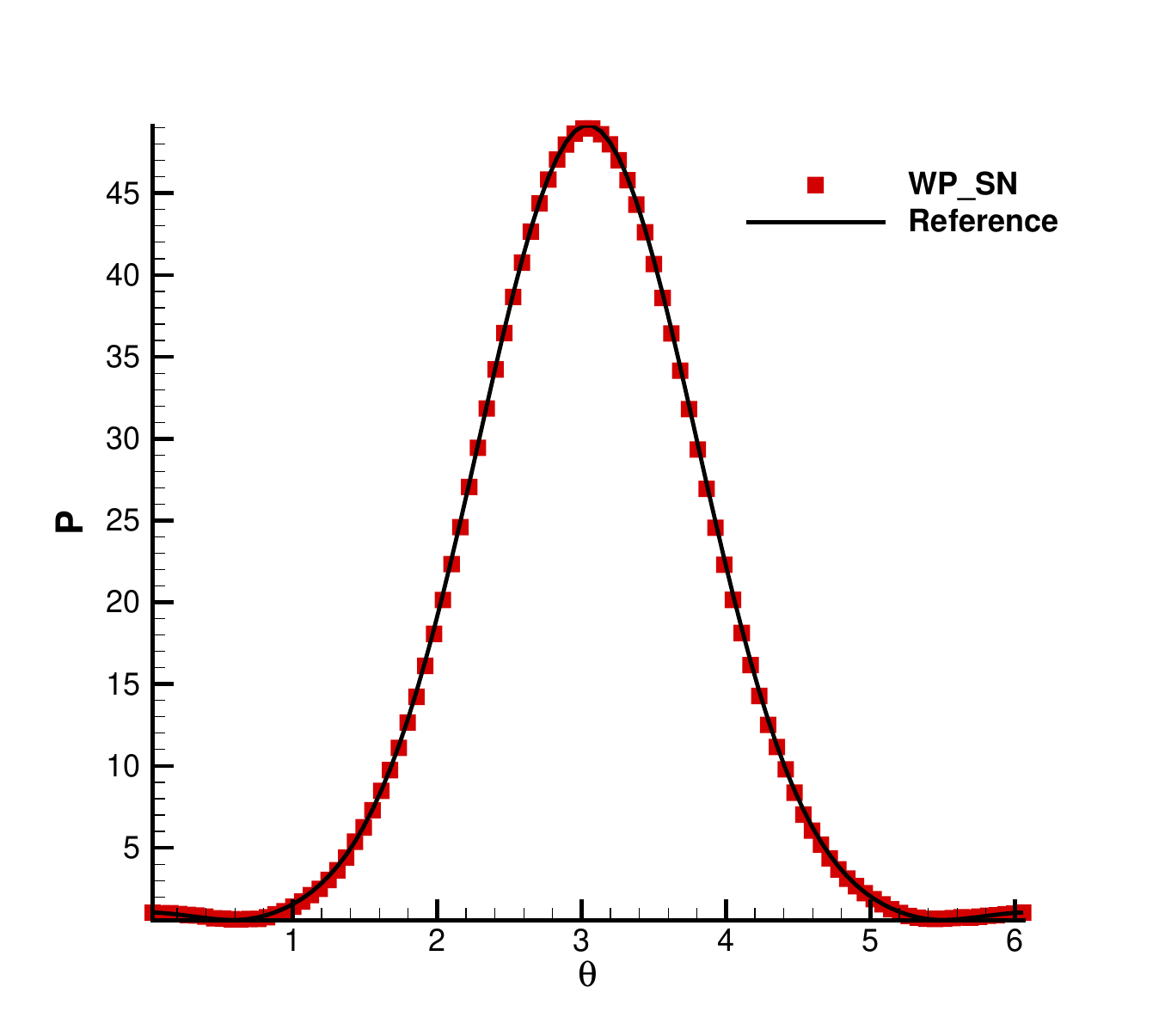}
        \caption{WPD-MC (local), $p$}
    \end{subfigure}
    \caption{Cylinder stagnation-line density and pressure at $\mathrm{Kn}=10^{-2}$.}
    \label{fig:cylinder_mc_R_den_p_kn1em2}
\end{figure}

\begin{figure}[!htbp]
    \centering
    \begin{subfigure}{0.49\textwidth}
        \centering
        \includegraphics[width=\linewidth]{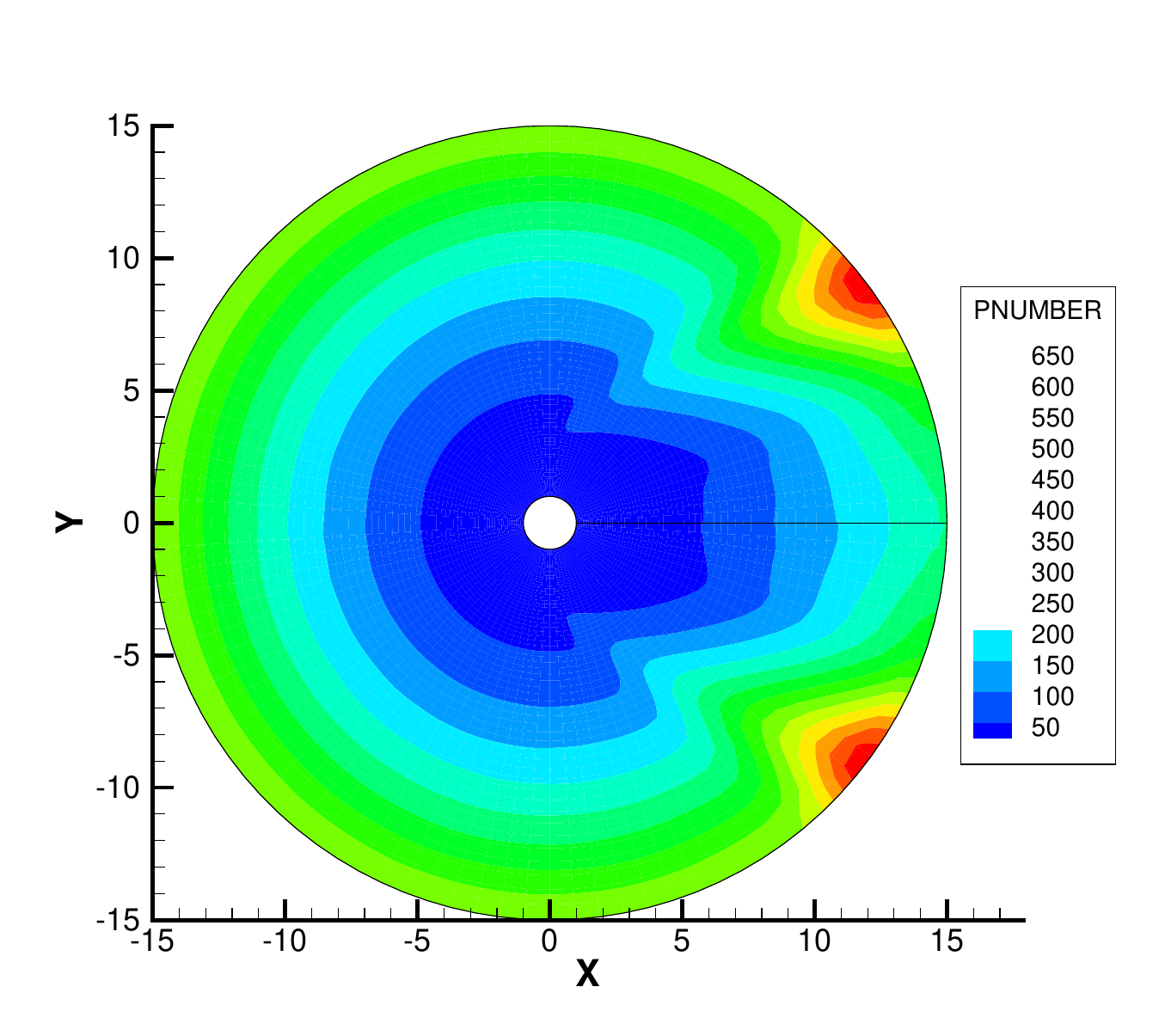}
        \caption{UGKWP, $\mathrm{Kn}=10^{-2}$}
    \end{subfigure}
    \begin{subfigure}{0.49\textwidth}
        \centering
        \includegraphics[width=\linewidth]{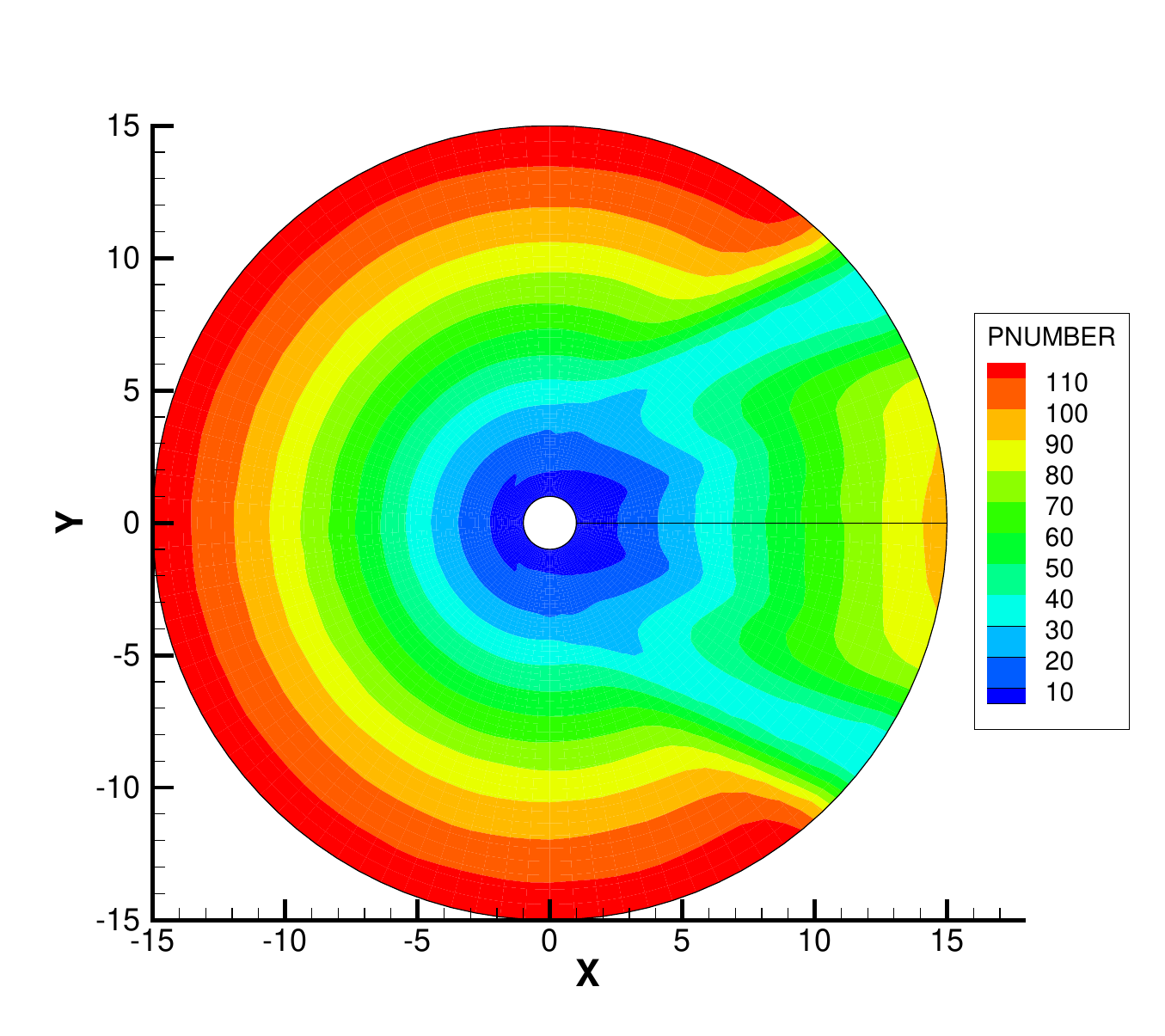}
        \caption{WPD-MC (local), $\mathrm{Kn}=10^{-2}$}
    \end{subfigure}
    \caption{Equivalent particle-number fields at $\mathrm{Kn}=10^{-2}$.}
    \label{fig:cylinder_pnumber_a_kn_1m2}
\end{figure}

\subsubsection{Cylinder flow at \texorpdfstring{\(\mathrm{Kn}=10^{-3}\)}{Kn=1e-3}}

The \(\mathrm{Kn}=10^{-3}\) case further approaches the continuum limit. Figures~\ref{fig:cylinder_sn_kn1em3_pt}--\ref{fig:cylinder_mc_R_den_p_kn1em3} show that the wave component dominates most of the domain, while the particle component is retained where strong gradients and non-equilibrium effects remain. The pressure, temperature, and velocity contours are smooth and consistent between WPD-\(S_N\) and WPD-MC, and the deterministic line and wall-pressure profiles remain close to the reference UGKS solution. The particle-number comparison in Fig.~\ref{fig:cylinder_pnumber_b_kn1m3} highlights the adaptive reduction of kinetic degrees of freedom as the flow becomes more continuum-like.

\begin{figure}[!htbp]
    \centering
    \begin{subfigure}{0.49\textwidth}
        \centering
        \includegraphics[width=\linewidth]{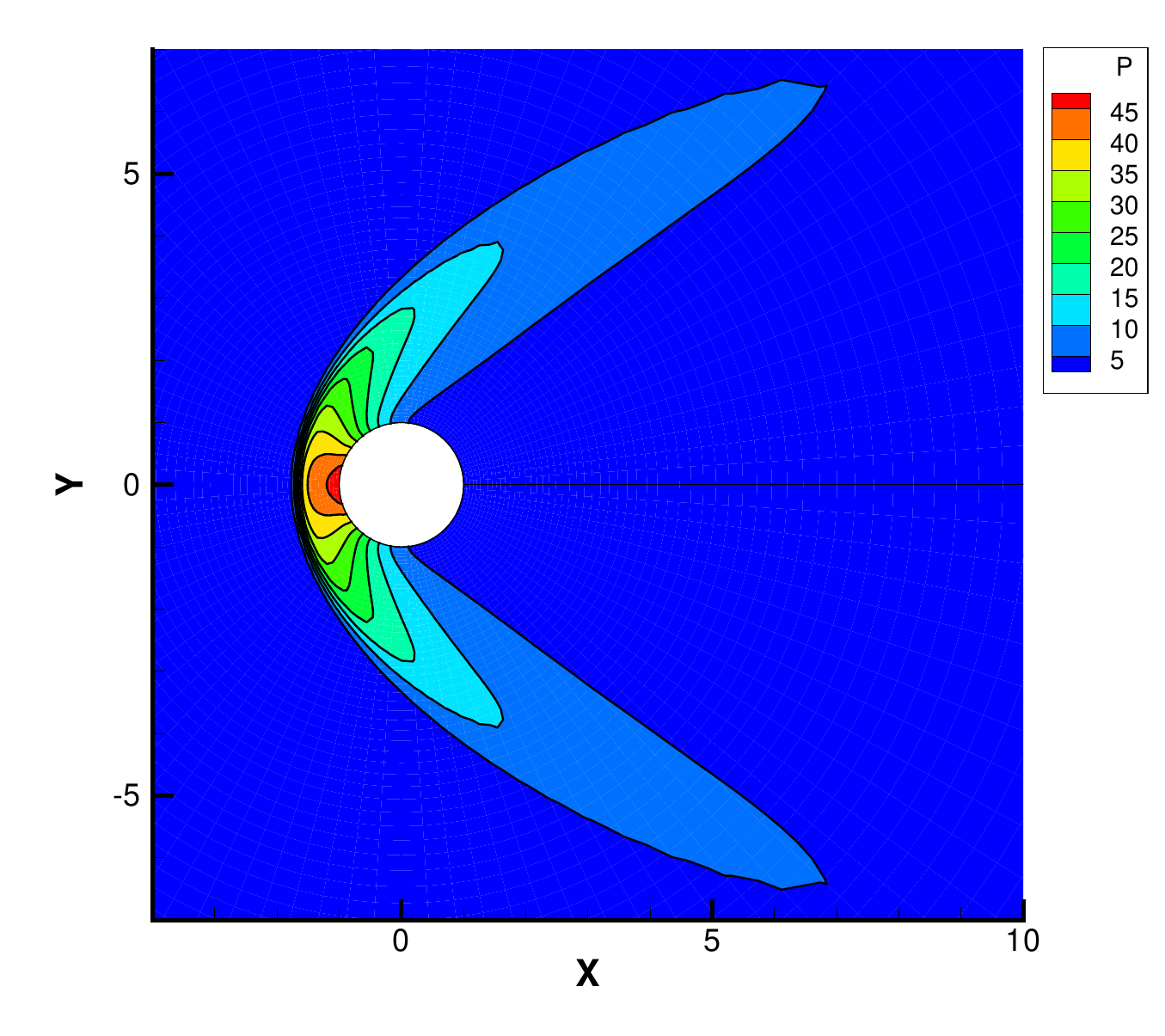}
        \caption{WPD-\(S_N\), $p$}
    \end{subfigure}
    \begin{subfigure}{0.49\textwidth}
        \centering
        \includegraphics[width=\linewidth]{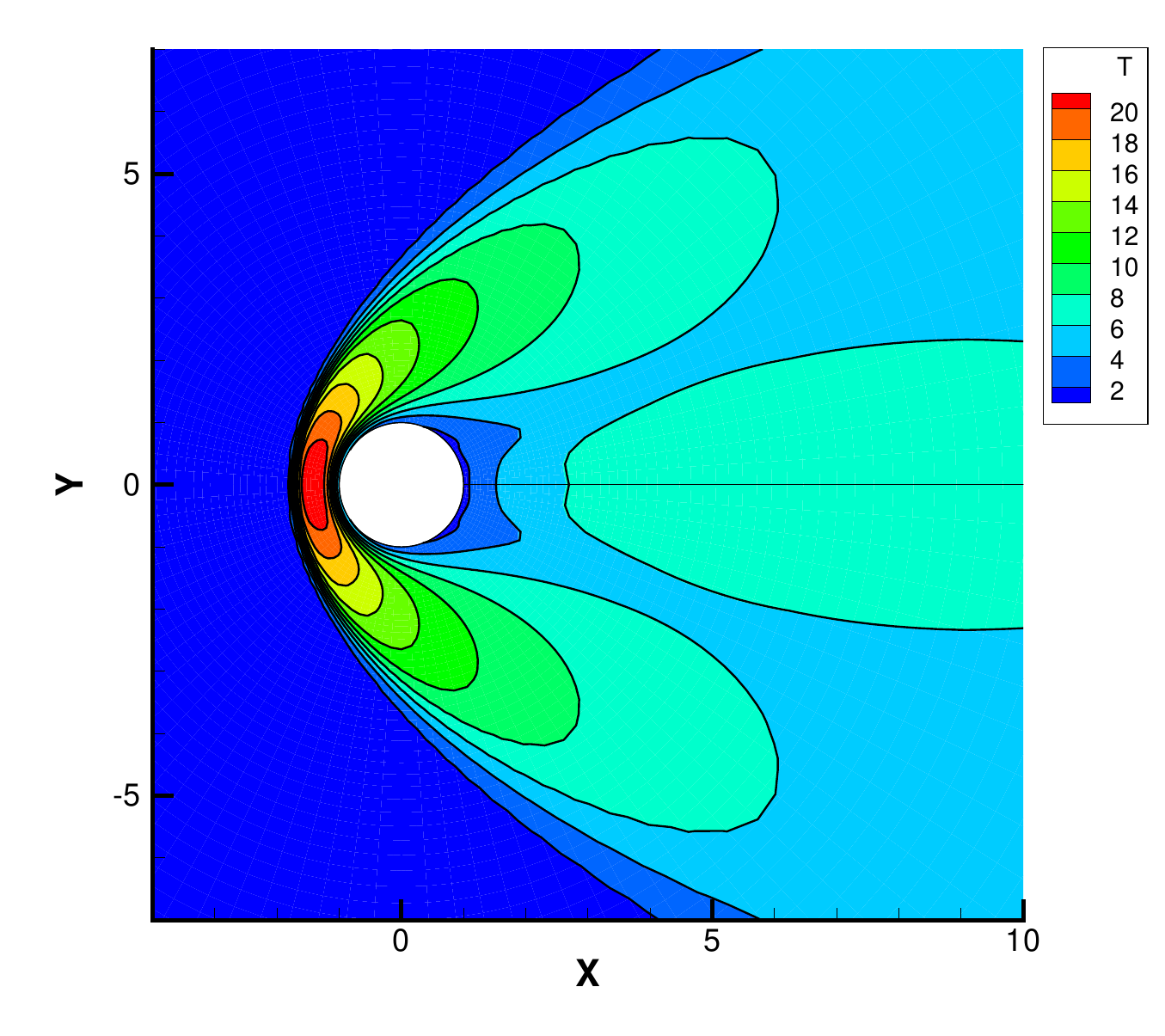}
        \caption{WPD-\(S_N\), $T$}
    \end{subfigure}
    \caption{Cylinder pressure and temperature contours at $\mathrm{Kn}=10^{-3}$.}
    \label{fig:cylinder_sn_kn1em3_pt}
\end{figure}

\begin{figure}[!htbp]
    \centering
    \begin{subfigure}{0.49\textwidth}
        \centering
        \includegraphics[width=\linewidth]{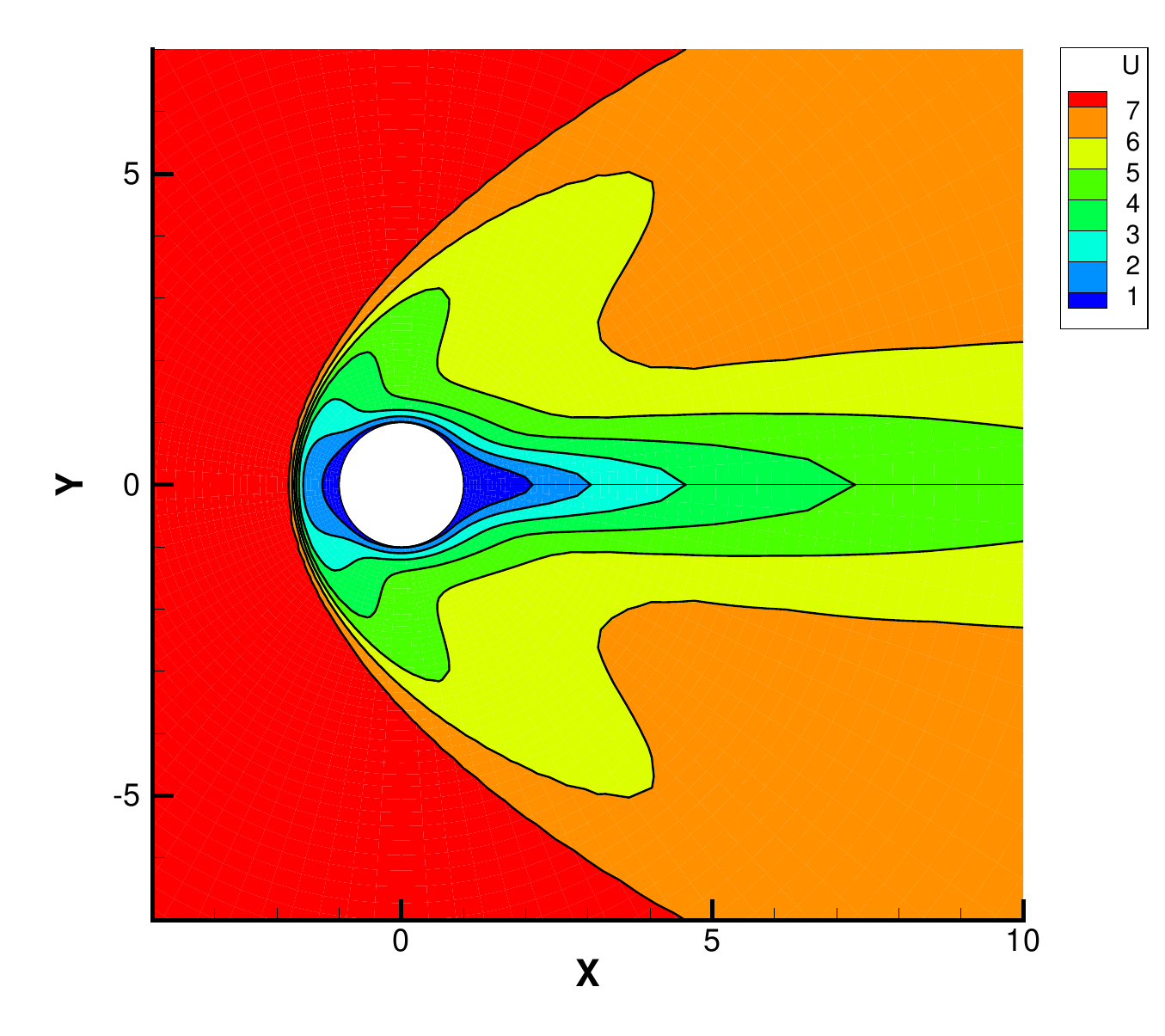}
        \caption{WPD-\(S_N\), $u$}
    \end{subfigure}
    \begin{subfigure}{0.49\textwidth}
        \centering
        \includegraphics[width=\linewidth]{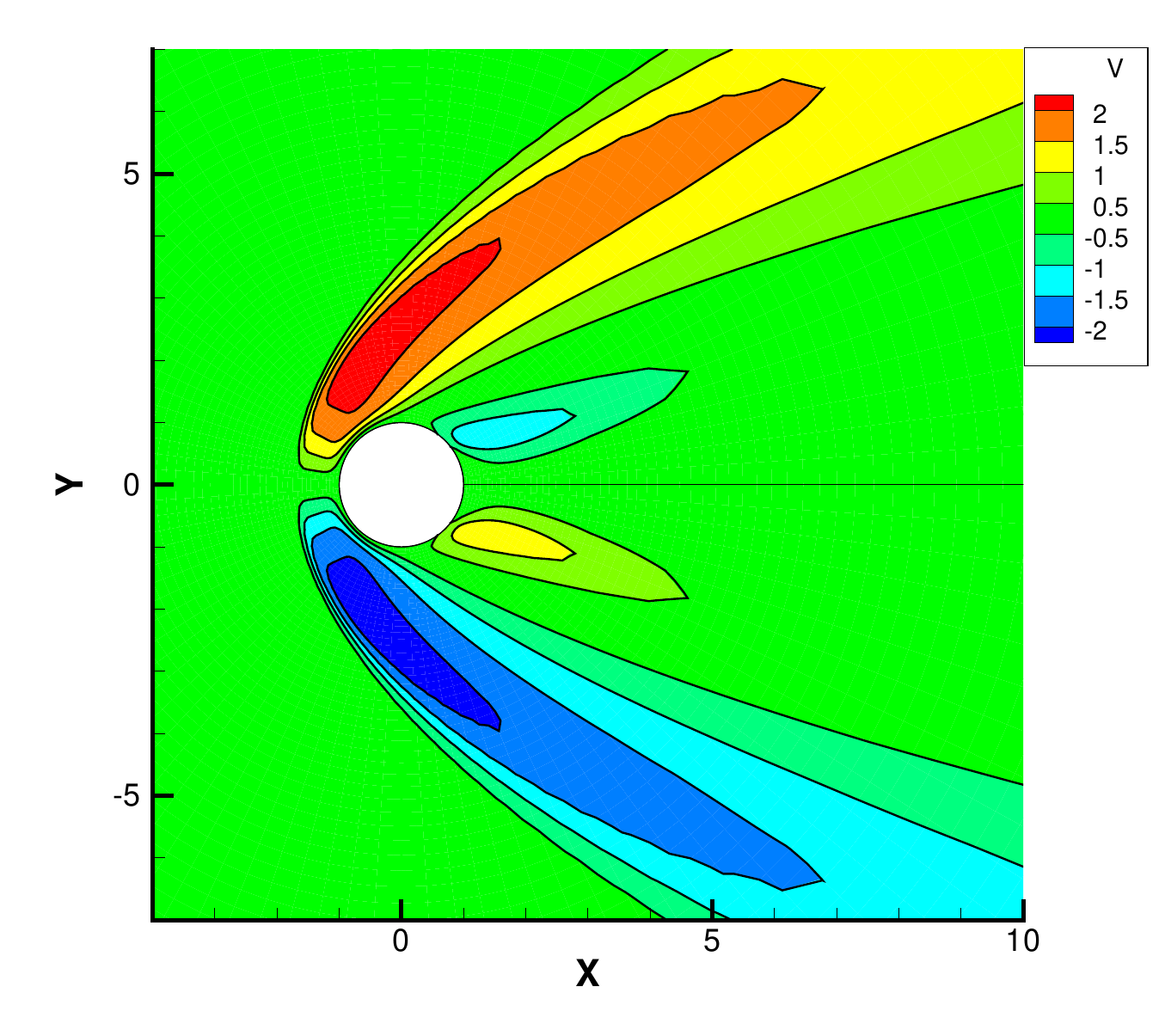}
        \caption{WPD-\(S_N\), $v$}
    \end{subfigure}
    \caption{Cylinder velocity contours at $\mathrm{Kn}=10^{-3}$.}
    \label{fig:cylinder_sn_kn1em3_uv}
\end{figure}
\FloatBarrier

\begin{figure}[!htbp]
    \centering
    \begin{subfigure}{0.49\textwidth}
        \centering
        \includegraphics[width=\linewidth]{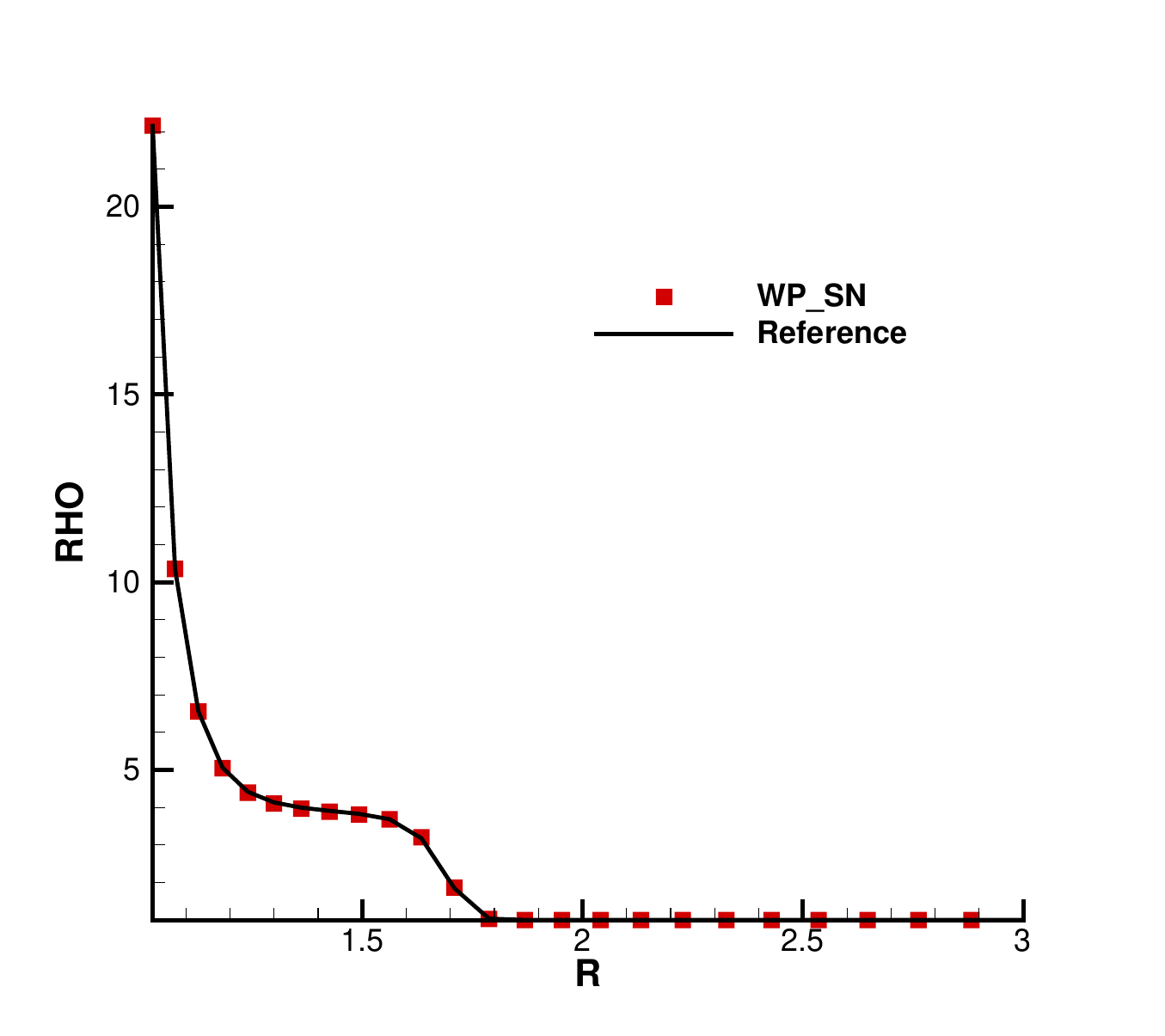}
        \caption{WPD-\(S_N\), $\rho$}
    \end{subfigure}
    \begin{subfigure}{0.49\textwidth}
        \centering
        \includegraphics[width=\linewidth]{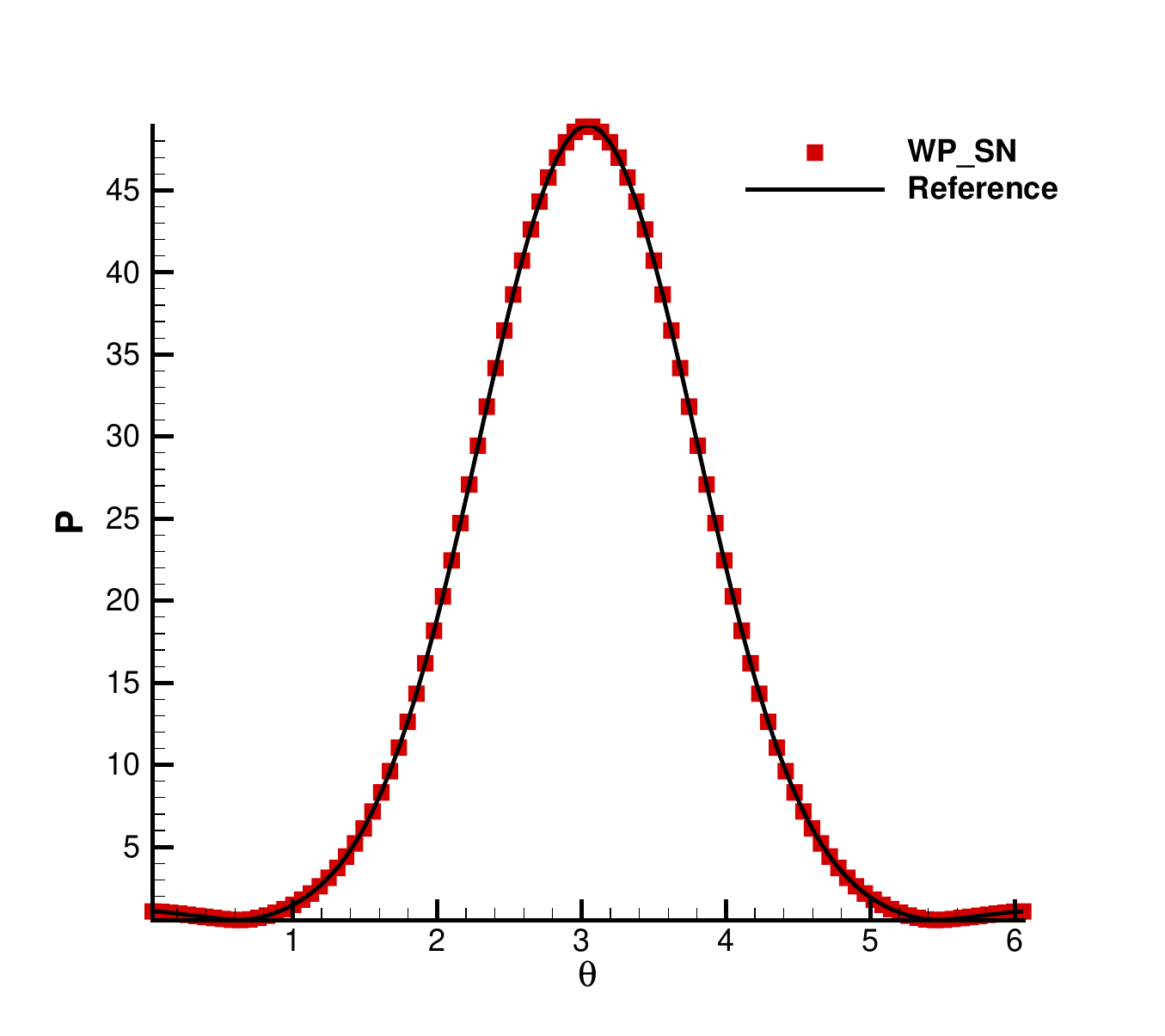}
        \caption{WPD-\(S_N\), $p$}
    \end{subfigure}
    \caption{Cylinder stagnation-line density and wall pressure at $\mathrm{Kn}=10^{-3}$, comparing WPD-\(S_N\) with the reference UGKS solution.}
    \label{fig:cylinder_sn_R_den_p_kn1em3}
\end{figure}

\begin{figure}[!htbp]
    \centering
    \begin{subfigure}{0.49\textwidth}
        \centering
        \includegraphics[width=\linewidth]{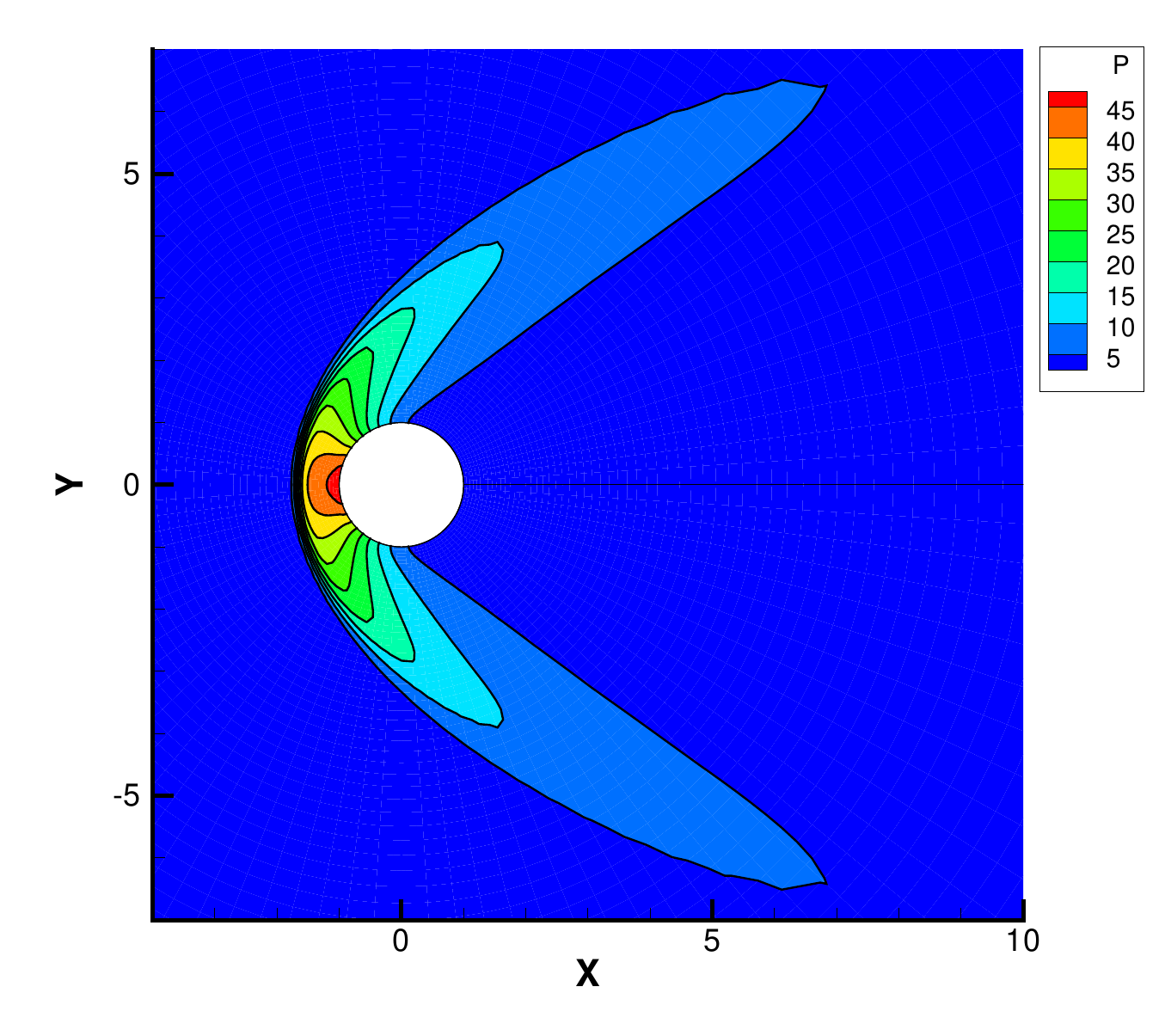}
        \caption{WPD-MC, $p$}
    \end{subfigure}
    \begin{subfigure}{0.49\textwidth}
        \centering
        \includegraphics[width=\linewidth]{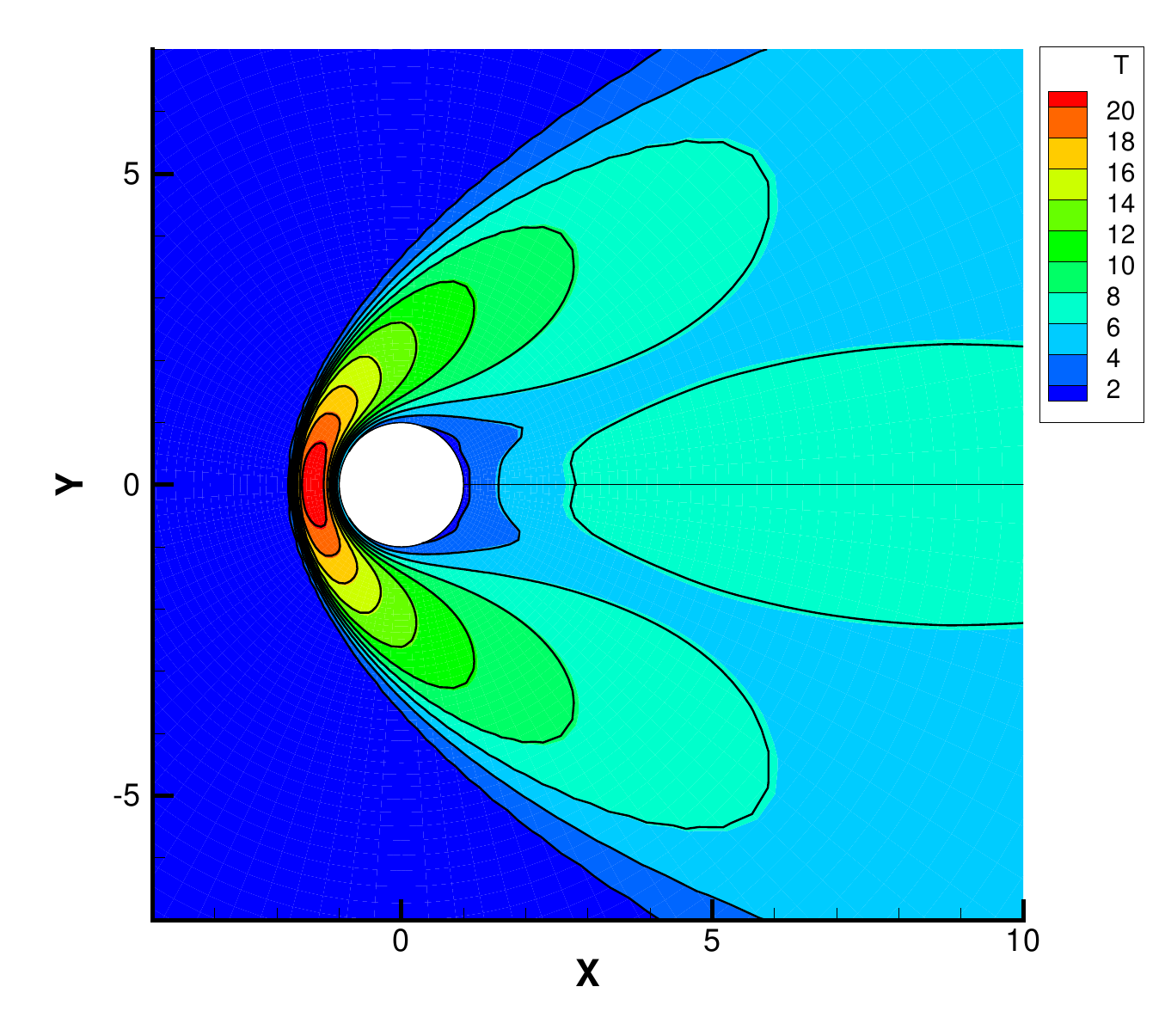}
        \caption{WPD-MC, $T$}
    \end{subfigure}
    \caption{Cylinder pressure and temperature contours at $\mathrm{Kn}=10^{-3}$.}
    \label{fig:cylinder_mc_kn1em3_pt}
\end{figure}

\begin{figure}[!htbp]
    \centering
    \begin{subfigure}{0.49\textwidth}
        \centering
        \includegraphics[width=\linewidth]{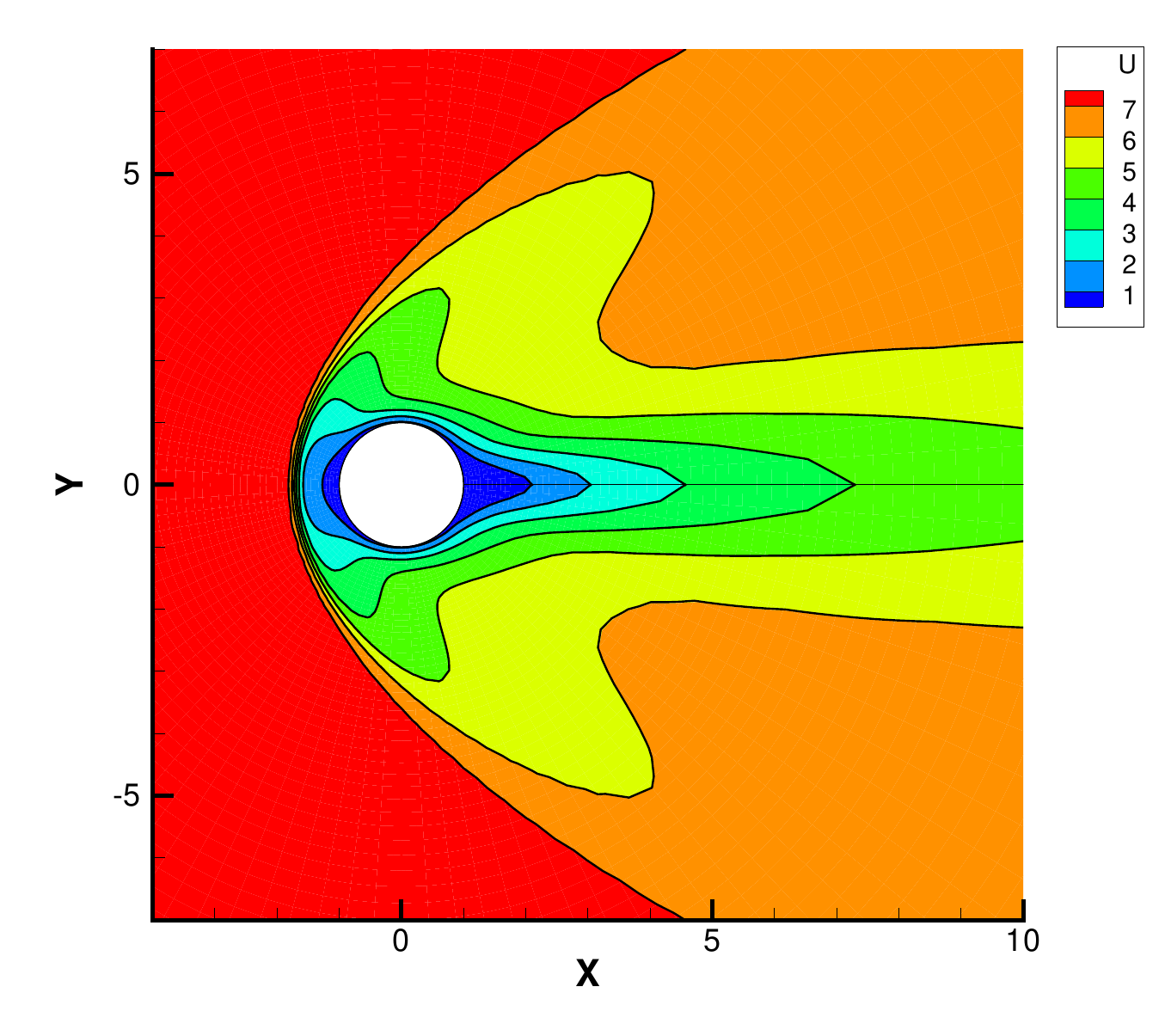}
        \caption{WPD-MC, $u$}
    \end{subfigure}
    \begin{subfigure}{0.49\textwidth}
        \centering
        \includegraphics[width=\linewidth]{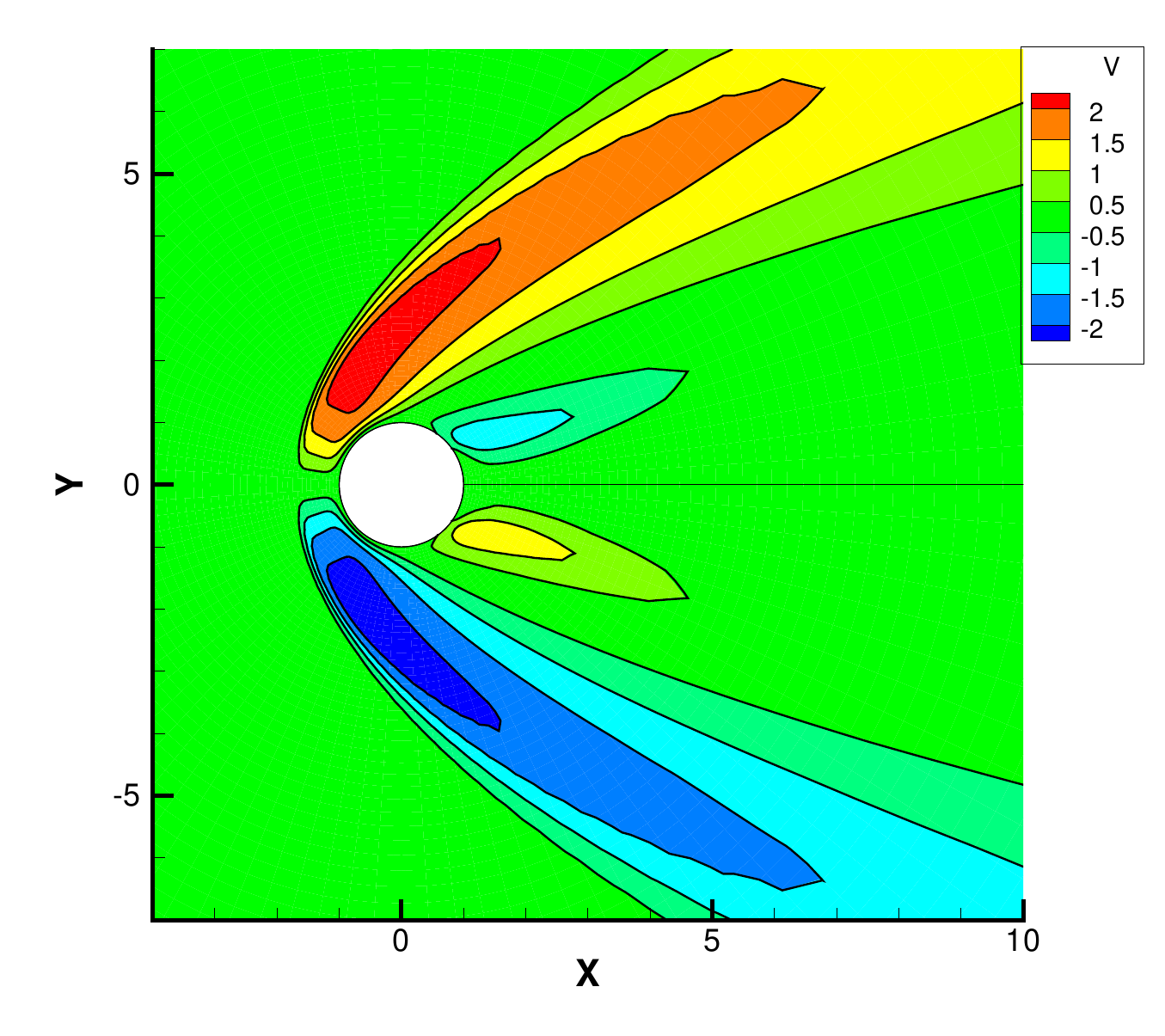}
        \caption{WPD-MC, $v$}
    \end{subfigure}
    \caption{Cylinder velocity contours at $\mathrm{Kn}=10^{-3}$.}
    \label{fig:cylinder_mc_kn1em3_uv}
\end{figure}

\begin{figure}[!htbp]
    \centering
    \begin{subfigure}{0.49\textwidth}
        \centering
        \includegraphics[width=\linewidth]{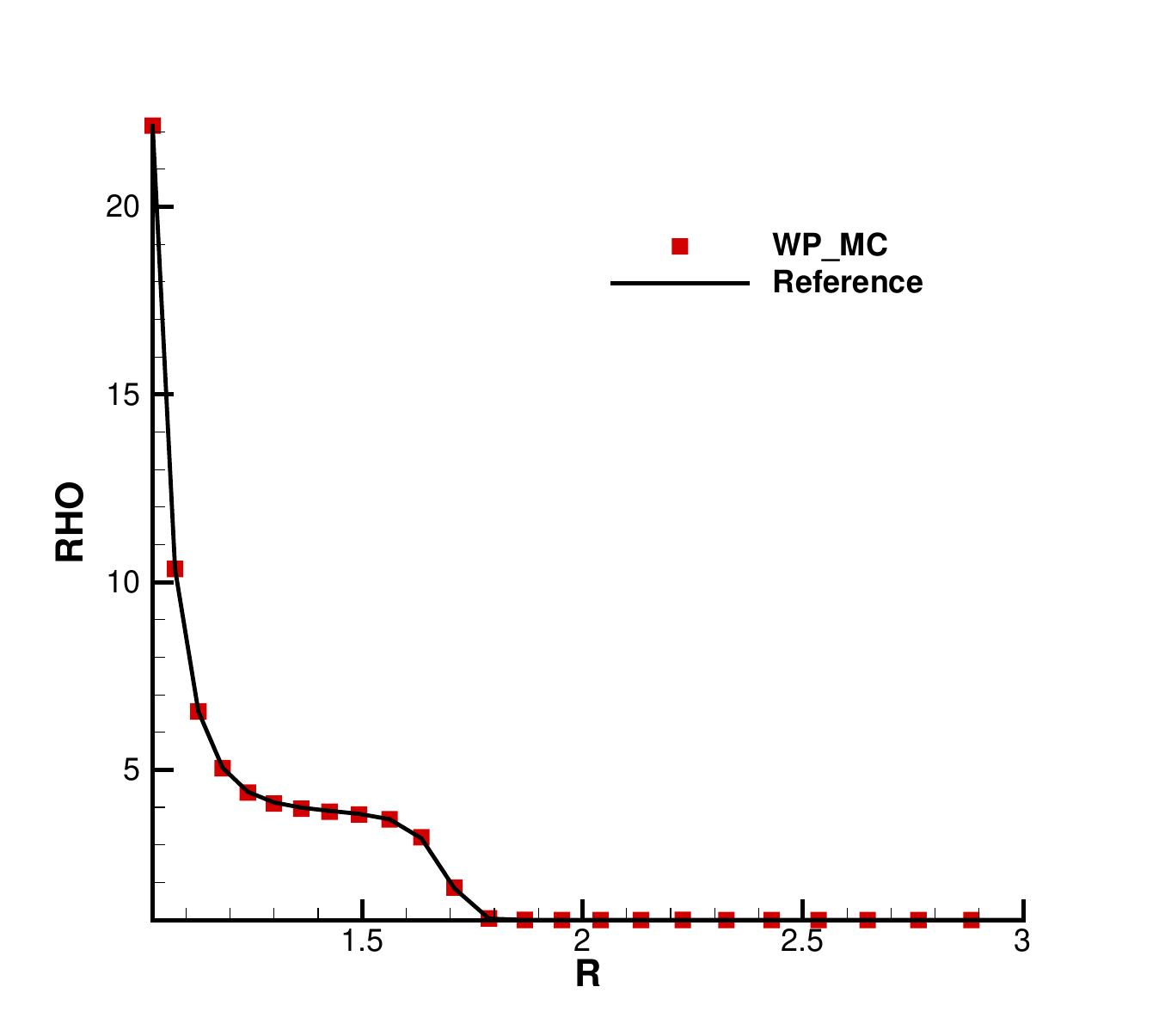}
        \caption{WPD-MC (local), $\rho$}
    \end{subfigure}
    \begin{subfigure}{0.49\textwidth}
        \centering
        \includegraphics[width=\linewidth]{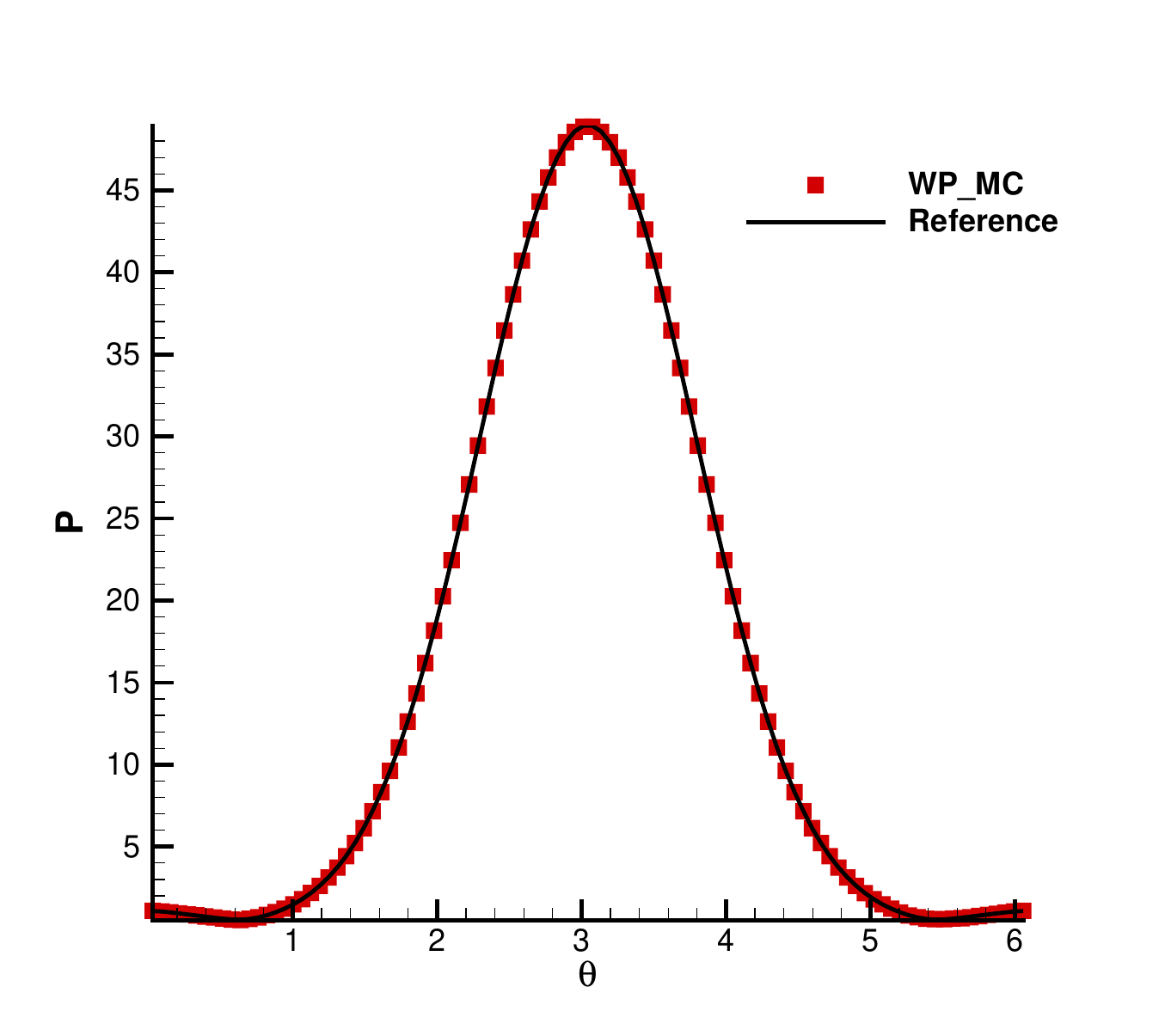}
        \caption{WPD-MC (local), $p$}
    \end{subfigure}
    \caption{Cylinder stagnation-line density and pressure at $\mathrm{Kn}=10^{-3}$.}
    \label{fig:cylinder_mc_R_den_p_kn1em3}
\end{figure}
\FloatBarrier

\begin{figure}[!htbp]
    \centering
    \begin{subfigure}{0.49\textwidth}
        \centering
        \includegraphics[width=\linewidth]{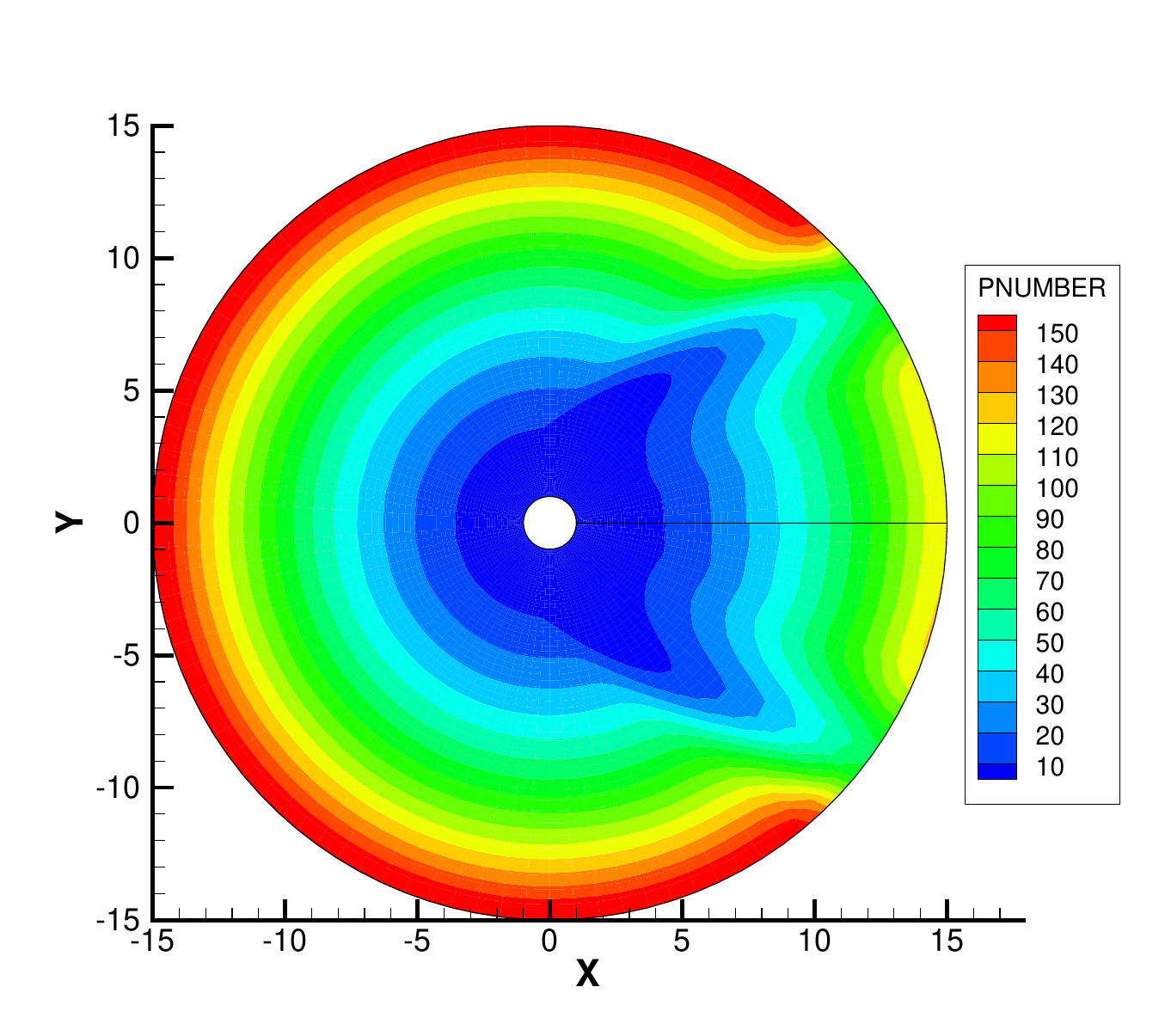}
        \caption{UGKWP, $\mathrm{Kn}=10^{-3}$}
    \end{subfigure}
    \begin{subfigure}{0.49\textwidth}
        \centering
        \includegraphics[width=\linewidth]{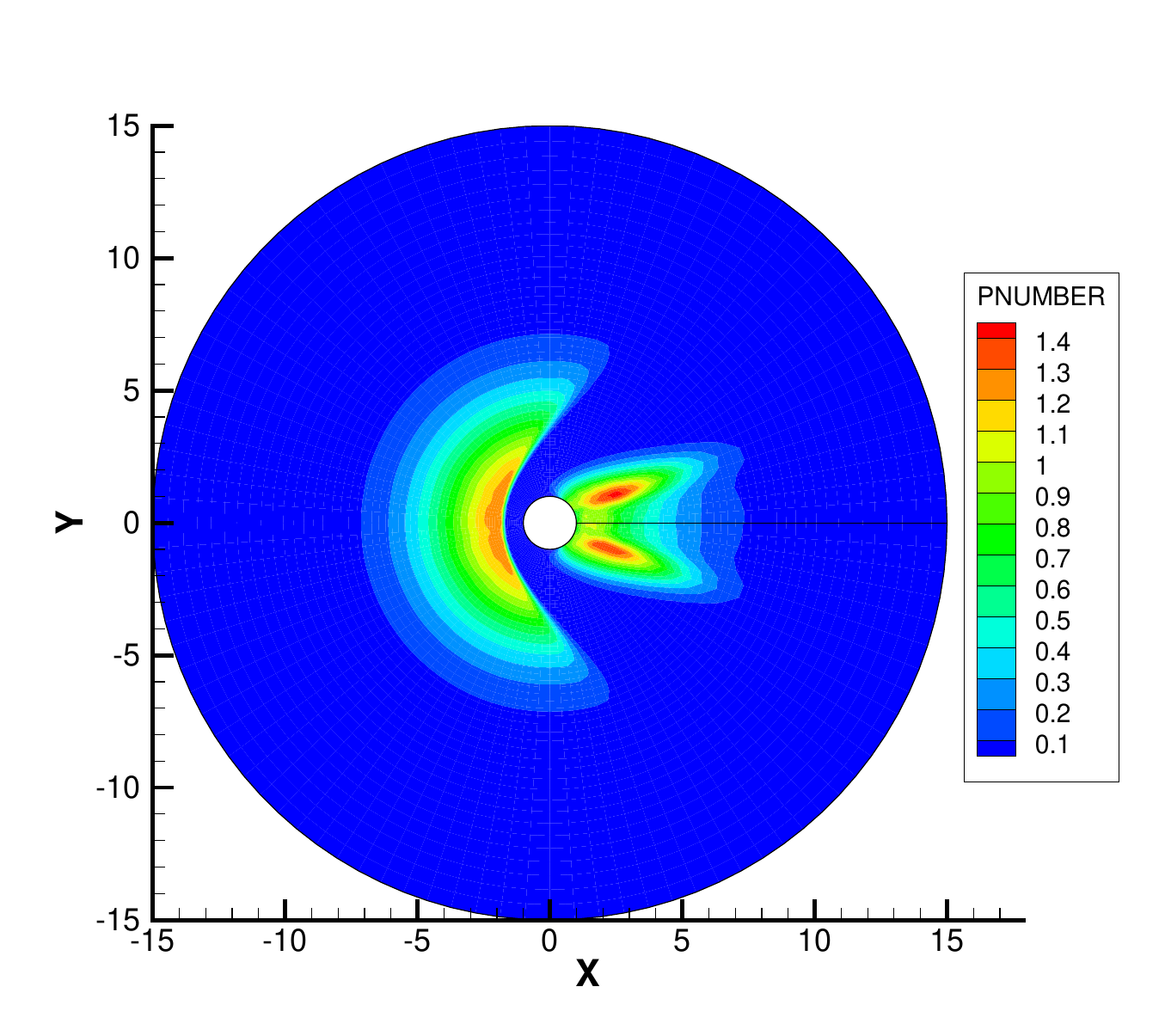}
        \caption{WPD-MC (local), $\mathrm{Kn}=10^{-3}$}
    \end{subfigure}
    \caption{Equivalent particle-number fields at $\mathrm{Kn}=10^{-3}$.}
    \label{fig:cylinder_pnumber_b_kn1m3}
\end{figure}

The computational saving in the cylinder calculation is measured by both the transported kinetic representation and the wall time. On the \(100\times64\) mesh, the deterministic \(S_N\) solver uses \(64\times64\) velocity points and two reduced distributions:
\[
N_{S_N}=100\times64\times64\times64\times2=5.24\times10^7 .
\]
For stochastic cases, the effective kinetic DoF is computed from the particle-phase mass and the fixed reference particle weight. It measures the active kinetic representation, not runtime directly, because Monte Carlo transport includes sampling, tracing, wall interaction, cell location, and memory overhead, while WPD-MC also retains the fixed wave/GKS update. Table~\ref{tab:cylinder_cost_estimate} therefore reports both effective DoF and measured wall time on one Intel Xeon Platinum 8358P node using \(64\) CPU cores.

\begin{table}[!htbp]
\centering
\caption{Effective kinetic degrees of freedom and measured wall time for the cylinder calculations on one Intel Xeon Platinum 8358P node using \(64\) CPU cores. The effective kinetic degrees of freedom measure the size of the explicitly transported kinetic representation; the wall time includes the fixed wave update and the additional overhead of particle tracing and sampling.}
\label{tab:cylinder_cost_estimate}
\small
\begin{tabular}{lccc}
\toprule
Method & Effective kinetic DoFs & DoF ratio & Measured wall time \\
\midrule
WPD-\(S_N\) & \(5.24\times10^7\) & \(1\) & \(40.71\) min \\
\midrule
\(\mathrm{Kn}=1\), UGKWP & \(6.97\times10^5\) & \(1.33\times10^{-2}\) & \(13.84\) min \\
\(\mathrm{Kn}=1\), WPD-MC (local) & \(6.90\times10^5\) & \(1.32\times10^{-2}\) & \(13.57\) min \\
\(\mathrm{Kn}=10^{-2}\), UGKWP & \(5.82\times10^5\) & \(1.11\times10^{-2}\) & \(12.49\) min \\
\(\mathrm{Kn}=10^{-2}\), WPD-MC (local) & \(2.28\times10^5\) & \(4.36\times10^{-3}\) & \(7.12\) min \\
\(\mathrm{Kn}=10^{-3}\), UGKWP & \(1.80\times10^5\) & \(3.43\times10^{-3}\) & \(4.43\) min \\
\(\mathrm{Kn}=10^{-3}\), WPD-MC (local) & \(2.03\times10^3\) & \(3.88\times10^{-5}\) & \(1.07\) min \\
\bottomrule
\end{tabular}
\end{table}

At \(\mathrm{Kn}=1\), WPD-MC (local) and UGKWP have nearly the same effective DoF and wall time because the particle fraction is close to unity. As the Knudsen number decreases, the local horizon reduces the kinetic representation in continuum-like regions. At \(\mathrm{Kn}=10^{-2}\), WPD-MC (local) reduces the effective DoF by about \(2.6\) and the wall time by about \(1.8\); at \(\mathrm{Kn}=10^{-3}\), the reductions are about \(88\) and \(4.1\), respectively. The smaller runtime gain reflects the fixed wave/GKS cost and Monte Carlo overhead.

\subsection{Three-Dimensional X38 Configuration}

The final case applies the same WPD formulation directly to a three-dimensional complex geometry. The X38 configuration is computed at a free-stream Mach number \(M_\infty=5\) with a nonzero angle of attack, \(\mathrm{AoA}=15^\circ\). The gas is monatomic with \(\gamma=5/3\), and diffuse wall reflection is used on the vehicle surface. The unstructured polyhedral volume mesh contains \(104723\) nodes and \(560593\) cells, with \(1138792\) faces.

The cases shown here correspond to \(\mathrm{Kn}=10^{-1}\), \(10^{-2}\), and \(10^{-4}\). This is a qualitative multidimensional demonstration rather than a quantitative aerodynamic validation. As \(\mathrm{Kn}\) decreases, the local-horizon coefficients shift more of the solution to the wave component, while particles remain active near rarefied, non-equilibrium, and wall-interaction regions. The figures show the resulting surface and near-body density, temperature, velocity, pressure, and Mach-number fields.

\begin{figure}[!htbp]
    \centering
    \begin{subfigure}{0.49\textwidth}
        \centering
        \includegraphics[width=\linewidth]{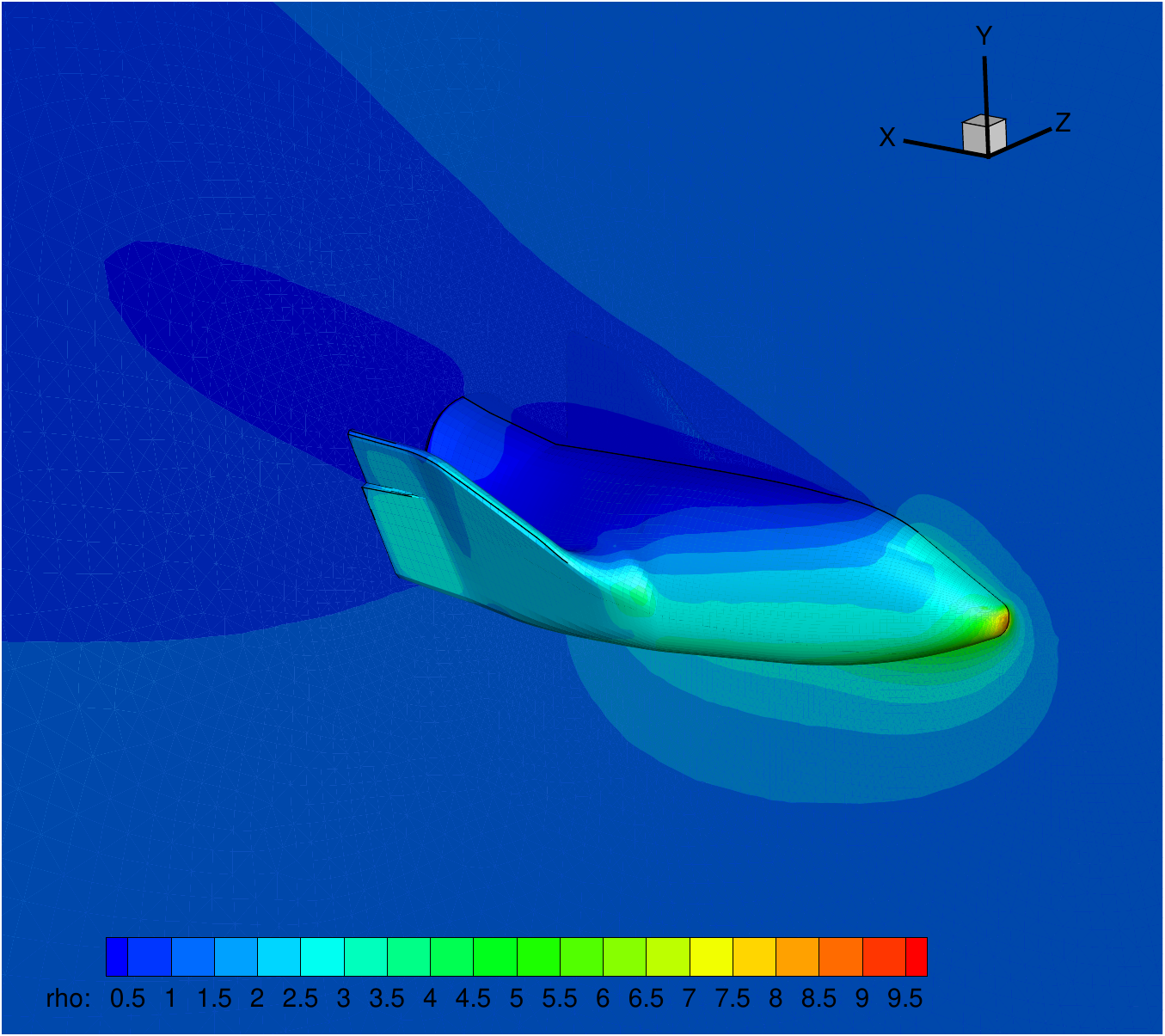}
        \caption{Density}
    \end{subfigure}
    \begin{subfigure}{0.49\textwidth}
        \centering
        \includegraphics[width=\linewidth]{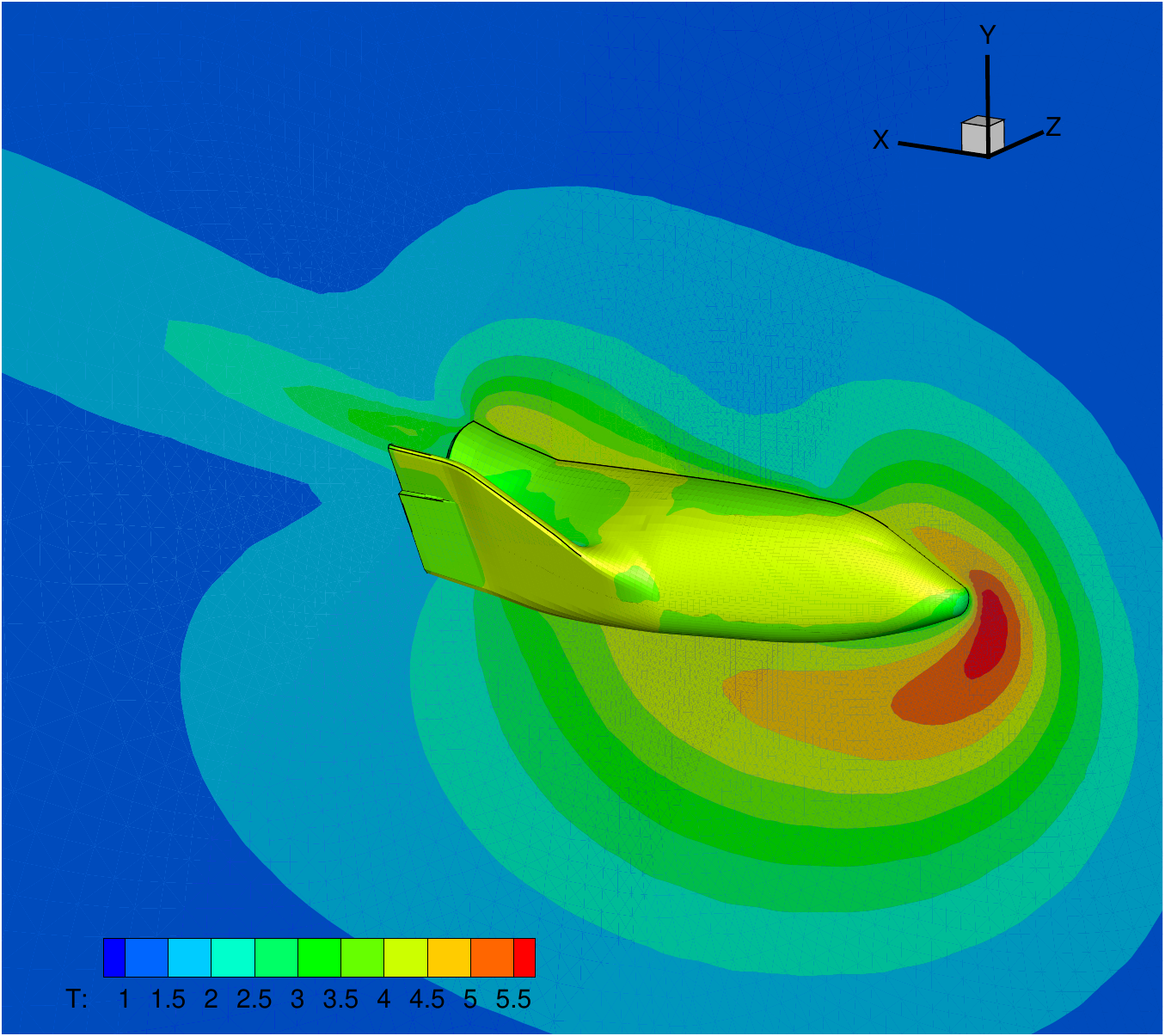}
        \caption{Temperature}
    \end{subfigure}
    \caption{Three-dimensional $x38$ case at $\mathrm{Kn}=10^{-1}$.}
    \label{fig:x38_kn1em1_rhot}
\end{figure}

\begin{figure}[!htbp]
    \centering
    \begin{subfigure}{0.49\textwidth}
        \centering
        \includegraphics[width=\linewidth]{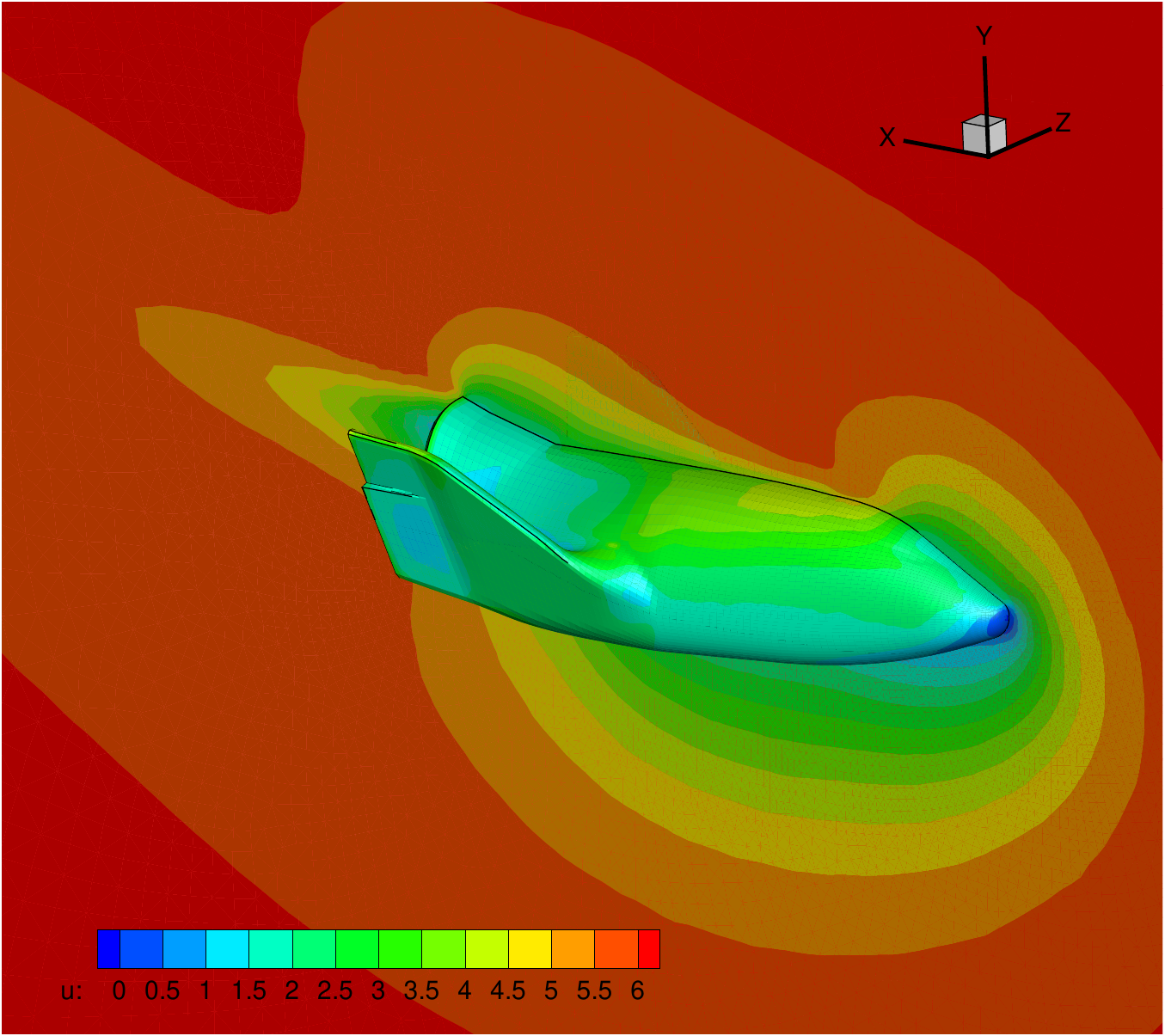}
        \caption{\(u\)}
    \end{subfigure}
    \begin{subfigure}{0.49\textwidth}
        \centering
        \includegraphics[width=\linewidth]{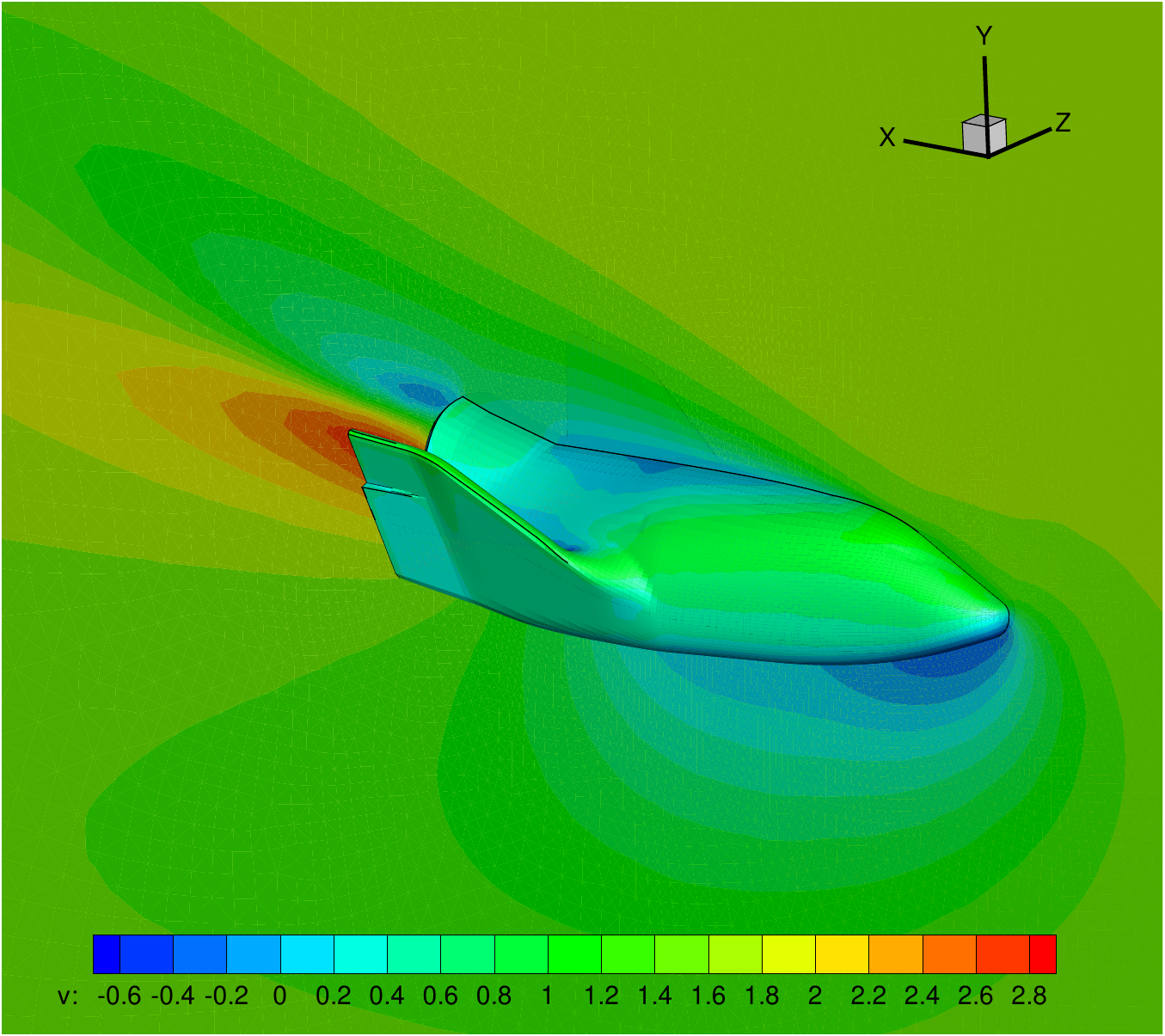}
        \caption{\(v\)}
    \end{subfigure}
    \caption{Three-dimensional $x38$ case at $\mathrm{Kn}=10^{-1}$.}
    \label{fig:x38_kn1em1_uv}
\end{figure}

\begin{figure}[!htbp]
    \centering
    \begin{subfigure}{0.49\textwidth}
        \centering
        \includegraphics[width=\linewidth]{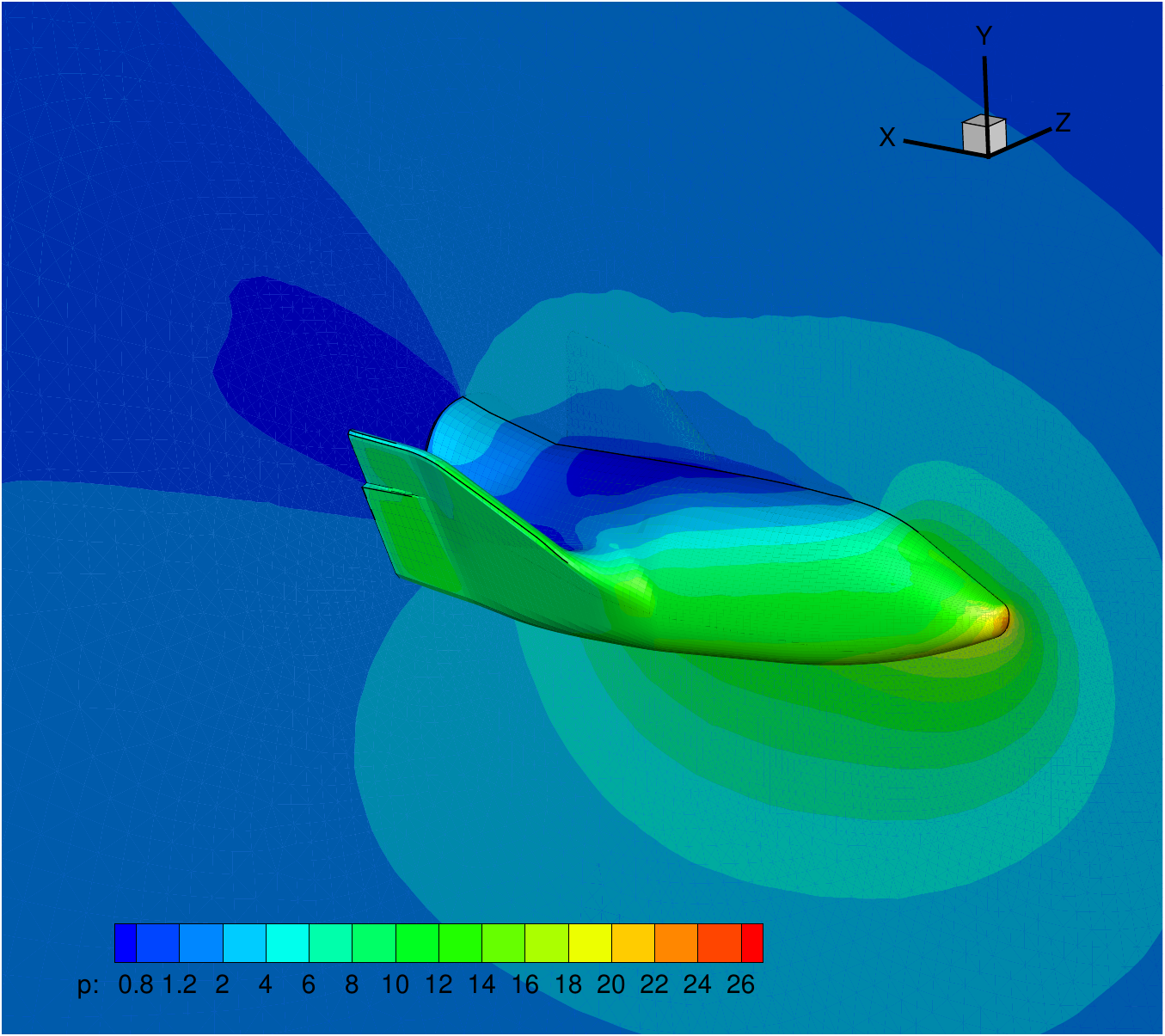}
        \caption{Pressure}
    \end{subfigure}
    \begin{subfigure}{0.49\textwidth}
        \centering
        \includegraphics[width=\linewidth]{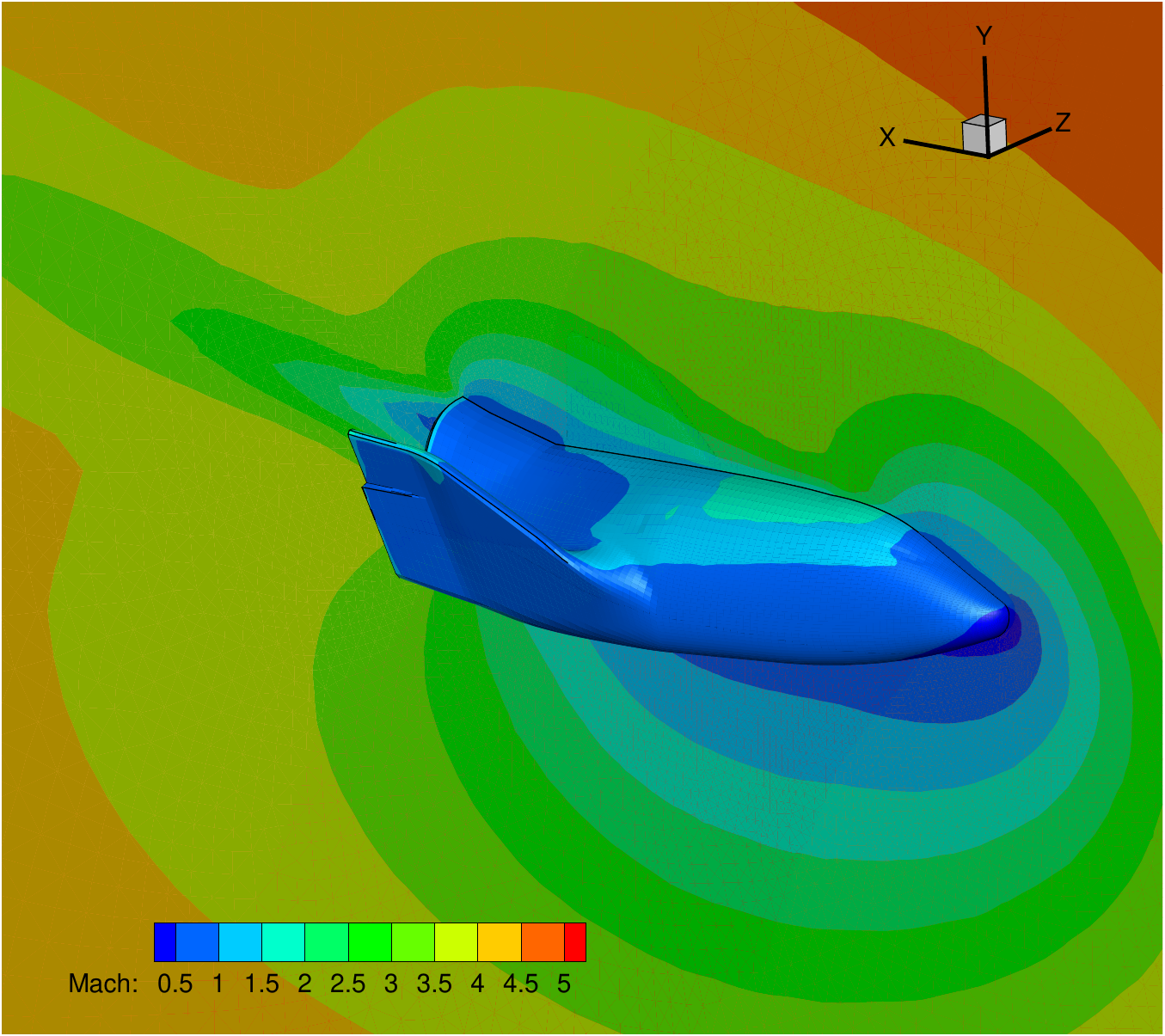}
        \caption{Mach number}
    \end{subfigure}
    \caption{Three-dimensional $x38$ case at $\mathrm{Kn}=10^{-1}$.}
    \label{fig:x38_kn1em1_pma}
\end{figure}

\begin{figure}[!htbp]
    \centering
    \begin{subfigure}{0.49\textwidth}
        \centering
        \includegraphics[width=\linewidth]{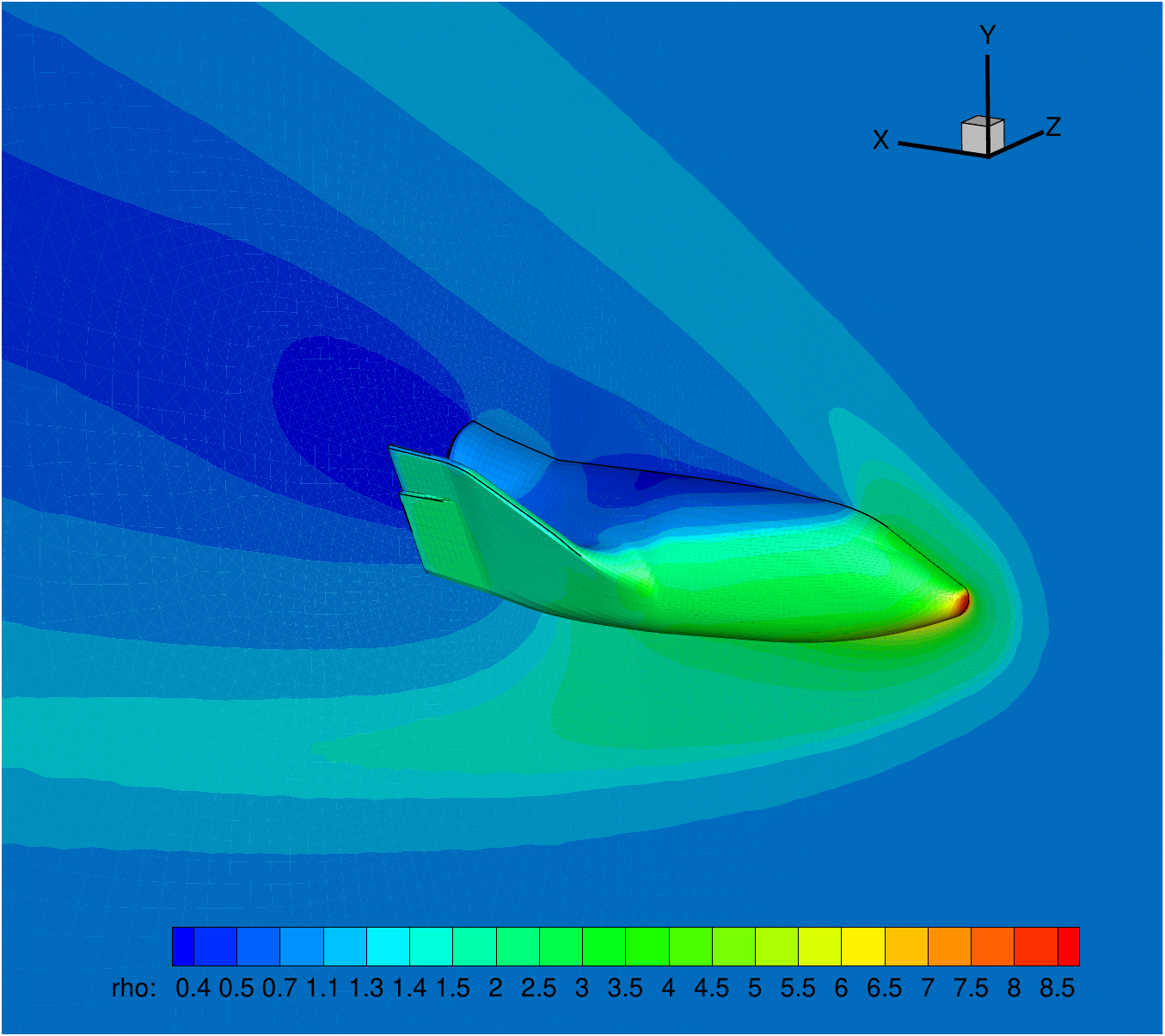}
        \caption{Density}
    \end{subfigure}
    \begin{subfigure}{0.49\textwidth}
        \centering
        \includegraphics[width=\linewidth]{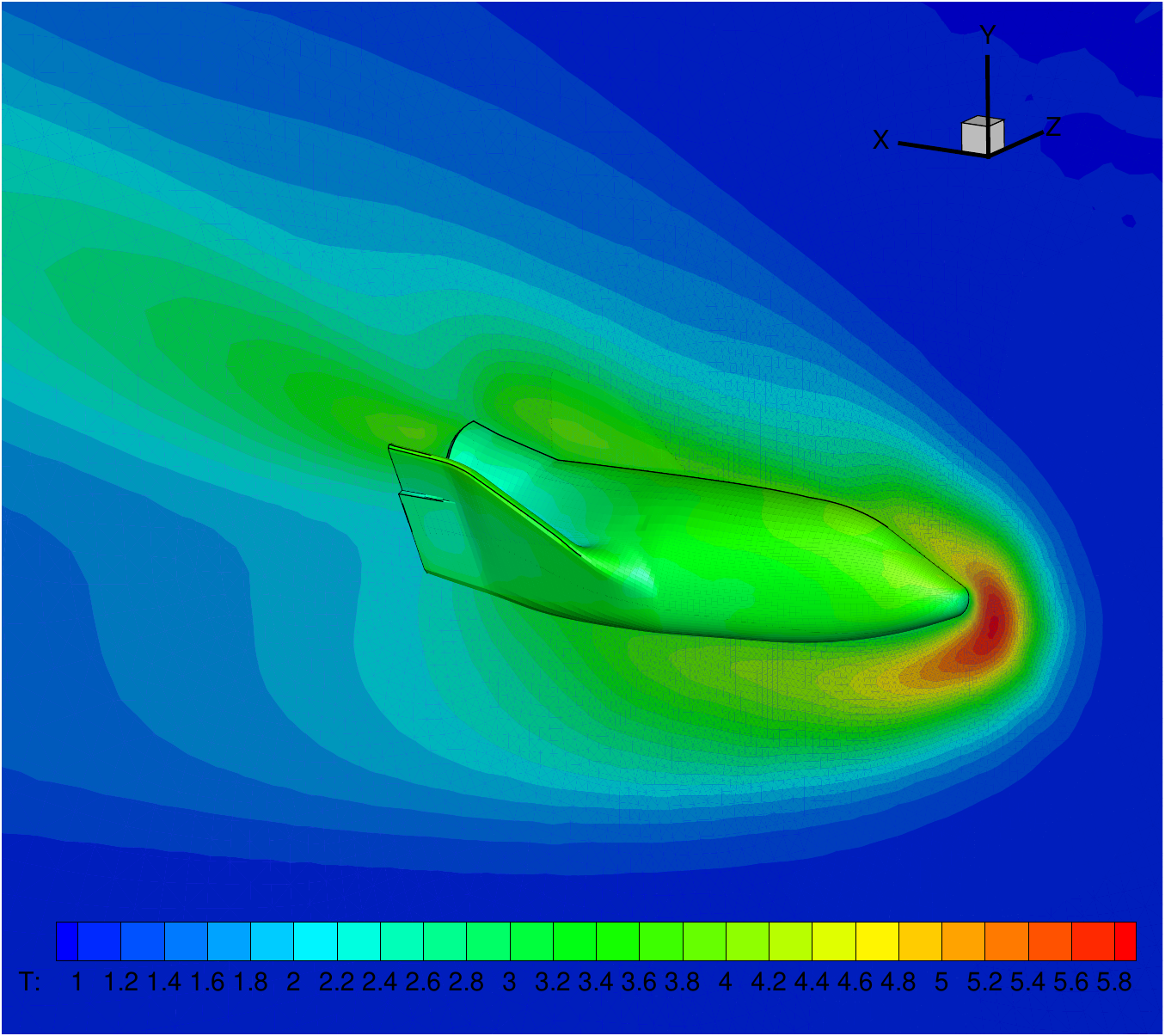}
        \caption{Temperature}
    \end{subfigure}
    \caption{Three-dimensional $x38$ case at $\mathrm{Kn}=10^{-2}$.}
    \label{fig:x38_kn1em2_rhot}
\end{figure}

\begin{figure}[!htbp]
    \centering
    \begin{subfigure}{0.49\textwidth}
        \centering
        \includegraphics[width=\linewidth]{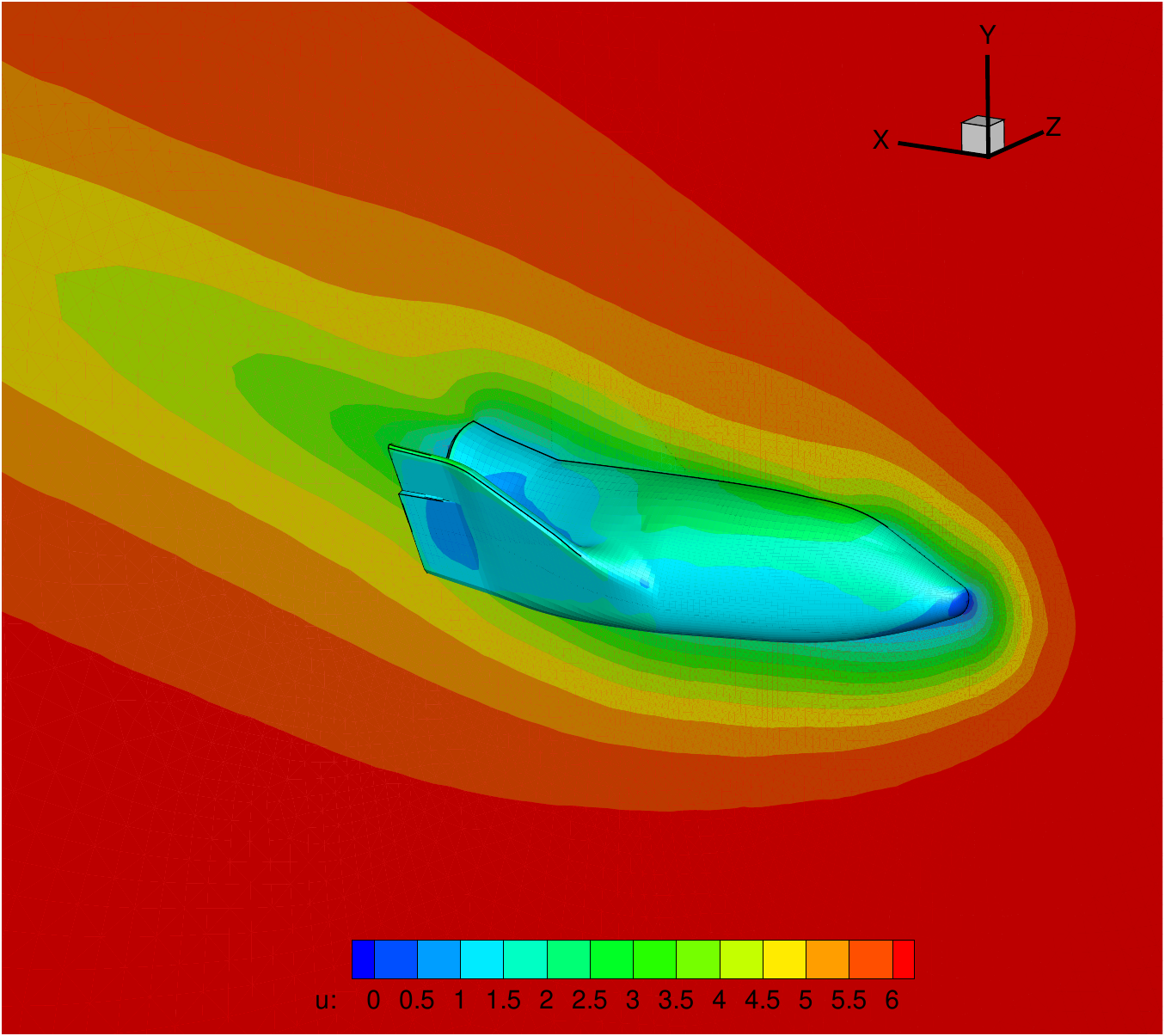}
        \caption{\(u\)}
    \end{subfigure}
    \begin{subfigure}{0.49\textwidth}
        \centering
        \includegraphics[width=\linewidth]{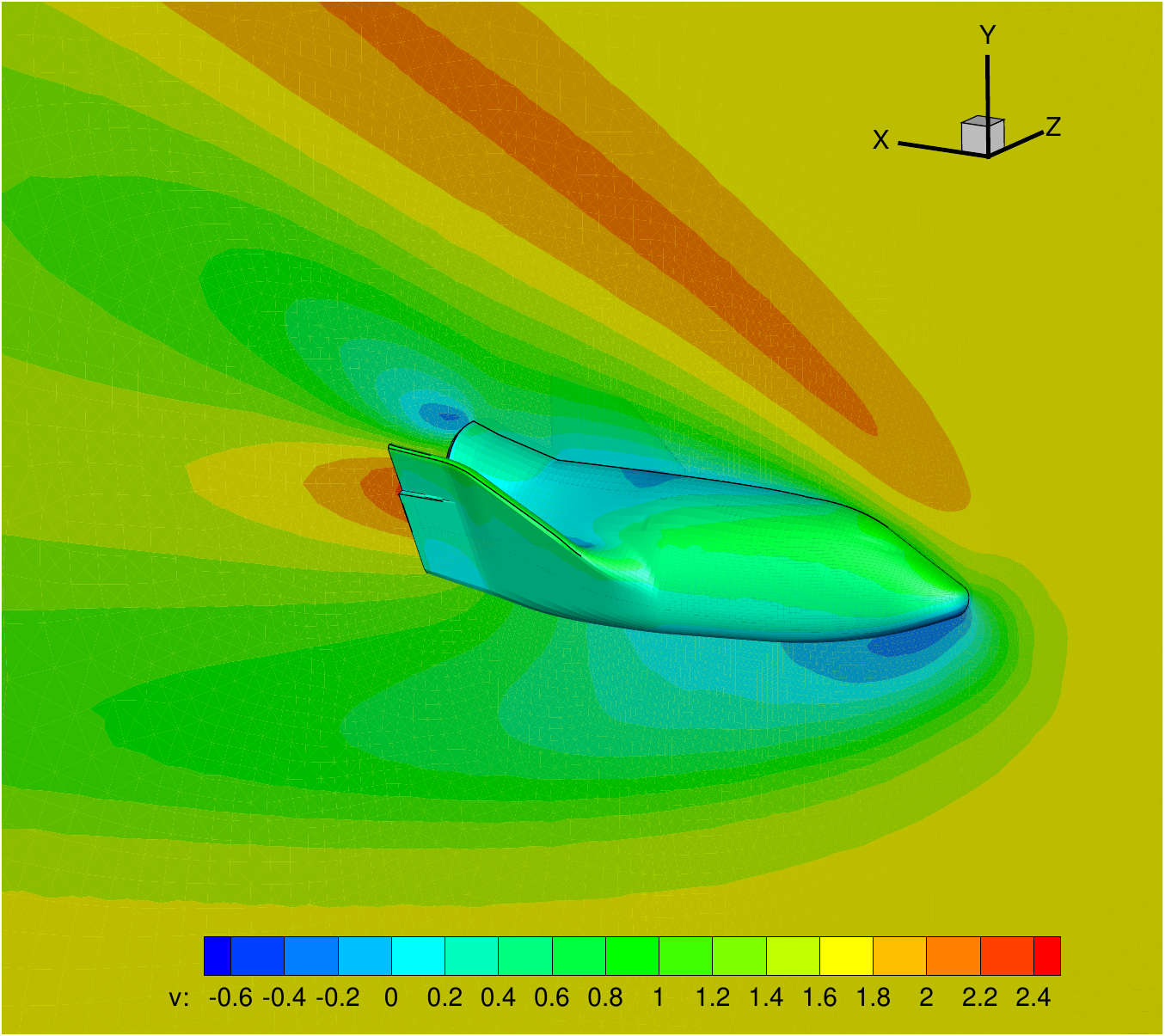}
        \caption{\(v\)}
    \end{subfigure}
    \caption{Three-dimensional $x38$ case at $\mathrm{Kn}=10^{-2}$.}
    \label{fig:x38_kn1em2_uv}
\end{figure}

\begin{figure}[!htbp]
    \centering
    \begin{subfigure}{0.49\textwidth}
        \centering
        \includegraphics[width=\linewidth]{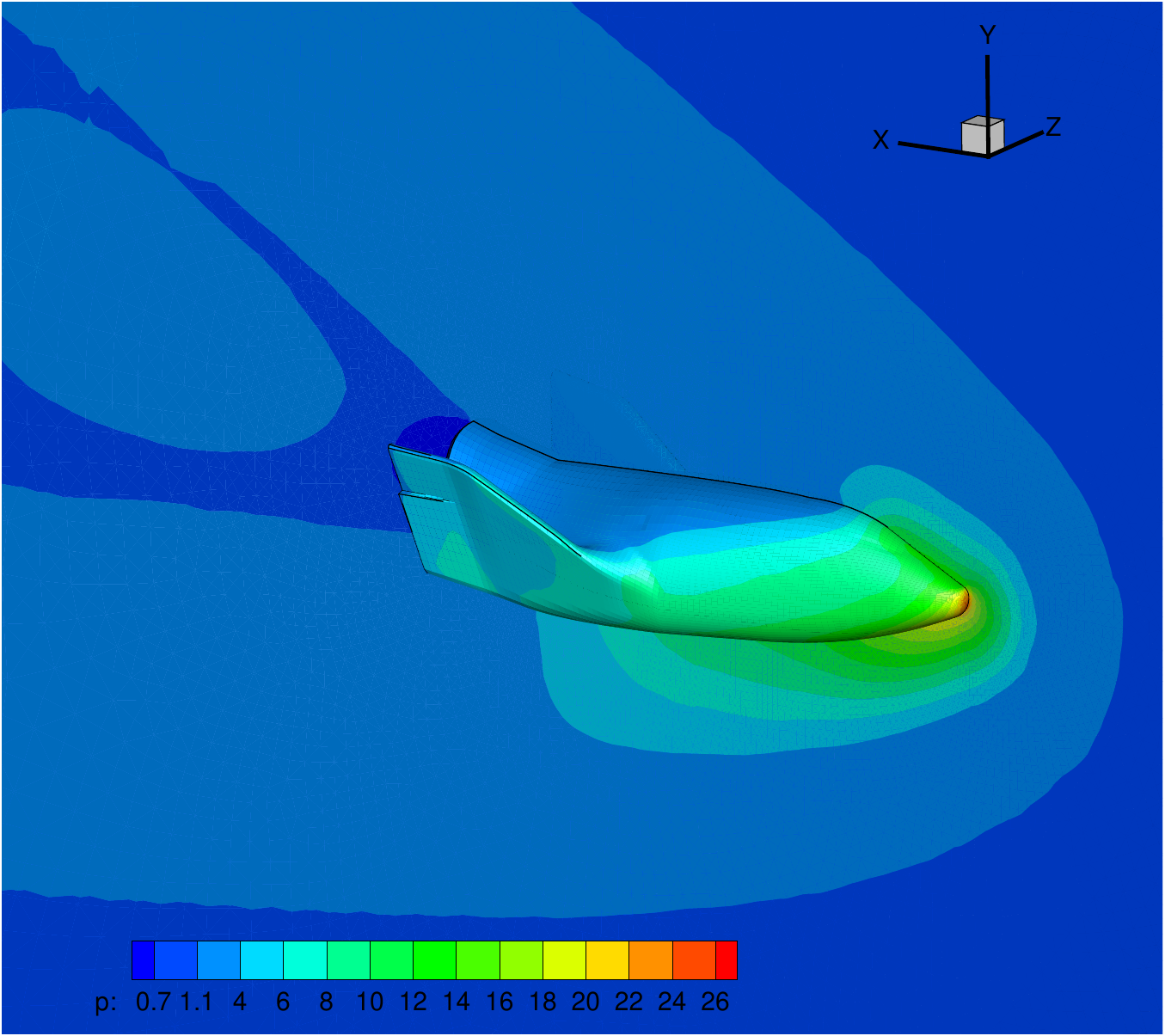}
        \caption{Pressure}
    \end{subfigure}
    \begin{subfigure}{0.49\textwidth}
        \centering
        \includegraphics[width=\linewidth]{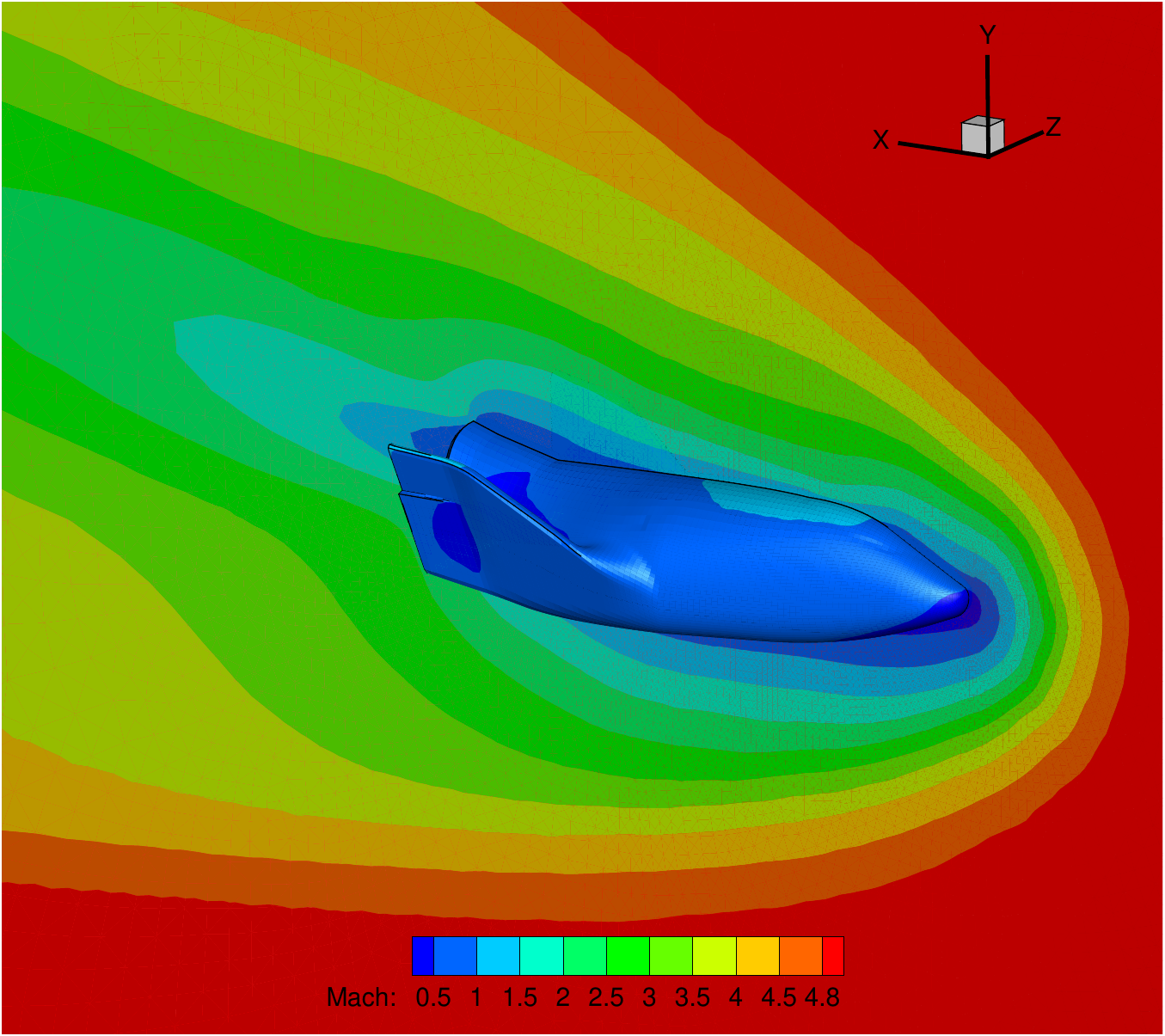}
        \caption{Mach number}
    \end{subfigure}
    \caption{Three-dimensional $x38$ case at $\mathrm{Kn}=10^{-2}$.}
    \label{fig:x38_kn1em2_pma}
\end{figure}

\begin{figure}[!htbp]
    \centering
    \begin{subfigure}{0.49\textwidth}
        \centering
        \includegraphics[width=\linewidth]{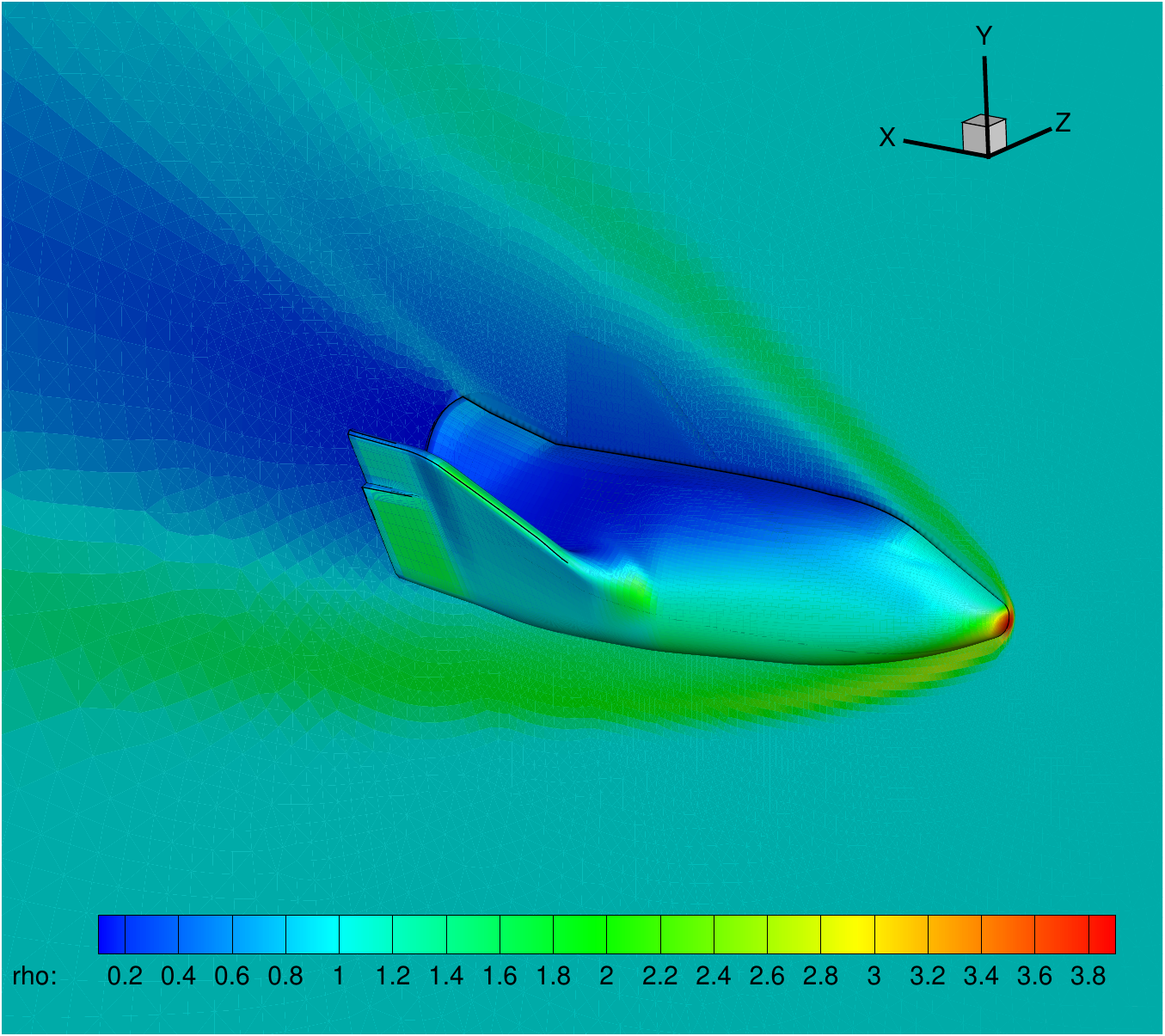}
        \caption{Density}
    \end{subfigure}
    \begin{subfigure}{0.49\textwidth}
        \centering
        \includegraphics[width=\linewidth]{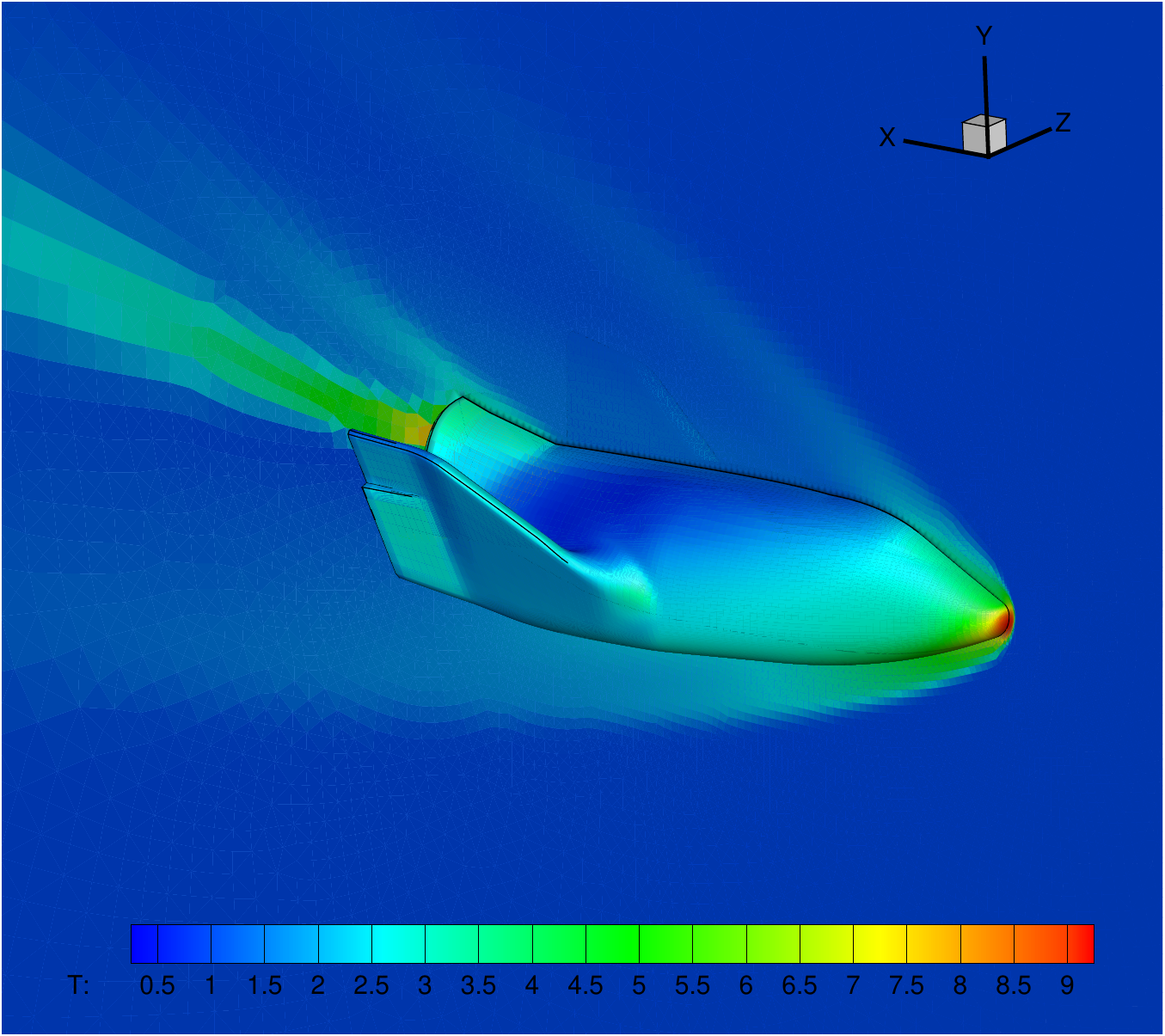}
        \caption{Temperature}
    \end{subfigure}
    \caption{Three-dimensional $x38$ case at $\mathrm{Kn}=10^{-4}$.}
    \label{fig:x38_kn1em4_rhot}
\end{figure}

\begin{figure}[!htbp]
    \centering
    \begin{subfigure}{0.49\textwidth}
        \centering
        \includegraphics[width=\linewidth]{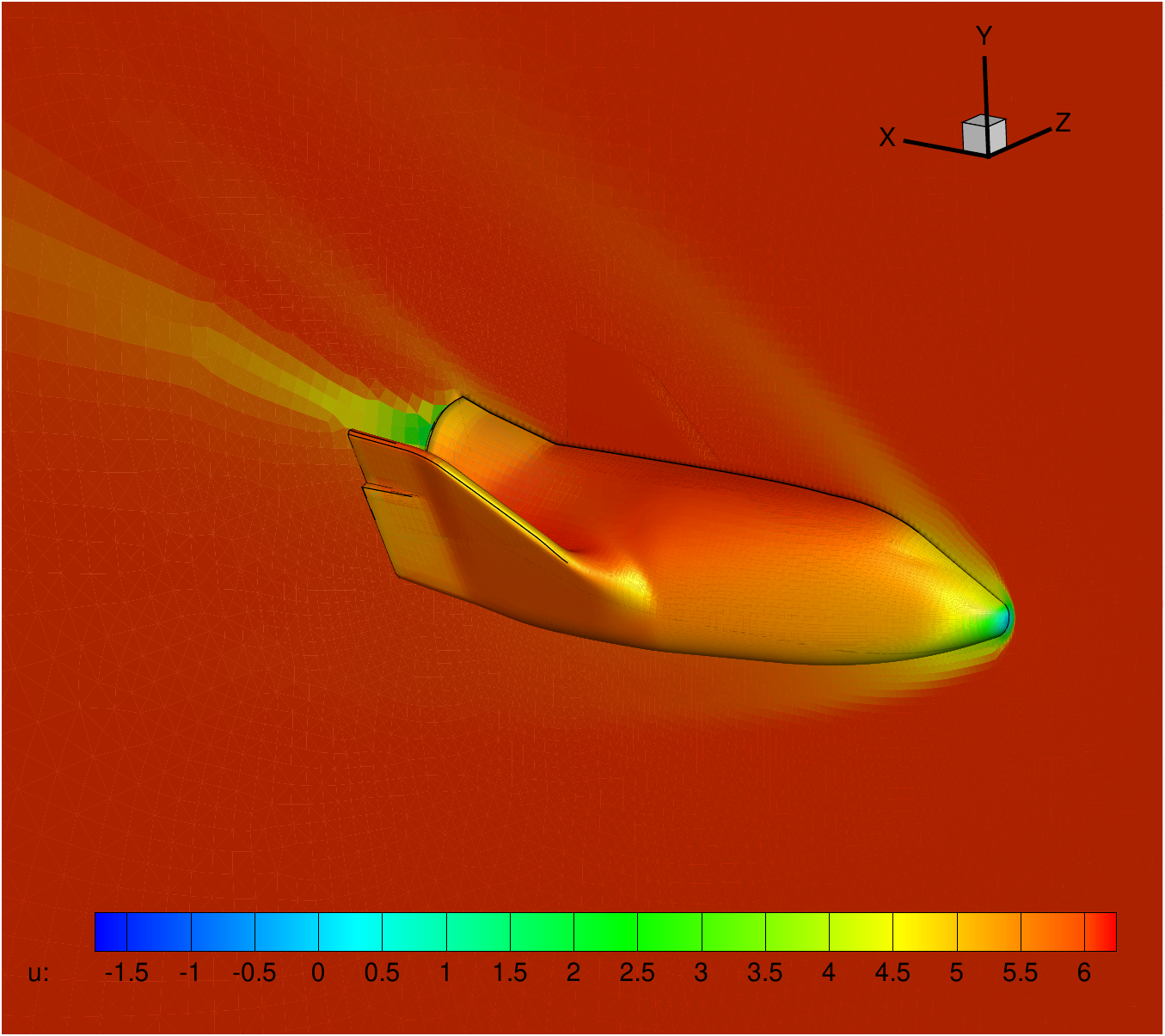}
        \caption{\(u\)}
    \end{subfigure}
    \begin{subfigure}{0.49\textwidth}
        \centering
        \includegraphics[width=\linewidth]{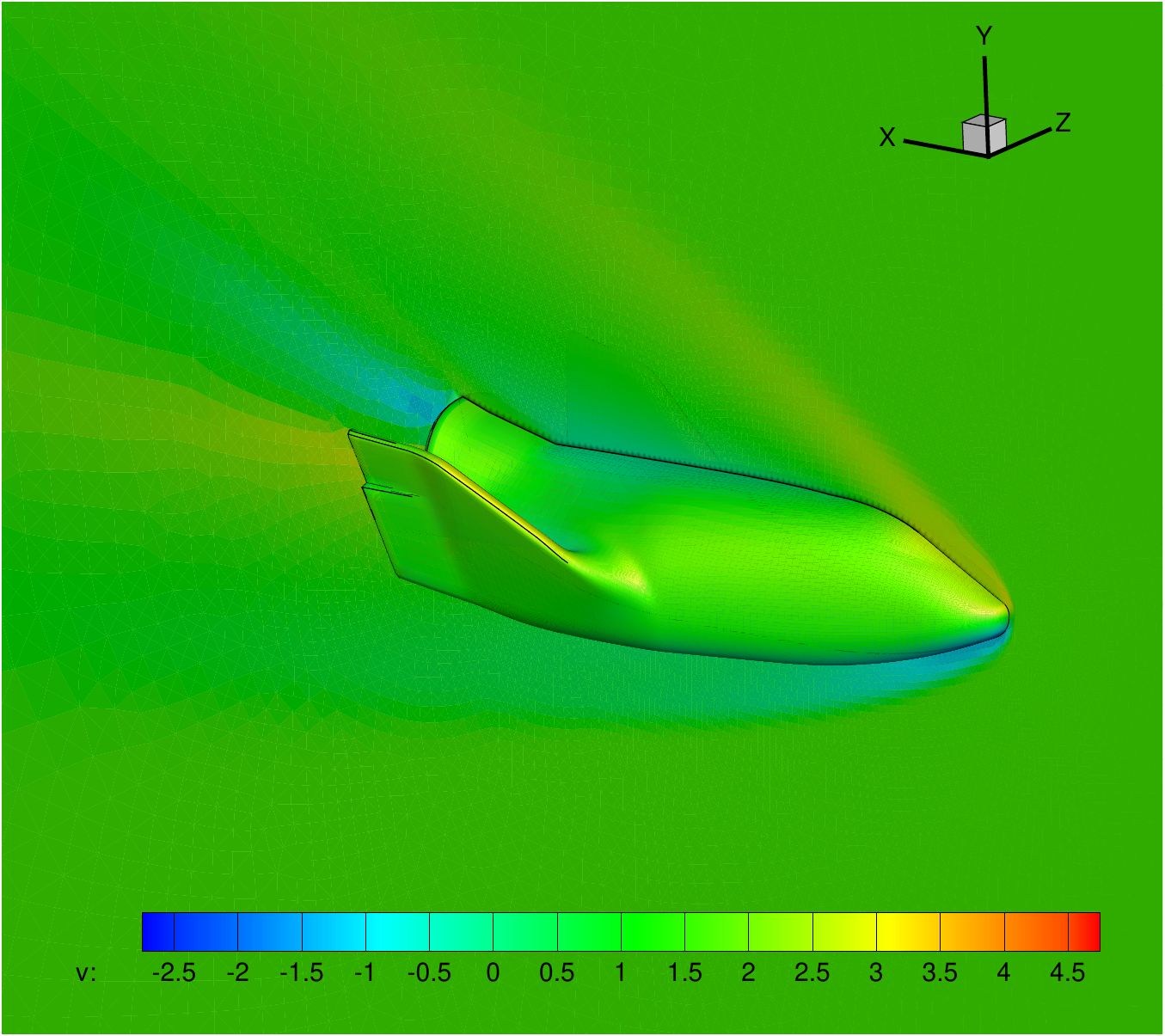}
        \caption{\(v\)}
    \end{subfigure}
    \caption{Three-dimensional $x38$ case at $\mathrm{Kn}=10^{-4}$.}
    \label{fig:x38_kn1em4_uv}
\end{figure}

\begin{figure}[!htbp]
    \centering
    \begin{subfigure}{0.49\textwidth}
        \centering
        \includegraphics[width=\linewidth]{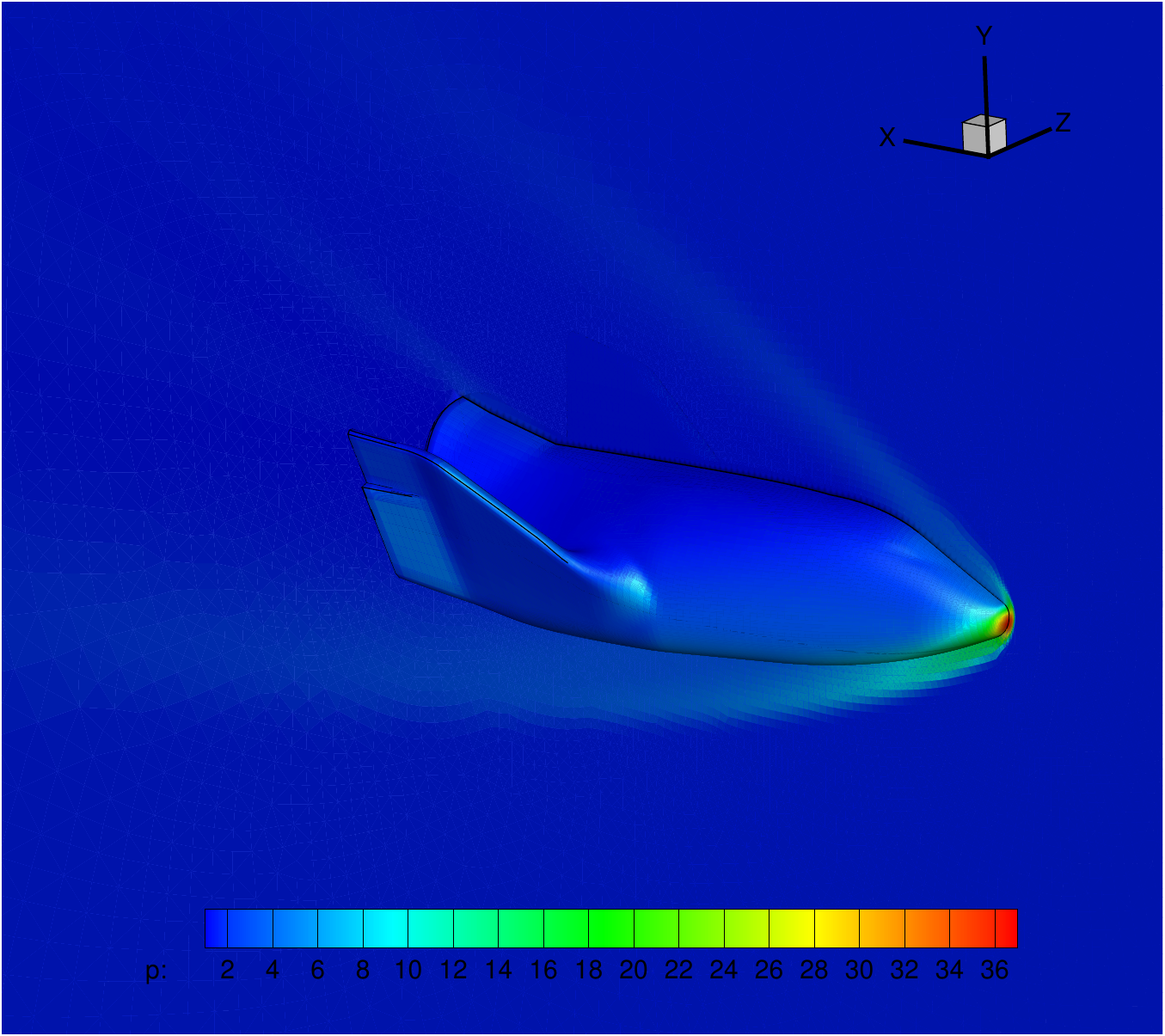}
        \caption{Pressure}
    \end{subfigure}
    \begin{subfigure}{0.49\textwidth}
        \centering
        \includegraphics[width=\linewidth]{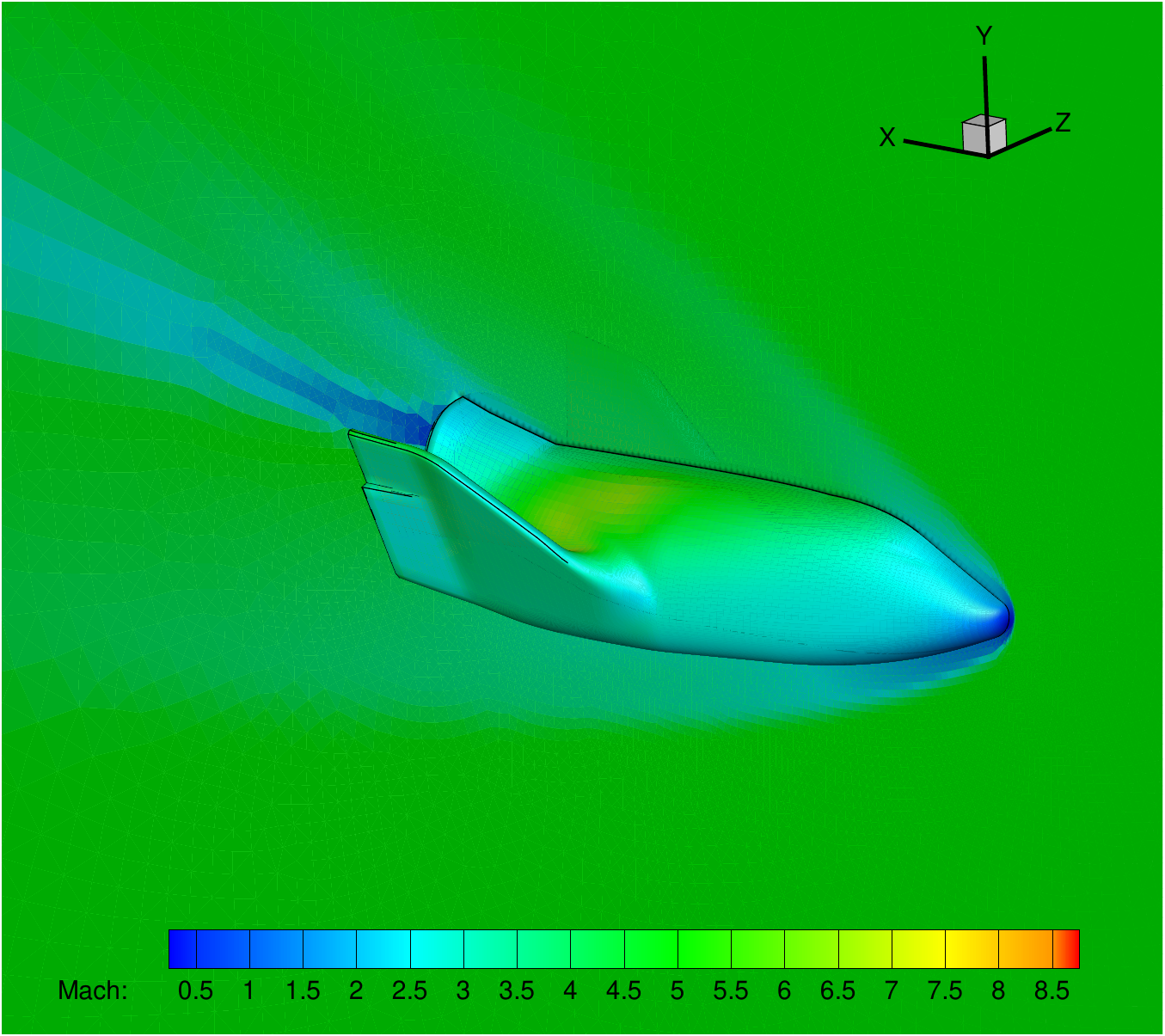}
        \caption{Mach number}
    \end{subfigure}
    \caption{Three-dimensional $x38$ case at $\mathrm{Kn}=10^{-4}$.}
    \label{fig:x38_kn1em4_pma}
\end{figure}

\section{Conclusion}\label{sec:conclusion}

In this work, we have formulated a wave-particle decomposition featuring a local kinetic horizon for kinetic relaxation equations. At the theoretical level, the characteristic integral solution effectively partitions the distribution function into an accumulated wave contribution and a collisionless particle contribution. The resulting unified wave-particle system comprises a wave equation, a particle equation, and a source-free total conservation law. Furthermore, the wave operator admits a Chapman--Enskog expansion, whose moments yield horizon-weighted Euler and Navier--Stokes fluxes, while the particle equation accurately captures the remaining kinetic transport.

At the algorithmic level, we have constructed a conservative macro-micro numerical method based on this coupled system. The total conservative variables are advanced within a finite-volume framework using the sum of the wave and particle fluxes. The wave flux is evaluated via a Navier--Stokes gas-kinetic scheme, whereas the particle flux is computed using either a deterministic discrete-ordinate discretization or a Monte Carlo particle method. Coupling these components through the target state, the local-horizon source, and conservative moment reconstruction ensures that the total macroscopic conservation law is strictly preserved at the discrete level.

This formulation distinguishes itself from the global time-step wave-particle splitting of the UGKWP method by defining the decomposition directly at the PDE level and by tying the particle fraction to a local kinetic horizon. Additionally, it departs from classical macro-micro decompositions: rather than subtracting an equilibrium target from the full distribution, the particle component is generated by the collisionless factor of the integral solution and is therefore interpreted as a fractional kinetic population rather than a signed residual.

Our analysis demonstrates the discrete asymptotic-preserving continuum limit, rarefied-regime consistency, and the appropriate scaling of active kinetic degrees of freedom. Comprehensive numerical experiments---encompassing smooth and discontinuous one-dimensional flows, wall-bounded cavity flows, hypersonic cylinder flows, and complex three-dimensional configurations---confirm the method's robustness. These tests indicate that the proposed formulation seamlessly retains kinetic transport in non-equilibrium regions while recovering a purely hydrodynamic discretization in continuum-dominated regimes.

Future work will focus on higher-order discretizations of the local-horizon source, adaptive velocity-space treatments for the deterministic particle solver, and large-scale parallel implementations tailored for hypersonic rarefied-continuum aerodynamics and other multiscale kinetic applications.

\section*{Acknowledgements}

The authors are partially supported by the National Key R\&D Program of China (2022YFA1004500). Chang Liu is partially supported by the National Natural Science Foundation of China (12102061, 12031001), the Beijing Natural Science Foundation (Z230003), and the Presidential Foundation of the China Academy of Engineering Physics (YZJJZQ2022017). Kun Xu is partially supported by the National Natural Science Foundation of China (92371107) and the Hong Kong Research Grants Council (16208324).

\end{document}